\documentclass[preprint,12pt]{elsarticle}

\usepackage[a4paper, margin=1.7cm]{geometry}

\usepackage{geometry}
\usepackage{algorithm}
\usepackage{algpseudocode}
\usepackage{amsmath,amssymb}
\usepackage{color}
\usepackage{subcaption}
\usepackage{amssymb}
\usepackage{amsmath}\usepackage[scr=rsfs]{mathalfa}
\usepackage{enumerate}
\usepackage{amsthm}
\usepackage{array}
\usepackage{changepage}
\usepackage{soul}

\newtheorem{result}{{\bf Result}}
\newtheorem{observation}{{\bf Observation}}

\newtheorem{definition}{{\bf Definition}}
\newtheorem{theorem}{{\bf Theorem}}
\newtheorem{lemma}{{\bf Lemma}}

\usepackage{amssymb}
\usepackage{amsmath}

\journal{Nuclear Physics B}

\begin{document}

\begin{frontmatter}



\title{Min-Sum Uniform Coverage Problem by Autonomous Mobile Robots}

\author[label1]{Animesh Maiti}
\author[label2]{Abhinav Chakraborty}
\author[label3]{Bibhuti Das}
\author[label1]{Subhash Bhagat}
\author[label4]{Krishnendu Mukhopadhyaya}


\affiliation[label1]{organization={Department of Mathematics},
            addressline={Indian Institute of Technology Jodhpur}, 
            country={India}}

\affiliation[label2]{organization={Department of Mathematics},
            addressline={Birla Institute of Technology, Mesra}, 
            country={India}}


 \affiliation[label3]{organization={ D\'epartement d'informatique, Universit\'e du Qu\'ebec en Outaouais},
            country={Canada}}

 \affiliation[label4]{organization={Advanced Computing and Microelectronics Unit, Indian Statistical Institute},
            country={India}}




\begin{abstract}
We study the \textit{min-sum uniform coverage} problem for a swarm of $n$ mobile robots on a given finite line segment and on a circle having finite positive radius, where the circle is given as an input. The robots must coordinate their movements to reach a uniformly spaced configuration that minimizes the total distance traveled by all robots. The robots are autonomous, anonymous, identical, and homogeneous, and operate under the \textit{Look–Compute–Move} (LCM) model with \textit{non-rigid} motion controlled by a fair asynchronous scheduler. They are oblivious and silent, possessing neither persistent memory nor a means of explicit communication. In the \textbf{line-segment setting}, the \textit{min-sum uniform coverage} problem requires placing the robots at uniformly spaced points along the segment so as to minimize the total distance traveled by all robots. In the \textbf{circle setting} for this problem, the robots have to arrange themselves uniformly around the given circle to form a regular $n$-gon. There is no fixed orientation or designated starting vertex, and the goal is to minimize the total distance traveled by all the robots. We present a deterministic distributed algorithm that achieves uniform coverage in the line-segment setting with minimum total movement cost. For the circle setting, we characterize all initial configurations for which the \textit{min-sum uniform coverage} problem is deterministically unsolvable under the considered robot model. For all the other remaining configurations, we provide a deterministic distributed algorithm that achieves uniform coverage while minimizing the total distance traveled. These results characterize the deterministic solvability of min-sum coverage for oblivious robots and achieve optimal cost whenever solvable.

\end{abstract}

\begin{keyword}
Swarm robotics, Min-sum uniform coverage, Distributed algorithms, Autonomous and asynchronous robots, Oblivious robots, Line-segment coverage

\end{keyword}

\end{frontmatter}


\section{Introduction}
  

\noindent Coordination problems in multi-robot systems have received significant attention over the past years \cite{santorobook2}. A \textit{robot swarm} is a distributed multi-robot system composed of autonomous (no centralized controller), anonymous (no unique identifier), homogeneous (execute the same algorithm) and identical mobile robots that cooperatively accomplish a given task. The study aims to identify the minimal capabilities required for robots to accomplish a given task. In the literature, a large volume of work studies gathering \cite{DBLP:conf/icalp/CieliebakFPS03, DBLP:journals/siamcomp/CohenP05,DBLP:journals/tcs/FlocchiniPSW05,DBLP:journals/siamcomp/SuzukiY99}, pattern formation \cite{DBLP:journals/tcs/FlocchiniPSW08,DBLP:conf/sirocco/SuzukiY96,DBLP:journals/siamcomp/SuzukiY99}, {\it circle formation problem} \cite{128452,DBLP:conf/pomc/DefagoK02,DBLP:journals/cacm/Debest95} etc. The circle formation problem asks for a final configuration in which all the robots lie on a circle centered at a point, that is not defined a priori, within a finite time. 
The design of such multi-robot systems is inspired by the coordinated behaviours exhibited by social insects in natural environments. These systems are especially effective in challenging situations, where explicit human intervention is not possible. Since these systems operate without centralized coordination, they can function effectively in adversarial and unexplored environments. A group of small robots is often more cost-efficient than using a single large robot. Such systems are applied in many areas, including search-and-rescue in disaster-affected regions, exploration and mapping of unknown or hazardous environments, security, surveillance,  mining, agriculture, and other large-scale autonomous tasks where human involvement is limited due to safety concerns \cite{1500631,TAN201318}.

\subsection{General Model}
\label{sb-1}

\noindent Robots are autonomous (no centralized controller), homogeneous (run the same distributed algorithm), and anonymous (no unique identifier). 

\begin{itemize}
 \item {\bf Deployment space:} Robots may be deployed in a continuous or a discrete space. In the continuous setting, the environment is typically modeled as a $d$-dimensional Euclidean space, for $d \ge 1$. In the discrete setting, the robots are usually placed at the nodes of a given graph and are allowed to move along its edges only. In this paper, we assume that the robots are deployed in both continuous one-dimensional and two-dimensional spaces.
 \item {\bf Coordinate system:} The robots do not share a common global coordinate system; instead, each robot is equipped with its own local coordinate system whose origin is located at its current position. The directions and orientations of the coordinate axes and the unit distances of two distinct robots may vary. However, in some computational models, it is assumed that robots have some agreement on the directions and orientations of one or both of the local coordinate axes. Robots may share a common sense of handedness (clockwise direction), i.e., {\it chirality}. In this paper, we assume that the robots are disoriented, meaning that they do not agree on either the orientation or the direction of their local coordinate axes, nor do they share a common chirality.
 \item {\bf Visibility of a robot:} The robots are equipped with sensing capability, referred to as their \textit{visibility}. They use this capability to detect the positions of the other robots in the system. The visibility range may be either limited or global. In the \textit{limited visibility} model, a robot can observe only those robots located within a fixed radius centered at its current position. Under the \textit{global visibility} model, each robot can sense the positions of all the other robots in the system. In the \textit{obstructed visibility} model, a robot $r_2$ blocks the visibility between robots $r_1$ and $r_3$, if $r_2$ lies on the line segment joining $r_1$ and $r_3$. In this paper, it is assumed that the robots have \textit{global visibility}.

 \item {\bf Communications:} Robots are {\it silent}, i.e., there is no direct explicit communication capability.

 \item {\bf Computational cycle:} At any given time, a robot is either active or inactive (idle). Once activated, it executes a computational cycle comprising three phases: \textit{Look–Compute–Move} (LCM). After completing a LCM cycle, a robot may either become inactive again or initiate a new computational cycle. The robots continue to repeat these cycles until the goal is achieved.

 \begin{itemize}

  \item \textit{Look phase:} Upon activation, a robot enters the Look phase, during which it observes the positions of all the other robots in its own local coordinate system. In this work, during the \textit{Look} phase, a robot observes the positions of all the robots on the circumference of the circle $\mathcal C$ given as an input.

  \item \textit{Compute phase:} In this phase, a robot computes a destination point based on the positions obtained in the \textit{Look} phase. The computed destination point may coincide with its current position.

  \item \textit{Move phase:} During the Move phase, a robot moves toward the destination point computed in the \textit{Compute} phase. Two types of movements are considered: \textit{rigid} and \textit{non-rigid}. In the case of rigid movement, a robot reaches its destination without interruption. In the case of non-rigid movement, the motion is controlled by an adversary, which may cause the robot to stop before it reaches its destination. However, to ensure finite-time progress, it is assumed that there exists a $\delta > 0$ such that, whenever a robot does not reach its destination, it moves at least a distance of $\delta$ toward it. The constant $\delta$ is called the \textit{rigidity constant}. Its value may be unknown to the robots, and each robot $r_i$ may have its own local rigidity constant $\delta_i$, with $\delta = \min\{\delta_1,\delta_2,\ldots,\delta_n\}$. If the destination coincides with the robot's current position, it performs a \textit{null movement}. In traditional models, the robots typically move in a straight line. However, in specific models, the robots can even perform guided movements, i.e., they can move along some specified curve \cite{PATTANAYAK2019145,DBLP:journals/dc/CiceroneSN19}. In this work, we consider \textit{non-rigid} movement, where the robots are restricted to moving along the circumference of the input circle $\mathcal{C}$.

\end{itemize}
\item {\bf Memory of a robot:} The
robots are oblivious in the sense that their memory is volatile. They do not have
any memory of past observations, computations and actions. At the end of each
Look-Compute-Move cycle, the memory is erased. 

\item {\bf Scheduler:} The activation of the robots and the duration of the phases in their computational cycles are determined by an adversarial scheduler. Three main types of schedulers are commonly considered:
 
 \begin{itemize}
   
  \item {\bf Asynchronous Scheduler (the {\it ASYNC} model):} The asynchronous scheduler is the most general model, and there is no shared notion of time. Under this scheduler, the robots are activated independently of one another, and the duration of each phase of the computational cycle is finite but unpredictable. The phases of different robots may overlap, and therefore, a robot may be observed by others while it is in motion. Robots do not possess motion-detection capability; in particular, a robot cannot determine whether another robot is currently moving or stationary. As a consequence of asynchrony, a robot may perform computations based on outdated observations.

  \item {\bf Semi-synchronous Scheduler (the {\it SSYNC} model):}  In the semi-synchronous scheduler, time is logically divided into non-overlapping global rounds. In each round, a subset of robots is activated simultaneously, and all activated robots complete their computational cycles within that round. The uncertainty arises from the selection of the active subset in each round. A robot is not visible to others while it is in motion.

  \item {\bf Fully synchronous Scheduler (the {\it FSYNC} model):}  The fully synchronous scheduler is a special case of the semi-synchronous scheduler, in which all the robots are activated in every round.

 \end{itemize}
  \item \textbf{Multiplicity Detection:} Robots may be endowed with some \textit{multiplicity detection capability}, which helps the robots to identify points occupied by multiple robots. If the robots have a \textit{weak multiplicity detection capability}, they can detect whether multiple robots occupy a point. They are unable to count the exact number of robots that make up the multiplicity. However, in this paper, we have assumed that the robots do not have any multiplicity detection capability. 
\end{itemize}
 

\noindent Motivated by the coordination problems in swarm robotics, this paper studies the \textit{min-sum uniform coverage} problem in both the \textbf{line-segment} and \textbf{circle} settings for a swarm of autonomous, anonymous, homogeneous, and identical mobile robots operating under the \textit{Look--Compute--Move} (LCM) model. The robots are oblivious, silent, and execute a deterministic distributed algorithm under a fair asynchronous scheduler. The objective is to reach a uniformly spaced configuration while minimizing the total distance traveled by all robots. We consider two deployment scenarios consistent with the general model described above: (i) the robots are initially deployed on a finite line segment and are restricted to move along the segment, and (ii) the robots are initially deployed on the circumference of a given input circle~$\mathcal{C}$ and are restricted to move only along its circumference. In the line-segment setting, the goal is to place the robots at equally spaced points along the segment, whereas in the circle setting, the robots are required to form a regular $n$-gon on the circumference of~$\mathcal{C}$, without any predefined orientation or distinguished starting position, while minimizing the total movement cost.\\

\noindent Earlier work (Bhattacharya et al.~\cite{bhattacharya2009optimal}) proposed centralized algorithms for this problem, where the movements of all the robots are determined in a centralized manner. In our work, we study the problem in a distributed setting, where the robots act autonomously and make independent decisions. To the best of our knowledge, no prior work has studied the min-sum objective in a deterministic distributed model for asynchronous robots.



.

\subsection{Problem Definition}

\noindent In this paper, we consider two deployment scenarios. First, we assume the robots are deployed on a line segment and study the \textit{min-sum uniform coverage} of the line segment. Next, we assume that the robots are deployed on the circumference of a circle. The robot movements are restricted on the circumference of the circle. In this scenario, we investigate the \textit{min-sum uniform coverage on a circle} problem. In both cases, the robots start from an arbitrary initial configuration such that they are located at distinct positions. The goal is to reach a configuration, where the positions of the robots are uniformly distributed over the deployment space while minimizing the total distance travelled.

\medskip

\noindent \textbf{Problem Definition 1 (\textit{Min-sum uniform coverage} on a line segment).}
Let us consider a set of $n$ autonomous robots $\mathcal R=\{r_1,r_2,\ldots,r_n\}$ deployed on a finite line segment $\mathcal L^*=[a,b]$ having non-zero length. $\mathcal{R}(t_0)=\{r_1(t_0), r_2(t_0), \ldots, r_n(t_0)\}$ be the initial configuration of robots in $\mathcal R$ on the line segment $\mathcal L^*$ at time $t_0$. Let $X^*=\{x_1^*,x_2^*,\ldots,x_n^*\}$ be a set of uniformly spaced points on $\mathcal L^*$ such that $x_i^* = a + \left(i-\tfrac{1}{2}\right)\frac{b-a}{n}, \quad 1 \le i \le n.$ The \textit{min-sum uniform coverage} problem on the line segment $\mathcal L^*$ requires the robots in $\mathcal R$ to coordinate their movements along $\mathcal L^*$ in such a way that, starting from the initial configuration $\mathcal R(t_0)$, the robots reach a configuration $\mathcal R(t)$ in which they occupy a unique point in $X^*$ and the sum of distances traversed by all the robots during this process is minimized. Formally, the final configuration satisfies
$
\mathcal R(t) = X^*
$ and $\sum_{i=1}^{n}
\left| r_i(t_0) - f\bigl(r_i\bigr) \right|$ is minimized, where $f:\mathcal R \rightarrow X^*$ is a bijection assigning each robot to a distinct target point in $X^*$, i.e.,  $f\bigl(r_i\bigr)\in X^*$, $f(r_i)\neq f(r_j)$ for $r_i\neq r_j$, and  $|x-y|$ denotes the Euclidean distance between two points $x$ and $y$ on the line.



\medskip

\noindent \textbf{Problem Definition 2 (\textit{Min-sum uniform coverage} on a circle).} Let us consider a set of $n$ autonomous robots $\mathcal R=\{r_1,r_2,\ldots,r_n\}$ deployed on the boundary (circumference) of a circle $C$ having finite positive radius. $\mathcal{R}(t_0)=\{r_1(t_0), r_2(t_0), \ldots, r_n(t_0)\}$ be the initial robot configuration of robots in $\mathcal R$ on the boundary of $C$ at time $t_0$. Let $\mathcal{P}^*=\{p^*_1,p^*_2,\ldots,p^*_n\}$ be a regular $n$-gon whose each vertex $p_i^*$ lies on the boundary of $C$. The \textit{min-sum uniform coverage} problem on a circle requires the robots in $\mathcal R$ to coordinate their movements along the circumference of $C$ in such a way that, starting from the initial configuration $\mathcal R(t_0)$ on $C$, the robots reach a configuration $\mathcal R(t)$ in which they occupy a unique point in $\mathcal P^*$ and the sum of distances traversed by all the robots during this process is minimized. Formally, the final configuration satisfies
$\mathcal R(t) = \mathcal P^*$ and $\sum_{i=1}^{n}
d_{\text{arc}}\big(r_i(t_0), f(r_i)\big)$ is minimized where  $f:\mathcal R \rightarrow \mathcal P^*$ is a bijection assigning each robot to a distinct point in $\mathcal P^*$ i.e., $f\bigl(r_i\bigr)\in \mathcal P^*$,  $f(r_i)\neq f(r_j)$ for $r_i\neq r_j$, and $d_{\text{arc}}(x,y)$ denotes the arc distance between two points $x$ and $y$ on the circumference of a circle. The distance is measured along the circumference of $\mathcal C$.

\subsection{Related Works}

\noindent Geometric formation problems for autonomous mobile robots have been extensively
studied in the distributed computing literature. These problems focus on
rearranging a set of anonymous robots from an arbitrary initial configuration
into a desired geometric pattern. Below, we briefly review the most relevant
results related to circle formation and its variants.

\begin{itemize}

\item {\bf Geometric Formation Problems:}
The general framework for geometric formation by autonomous mobile robots was
introduced by Suzuki and Yamashita~\cite{Suzuki1999}, laying the foundation for
subsequent work in distributed robot coordination. Later, Flocchini et
al.~\cite{DBLP:conf/isaac/FlocchiniPSW99} extended this framework to asynchronous
and oblivious robots, establishing several fundamental feasibility and
impossibility results. These works form the basis of most later studies on
distributed geometric pattern formation.

\item {\bf The Circle Formation Problem:}
Circle formation is one of the most basic geometric formation problems, where
robots are required to position themselves on the boundary of a common circle,
without enforcing uniform spacing. Early work by Sugihara and
Suzuki~\cite{128452} proposed a heuristic distributed solution; however, the
resulting configuration only approximates a circle, typically converging to a
Reuleaux triangle. A Reuleaux triangle is a curve of constant width formed by circular arcs centered at the vertices of an equilateral triangle. Debest~\cite{DBLP:journals/cacm/Debest95} studied the circle
formation problem in the context of self-stabilising systems. In a self-stabilising system, the robots must converge to a correct formation from any arbitrary initial configuration, regardless of faults, and remain correct thereafter without relying on initial assumptions. Exact circle formation has been studied under stronger synchronization
assumptions. Défago and Konagaya~\cite{DBLP:conf/pomc/DefagoK02} proposed an
\textit{SSYNC} algorithm that converges to a uniform circle, while Défago and
Souissi~\cite{DBLP:journals/tcs/DefagoS08} proposed an \textit{SSYNC} algorithm
for non-uniform circle formation assuming chirality and unobstructed
visibility. Flocchini et al.~\cite{DBLP:series/synthesis/2012Flocchini} were the
first to study circle formation under the fully asynchronous \textit{ASYNC}
model with global visibility, without any additional assumptions. Subsequent works
addressed exact circle formation under weaker assumptions, including models
without agreement on coordinate systems
\cite{DBLP:conf/opodis/FlocchiniPSV14} and solutions for all cases except
$n=4$~\cite{DBLP:journals/dc/FlocchiniPSV17}, with the remaining case solved
in~\cite{vigl2016}. Circle formation has also been investigated for fat robots in the limited visibility model and discrete environments
\cite{DBLP:conf/icdcit/DuttaCDM12,DBLP:journals/aghcs/AdhikaryKS21}. In \cite{DBLP:journals/aghcs/AdhikaryKS21}, a distributed algorithm for the circle formation problem under the infinite grid environment by asynchronous mobile opaque robots has been presented, where the robots have agreements on one of the coordinate axes.

\item {\bf The Uniform Circle Formation Problem:}
Uniform circle formation (UCF) is a stricter variant in which robots must occupy
equally spaced positions on the circumference of a circle. Early solutions were
developed under synchronous or semi-synchronous models. Flocchini et
al.~\cite{DBLP:journals/dc/FlocchiniPSV17} solved the UCF problem under the
\textit{ASYNC} model for configurations with at least five robots, and the
special case of four robots was later resolved by Mamino and
Viglietta~\cite{vigl2016}. To overcome impossibility results in the asynchronous
settings, several works introduced persistent visible lights. Feletti et
al.~\cite{feletti2018uniform,app13137991} studied the uniform circle formation problem with
lights, analyzing time--color tradeoffs. More recently, Feletti et
al.~\cite{feletti_et_al:LIPIcs.DISC.2024.46} proposed an asymptotically optimal
solution achieving $O(1)$ time and $O(1)$ colors under the \textit{ASYNC}
scheduler. Extensions to obstructed visibility and collision avoidance were
presented in~\cite{10579120}. In \cite{DBLP:conf/icdcn/MondalC18, DBLP:conf/icdcit/MondalC20}, the problem is solved for fat robots. Specifically, the authors of~\cite{DBLP:conf/icdcn/MondalC18} consider transparent fat robots that agree on one axis of their coordinate systems and operate under the \textit{SSYNC} model. The algorithm in~\cite{DBLP:conf/icdcit/MondalC20} solves the uniform circle formation problem for a swarm of fat, opaque, and oriented robots operating under the \textit{ASYNC} model.  

\item {\bf Distance-Optimal and Constrained Variants:}
Most existing work focuses on feasibility and symmetry breaking rather than
optimizing movement costs. Bhagat et al.~\cite{bhagat2018optimum} studied a
constrained circle formation problem with the objective of minimizing the
maximum distance traveled by any robot, using persistent lights under the
\textit{ASYNC} model. Centralized formulations for distance-optimal uniform
circle formation, including min-sum objectives, were studied in
\cite{bhattacharya2009optimal}, but these approaches do not apply to fully distributed systems of oblivious robots.
\end{itemize}

\begin{table}[ht]
\scriptsize
\renewcommand{\arraystretch}{1.7}
\setlength{\tabcolsep}{2.5pt}
\begin{tabular}{|c|c|c|c|}
\hline
\textbf{Problem} & \textbf{Light / Agreement} & \textbf{Optimization Criterion} & \textbf{Obstructed Visibility} \\
\hline
Circle Formation \cite{DBLP:series/synthesis/2012Flocchini} 
& None & No & No \\
\hline
Uniform Circle Formation \cite{feletti2018uniform}  
& Persistent Lights & No & No \\
\hline
Uniform Circle Formation \cite{app13137991}  
& $O(1)$ Persistent Lights & No & No \\
\hline
Uniform Circle Formation \cite{10579120} 
& Persistent Lights & No & Yes \\
\hline
Constrained Circle Formation \cite{bhagat2018optimum} 
& 2- Persistent Lights & Minimize maximum distance & No \\
\hline
\textbf{Constrained Uniform Circle Formation (This paper)} 
& \textbf{None} & \textbf{Minimize total distance} & No \\
\hline
\end{tabular}

\caption{Summary of related work on circle formation and uniform circle formation}
\label{tab:earlier-work}
\end{table}
\noindent The results are summarized in \textbf{Table} \ref{tab:earlier-work}.

\subsection{Motivation}

\noindent Uniform coverage of autonomous robots is a fundamental coordination task in distributed robotic systems, with applications ranging from environmental monitoring and area coverage to sensor deployment and formation control. In many real-world scenarios, robots are constrained to move along simple geometric structures, such as lines (e.g., roads, pipelines, or corridors) or closed curves (e.g., boundaries, circular tracks, or ring-shaped environments). Placing robots evenly along these structures ensures balanced coverage, reduces overlap in sensing, and improves robustness.

\noindent Among the various formulations of uniform coverage, the \textit{min-sum uniform coverage} problem is particularly relevant in energy-aware and resource-constrained settings. By minimizing the total distance traveled by all the robots, this objective reflects overall energy usage, total movement cost, and long-term mechanical effects on the robots, which are important factors in in large-scale and long-term robotic deployments.


\noindent While the problem admits efficient solutions in centralized settings~\cite{bhattacharya2009optimal,tan2010new}, designing algorithms for autonomous robots with limited capabilities remains challenging. In the classical swarm robotics model, the robots are typically anonymous, oblivious, unable to communicate explicitly, and operate asynchronously. Even in one-dimensional environments, the combined effects of geometric constraints, symmetry, and distributed decision-making can lead to subtle algorithmic and correctness issues.

\noindent The line segment and the circle are two basic but fundamentally different settings for studying the uniform coverage problem. A line segment introduces boundary effects that partially break symmetry and impose ordering constraints, influencing both feasibility and optimality. In contrast, a circle has no endpoints and is highly symmetric, making symmetry breaking and destination assignment significantly more challenging under limited sensing and coordination. Studying the \textit{min-sum uniform coverage} problem in both settings therefore provides a common framework for understanding how geometry and topology affect optimal distributed robot coordination.

\noindent Beyond their theoretical interest, these problems serve as foundational building blocks for more complex formation and coverage tasks in higher dimensions. Insights derived from the line and circle settings can be leveraged to design scalable and energy-efficient algorithms for robots constrained to curves, graphs, and manifolds. Consequently, the study of \textit{min-sum uniform coverage} on a line segment and on a circle is not only theoretically appealing but also practically significant for the development of efficient autonomous robotic systems.

\noindent In the swarm robotics literature, uniform coverage problems on a line and on a circle are fundamental instances of \emph{Geometric Pattern Formation} problems~\cite{Suzuki1999,DBLP:conf/isaac/FlocchiniPSW99}. Extending these classical formation problems to include a min-sum objective provides a new perspective on geometric optimization in distributed and resource-constrained robotic environments.



\subsection{Our Contributions}

\noindent This paper studies the \textit{min-sum uniform coverage} problem for autonomous mobile robots under a classical distributed robotics model. The main contributions are summarized as follows.

\begin{itemize}
    \item 
    We formalize the \textit{min-sum uniform coverage} problem on a line segment and on a circle for anonymous, oblivious, silent, and asynchronous robots with no multiplicity detection or explicit communication. We focus on the feasibility of min-sum uniform coverage with a a global optimality objective that minimizes the total distance traveled by all robots.
    

    \item 
    We derive structural properties of min-sum optimal assignments in one-dimensional geometric settings. For the line segment case, we show how boundary constraints induce a unique ordering of robots in any optimal solution. For the circle, we characterize all possible min-sum optimal assignments and relate their existence to the symmetry of the initial configuration.

    \item 
   We establish that, on a circle, there exist initial configurations for which the \textit{min-sum uniform coverage} problem admits no deterministic solution. These impossibility results arise from certain symmetric configurations.
   

    \item 
    For every solvable initial configuration, we present deterministic distributed algorithms that allow the robots to reach a uniform configuration while achieving min-sum optimality. Our proposed algorithms work under asynchrnous scheduler. We formally prove that the proposed algorithms ensure collision freedom, monotonically decrease the global movement cost, and guarantee finite-time convergence under a fully asynchronous scheduler.

\end{itemize}

\noindent These results demonstrate that strong global optimality guarantees can be achieved in geometric pattern formation problems, even under severely constrained distributed robot models, and provide a basis for energy-efficient coordination in one-dimensional environments. To the best of our knowledge, this paper presents the first distributed study of the min-sum objective under an asynchronous robot model.

\subsection{Technical Challenges}
\noindent Designing optimal algorithms for the \textit{min-sum uniform coverage} problem under autonomous robotic settings presents several fundamental technical challenges, even in one-dimensional environments.
\begin{itemize}
    \item The min-sum objective implicitly defines an optimal matching between the initial robot positions and the target locations. In a distributed setting particularly for symmetric configurations, it may not be possible to deterministically select one optimal assignment among multiple optimal assignments.
\item  Both the line segment and the circle allow highly symmetric robot configurations. On a line segment, symmetry appears as mirror images about the midpoint, while on a circle, symmetry also arises from arbitrary rotations. In such cases, the robots may perceive identical views and consequently make the same decisions, which can hinder deterministic progress and complicate both correctness and convergence. Resolving these symmetries without increasing the total movement cost is therefore a key technical challenge.
\item  Minimizing the sum of traveled distances requires not only computing an optimal final assignment but also ensuring that intermediate movements do not introduce unnecessary detours. Under asynchronous scheduling and non-rigid motion, the robots may experience interruptions or delays during movement, and poorly designed motion strategies can result in a total cost that exceeds the optimal value.
\end{itemize}

\subsection{Road Map}

\noindent This paper is organized as follows. \textbf{Section}~\ref{Notations and Definitions} introduces the basic notation, definitions, and concepts used throughout the paper, including the robot model, distance measures, symmetry notions, and configuration views. It also presents a classification of initial robot configurations based on their
symmetry properties and identifies the classes for which the min-sum objective is deterministically unsolvable. In \textbf{Section}~\ref{Line Segment}, we investigate the min-sum uniform coverage problem on a finite line segment. This simpler one-dimensional setting allows us to isolate and analyze the impact of the min-sum constraint. We characterize the optimal target positions, propose a deterministic distributed and collision-free algorithm, and establish its correctness and finite-time convergence. The structural insights and techniques developed in this section serve as a foundation for the circle case. \textbf{Section}~\ref{Oncircle} addresses the main problem of min-sum uniform coverage on a circle, where the robots are initially deployed on the circumference of a given circle. We analyze the structure of optimal assignments, study the role of symmetry in determining uniqueness and feasibility, and classify all initial configurations accordingly. For all solvable classes, we present deterministic distributed algorithms achieving min sum optimality, while proving impossibility results in \textbf{Subsection}~4.2.1 for the remaining configurations. Finally, \textbf{Section}~\ref{Conclusion and Future Work} concludes the paper with a summary of the main contributions and an outline of possible directions for future work, such as weaker visibility assumptions, alternative optimization criteria, and extensions to higher-dimensional settings.

\section{Notations and Definitions}
\label{Notations and Definitions}

 \begin{itemize}
 \item  Let $\mathcal R = \{r_1,r_2,\cdots,r_n\}$ be a set of $n$ autonomous, anonymous, homogeneous, identical robots deployed in the Euclidean plane. The robots are punctiform, i.e., the robots are modeled as points with no physical extent. We have assumed that the robots are deployed at distinct locations in the initial configuration. They operate in {\it Look-Compute-Move} (LCM) cycle under the \textit{ASYNC} model. Each robot is activated infinitely often by the scheduler, and upon each activation, it completes an entire LCM cycle.
 
 \item Let $\mathcal C$ denote the given circle as an input with center $\mathcal O$, and $\mathcal{C}_{out}$ denotes the circumference of $\mathcal C$. The robots are assumed to be deployed on $\mathcal{C}_{out}$. Let $r_i(t)$ denote the position of $r_i$ at time $t$. $\mathcal R(t)$ denotes the configuration of the robots on the circumference $\mathcal{C}_{out}$ of $\mathcal{C}$ at time $t$, and is defined to be the multi-set of positions occupied by the robots in $\mathcal R$, i.e., $\mathcal R(t) = \{r_1(t),r_2(t),\cdots,r_n(t)\}$. We denote by $\widetilde{\mathcal R}(t)$ the set of all such configurations at time $t$. Note that $\mathcal{R}(t_0)$ denotes the initial configuration of the robots deployed on $\mathcal{C}_{out}$ of  $\mathcal{C}$ at time $t_0$, and at time $t_0$, all robots in $\mathcal{R}(t_0)$ are stationary and occupy distinct locations.

\begin{figure*}[!ht]
    \centering
    \begin{subfigure}{0.48\textwidth}
        \includegraphics[width=\linewidth]{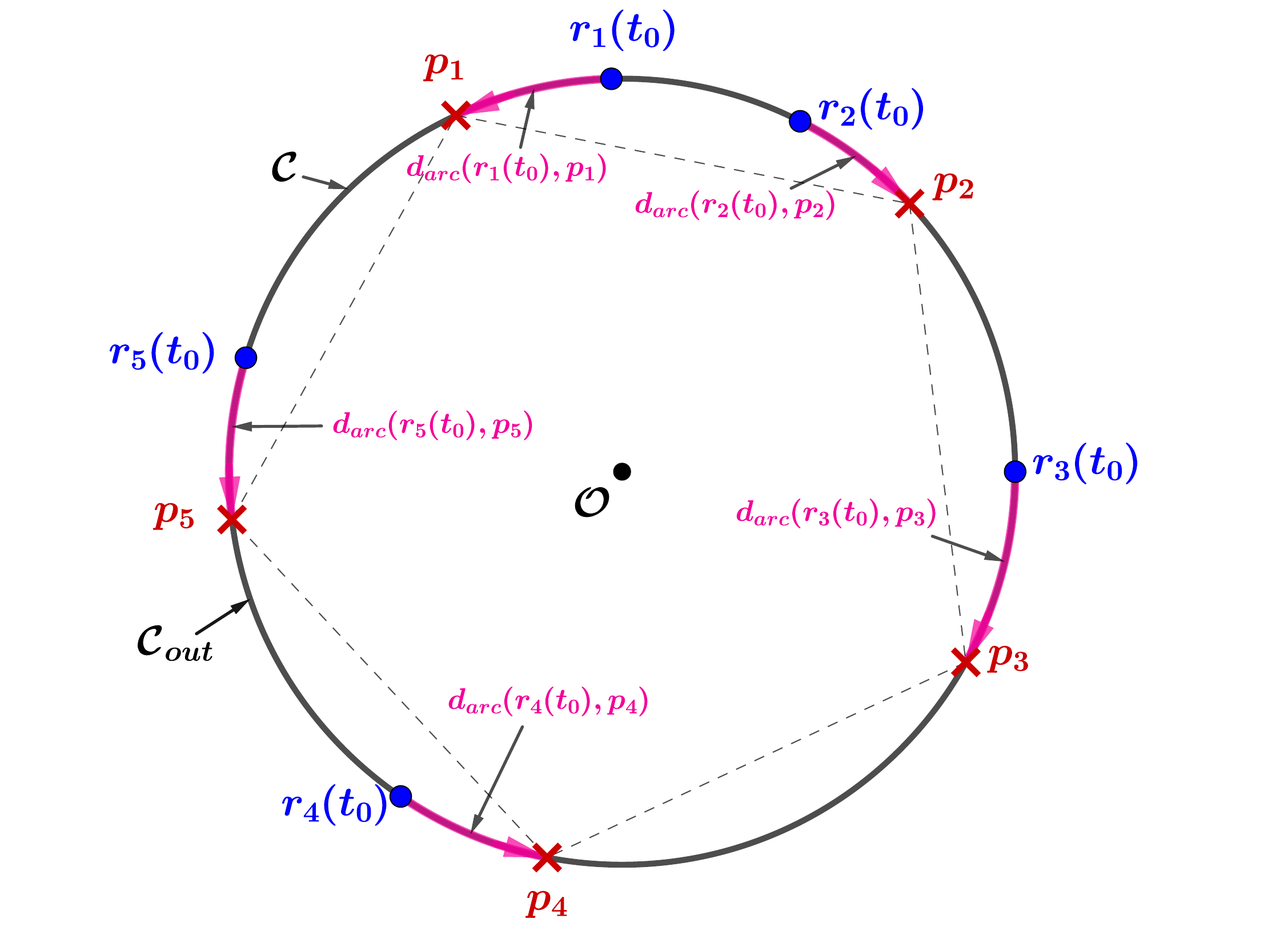}
        \caption*{(A)}
    \end{subfigure}
    \begin{subfigure}{0.48\textwidth}
        \includegraphics[width=\linewidth]{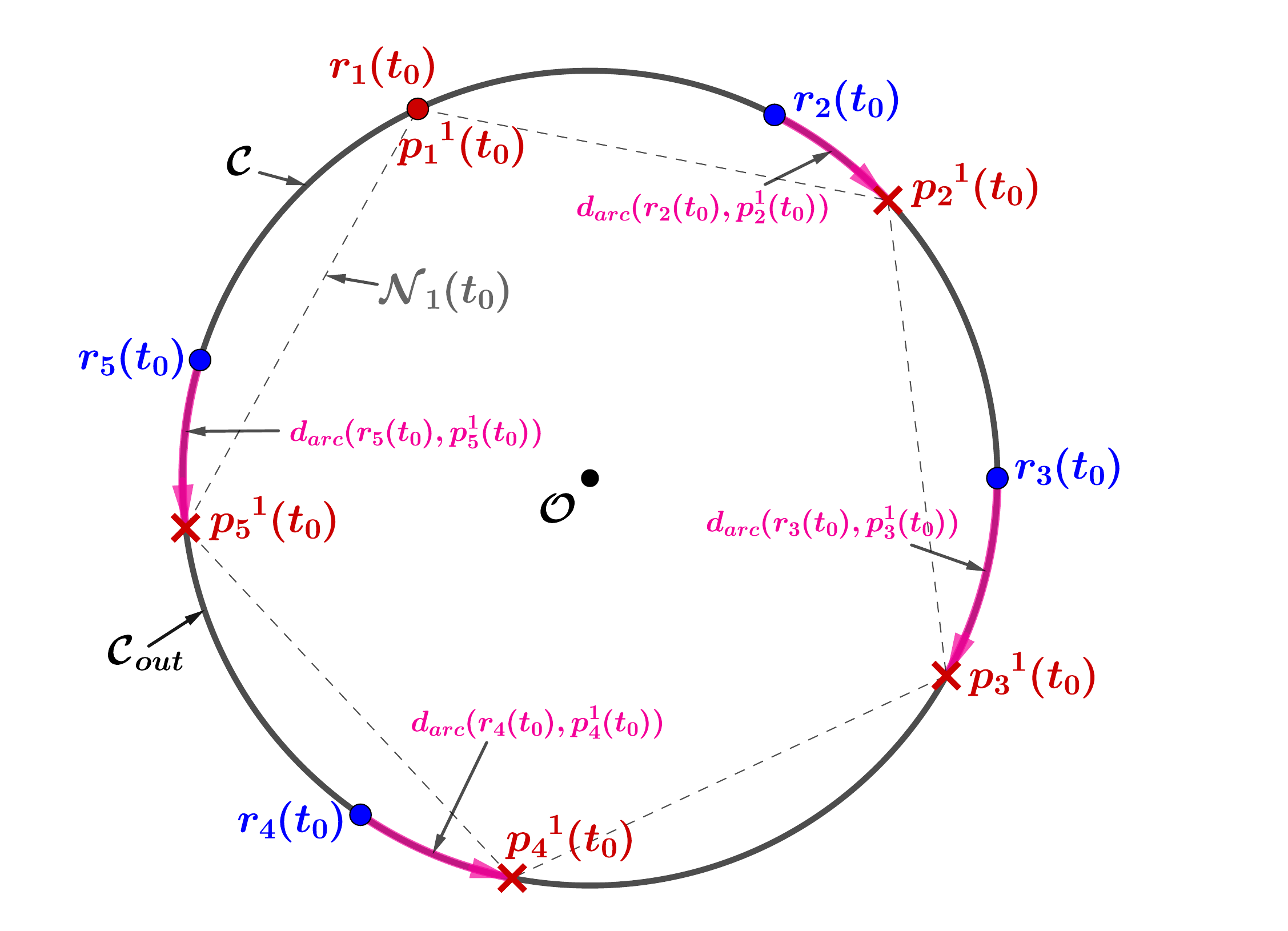}
        \caption*{(B)}
    \end{subfigure}
    \caption{\textit{An illustration of two possible configurations of the regular $n$-gon constructed on $\mathcal{C}_{out}$. \textbf{(A)} The robots  $r_1(t_0),\dots,r_5(t_0)$ (blue points) are placed on $\mathcal{C}_{out}$, and the corresponding destination points are $p_1,\dots,p_5$ (red crosses). None of the vertices of the regular $n$-gon lie on the initial positions of the robots. For each robot $r_i(t_0)$, the arc distance to its assigned destination point is denoted by $d_{arc}\Big(r_i(t_0), p_i\Big)$. \textbf{(B)} The robots  $r_1(t_0)(\textit{red point}),\dots,r_5(t_0)$ (blue points) are placed on $\mathcal{C}_{out}$, and the corresponding destination points $p^{1}_1(t_0),\dots,p^{1}_5(t_0)$ (red crosses) form the vertex set $\mathcal{P}^{1}(t_0)$ of $\mathcal{N}_{1}(t_0)$ by fixing robot $r_1(t_0)$ as a vertex. For each robot $r_i(t_0)$, the arc distance to its assigned destination point is denoted by $d_{\text{arc}}(r_i(t_0), p_i^{1}(t_0))$.}}
    \label{sum}
\end{figure*}

\item We consider the scenario in which the robots are deployed on the circumference $\mathcal{C}_{out}$ of a given circle $\mathcal{C}$ at distinct locations. Let $\widetilde{\mathcal C}$ denote the set of all such circles having finite positive radius and $\widetilde{\mathscr A}$ denote the set of all algorithms that can solve the {\it uniform circle formation} problem for any given circle $\mathcal C\in\widetilde{\mathcal C}$. Note that $\widetilde{\mathscr A}\neq\emptyset$ \cite{feletti2018uniform}. Suppose, we fix an algorithm $\mathcal{A}\in\widetilde{\mathscr A}$, a circle $\mathcal C$ and an initial robot configuration $\mathcal R(t_0)$ on $\mathcal C$. Let $p_i$ denote the final position of a robot $r_i \in \mathcal{R}$ on the circle $\mathcal{C}$ obtained after executing algorithm $\mathcal{A}$ on $\mathcal{C}$, starting from the initial configuration $\mathcal{R}(t_0)$. In other words, the set of points, say $\mathcal P=\{p_1, p_2, \ldots, p_n\}$ forms the vertices (lying on $\mathcal{C}_{out}$) of a regular $n$-gon. Thus, the problem definition reduces to show that $\exists$ $t^*\ge t_0$ such that $\mathcal R(t^*)=\{p_1,p_2,\ldots,p_n\}$.

\begin{figure*}[!ht]
    \centering
    \begin{subfigure}{0.48\textwidth}
        \includegraphics[width=\linewidth]{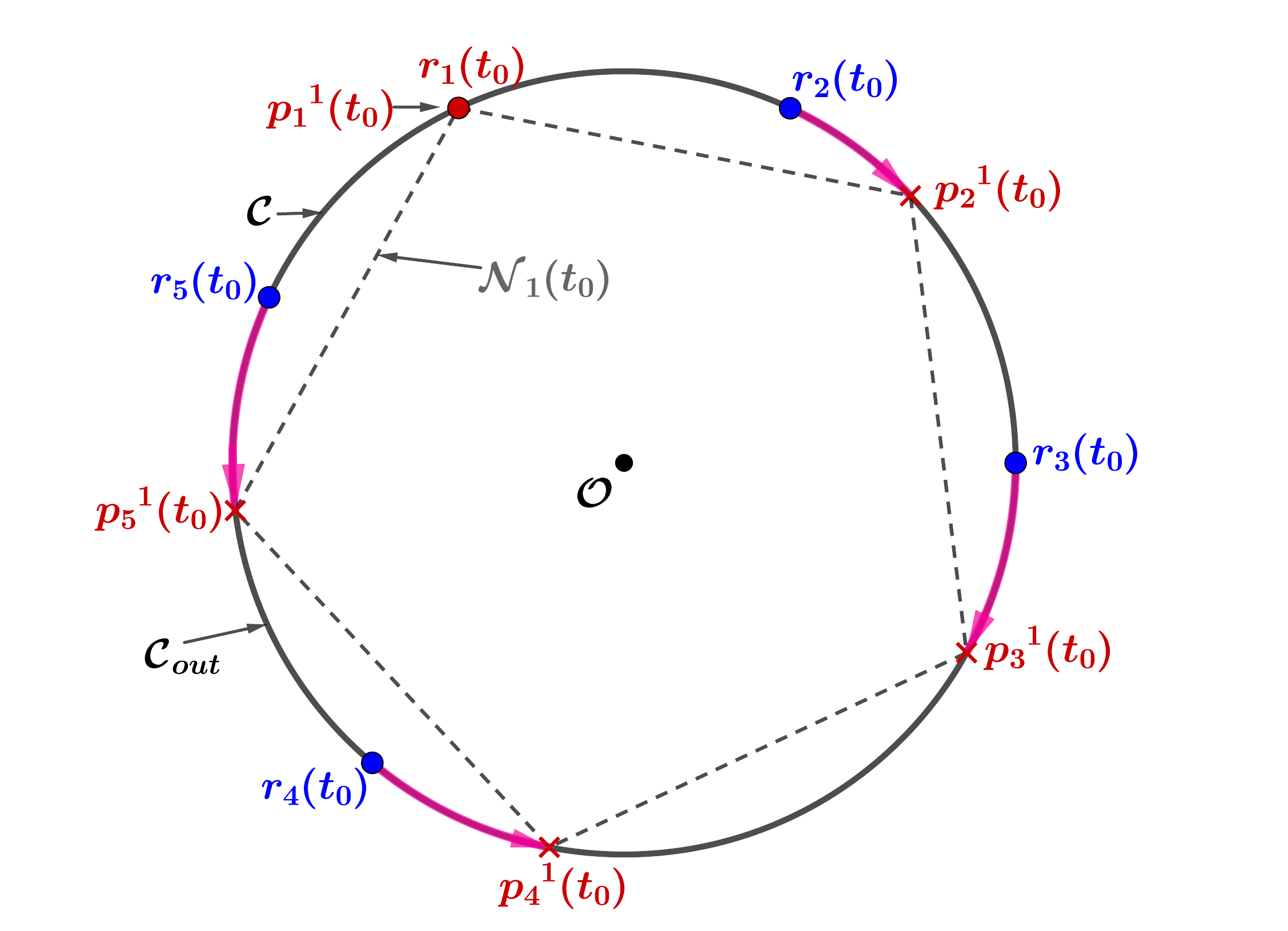}
        \caption*{(A)}
    \end{subfigure}
    \begin{subfigure}{0.48\textwidth}
        \includegraphics[width=\linewidth]{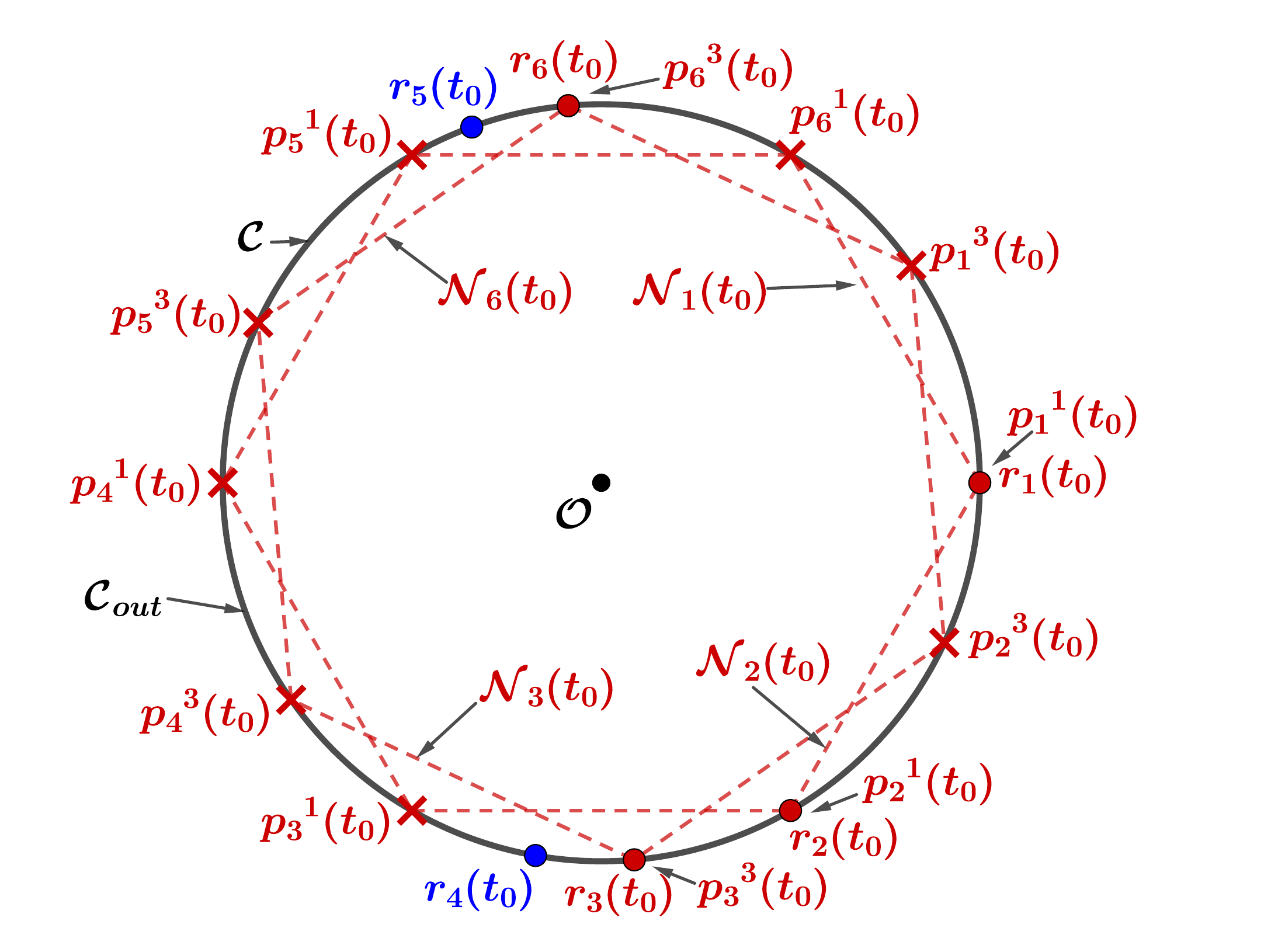}
        \caption*{(B)}
    \end{subfigure}
    \caption{\textit{An illustration of unique and multiple optimal assignments for six robots $r_1(t_0), \ldots, r_6(t_0)$ placed on a circle $\mathcal{C}$, with destination points determined according to \textbf{Result}~\ref{r-2}. \textbf{(A)} \emph{Unique optimal assignment:} the \textit{extremal} set $\mathcal{E}'(t_0)=\{r_1(t_0)\}$ is shown in red (dots), and the fixed destination points $\mathcal P^1(t_0)=\{p_1^1(t_0),p_2^1(t_0),\ldots,p_5^1(t_0)\}$ (red crosses) of the regular $n$-gon $\mathcal{N}_1(t_0)$ are uniquely determined by $r_1(t_0)$. The remaining robots (blue dots) are assigned consecutively to the remaining vertices. \textbf{(B)} \emph{Multiple optimal assignments:} the \textit{extremal} set is $\mathcal{E}'(t_0)=\{r_1(t_0), r_2(t_0), r_3(t_0), r_6(t_0)\}$ (red dots). In this case, multiple regular $n$-gons are possible: the destination points (red crosses) of $\mathcal{N}_1(t_0)$ or $\mathcal{N}_2(t_0)$ are determined by $\{r_1(t_0), r_2(t_0)\}$, while those of $\mathcal{N}_3(t_0)$ or $\mathcal{N}_6(t_0)$ are determined by $\{r_3(t_0), r_6(t_0)\}$. The remaining robots are shown as blue dots.}}
    \label{Assignment}
\end{figure*}

\item Let $d_{arc}(x,y)$ denote the arc distance on the circle $\mathcal{C}$ connecting any two points $x,y \in \mathcal{C}_{out}$. For a robot $r_i \in \mathcal{R}$ with initial position $r_i(t_0) \in \mathcal{R}(t_0)$ on $\mathcal{C}_{out}$ and final position $p_i \in \mathcal{P}$, $d_{arc}\Big(r_i(t_0), p_i\Big)$ denotes the arc path traversed by $r_i$, from its initial position $r_i(t_0)\in \mathcal{R}(t_0)$ to its destination $p_i \in \mathcal P$ during the execution of algorithm $\mathcal{A}$ on $\mathcal{C}$ (See \textbf{Figure} \ref{sum}(A)). Similarly, for any two robots $r_i, r_j \in \mathcal{R}$, $d_{arc}\Big(r_i(t_0), r_j(t_0)\Big)$ denotes the arc distance between their initial positions on $\mathcal{C}_{out}$ of the circle $\mathcal{C}$. Let $\mathcal{D}\big(\mathcal R(t_0), \mathcal P;\mathcal A\big)$ denote  the total arc path distance traveled by all robots in $\mathcal R(t_0)$ to reach their respective destination points in $\mathcal P$ during the  execution of algorithm $\mathcal A$, and is given by

\begin{equation}
\mathcal{D}\big(\mathcal R(t_0), \mathcal P;\mathcal A\big) =
\sum_{k=1}^{n}
d_{arc}\Big(r_k(t_0),\, p_k\Big),
\quad
\forall\, r_k \in \mathcal R(t_0),\;
p_k \in \mathcal P.
\label{A-1}
\end{equation}

\item We say that a {\it uniform circle formation} algorithm $\bar{\mathcal A}\in\widetilde{\mathscr A}$ solves the  \textit{min-sum uniform coverage} on a circle problem over  the circles in $\widetilde{C}$ if $\mathcal{D}\big(\mathcal R(t_0), \mathcal P;\widetilde A\big) =min\{\mathcal{D}\big(\mathcal R(t_0), \mathcal P;\mathcal A\big):\mathcal A\in\widetilde{\mathscr A}\}$, $\forall \mathcal C\in\widetilde{C},\mathcal R(t_0)\in\widetilde{\mathcal R}(t_0)$. If there is no ambiguity, for a \textit{min-sum uniform coverage} on a circle algorithm $\bar{\mathcal A}$, we denote $\mathcal{D}\big(\mathcal R(t_0), \mathcal P;\widetilde A\big)$ by $\mathcal D^*$. Let $\widetilde{\mathscr A^*}$ denote the set of all \textit{min-sum uniform coverage} on a circle algorithms $\bar{\mathcal A}$ over the circles $\widetilde{C}$. In \cite{bhattacharya2009optimal} and \cite{chen2015optimal}, two different centralized deterministic algorithms were proposed to solve the \textit{min-sum uniform coverage} on a circle problem over the circles in $\widetilde{\mathcal C}$. Both algorithms solve the problem under the \textit{FSYNC} model, where the robots are assumed to have unlimited visibility, agreements on both the coordinate axes, and {\it rigid} movements.

 \item Let $\mathcal{N}_{i}(t_0)$ denote the regular $n$-gon constructed on the circumference $\mathcal{C}_{out}$ of the circle $\mathcal C$ by fixing robot $r_i$ as one of its vertices at time $t_0$. The vertices of
$\mathcal{N}_{i}(t_0)$ (lying on $\mathcal{C}_{out}$) form the set
$\mathcal P^{i}(t_0)=\{p_1^{i}(t_0), p_2^{i}(t_0), \ldots, p_n^{i}(t_0)\}$, where $r_i(t_0)$ coincides with the vertex $p_i^{i}(t_0)$ (See \textbf{Figure} \ref{sum}(B)).

Suppose a robot $r_i(t) \in \mathcal{R}(t)$, located on $\mathcal{C}_{out}$, is kept fixed at its position, and the destination points of the remaining $n-1$ robots are assigned with respect to $r_i(t)$ (as described in \textbf{Result}~\ref{r-2} which is explained later in this paper (See page~\pageref{r-2})). If the total arc distance traveled by these $n-1$ robots is equal to $\mathcal{D}^*$, then $r_i(t)$ is called an \emph{extremal robot} at time $t$. The value of $\mathcal{D}^*$ is given by
\begin{equation}
\mathcal D^* =
\sum_{k=1}^{n} d_{\text{arc}}\Big(r_k(t_0),\, p_k^{i}(t_0)\Big),
\qquad
\forall\, p_k^{i}(t_0) \in \mathcal P^{i}(t_0) \land
\ \forall\, r_k \in \mathcal R(t_0).
\label{e-1}
\end{equation}
The set of all such \textit{extremal} robots in $\mathcal{R}(t)$ is referred to as the 
\emph{extremal set} and is denoted by $\mathcal{E}'(t)$.

\item Consider a circle $\mathcal C$, a robot configuration $\mathcal R(t)$ on $\mathcal{C}_{out}$ of  $\mathcal C$, a {\it regular $n$-gon} $\mathcal{N}_{i}(t)$ and an \textit{extremal robot} $r_i(t)\in \mathcal{R}(t)$.  An {\it assignment} of an \textit{extremal robot} $r_i(t)$ positions in  $\mathcal R(t)$ to the point set $\mathcal{P}^{i}(t)$ is a bijection $f_i(t):\mathcal R(t)\rightarrow \mathcal{P}^{i}(t)$. A point set $\mathcal{P}^{i}(t)$ on $\mathcal{C}_{out}$ is said to be the {\it destination point set} if there exists an  {\it assignment} $f$ from $\mathcal R(t)$ to $\mathcal{P}^{i}(t)$ such that the total arc distance traversed by all robots to reach their respective destination points is equal to $\mathcal{D}^*$. Such an {\it assignment} is called an  {\it optimal assignment} for $r_i(t)\in \mathcal{R}(t)$ for $\mathcal{P}^{i}(t)$ (See \textbf{Figure} \ref{Assignment}). An \textit{optimal assignment} always exists, as established in Lemma 2 of \cite{bhattacharya2009optimal}. Note that for a given robot configuration on the circle $\mathcal C$, there may exist more than one such \textit{extremal robot} position in $\mathcal{R}(t)$. To compute a \emph{destination point set} for $\mathcal{R}(t)$ on the circle $\mathcal C$, we first identify the set $\mathcal{E}'(t)$ using \textbf{Result}~\ref{r-2} (See page~\pageref{r-2}). If $|\mathcal{E}'(t)|=1$ and $r_i(t) \in \mathcal{E}'(t)$, then the {\it optimum assignment} is unique with respect to the robot $r_i(t)$, and consequently, $\mathcal{D}^*$ is unique. Since there may exist multiple assignments corresponding to different fixed robots $r_i$, we denote the total arc distance from the robot positions to their respective destination points at time $t$ by $\mathcal{D}_{r_i}(t)$. If $r_i$ is an \textit{extremal} robot, this value is denoted by $\mathcal{D}^{*}_{r_i}(t)$.

\end{itemize}

 \begin{figure}[!ht]
    \centering
    \includegraphics[width=0.5\textwidth]{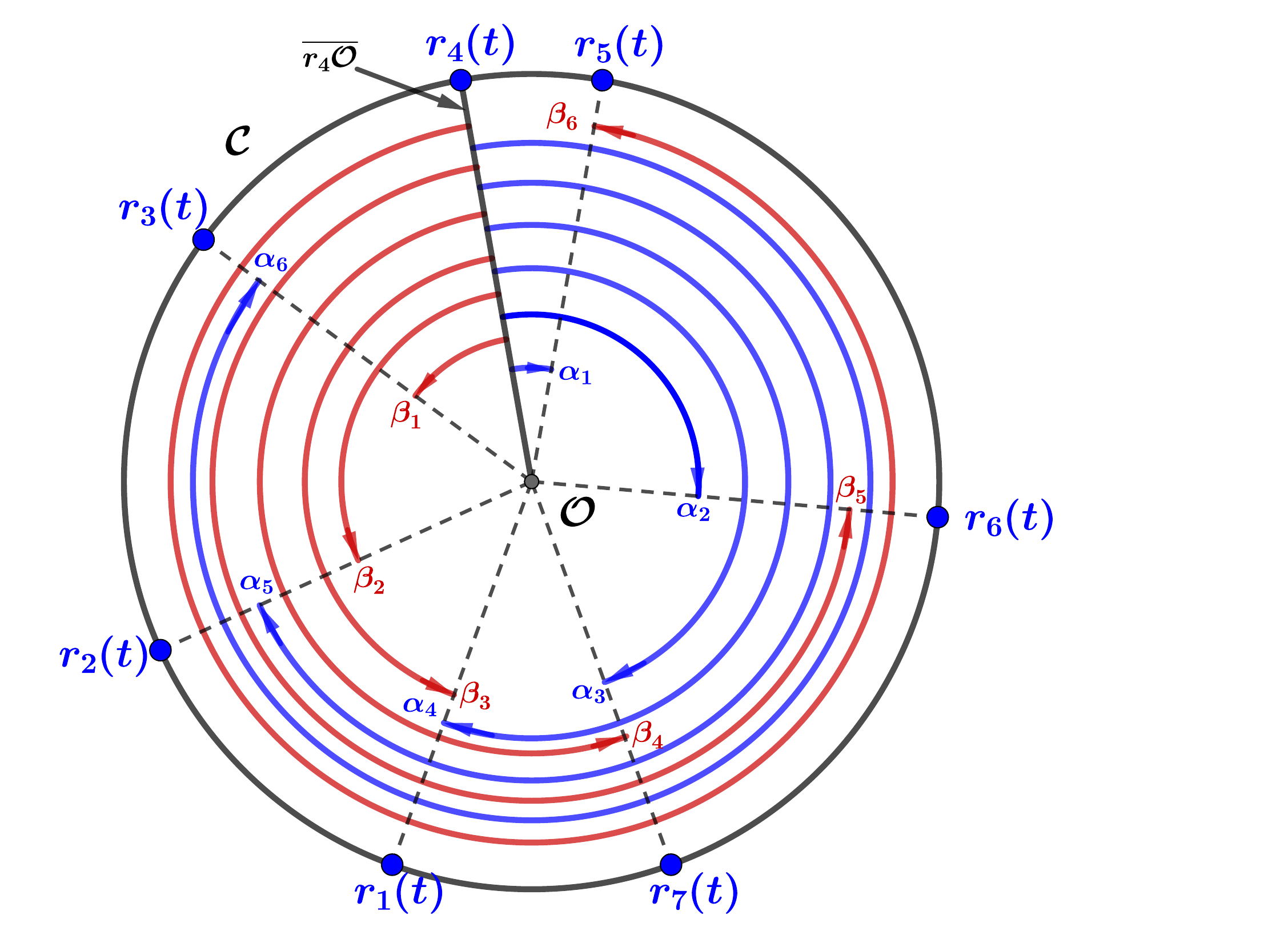}
    \caption{\textit{An illustration of configuration view of a robot. Let $\mathcal R=\{r_1,r_2,\ldots,r_7\}$. The clockwise view of $r_4$ is $\mathscr{V}^+(r_4) = (0,\alpha_1,\alpha_2,\ldots,\alpha_6)$, while the counterclockwise view is $\mathscr{V}^-(r_4) =(0,\beta_1,\beta_2,\ldots,\beta_6)$.  Using lexicographic ordering, the view of robot $r_4$ is  $\mathscr{V}(r_4)=\min\bigl(\mathscr{V}^+(r_4),\mathscr{V}^-(r_4)\bigr)= \mathscr{V}^+(r_4)$.}}
    \label{view}
\end{figure}

\noindent {\it \textbf{Configuration view}.} Let $\mathcal R(t)=\lbrace r_1(t),r_2(t),\ldots,r_n(t)\rbrace$ denote the set of robot positions in the order by which the robots would be encountered if the line segment $\overline{r_i(t)\mathcal{O}}$ is rotated by an angle of $2\pi$ in the clockwise direction. Let $\alpha_k$ denote the angle by which $\overline{r_i(t)\mathcal{O}}$ has been rotated when the $k^{th}$ robot position in $\mathcal R(t)$ is being encountered. Deﬁne the clockwise view of $r_i$ as $\mathscr{V}^+ (r_i ) = (\alpha_1, \alpha_2, \ldots, \alpha_n$) (See \textbf{Figure} \ref{view} ). Similarly, the counter-clockwise view of $r_i$ can be defined. The view of a robot $r_i$ is given by $\mathscr{V}(r_i)=\min (\mathscr{V}^+ (r_i ),\mathscr{V}^- (r_i ))$, where the minimum is computed by selecting the lexicographically smaller view between clockwise and counter-clockwise views. We have the following results.

\begin{result}\cite{das2022k}
\label{symm}
    A configuration $\widetilde{\mathcal R}(t)$ admits a line of symmetry if and only if there exist two robot positions $r_i,\; r_j\in \mathcal{R}(t)$, not necessarily distinct, such that $\mathscr{V}^+ (r_i )=\mathscr{V}^- (r_j )$.
\end{result} 
\begin{result}\cite{das2022k}
    A configuration $\widetilde{\mathcal R}(t)$ admits rotational symmetry if and only if there exist two distinct robot positions $r_i,\; r_j\in \mathcal{R}(t)$, such that $\mathscr{V}^+ (r_i )=\mathscr{V}^+ (r_j )$.
\end{result} 

\noindent We have the following observation.
\begin{observation}
\label{obs:asymmetric-ordering}
If the configuration is asymmetric, then the robots can be totally ordered according to the lexicographic ordering of their minimum views.
\end{observation}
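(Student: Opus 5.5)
We prove the Observation by showing that in an asymmetric configuration no two robots have the same minimum view. Since views are finite sequences of reals, the lexicographic order is a total order on views. It then induces a strict total order on robots: $r_i \prec r_j$ iff $\mathscr{V}(r_i) < \mathscr{V}(r_j)$ lexicographically.

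The key step is injectivity of $r_i \mapsto \mathscr{V}(r_i)$. Suppose, for contradiction, that two distinct robots $r_i \neq r_j$ satisfy $\mathscr{V}(r_i)=\mathscr{V}(r_j)$. Each minimum view equals either the clockwise or the counter-clockwise view, which gives four cases.

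\begin{itemize}
\item If $\mathscr{V}^+(r_i)=\mathscr{V}^+(r_j)$ (or, symmetrically, $\mathscr{V}^-(r_i)=\mathscr{V}^-(r_j)$), the second result of \cite{das2022k} gives rotational symmetry. For the counter-clockwise variant, note that the counter-clockwise view determines the clockwise view: read the angular gaps in reverse order. Hence equal counter-clockwise views also force equal clockwise views.
\item If $\mathscr{V}^+(r_i)=\mathscr{V}^-(r_j)$ (or the mixed case the other way round), Result~\ref{symm} gives a line of symmetry.
\end{itemize}

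In every case the configuration is symmetric, contradicting the hypothesis. So distinct robots have distinct minimum views.

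The only subtle point is the correspondence between clockwise and counter-clockwise views in the counter-clockwise case. I would handle it explicitly by writing a view as the cyclic sequence of gaps between consecutive robots starting at $r_i$. Reversing the direction of traversal reverses this sequence, so equality of one pair of directional views implies equality of the opposite pair.

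Because the map from robots to minimum views is injective into a totally ordered set, the pulled-back relation is a strict total order on $\mathcal{R}(t)$. Each robot can compute it from its own Look snapshot. The order is independent of the robot's local coordinate system and chirality, since $\mathscr{V}$ takes the minimum over both directions and uses angles, which are invariant under the robots' differing unit lengths.
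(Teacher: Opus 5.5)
Your argument is correct and matches the reasoning the paper leaves implicit. The paper states the Observation without proof, directly after the two view characterizations from \cite{das2022k}, and it follows from them exactly as you show: equal minimum views of distinct robots force either equal same-direction views (rotational symmetry) or equal opposite-direction views (a line of symmetry, Result~\ref{symm}). Your reversal step also holds for the paper's cumulative-angle definition of views, where the counter-clockwise view is the clockwise angles, each subtracted from $2\pi$, read in reverse order after the leading zero.
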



 \noindent Note that if a configuration admits symmetry, then no algorithm can distinguish between a robot $r$ and its symmetric image. Consequently, a robot and its symmetric image(s) may decide to move simultaneously. A \textit{pending move} may exist if an algorithm permits at least two robots to move at the same time. Due to the asynchronous nature of the scheduler, it may happen that one of the robots allowed to move completes its entire Look-Compute-Move cycle while another robot does not perform the Move phase.
 \vspace{0.2cm}

 \noindent\textbf{Partitioning of Initial Configurations:}
    \label{class}
    Let $\widetilde{\mathcal R}(t_0)$ denote the set of all initial robot configurations. We classify $\widetilde{\mathcal R}(t_0)$ into six disjoint classes:
\begin{align*}
\mathcal{I}_1 &= 
\left\{\mathcal{R}(t_0)\in\widetilde{\mathcal R}(t_0)\; \middle|\;
\mathcal{R}(t_0)\text{ is asymmetric}\right\}, \\[0.4em]
\mathcal{I}_2 &= 
\left\{\mathcal{R}(t_0)\in\widetilde{\mathcal R}(t_0)\; \middle|\;
\begin{aligned}
&\mathcal{R}(t_0)\text{ is symmetric, admits exactly one LoS,} \\
&\text{and \textit{extremal} robot/robots on it}
\end{aligned}
\right\}, \\[0.4em]
\mathcal{I}_3 &= 
\left\{\mathcal{R}(t_0)\in\widetilde{\mathcal R}(t_0)\; \middle|\;
\begin{aligned}
&\mathcal{R}(t_0)\text{ is symmetric, admits a single LoS,} \\
&\text{and has no \textit{extremal} robot on it}
\end{aligned}
\right\}, \\[0.4em]
\mathcal{I}_4 &= 
\left\{\mathcal{R}(t_0)\in\widetilde{\mathcal R}(t_0)\; \middle|\;
\begin{aligned}
&\mathcal{R}(t_0)\text{ admits rotational symmetry} \\
&\text{of order } w\ge 2 \text{ with no line of symmetry}
\end{aligned}
\right\}, \\[0.4em]
\mathcal{I}_5 &= 
\left\{\mathcal{R}(t_0)\in\widetilde{\mathcal R}(t_0)\; \middle|\;
\begin{aligned}
&\mathcal{R}(t_0)\text{ is symmetric, admits multiple LoS,} \\
&\text{and has \textit{extremal} robots on at least one of them}
\end{aligned}
\right\}, \\[0.4em]
\mathcal{I}_6 &= 
\left\{\mathcal{R}(t_0)\in\widetilde{\mathcal R}(t_0)\; \middle|\;
\begin{aligned}
&\mathcal{R}(t_0)\text{ is symmetric, admits multiple LoS,} \\
&\text{and has no \textit{extremal} robot on any LoS}
\end{aligned}
\right\}.
\end{align*}

\begin{table}[!ht]
\centering
\setlength{\extrarowheight}{1.5pt}
\renewcommand{\arraystretch}{1.3}
\begin{tabular}{|c|p{10.5cm}|}
\hline
\textbf{Symbol} & \textbf{Description} \\
\hline
$\mathcal R=\{r_1,r_2,\ldots,r_n\}$ 
& Set of $n$ autonomous, anonymous, homogeneous, identical robots. \\
\hline
$r_i(t)$ 
& Position of robot $r_i$ at time $t$. \\
\hline
$\mathcal R(t)$ 
& Multiset of robot positions at time $t$ on $\mathcal{C}_{out}$. \\
\hline
$\widetilde{\mathcal R}(t)$ 
& Set of all robot configurations at time $t$. \\
\hline
$t_0$ 
& Initial time at which all robots are stationary. \\
\hline
$\mathcal C$ 
& Given circle with center $\mathcal O$. \\
\hline
$\mathcal{C}_{out}$ 
& Circumference (outer boundary) of circle $\mathcal C$. \\
\hline
$\widetilde{\mathcal C}$ 
& Set of all circles with finite positive radius. \\
\hline
$\widetilde{\mathscr A}$ 
& Set of all algorithms that solve the uniform circle formation problem. \\
\hline
$\mathcal A$ 
& A uniform circle formation algorithm, $\mathcal A \in \widetilde{\mathscr A}$. \\
\hline
$p_i$ 
& Destination point of robot $r_i$ on $\mathcal{C}_{out}$. \\
\hline
$\mathcal P=\{p_1,p_2,\ldots,p_n\}$ 
& Set of destination points forming a regular $n$-gon on $\mathcal{C}_{out}$. \\
\hline
$\mathcal N_i(t_0)$ 
& Regular $n$-gon constructed by fixing robot $r_i$ as a vertex at time $t_0$. \\
\hline
\end{tabular}
\caption{Table of notations used throughout the paper (Part I).}
\label{tab:notation-1}
\end{table}

\begin{table}[!ht]
\centering
\setlength{\extrarowheight}{1.5pt}
\renewcommand{\arraystretch}{1.3}
\begin{tabular}{|c|p{10.5cm}|}
\hline
\textbf{Symbol} & \textbf{Description} \\
\hline
$\mathcal P^{i}(t_0)$ 
& Vertex set of $\mathcal N_i(t_0)$ on $\mathcal{C}_{out}$. \\
\hline
$d_{\text{arc}}(x,y)$ 
& Arc distance between points $x$ and $y$ on $\mathcal{C}_{out}$. \\
\hline
$\mathcal D(\mathcal R(t_0),\mathcal P;\mathcal A)$ 
& Total sum of arc distance traveled by all robots under algorithm $\mathcal A$. \\
\hline
$\mathcal D^*$ 
& Minimum total sum of arc distance over all uniform circle formation algorithms. \\
\hline
$\widetilde{\mathscr A}^*$ 
& Set of all \textit{min-sum uniform coverage} on a circle algorithms. \\
\hline
$r_i$ (\textit{extremal}) 
& A robot whose fixed position yields total arc distance $\mathcal D^*$. \\
\hline
$\mathcal E'(t)$ 
& Set of all \textit{extremal} robots at time $t$. \\
\hline
$f_i(t)$ 
& Assignment (bijection) from robot positions to $\mathcal P^{i}(t)$. \\
\hline
$\mathcal D_{r_i}(t)$ 
& Total arc distance when robot $r_i$ is used as the fixed reference. \\
\hline
$\mathcal D^{*}_{r_i}(t)$ 
& Optimal total arc distance when $r_i$ is an \textit{extremal} robot. \\
\hline
$\mathscr V^+(r_i), \mathscr V^-(r_i)$  
& Clockwise and counterclockwise configuration views of robot $r_i$. \\
\hline
$\mathscr V(r_i)$ 
& View of a robot $r_i$, defined as $\min(\mathscr V^+(r_i),\mathscr V^-(r_i))$. \\
\hline
$\mathcal L$ 
& Single line of symmetry. \\
\hline
$w$ 
& Order of rotational symmetry of a configuration. \\
\hline
$\mathcal R_0,\mathcal R_1,\ldots,\mathcal R_{w-1}$ 
& Rotational equivalence classes induced by symmetry of order $w$. \\
\hline
\end{tabular}
\caption{Table of notations used throughout the paper (Part II).}
\label{tab:notation-2}
\end{table}

\noindent
We now introduce the notation and basic definitions used throughout the paper. \textbf{Table}~\ref{tab:notation-1} (See page~\pageref{tab:notation-1}) and \textbf{Table}~\ref{tab:notation-2} (See page~\pageref{tab:notation-2}) considers the symbols associated with the robot model, geometric setting, distance measures, assignments, symmetry, and configuration views, which are used consistently throughout the paper.
Unless stated otherwise, all robots operate under the \textit{ASYNC} model and move along the circumference of the given circle. 
This notation is used consistently in all subsequent sections.

\noindent In the initial configuration, we assume that the robots are located at distinct locations on the circumference of the given circle. The next lemma provides a characterization of the configurations from which the \textit{min-sum uniform coverage} on a circle problem cannot be solved, even in a stronger model than that assumed in this work.
\begin{lemma}\label{lemma1}
    Let $\mathcal A$ be an algorithm that solves the \textit{min-sum uniform coverage} problem on a circle for the robots under the \textit{FSYNC} scheduler. Suppose the robots are endowed with a weak-multiplicity detection capability. If there exists a configuration $\widetilde{\mathcal R}(t)$, for $t > 0$, containing a multiplicity, then $\mathcal A$ cannot solve the problem.
\end{lemma}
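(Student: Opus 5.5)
The plan is to argue by contradiction, using the fact that once two robots share a point they can never again be told apart. Suppose $\mathcal A$ solves the \textit{min-sum uniform coverage} problem under \textit{FSYNC} with weak multiplicity detection. Suppose also that for some initial configuration $\mathcal R(t_0)$ the execution reaches a time $t>0$ at which a multiplicity exists. Fix two robots $r_i,r_j$ with $r_i(t)=r_j(t)$.

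\textbf{Indistinguishability.} Under \textit{FSYNC} the adversary may give $r_i$ and $r_j$ identical local coordinate systems: the same unit distance, orientation and chirality, with origin at the common position. Since the robots are oblivious, the destination a robot computes depends only on its current snapshot and its local frame. Because $r_i$ and $r_j$ occupy the same point, they obtain the same snapshot, compute the same destination, and move along the same arc in every round. The scheduler can also choose to let both reach their destination, or stop both after the same distance $\ge\delta$. By induction over rounds, $r_i(t')=r_j(t')$ for all $t'\ge t$. Weak multiplicity detection only reports that the point is a multiplicity; it gives no information that could break the symmetry between $r_i$ and $r_j$.

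\textbf{Contradiction with the goal.} A solution must reach some time $t^*$ with $\mathcal R(t^*)=\mathcal P^*$, a set of $n$ distinct vertices of a regular $n$-gon. This requires $r_i(t^*)\neq r_j(t^*)$, since at most $n-1$ distinct points are occupied while $r_i,r_j$ coincide. That contradicts the invariant above, so $\mathcal A$ cannot solve the problem from $\mathcal R(t_0)$.

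\textbf{Main obstacle.} The delicate step is justifying that the adversary can force the co-located robots to behave identically. Two points need care. First, the local frames are fixed per robot rather than chosen afresh, so I would either assume they may be chosen by the adversary initially, or argue that the algorithm must succeed for every frame assignment. Second, the movement is non-rigid, so the adversary must also stop both robots at the same point. Under \textit{FSYNC} with identical frames this is straightforward. If frames are fixed, I would instead note that $\mathcal A$ must work for the frame choice in which $r_i$ and $r_j$ agree, so it suffices to exhibit that one execution. I would also remark that optimality plays no role here: even plain uniform circle formation fails once a multiplicity forms.
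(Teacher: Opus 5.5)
Your proposal follows the paper's proof: under \textit{FSYNC} the co-located robots take the same snapshot and compute the same move, and the adversary moves them by the same distance, so the multiplicity persists forever and the $n$ distinct vertices of $\mathcal P^*$ can never all be occupied. The paper simply assumes that co-located robots compute identical movements, and your fixed-frames fallback does not justify this: giving $r_j$ the frame of $r_i$ from the start can change the run before time $t$, so the given multiplicity need not occur. The clean fix is to let the adversary choose frames at each activation, which is what the paper tacitly assumes.
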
 
\begin{proof}
 According to the hypothesis, we assume that there exists a configuration $\widetilde{\mathcal R}(t)$, $t > 0$, with a corresponding multiplicity. Under the \textit{FSYNC} model, all robots execute their \emph{Look} phase simultaneously. Consequently, co-located robots forming a multiplicity obtain identical views of the configuration and compute identical movements.
 If the adversary schedules all robots in the multiplicity to move by the same distance, the multiplicity present in configuration $\widetilde{\mathcal R}(t)$ is preserved in every subsequent configuration $\widetilde{\mathcal R}(t')$ for all $t' \geq t$. Consequently, the robots forming a multiplicity in $\widetilde{\mathcal R}(t)$ cannot reach different destination points on the circumference of the given circle $\mathcal C$. As a result, during the execution of algorithm $\mathcal A$, the robots never attain the final configuration in which all robots are positioned on a uniform circle while satisfying the objective constraint.
\end{proof}
\noindent Thus, \textbf{Lemma}~\ref{lemma1} shows that a collision-avoidance mechanism is necessary to ensure correct execution. In the next section, we discuss the min-sum uniform coverage problem on a line segment.

\section{Optimal Placement of Robots on Line Segment}
\label{Line Segment}

\noindent This section studies the min-sum problem on a line segment to isolate the effect of the min-sum constraint in a simpler geometric setting. Since robots are restricted to move along a single dimension, this setting allows us to analyze the structure of optimal and collision-free solutions more clearly. The observations obtained here, such as the role of \textit{extremal} robots and the behavior of optimal assignments, are later used directly in the algorithms for uniform circle formation, where robots are similarly constrained to move along the circumference of a circle.

\noindent
We assume that the robots are initially located on the line segment 
$[a,b]$ with $a<b$ and are allowed to move only along this segment, remaining within $[a,b]$ at all times. Let $r_1(t) < r_2(t)<\cdots < r_n(t)$ denote the robot positions ordered by increasing distance from the endpoint $a$. The ordering of the robots is denoted by $\mathcal{F}$. To simplify notation, we denote both a robot and its position by $r_i$. We will use the following result for optimal uniform coverage on a unit interval $[0,1]$.

\begin{result}\cite{bhattacharya2009optimal}
The optimal uniform coverage is obtained by moving point $A_i$ 
to position $\tfrac{2i-1}{2n}$, for $i = 1, 2, \dots, n$.
\label{r-1}
\end{result}

\noindent To extend this result from the unit interval $[0,1]$ to an arbitrary finite interval $[a,b]$, we consider the affine transformation
$T(z)=a+(b-a)z$, for $z\in[0,1]$. The mapping $T$ scales every distance
by the factor $(b-a)$ and then translates all points by $a$, while
preserving the ordering and relative spacing of points. Under this
transformation, every pairwise distance between the initial configuration $X$ and any feasible target configuration $Y$ is multiplied by $(b-a)$, and hence the corresponding movement cost satisfies $C(T(X),T(Y))=(b-a)\,C(X,Y)$. Since this scaling multiplies the cost of every feasible configuration by the same positive constant, the ordering of all solutions remains unchanged, and therefore the arg-min is preserved. Consequently, the optimal configuration obtained on $[0,1]$ remains optimal under $T$ on the interval $[a,b]$.

\begin{observation}
For a finite line segment $[a,b]$ containing $n$ robots, the optimal
target configuration that minimizes the total movement cost is obtained
by placing the robots at the evenly spaced positions
\[
x_i^{*}
=
a+\tfrac{2i-1}{2n}\cdot (b-a),
\qquad i = 1,2,\dots,n .
\]
    
\end{observation}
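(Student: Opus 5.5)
The plan is to derive the observation in two steps: first transport Result~\ref{r-1} from $[0,1]$ to $[a,b]$ using the affine map already introduced, and then check that this transported placement is the target set $X^*$ of Problem Definition~1. The cost in question is the total displacement $\sum_i |r_i - y_{f(r_i)}|$ under the best bijection $f$, and a candidate target configuration must be a set of $n$ evenly spaced points covering $[a,b]$ uniformly.

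\textbf{Reduction to the unit interval.} Given robots at positions $r_1<\cdots<r_n$ in $[a,b]$, I would set $z_i=T^{-1}(r_i)=(r_i-a)/(b-a)\in[0,1]$. For any target set $Y\subset[a,b]$ and any bijection $f$, the cost satisfies
$\sum_i |r_i-f(r_i)| = (b-a)\sum_i |z_i - T^{-1}(f(r_i))|$.
Since $b-a>0$, a pair $(Y,f)$ minimizes the cost on $[a,b]$ if and only if $(T^{-1}(Y),\,T^{-1}\circ f\circ T)$ minimizes it on $[0,1]$. This is the argmin-preservation argument stated just before the observation.

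\textbf{Applying Result~\ref{r-1}.} On $[0,1]$ the optimal placement sends the $i$-th point to $\tfrac{2i-1}{2n}$. Its image under $T$ is $a+\tfrac{2i-1}{2n}(b-a)$, which equals $x_i^*$ and agrees with Problem Definition~1, since $i-\tfrac12=\tfrac{2i-1}{2}$.

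\textbf{Assignment to the sorted order.} I would also record that the optimal assignment sends $r_i$ to $x_i^*$. The justification is the standard exchange argument on the line: if $r_i<r_j$ were matched to $x_{\sigma(i)}>x_{\sigma(j)}$, swapping the two targets does not increase $|r_i-x|+|r_j-y|$. Applying this swap repeatedly sorts the matching, so the order-preserving bijection attains the minimum.

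\textbf{Main obstacle.} The delicate point is what Result~\ref{r-1} optimizes over. If its target family is all evenly spaced configurations of spacing $1/n$ that cover the interval, then it is only the half-offset placement $\tfrac{2i-1}{2n}$ that is forced. I would therefore state that the feasible family on $[a,b]$ is exactly the $T$-image of the feasible family on $[0,1]$, namely spacing $(b-a)/n$ with the same coverage requirement. With that matching of families, the bijection between optima is immediate and the observation follows.
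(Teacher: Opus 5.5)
Your proposal is correct and takes the paper's route: move the problem to $[0,1]$ via $T(z)=a+(b-a)z$, note that every cost scales by the positive factor $b-a$ so the minimizer is preserved, and pull back the placement $\tfrac{2i-1}{2n}$ from \textbf{Result}~\ref{r-1}. Your exchange argument for the order-preserving assignment and your remark that the feasible target families correspond under $T$ spell out points the paper leaves implicit (it delegates both to \textbf{Result}~\ref{r-1}), but they do not change the approach.
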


\noindent In this way, the solution for the unit interval can be applied directly to any finite segment, resulting in the optimal configuration for robot placement along $[a,b]$.
    


\begin{figure}[!ht]
    \centering
    
    \includegraphics[width=0.6\textwidth]{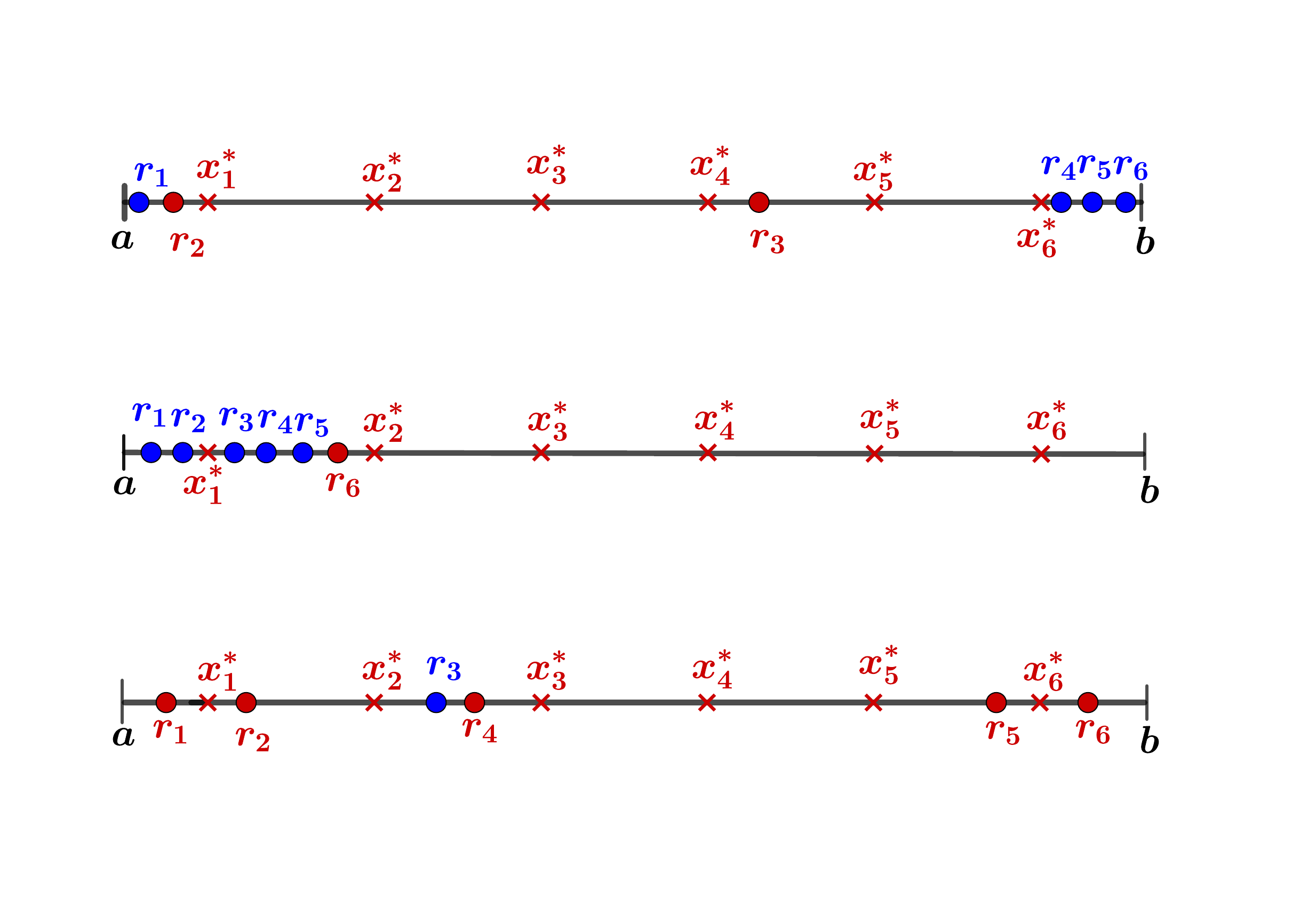}
    \caption{\textit{Illustration of candidate robots (red dots) on the line segment $[a,b]$ with unobstructed paths to their target positions $x_i^{*}$ (red crosses), and hence allowed to move. The remaining robots (blue dots), whose paths are obstructed, remain stationary until a free path becomes available.}}
    \label{Candidate}
\end{figure}

\subsection{Overview of the Algorithm \it{1dMinSumL()}}
\label{Overview-1}

\noindent This section presents a deterministic distributed algorithm designed to solve the optimal placement of robots on the line segment in finite time. The algorithm operates in two phases. First, the destination point set $x_i^*$ is computed, and each robot is assigned to a designated position within this set. A robot $r_i$ is identified as a \textit{candidate} robot if and only if it possesses an unobstructed path to its assigned destination $x_i^*$ (See \textbf{Figure} \ref{Candidate}). Movement is permitted exclusively for \textit{candidate} robots. It is necessary to establish an initial ordering of all robots, denoted by $\mathcal F$, to ensure systematic progress. At any time instant $t > t_0$, at least one \textit{candidate} robot will exist, thus ensuring that the algorithm will achieve optimal placement of robots on a line segment within a finite amount of time. If multiple candidate robots are present at any instant of time, the tie is broken by choosing the robot that attains the highest order under the ordering defined by 
$\mathcal F$. Thus, at any instant of time, exactly one candidate robot is permitted to move toward its destination, thereby ensuring a sequential execution of the algorithm. The pseudo-code of this procedure is given in \textbf{Algorithm} 1.

\begin{algorithm}
\caption{\textit{: 1dMinSumL()}}
\label{alg:1dMinSumL}
\begin{algorithmic}[1]
\Require Endpoints $a$ and $b$ of the line segment; initial robot positions
$r_1(t_0), r_2(t_0), \ldots, r_n(t_0)$
\Ensure Optimal assignment $r_i \rightarrow x_i^{*}$

\State Compute $x_i^{*} = a + \left(i - \tfrac{1}{2}\right)\dfrac{(b-a)}{n}$ for each $i$
\State Find the \textit{candidate} robot $r_i$

\For{$i = 1$ to $n$}
    \If{$r_i(t_0) = x_i^{*}$}
        \State Mark $r_i$ as terminated
    \Else
        \If{$r_i$ is the \textit{candidate} robot and has the maximum order with respect to $\mathcal F$}
            \State Move $r_i \rightarrow x_i^{*}$
        \Else
            \State $r_i$ waits
        \EndIf
    \EndIf
\EndFor

\end{algorithmic}
\end{algorithm}

\subsection{Correctness of the Algorithm \it{1dMinSumL()}}
\label{correct-1}

\begin{observation}
\label{obs-endpoint-invariant}
In the line-segment setting, the optimal min-sum assignment is independent of the choice of endpoint. Ordering the robots by increasing distance from endpoint $a$ or from endpoint $b$ results in the same robot--destination mapping.
\end{observation}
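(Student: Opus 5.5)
The plan is to make the mapping explicit from both ends and check that they agree. The target set $X^*=\{x_1^*,\dots,x_n^*\}$, with $x_i^*=a+\frac{2i-1}{2n}(b-a)$, is symmetric about the midpoint $\frac{a+b}{2}$. Indeed,
\[
x_i^*+x_{n+1-i}^* = 2a+\tfrac{(2i-1)+(2n+1-2i)}{2n}(b-a)=a+b .
\]
So the target set is the same whichever endpoint we measure from. By \textbf{Result}~\ref{r-1}, transported to $[a,b]$ via the affine map $T$ as in the preceding Observation, ordering the robots from $a$ as $r_1<\dots<r_n$ gives the optimal assignment $r_i\mapsto x_i^*$. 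This is the order-preserving, or sorted, matching.

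Next, order the robots from endpoint $b$, so that $r'_j=r_{n+1-j}$. The targets listed by increasing distance from $b$ are $y_j^*=b-\frac{2j-1}{2n}(b-a)$. By the symmetry identity above, $y_j^*=x_{n+1-j}^*$. Applying \textbf{Result}~\ref{r-1} in the reflected coordinate $z\mapsto a+b-z$ assigns $r'_j\mapsto y_j^*$, which is the same as $r_{n+1-j}\mapsto x_{n+1-j}^*$. Hence every robot receives the same destination under both orderings. The reflection is an isometry that maps $[a,b]$ onto itself and $X^*$ onto itself, so it preserves costs and optimality. This justifies using \textbf{Result}~\ref{r-1} in the reflected frame.

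The main obstacle is to claim the \emph{optimal} mapping is independent of the endpoint, not just the sorted one. This requires the sorted matching to be the optimum we refer to, even though for points on a line other optimal bijections can tie. I would handle this with the standard exchange argument. If $r_i<r_j$ were assigned to targets $x>y$, swapping them does not increase $|r_i-x|+|r_j-y|$. Hence some optimal bijection is order-preserving, and the order-preserving bijection onto a fixed sorted target set is unique. Since reversing the order and reversing the targets gives the same pairing, the canonical (sorted) optimal assignment used by \textit{1dMinSumL()} is identical from either endpoint. This is what the algorithm needs for robots with no common orientation.
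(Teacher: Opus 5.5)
Your proof is correct and follows the same idea as the paper's one-line justification: reversing the robot order also reverses the target order, so the pairing $r_i\mapsto x_i^*$ is unchanged, and you make this precise with the midpoint identity $x_i^*+x_{n+1-i}^*=a+b$, i.e.\ $y_j^*=x_{n+1-j}^*$. Your exchange argument adds something the paper omits: optimal bijections on a line can tie, and the argument identifies the canonical order-preserving optimal matching (the one \textit{1dMinSumL()} uses) as the object that is independent of the endpoint.
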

\noindent
This holds because reversing the robot ordering also reverses the ordering of destination points, without changing the final assignment.

\begin{lemma}
\label{l-1}
The algorithm \textit{1dMinSumL()} achieves collision-free optimal placement of robots for the min-sum problem on a line segment.
\end{lemma}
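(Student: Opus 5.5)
The plan is to establish three things in order: the total cost of the algorithm equals the optimum, no two robots ever collide, and the algorithm terminates after finitely many activations.

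\textbf{Optimality.} Robots are ordered by $\mathcal F$ as $r_1<\cdots<r_n$, and each $r_i$ is assigned $x_i^*$. By Result~\ref{r-1}, transported to $[a,b]$ via the affine map $T$, this order-preserving assignment is min-sum optimal. A candidate robot moves only along the straight segment from its current position toward $x_i^*$, and it keeps the same target in every subsequent activation. The argument for this is that the ordering $\mathcal F$ never changes: no robot ever passes another, which is exactly what the collision argument below establishes. Hence each robot's trajectory is monotone from $r_i(t_0)$ to $x_i^*$, so its travelled distance is exactly $|r_i(t_0)-x_i^*|$, even under non-rigid stops. Summing over $i$ gives the optimal cost. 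Observation~\ref{obs-endpoint-invariant} ensures that disoriented robots, who may order from $a$ or from $b$, compute the same mapping.

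\textbf{Collision freedom.} A robot $r_i$ is a candidate only if the open interval between $r_i$ and $x_i^*$ contains no other robot. Among all candidates, only the one of maximum order in $\mathcal F$ moves. I will argue that under ASYNC a stale snapshot cannot cause trouble. While $r_i$ is moving, no other robot's view makes it the chosen candidate, since $r_i$ remains a candidate and keeps the maximum order. If this needs strengthening, I will use the fact that obstruction only decreases as robots settle at targets. The interval $[r_i,x_i^*]$ therefore stays empty, so $r_i$ never meets another robot. Moreover, $x_i^*$ itself is free: in order-preserving matching, a robot $r_j$ sitting at $x_i^*$ with $j\ne i$ would have to lie between $r_i$ and $x_i^*$ or beyond it, and the first case contradicts the empty path.

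\textbf{Progress and termination.} This is the main obstacle: showing that whenever some robot is not at its target, at least one candidate exists. My first approach is an extremal argument. Take a robot $r_i$ not at $x_i^*$, say with $x_i^*>r_i$. Consider the maximal run $r_i,r_{i+1},\dots,r_k$ of robots with $r_j<x_j^*$, and look at $r_k$. Either $r_{k+1}$ does not exist, or $r_{k+1}\ge x_{k+1}^*>x_k^*$. In both cases the path from $r_k$ to $x_k^*$ is unobstructed, so $r_k$ is a candidate. The case $x_i^*<r_i$ is symmetric.

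Each move either reaches the target or advances by at least $\delta$. Since no robot leaves its target once it arrives, the potential $\sum_i|r_i-x_i^*|$ strictly decreases by at least $\min(\delta,\cdot)$ per completed move. Fair scheduling then yields termination in finitely many cycles, with $\mathcal R(t)=X^*$.
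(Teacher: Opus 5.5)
Your existence argument for a candidate (the maximal run $r_i,\dots,r_k$ of robots below their targets, ending at a robot whose successor already sits at or beyond its own target) and your $\delta$-based termination are sound, and they supply what the paper's proof only asserts. The gap is in collision freedom, and your optimality paragraph depends on it.

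First, you define candidacy using the \emph{open} interval between $r_i$ and $x_i^*$, and then claim that $x_i^*$ is automatically free. It is not. Take $[a,b]=[0,1]$ and $n=2$, so $x_1^*=\tfrac14$ and $x_2^*=\tfrac34$, and start with $r_1=\tfrac34$, $r_2=\tfrac{9}{10}$. The open interval $(\tfrac34,\tfrac{9}{10})$ is empty, so $r_2$ is a candidate, and it is the one of maximum order. It can reach $x_2^*=r_1$ before $r_1$ is ever activated. Order preservation does not rule out a neighbour sitting exactly on your target. You need the closed-path condition $[r_i(t),x_i^*]\cap\mathcal R(t)=\{r_i(t)\}$, which is the paper's definition.

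Second, even with the closed interval, your reason why $[r_i(t),x_i^*]$ stays empty is false. You argue that $r_i$ remains the unique maximum-order candidate while it moves. Take $[0,1]$, $n=2$, $r_1=\tfrac45$, $r_2=\tfrac{9}{10}$: only $r_1$ is a candidate. Once $r_1$ passes $\tfrac34$, $r_2$ becomes a candidate of higher order, and under ASYNC it may Look and start moving while $r_1$ is still travelling. Moreover, disoriented robots do not agree on which endpoint is $a$. In a configuration symmetric about the midpoint, two mirror robots can each regard themselves as the maximum-order candidate; Observation~\ref{obs-endpoint-invariant} fixes the assignment, not the tie-break. The paper's proof asserts the same ``exactly one robot moves'', so it is not a safe guide here.

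The argument that works needs no unique mover:
\begin{enumerate}
\item Let $\tau$ be the first time two robots meet. Before $\tau$ every snapshot shows the same ranks, so targets never change. Hence every move, even one computed from a stale snapshot, is monotone toward the mover's own target.
\item For adjacent $r_i,r_{i+1}$, the gap can shrink only if $r_i$ moves right or $r_{i+1}$ moves left.
\item If $r_i$ moves right after a Look at time $t_L$, the closed-path condition gives $r_{i+1}(t_L)>x_i^*$. From then on, $r_{i+1}(t)\ge\min\{r_{i+1}(t_L),x_{i+1}^*\}>x_i^*\ge r_i(t)$. The case of $r_{i+1}$ moving left is symmetric.
\item So adjacent robots never meet, contradicting the choice of $\tau$.
\end{enumerate}
This also gives your ``obstruction only decreases'': a candidate stays a candidate until it arrives. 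With it in place, your monotone-trajectory optimality argument goes through.
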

\begin{proof}
Let $\mathcal R(t) = \{r_1(t), r_2(t), \dots, r_n(t)\}$ denote the positions of the robots at time $t$, and let the target configuration be $X^{*}
= \lbrace x_1^{*}, x_2^{*}, \dots, x_n^{*}\}$. Now, the destination points are calculated by the given formula: 
\[
x_i^{*}
= a + \left(i - \tfrac{1}{2}\right)\frac{b-a}{n},
\quad i = 1,2,\dots,n.
\]

\noindent By definition, a robot $r_i$ is a \textit{candidate} robot if and only if the path from $r_i(t)$ to $x_i^{*}$ is unobstructed by any other robot, i.e., the line segment joining $r_i(t)$ and $x_i^{*}$ must not contain any other robot position except $r_i(t)$. More formally, 
\[
\big[r_i(t),x_i^{*}\big] \cap \mathcal R(t) = \{r_i(t)\}.
\]
Note that the assignment of the robots to their respective destinations preserves the ordering, i.e., if $r_i(t_0)<r_j(t_0)\implies x_i^{*}<x_j^{*}$, for all $i<j$. This ensures that no robot can cross one another during the execution of the algorithm. As a result, throughout the execution of the algorithm, the collision-free movement of the robots is ensured. For any time instant $t>t_0$, at least one candidate robot exists. The algorithm ensures that at any instant of time, exactly one candidate robot is allowed to move. Each candidate robot moves towards its destination position $x_i^{*}$, thereby reducing the distance to it. The distance is denoted by $d_i(t)=\text{dist}(r_i(t),x_i^{*})$, where $\text{dist}(r_i(t),x_i^{*})$ denote the distance between $r_i(t)$ and $x_i^{*}$. As a result, the system evolves monotonically toward the configuration $X^{*}$, and the algorithm guarantees collision-free convergence to the optimal placement of robots on the line segment.
\end{proof}

\begin{lemma}
\label{l-2}
Optimal placement of robots for the min-sum problem on a line segment is achieved in finite time.
\end{lemma}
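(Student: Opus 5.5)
To prove that \textit{1dMinSumL()} terminates in finite time, I would use a potential function, namely the number of unterminated robots $U(t)=|\{i : r_i(t)\neq x_i^*\}|$, together with the remaining distances $d_i(t)=|r_i(t)-x_i^*|$.

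\textbf{Step 1: order preservation.} By \textbf{Lemma}~\ref{l-1}, the left-to-right order $\mathcal F$ is invariant during execution and the assignment $r_i\mapsto x_i^*$ never changes. Since the target $x_i^*$ depends only on $a$, $b$, $n$ and the rank $i$, every activation of $r_i$ computes the same destination.

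\textbf{Step 2: a candidate always exists.} If not all robots are terminated, consider the leftmost unterminated robot $r_i$.
\begin{itemize}
\item If $x_i^*<r_i$, then every $r_j$ with $j<i$ sits at $x_j^*<x_i^*$, so the segment $[x_i^*,r_i]$ contains no other robot.
\item If $x_i^*>r_i$, I would instead take the maximal index $k\ge i$ in the run of consecutive robots lying left of their targets. Then $r_{k+1}$ is either terminated at $x_{k+1}^*>x_k^*$, or lies right of $x_{k+1}^*>x_k^*$, or does not exist. In each case $[r_k,x_k^*]$ is free, since $r_{k+1}\ge x_{k+1}^*>x_k^*$ rules out obstruction.
\end{itemize}
So a candidate exists, and the tie-break by maximum order in $\mathcal F$ selects exactly one robot $r_c$. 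All robots compute the same $r_c$ because they compute from the same (stationary) configuration.

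\textbf{Step 3: progress of the selected robot.} While $r_c$ moves, every other robot is stationary. Its path stays free, because no other robot moves and $r_c$ approaches monotonically, so it remains the unique selected candidate under every later Look. Other robots only see it at positions along its free path. The point I expect to be the main obstacle is asynchrony: a robot whose Look was taken earlier might hold a stale decision. I would handle this by arguing that since only $r_c$ ever receives a nonnull destination while unterminated, stale snapshots of other robots still yield null moves. For $r_c$ itself, a stale snapshot gives the same target $x_c^*$, so no conflicting motion arises.

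\textbf{Step 4: finite decrease.} By non-rigidity, each activation of $r_c$ either reaches $x_c^*$ or decreases $d_c$ by at least $\delta$. Fairness gives infinitely many activations, so after at most $\lceil d_c(t)/\delta\rceil+1\le\lceil (b-a)/\delta\rceil+1$ cycles, $r_c$ reaches $x_c^*$ and is terminated forever, since its destination equals its position. Hence $U$ strictly decreases after finitely many cycles. As $U\le n$, after at most $n(\lceil (b-a)/\delta\rceil+1)$ effective moves we get $U=0$, i.e.\ $\mathcal R(t)=X^*$ in finite time.
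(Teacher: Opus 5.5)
Your Step~2 is correct, and using the rigidity constant $\delta$ in Step~4 is the right ingredient. The paper's proof instead tracks $S^*(t)=\sum_j d_j(t)$ and only shows that it strictly decreases and is bounded below, which by itself does not give finite time.

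The gap is in Step~3. You claim that the selected robot $r_c$ remains the unique selected candidate under every later Look until it reaches $x_c^*$, and that no other robot moves meanwhile. This is false for the tie-break by maximum order in $\mathcal F$: a leftward move of $r_c$ past $x_{c+1}^*$ unblocks $r_{c+1}$, which has higher order. Concretely, on $[0,1]$ with $n=2$ we have $x_1^*=1/4$ and $x_2^*=3/4$; put $r_1=0.8$, $r_2=0.9$. The path $[3/4,0.9]$ of $r_2$ contains $r_1$, so $r_1$ is the only candidate. Suppose the adversary stops $r_1$ at $0.7$ (allowed for $\delta\le 0.1$), or $r_2$ Looks while $r_1$ is passing $0.7$. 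Then $r_2$'s path is free and $r_2$ is the selected robot. Every activation of $r_1$ is then a null move until $r_2$ terminates, and in the second scenario both robots are in motion at once. Hence the bound ``after at most $\lceil d_c/\delta\rceil+1$ cycles $r_c$ terminates'' is unfounded, since those cycles of $r_c$ need not be moves. Likewise, the assertion that all robots compute the same $r_c$ from a stationary configuration fails under \textit{ASYNC}.

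The conclusion survives if you switch from per-robot to global bookkeeping. Targets never change, and every nonnull move is toward the mover's own target, so $S(t)=\sum_i d_i(t)$ never increases. Each completed nonnull Move phase either brings its robot to its target or lowers $S$ by at least $\delta$. The first kind happens at most $n$ times, since a terminated robot only computes null moves. So there are at most $n+S(t_0)/\delta$ nonnull moves in total. After the last one all robots are stationary. If some robot were still unterminated, Step~2 (valid for any collision-free snapshot) together with fairness would force one more nonnull move, a contradiction.

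Since concurrent moves do occur, you should also justify order preservation directly rather than citing Lemma~\ref{l-1}, whose argument assumes a single mover. If $r_k$ starts a rightward move after a Look at time $\tau$, then $r_{k+1}(\tau)>x_k^*$. Afterwards $r_{k+1}$ never goes below $\min\{r_{k+1}(\tau),x_{k+1}^*\}>x_k^*\ge r_k$. The case of $r_{k+1}$ moving left is symmetric.
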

\begin{proof}\normalfont
Let the initial robot configuration $\mathcal R(t_0)$ lie within the line segment $[a,b]$. Assume that the ordered set of robots defined using $\mathcal F$ is denoted as:
\[
\mathcal R_1(t_0) = \{r_1(t_0), r_2(t_0), \dots, r_n(t_0)\}.
\]
From \textbf{Result} \ref{r-1}, each robot $r_i \in \mathcal R_1(t_0)$ has an assigned destination $x_i^{*}$ that minimizes the total travel distance by all the robots. According to \textbf{Lemma} \ref{l-1}, the movement strategy for the \textit{candidate} robot ensures that no collisions occur.

\noindent We define the total \textit{min-sum} distance at time $t_0$ as
\[
S^{*}(t_0) = \sum_{j=1}^{n} d_j(t_0),
\]
where
\[
d_j(t_0) = \operatorname{dist}\big(r_j(t_0), x_j^{*}\big),
\qquad j = 1,2,\ldots,n,
\]
and $\operatorname{dist}\big(r_j(t_0), x_j^{*}\big)$ denotes the Euclidean distance between the two points $r_j(t_0)$ and $x_j^{*}$ on the line segment.

\noindent Suppose that at time $t_1 > t_0$, a robot $r_i(t_0) \in \mathcal R_1(t_0)$ 
moves toward its destination $x_i^{*}$ and reaches position $r_i(t_1)$. 
Since it moves closer to the destination, we have
\[
\text{dist}(r_i(t_0), x_i^{*}) > \text{dist}(r_i(t_1), x_i^{*}).
\]
Let $d = \text{dist}(r_i(t_0), r_i(t_1)) > 0$. Then the updated distance of robot $r_i$ at time $t_1$ is
\begin{equation}
    d_i(t_1) = d_i(t_0) - d.
    \label{e-2}
\end{equation}

\noindent For all robots except $r_i$ do not move, the distance remains unchanged for $n-1$ robots, that is $d_j(t_1) = d_j(t_0)$ for $j \neq i$ and $j\in\{1,2,\ldots,n\}$. Therefore, the total \textit{min-sum} distance at time $t_1$, defined by $S^{*}(t_1)$ is

\begin{align*}
    S^*(t_1) &= \sum_{j=1}^n d_j(t_1)\\
    &= \Big(\sum_{j=1}^n d_j(t_0)\Big) - d \\
    &= S^*(t_0) - d\ (\text{ By using equation (\ref{e-2})})
\end{align*}
which shows that
\[
S^*(t_1) < S^*(t_0).
\]

\noindent Hence, for all $t > t_0$, the sequence $\{S^*(t)\}$ is strictly decreasing and bounded below by zero. 
Consequently, it must converge to zero in finite time, corresponding to all robots reaching their respective destinations $x_i^{*}$. If more than one robot moves simultaneously, the decrease in $S^*(t)$ occurs even faster, further ensuring finite-time convergence. Therefore, the optimal placement of robots for the min-sum problem on a line segment is achieved within finite time.
\end{proof}
\noindent In the next section, we discuss the min-sum uniform coverage problem on the input circle, where the movement of the robots are restricted only along the arcs of the circle.
\section{Optimal Robot Placement on the Input Circle}
\label{Oncircle}

\noindent We now consider the case in which the robots are initially deployed on the circumference of the input circle. As in the line-segment case, the objective in this scenario is to minimize the total distance traveled by all the robots while forming a uniform configuration. The key difference in this setting is that the robot's motion is restricted to the circumference of the circle. Consequently, all the robot trajectories and the distance measurements are taken along the circular boundary.

\begin{figure}[!ht]
    \vspace{-1.5cm}
    
    \hspace{-2.8cm}
    \includegraphics[width=1.3\textwidth]{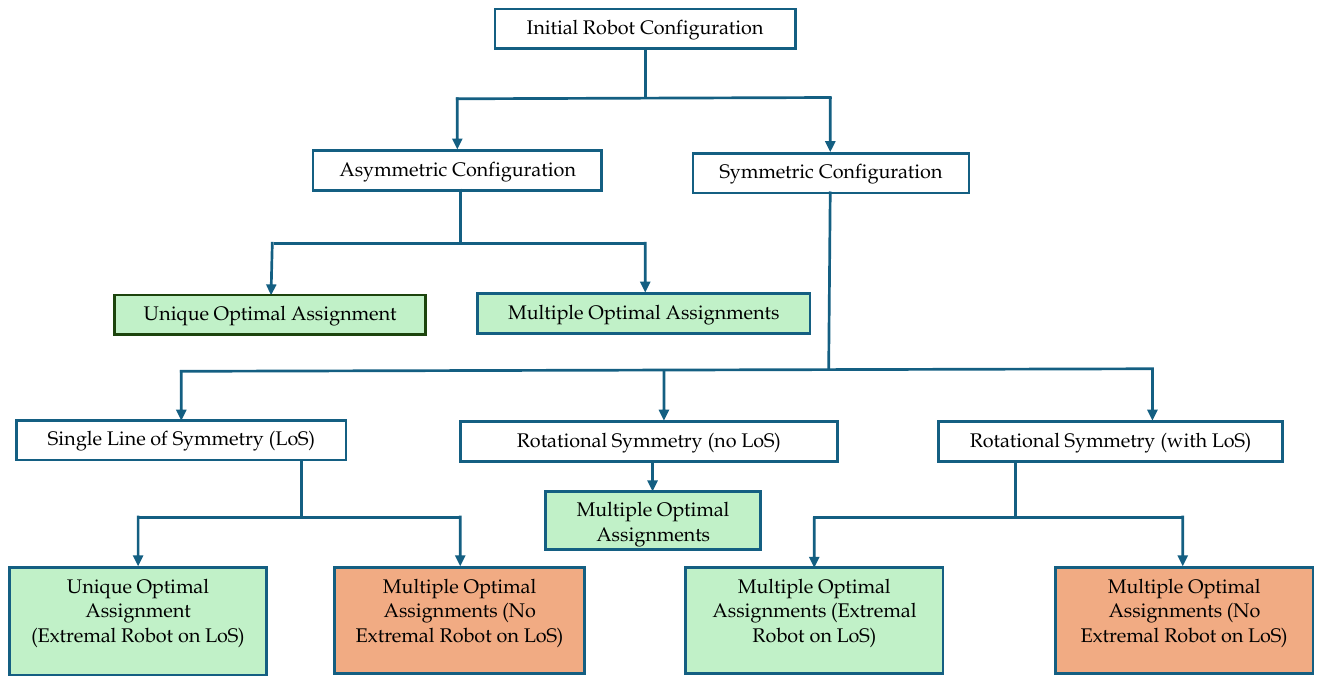}
    \vspace{-5.8cm}
    
    \caption{\textit{Illustrates the assignment tree used in this paper. Starting from the initial robot configuration, configurations are first classified according to their symmetry properties. Each class is further refined according to the cardinality of optimal assignments and the presence of an \textit{extremal} robot on a line of symmetry. Each leaf node corresponds to a distinct algorithmic strategy (green) or an impossibility result (red).}}
\label{Tree}
\end{figure}


\noindent Let $A_1, A_2, \ldots, A_n$ denote the initial positions of $n$ robots on $\mathcal{C}_{out}$ of the circle $\mathcal{C}$, and let $A_1', A_2', \ldots, A_n'$ denote their destination positions on the same circumference. The goal is to determine an assignment of robots to destination points that minimizes the total movement cost, measured along the circle. We denote by $\mathcal{D}^*$ the optimal value of the min-sum problem for $\mathcal{C}$, defined as
\[
\mathcal{D}^* = \min \sum_{i=1}^{n} d_{\text{arc}}(A_i, A_i')
\]

\noindent The structure of optimal assignments for this problem has been studied previously. In particular, the following results show that any optimal assignment must leave at least one robot fixed at its initial position.

\begin{result}~\cite{tan2010new} 
Suppose that all $n$ robots $A_1, A_2, \cdots, A_n$ are initially placed on the circumference of the unit-radius circle $\mathcal{C}$. In any assignment between the initial and destination positions of $n$ robots that yields $\mathcal{D}^*$, there exists at least one point $A_i$ $(1 \le i \le n)$ such that $A_i = A_i'$.
\label{r-02}
\end{result}

\noindent On this observation, Bhattacharya \textit{et al.}~\cite{bhattacharya2009optimal} characterized all possible optimal assignments. Their result shows that optimal assignments can be obtained by fixing one robot and placing the remaining robots at equally spaced positions around the circle.

\begin{result}~\cite{bhattacharya2009optimal} 
Suppose we are given $n$ robots $A_1, A_2, \cdots, A_n$ in clockwise order along the circumference of the disk. Then any minimal-cost assignment of destination points must be one of the following $n$ assignments:
\[
\big(A_1,A_2,\cdots,A_n\big) \to \big( A_i+\tfrac{(i-1)2\pi}{n},\; A_i+\tfrac{(i-2)2\pi}{n},\; \cdots,\; A_i+\tfrac{(i-n)2\pi}{n}\big),
\]
for $i = 1,2,\cdots,n$. Each such assignment fixes the robot $A_i$.
\label{r-2}
\end{result}

\noindent \textbf{Result}~\ref{r-2} implies that the optimal cost assignment is not always unique. Depending on the initial configuration, more than one fixed point may yield the same minimum total cost. Based on this observation, we analyze the problem in different cases. We first study configurations for which the optimal assignment is unique. We then consider configurations in which multiple distinct assignments achieve the optimal cost, and analyze how symmetry in the initial configuration gives rise to this ambiguity. To systematically organize this analysis, we introduce an assignment-based classification of configurations. \textbf{Figure}~\ref{Tree} presents the assignment tree used to classify robot configurations based on symmetry and optimal assignments. Each leaf of the tree represents a distinct configuration class, which is analyzed in the subsequent sections. \textbf{Table}~\ref{tab:config-summary} summarizes the classification of robot configurations induced by the assignment tree, along with their symmetry
properties and solvability status for the \textit{min-sum uniform coverage} on a circle problem. Each configuration class listed in the table corresponds to a distinct leaf of the assignment tree and is addressed in the subsequent sections. \noindent
We now present \textbf{Algorithm}~\ref{alg:global-minsum}, \textit{MinSumUniformCircleFormation()}, which brings together all the subroutines and applies to any initial robot configuration on the circumference $\mathcal{C}_{out}$.
 The algorithm first classifies the initial configuration based on its symmetry properties and then applies the appropriate procedure according to the identified configuration class. As established in the preceding sections, some configuration classes admit a deterministic and collision-free solution, while others are unsolvable. The pseudocode below summarizes this case-based strategy and specifies the corresponding action taken for each configuration class.

 \begin{table*}
\renewcommand{\arraystretch}{1.8}
\scriptsize
\begin{adjustwidth}{0.3in}{0in}
\begin{tabular}{|c|c|>{\centering\arraybackslash}p{2.2cm}|c|c|}
\hline
\textbf{Notation} 
& \textbf{Configuration Type} 
& \textbf{Line(s) of Symmetry (LoS)} 
& \textbf{Rotational Symmetry} 
& \textbf{Solvability} \\
\hline
$\mathcal{I}_1$ 
& Asymmetric 
& None 
& No 
& Solvable \\
\hline
$\mathcal{I}_2$ 
& Symmetry with \textit{extremal} robots on LoS 
& Exactly one 
& No 
& Solvable \\
\hline
$\mathcal{I}_3$ 
& Symmetric without \textit{extremal} robots on LoS 
& Exactly one 
& No 
& Unsolvable \\
\hline
$\mathcal{I}_4$ 
& Rotational symmetry (no LoS) 
& None 
& Yes 
& Solvable \\
\hline
$\mathcal{I}_5$ 
& Symmetric with \textit{extremal} robots on LoS 
& Multiple 
& Yes 
& Solvable \\
\hline
$\mathcal{I}_6$ 
& Symmetric without \textit{extremal} robots on LoS 
& Multiple 
& Yes 
& Unsolvable \\
\hline
\end{tabular}
\vspace{0.3cm}

\caption{Classification of robot configurations with respect to symmetry properties (line of symmetry and rotational symmetry) and their solvability status for the \textit{min-sum uniform coverage} on a circle problem.}
\label{tab:config-summary}
\end{adjustwidth}
\end{table*}

\begin{algorithm}[!ht]
\caption{\textit{: MinSumUniformCircleFormation()}}
\label{alg:global-minsum}
\begin{algorithmic}[1]
\Require Initial robot configuration $\mathcal{R}(t_0)$ on $\mathcal{C}_{out}$
\Ensure \textit{min-sum uniform coverage} on a circle (if solvable)

\State Determine the configuration class of $\mathcal{R}(t_0)$

\If{$\mathcal{R}(t_0) \in \mathcal I_1$ \Comment{Asymmetric configuration}}
    \State Execute \textit{AsymU1dMinSumC()}
    \State \Return
\EndIf

\If{$\mathcal{R}(t_0) \in \mathcal I_2$ \Comment{Single line of symmetry with an \textit{extremal} robot on the line}}
    \State Execute \textit{SymU1dMinSumC()}
    \State \Return
\EndIf

\If{$\mathcal{R}(t_0) \in \mathcal I_3$ \Comment{Single line of symmetry with no \textit{extremal} robot on the line}}
    \State \textbf{Stop: configuration is unsolvable}
    \State \Return
\EndIf

\If{$\mathcal{R}(t_0) \in \mathcal I_4$ \Comment{Rotational symmetry of order $w\ge 2$ with no line of symmetry}}
    \State Execute \textit{RotSymM1dMinSumC()}
    \State \Return
\EndIf

\If{$\mathcal{R}(t_0) \in \mathcal I_5$ \Comment{Multiple lines of symmetry, each containing an \textit{extremal} robot}}
    \State \textbf{Stop: already in uniform circle formation}
    \State \Return
\EndIf

\If{$\mathcal{R}(t_0) \in \mathcal I_6$ \Comment{Multiple lines of symmetry with no \textit{extremal} robot on any line}}
    \State \textbf{Stop: configuration is unsolvable}
    \State \Return
\EndIf

\end{algorithmic}
\end{algorithm}

\section*{4.1~~Configurations with a Unique Optimal Assignment}
\label{Configurations with a Unique Optimal Assignment}

\noindent In this section, we study the robot configurations that admit a unique optimal assignment, i.e., $|\mathcal{E}'(t)| = 1$. For such configurations, the minimum total movement cost uniquely determines the target position of each robot. As a result, the ambiguity arising from multiple feasible assignments is absent, which allows for a deterministic execution of the formation algorithm.

\noindent We consider two classes of configurations admitting a unique optimal assignment. The first class consists of asymmetric configurations, in which the lack of symmetry ensures the uniqueness of the assignment. The second class consists of symmetric configurations admitting a single line of symmetry, where the presence of an \textit{extremal} robot on the line of symmetry guarantees the solvability of the problem. We address these two classes separately in the following subsections.

\subsection*{4.1.1~~Asymmetric Configurations $\mathcal I'_1$: Unique Optimal Assignment}

\noindent In this subsection, we study asymmetric robot configurations $\mathcal I'_1 \in \mathcal I_1$ that admit a unique optimal assignment. Since the configuration is asymmetric, no nontrivial symmetry relates distinct robots, and the optimal assignment is uniquely determined by the relative order of robots along the circle. As a result, the \textit{extremal} robots are uniquely identified, and no ambiguity arises in assigning robots to destination points.
We show that, in this setting, the target configuration can be reached in a collision-free manner by directly assigning each robot to its unique destination.

\subsubsection*{4.1.1.1~~Overview of the Algorithm \it{AsymU1dMinSumC()}}

\noindent The algorithm \textit{AsymU1dMinSumC()} addresses the \textit{min-sum uniform coverage} on a circle problem for robots initially placed on the circumference of a circle~$\mathcal C$ with a unique optimal assignment, i.e., $|\mathcal{E}'(t)|=1$. It begins by computing the unique optimal assignment using \textbf{Result}~\ref{r-2}, which identifies the \textit{extremal} robot, say $r_e$, and determines the corresponding destination set $\mathcal P^e = \{p_1^{e}, p_2^{e}, \ldots, p_n^{e}\}$. In this paper, we assume that the robots move only along the circumference of $\mathcal C$, i.e., their motion is restricted to the circular path. For each robot $r_i$, the arc distance to its assigned destination is calculated as $ d_{arc}(r_i, p_i^{e})$. A robot is considered \textit{candidate} if its arc path toward its destination is unobstructed, and only such robots are allowed to move at a time, while the others remain stationary. In the presence of multiple \textit{candidate} robots, the robot with the minimum distance to its destination and the minimum view from \textbf{Observation}~\ref{obs:asymmetric-ordering} is chosen to move toward its destination. The process is repeated iteratively, and once a \textit{candidate} robot reaches its destination (i.e., $d_{arc} = 0$), the iteration terminates. Since exactly one \textit{candidate} robot advances toward its destination in each iteration, the algorithm ensures collision-free convergence in finite time, with all robots eventually occupying their designated points in $\mathcal P $. The pseudocode of the Algorithm {\it AsymU1dMinSumC()} is presented in \textbf{Algorithm}~\ref{alg:AsymU1dMinSumC}.

\begin{algorithm}
\caption{\textit{: AsymU1dMinSumC()}}
\label{alg:AsymU1dMinSumC}
\begin{algorithmic}[1]

\Require Circle $\mathcal{C}$; initial robot positions
$r_1(t_0), r_2(t_0), \ldots, r_n(t_0)$ on the circumference of $\mathcal{C}$
\Ensure Mapping $r_i \rightarrow p_i^{e}$ for all robots

\State Compute unique \textit{optimal assignment} using \textbf{Result}~\ref{r-2}
\State Identify \textit{extremal} robot $r_e$
\State Determine destination point set $\mathcal P^e=\{p_1^{e}, p_2^{e}, \ldots, p_n^{e}\}$
\State Calculate $d_i = d_{arc}\Big(r_i, p_i^{e}\Big)$ for each $i$

\While{not all robots have reached their destinations}
    \For{$i = 1$ to $n$}
        \If{$r_i = p_i^{e}$}
            \State Mark $r_i$ as terminated
        \Else
            \If{ $d_i$ is free, minimal, and $r_i$ has the minimum view (\textbf{Observation}~\ref{obs:asymmetric-ordering})}
                \State Move $r_i \rightarrow p_i^{e}$
            \Else
                \State $r_i$ waits
            \EndIf
        \EndIf
    \EndFor
\EndWhile

\end{algorithmic}
\end{algorithm}

\subsubsection*{4.1.1.2~~Correctness of the Algorithm \it{AsymU1dMinSumC()} }

\noindent Once an \textit{extremal} robot $r_e$ has been fixed, the destination point of each robot with respect to the optimal assignment is computed based on the position of $r_e$. A natural question arises: does this assignment remain invariant while a \textit{candidate} robot $r_j$ moves toward its assigned destination? We must show that the \textit{extremal} robot remains invariant while a \textit{candidate} robot $r_j$ moves toward its assigned destination. We next consider the following definition.

\begin{definition}[Intermediate Position]
\label{Intermediate}
Let $p_j(t_0)$ denote the initial position of robot $r_j$, and let $p_j^{e}$ be the destination assigned to $r_j$ at time $t_0$. For any time $t_1$ such that $r_j$ has not yet reached $p_j^{e}$, the position $p_j(t_1)$ is called an \emph{intermediate position} if
\[
p_j(t_1) \neq p_j^{e}.
\]
\end{definition}

\begin{lemma}[Invariance of Unique Optimal Assignment in $\mathcal I_1'$]
\label{l-3} 
Let $r_e$ be identified as an \textit{extremal} robot at time $t_0 \ge 0$. Fixing $r_e$, let $p_j^{e}$ on $\mathcal C_{out}$ denote the destination assigned at time $t_0$ to a candidate robot $r_j$. If $r_j$ reaches an intermediate position (See Definition~\ref{Intermediate}) on $\mathcal C_{out}$ at time $t_1$, then $r_e$ remains invariant over the interval $[t_0,t_1]$.
\end{lemma}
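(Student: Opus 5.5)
We need to show that, while the candidate $r_j$ moves from $r_j(t_0)$ towards $p_j^{e}$, the robot $r_e$ stays the (unique) \textit{extremal} robot. That is, at every time $t\in[t_0,t_1]$ the minimum over $i$ of the cost $\mathcal D_{r_i}(t)$ from \textbf{Result}~\ref{r-2} is still attained only at $i=e$. The plan is a cost-comparison argument: track how each candidate cost $\mathcal D_{r_i}(t)$ changes when a single robot moves along the arc toward its assigned vertex.

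\textbf{Step 1 (ordering and the fixed cost).} Because $r_j$ is a candidate, its arc to $p_j^{e}$ contains no other robot. Hence the cyclic order of the robots is unchanged on $[t_0,t_1]$, and so is the labelling used in \textbf{Result}~\ref{r-2}. Only $r_j$ moves; in particular $r_e\neq r_j$, since $r_e$ already sits at its destination. So $\mathcal P^{e}(t)=\mathcal P^{e}(t_0)$. If $r_j$ has moved a distance $d$ along its arc, then
\[
\mathcal D_{r_e}(t)=\mathcal D^{*}_{r_e}(t_0)-d ,
\]
because $d_{arc}(r_j(t),p_j^{e})$ decreases exactly by $d$.

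\textbf{Step 2 (other candidate costs).} For any other robot $r_i$ with $i\neq e$, the term of $r_j$ in $\mathcal D_{r_i}$ can decrease by at most $d$, by the triangle inequality for $d_{arc}$. If $i\neq j$, the polygon $\mathcal P^{i}$ is unchanged, so
\[
\mathcal D_{r_i}(t)\ \ge\ \mathcal D_{r_i}(t_0)-d\ >\ \mathcal D^{*}_{r_e}(t_0)-d=\mathcal D_{r_e}(t).
\]
The strict inequality comes from uniqueness, $|\mathcal E'(t_0)|=1$. This step also needs care: $d_{arc}$ is the shorter arc, while the assignment of \textbf{Result}~\ref{r-2} fixes the rotational offsets. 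I would use the fact that every assignment cost is a sum of arc lengths, each of which is 1-Lipschitz in the moving robot's position.

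\textbf{Step 3 (the case $i=j$, the main obstacle).} When $r_j$ itself is the fixed robot, the whole polygon $\mathcal P^{j}(t)$ rotates with $r_j$. Then every one of the other $n-1$ terms may change, and the 1-Lipschitz bound per robot gives only a change of up to $(n-1)d$. The plan is to write $\mathcal D_{r_j}(t)$ explicitly as $\sum_{k\neq j} d_{arc}\big(r_k,\, r_j(t)+\theta_k\big)$ with fixed offsets $\theta_k$. This is the same function $g(x)=\sum_k d_{arc}(r_k,x+\theta_k)$ whose minimum over rotations $x$ is, by \textbf{Result}~\ref{r-02} and \textbf{Result}~\ref{r-2}, attained at $x=r_e$ (for the rotation class of the consistent labelling). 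I would then argue that $g$ is piecewise linear, and that on the arc swept by $r_j$ the robot $r_j$ moves toward its slot in $\mathcal P^{e}$. So $r_j(t)+\theta_j$ moves toward the optimal rotation, and $g$ stays at least $g(r_e)$ along the way, with strict inequality because the minimiser is unique. Combining this with Step 1, which gives $g(r_e)-d\ge \mathcal D_{r_e}(t)$ with the moving term accounted for, shows $\mathcal D_{r_j}(t)>\mathcal D_{r_e}(t)$.

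I expect the hardest part to be Step 3. The danger is that the rotating polygon could pass through a lower-cost rotation, so I would also check whether a direct comparison is cleaner. In that comparison, the configuration at time $t$ differs from the one at $t_0$ only by moving $r_j$ along a segment of its own optimal path, and moving a robot along its optimal path does not create a new optimum. This is the standard fact that partial execution of an optimal matching keeps the remainder optimal.
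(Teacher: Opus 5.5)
Your argument is correct once Step~3 is stated precisely, and it takes a different route from the paper.

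\textbf{The paper's route.} It works by explicit bookkeeping. For a stationary reference $r_k\neq r_e,r_j$ it runs a five-case analysis (Cases~A--E) on where $p_j^{k}(t_0)$ lies relative to the path of $r_j$. For the reference $r_j$ itself it counts clockwise versus counterclockwise assignments ($n_{\mathrm{cw}},n_{\mathrm{ccw}}$). It treats separately the case where vertices of the rotating polygon $\mathcal N_j(t_1)$ cross stationary robots. It writes $\mathcal D_{r_j}(t_1)=\mathcal D_{r_j}(t_0)-kx$, adding correction terms $h^i_{ccw},g^i_{cw}$ when crossings occur. It then derives a contradiction from \eqref{e-5}.

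\textbf{Your route.} You replace the first analysis by the 1-Lipschitz property of $d_{arc}$ in the single moving term. You replace the second by the identity $\mathcal D_{r_j}(t)=g(r_j(t))-d_{arc}\big(r_j(t_0),r_j(t)\big)$.

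\textbf{What to fix in Step~3.} Take $g$ summed over all $k$, using the time-$t_0$ positions and $\theta_j=0$. Its minimiser is $x=p_j^{e}$, not ``$x=r_e$''. Piecewise linearity and monotonicity along the swept arc are not needed. Indeed $g(r_j(t))>g(p_j^{e})=\mathcal D^*_{r_e}(t_0)$ whenever $r_j(t)\neq p_j^{e}$, because by \textbf{Result}~\ref{r-2} a second minimiser would give a second \textit{extremal} robot.

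\textbf{What your route buys.}
\begin{itemize}
\item Crossings need no separate treatment.
\item The equality case $\mathcal D_{r_j}(t_1)=\mathcal D^*_{r_e}(t_1)$, which hypothesis \eqref{e-5} leaves open, is excluded by uniqueness.
\item It makes explicit that $\mathcal D_{r_j}(t_1)+x$ is the time-$t_0$ cost of a genuine assignment, and that is what actually contradicts optimality. The paper's Case~1(b) instead passes to the weaker bound \eqref{e-11}, whose left-hand side is not an assignment cost.
\end{itemize}

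Your closing ``partial execution'' remark covers Steps~2 and~3 in one line: $\mathrm{cost}_t(\sigma)\ge\mathrm{cost}_{t_0}(\sigma)-d$ for every assignment $\sigma$, with equality for $f_e$. It also extends verbatim to several robots moving along their optimal paths at once. That is exactly the situation of \textbf{Lemma}~\ref{lem:n-robots} and \textbf{Lemma}~\ref{lem:rot-invariance}.

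The paper's route, in exchange, gives explicit formulas for how $\mathcal D_{r_j}$ evolves (net rate $k$) as $r_j$ moves.
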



\begin{proof}
To justify that the optimal assignment remains invariant during the motion of the \textit{candidate} robot, it is necessary to examine how the total cost behaves for all robots in the system. In particular, we must verify that the movement of the \textit{candidate} robot $r_j$ does not affect the optimality of the configuration either with respect to its own assignment or with respect to the assignments of the other stationary robots. For this situation, the proof is divided into two parts:
We prove the following:

\begin{enumerate}
    \item \textbf{Invariance of the \textit{extremal} robot with respect to the \textit{Candidate} Robot ($r_j$)}
    \item \textbf{Invariance of the \textit{extremal} robot with other Robots ($r_k,\, k \ne e, j$)}
\end{enumerate}

\begin{figure*}[!ht]
    \centering
    \begin{subfigure}{0.48\textwidth}
        \includegraphics[width=\linewidth]{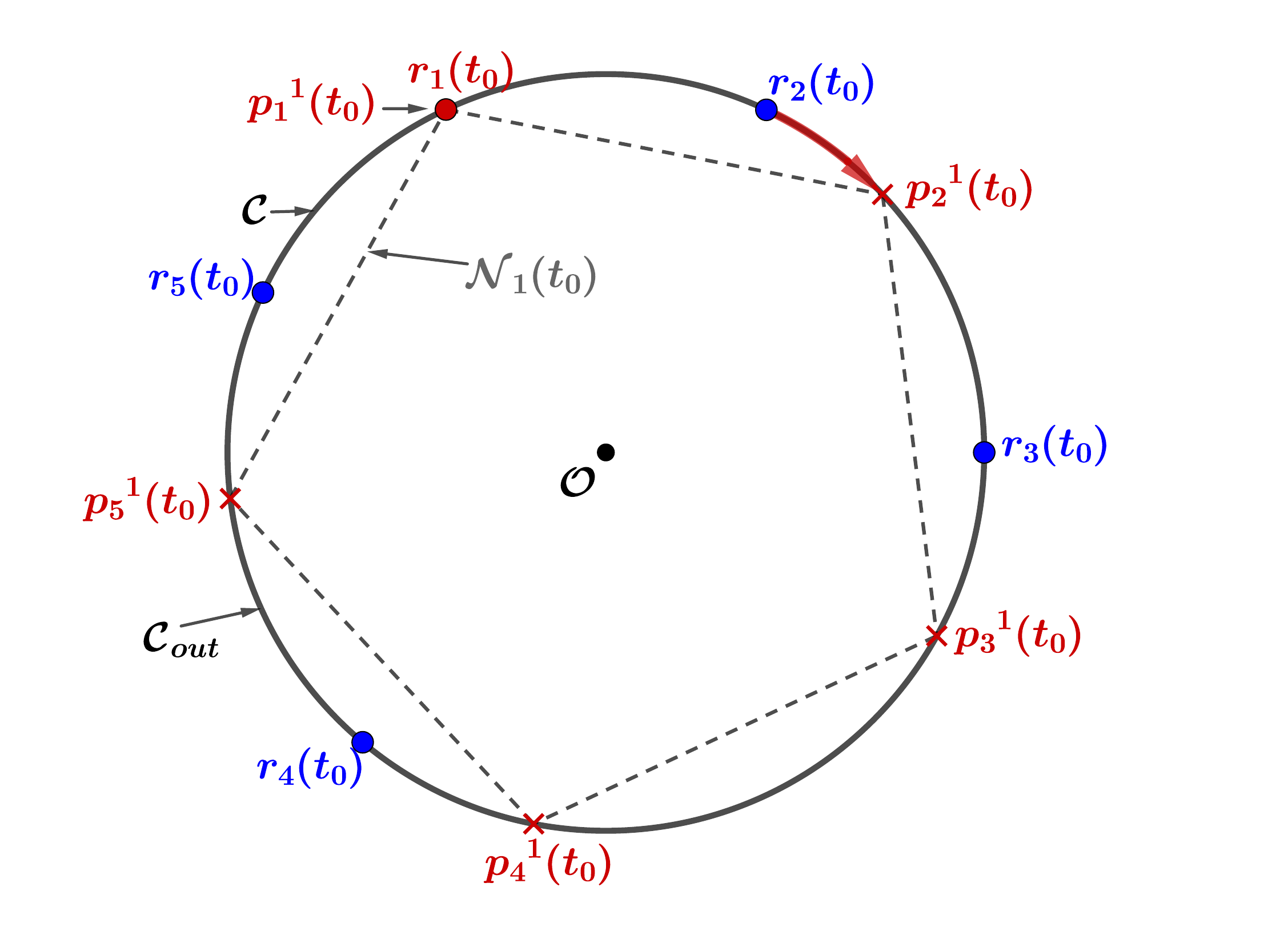}
        \caption*{(A)}
    \end{subfigure}
    \begin{subfigure}{0.48\textwidth}
        \includegraphics[width=\linewidth]{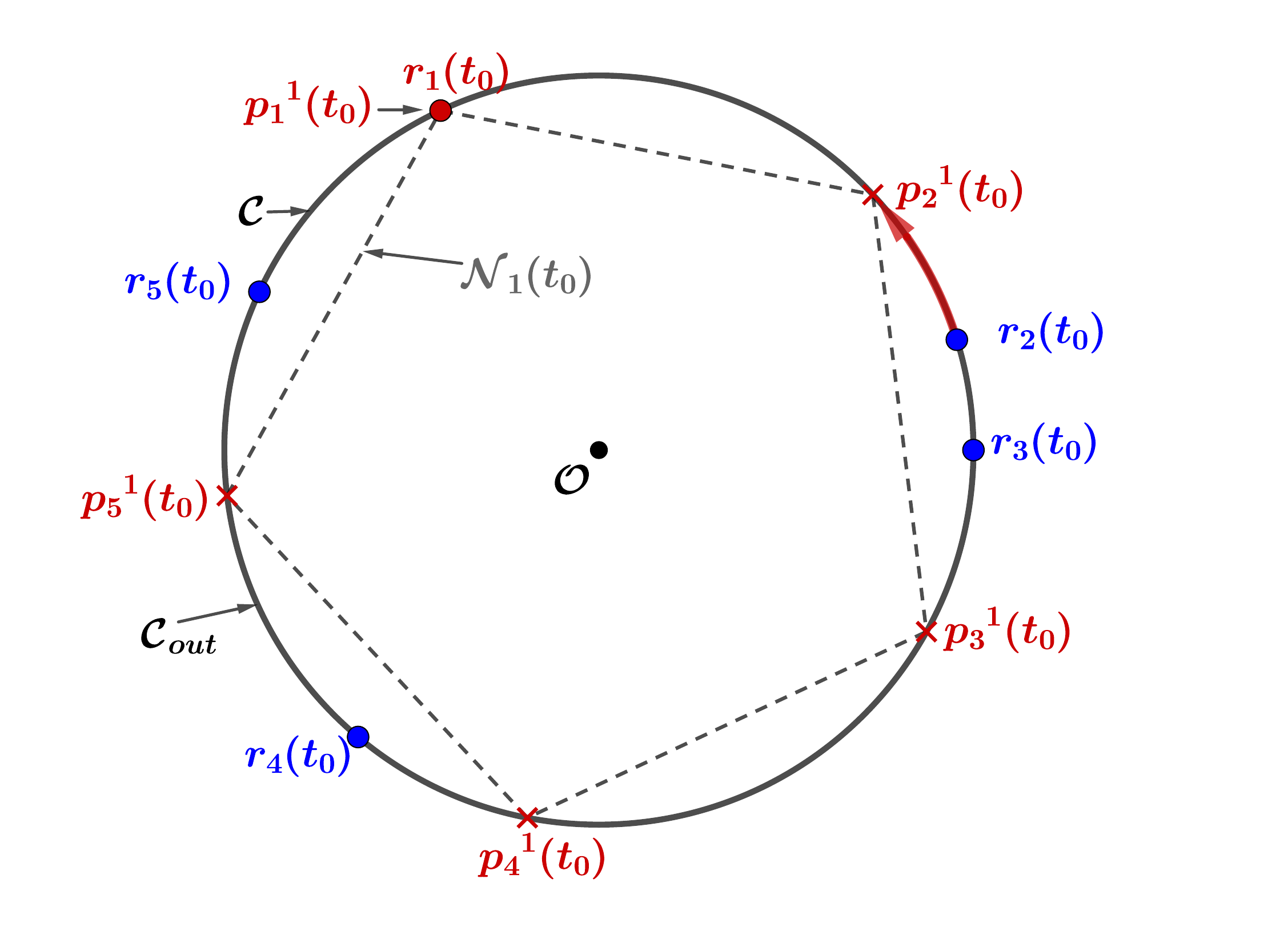}
        \caption*{(B)}
    \end{subfigure}

\caption{\textit{An illustration of the initial configuration of five robots
$\mathcal{R}(t_0)={r_1(t_0), r_2(t_0), \ldots, r_5(t_0)}$, where $r_1(t_0)$ is the \textit{extremal} robot (red dot).
Fixing $r_1(t_0)$ determines the destination set
${p_1^1(t_0), p_2^1(t_0), \ldots, p_5^1(t_0)}$ (red crosses).
\textbf{(A)} The destination $p_2^1(t_0)$ for robot $r_2(t_0)$ is assigned in the clockwise direction.
\textbf{(B)} The destination $p_2^1(t_0)$ for robot $r_2(t_0)$ is assigned in the counterclockwise direction.
}}
\label{Cw-Ccw}
\end{figure*}

\subsubsection*{1. Invariance with Respect to the Movement of $r_j$}
\noindent Let $\mathcal D^*_{r_e}(t_0)$ denote the unique optimal (minimum) sum distance traversed by all robots at time $t=t_0$ for the \textit{extremal} robot $r_e$. Also, let $\mathcal P^{e}(t_0)=\{p_1^{e}(t_0),p_2^{e}(t_0),\ldots,p_n^{e}(t_0)\}$
denote the set of $n$ destination points computed with respect to $r_e$ at time $t_0$. Note that the points in
$\mathcal P^{e}(t_0)$ lie on the circle $\mathcal C$ and form the vertices of a regular $n$-gon $\mathcal N_e(t_0)$ . Since we have assumed that there is a unique optimal assignment, thus we have:
\begin{equation}
\mathcal D^*_{r_e}(t_0) < \mathcal D_{r_j}(t_0)
\label{e-4}
\end{equation}
The candidate robot $r_j$ can move from its initial position toward its destination point $p_j^{e}(t_0)$ along the circle in either of the two possible directions, clockwise or counterclockwise (See \textbf{Figure}~\ref{Cw-Ccw}). Let $t_1 \in (t_0,T)$ be an arbitrary point of time at which  the \textit{candidate} robot $r_j$, and executes its LCM cycle. Suppose $x=d_{arc}\Big(r_j(t_0),\, r_j(t_1)\Big)$ denotes the distance traveled by $r_j$ during the time interval $[t_0,t_1]$. 

\noindent For all $t\geq 0$, let $R_{\mathrm{cw}}(t)$ and $R_{\mathrm{ccw}}(t)$ denote the sets of robots excluding $r_j$ whose destination points in $\mathcal N_j(t)$ determined by fixing $r_j$ with respect to the cost $\mathcal D_{r_j}(t)$ lies in the clockwise and counterclockwise directions, respectively. For $t\geq 0$, let 
$n_{\mathrm{cw}}(t) = |R_{\mathrm{cw}}(t)|$ and
$n_{\mathrm{ccw}} (t)= |R_{\mathrm{ccw}}(t)|$ denote their corresponding cardinalities.

\noindent As we want to prove that the assignment remains invariant, we need to prove that the condition holds at time instant $t_1$
\begin{equation}
\mathcal D^*_{r_e}(t_1)<\mathcal D_{r_{j}}(t_1) 
\end{equation}

\noindent Without loss of generality, suppose $r_j$ is assigned a \textit{clockwise} path to its destination point $p_j^{e}(t_0)$, as determined by the \textit{extremal} robot $r_e$, and reaches at position $r_j(t_1)$ at time $t_1$. The following cases are to be considered:

   \noindent \textbf{Case 1.}: $n_{\mathrm{cw}}(t_0)<n_{\mathrm{ccw}}(t_0)$ 
    
    \noindent If, at time $t_0$, the number of clockwise assignments of the other $(n-1)$ robots (when $r_j$ is fixed and it moves in clockwise direction) is smaller than the number of counterclockwise assignments, then the regular $n$-gon $\mathcal{N}_j(t_1)$ determined by fixing $r_j(t)$ at time $t_1$ may appear in two possible configurations (See \textbf{Figure}~\ref{Cross/No Cross}). In the first configuration, none of the vertices of $\mathcal{N}_j(t_1)$ intersect the initial positions (at time $t_0$) of the remaining $(n-1)$ robots. In the second configuration, one or more vertices of $\mathcal{N}_j(t_1)$ cross or coincide with the initial positions of other robots. The following subcases are to be considered.

\begin{figure*}[!ht]
    \centering
    \begin{subfigure}{0.48\textwidth}
        \includegraphics[width=\linewidth]{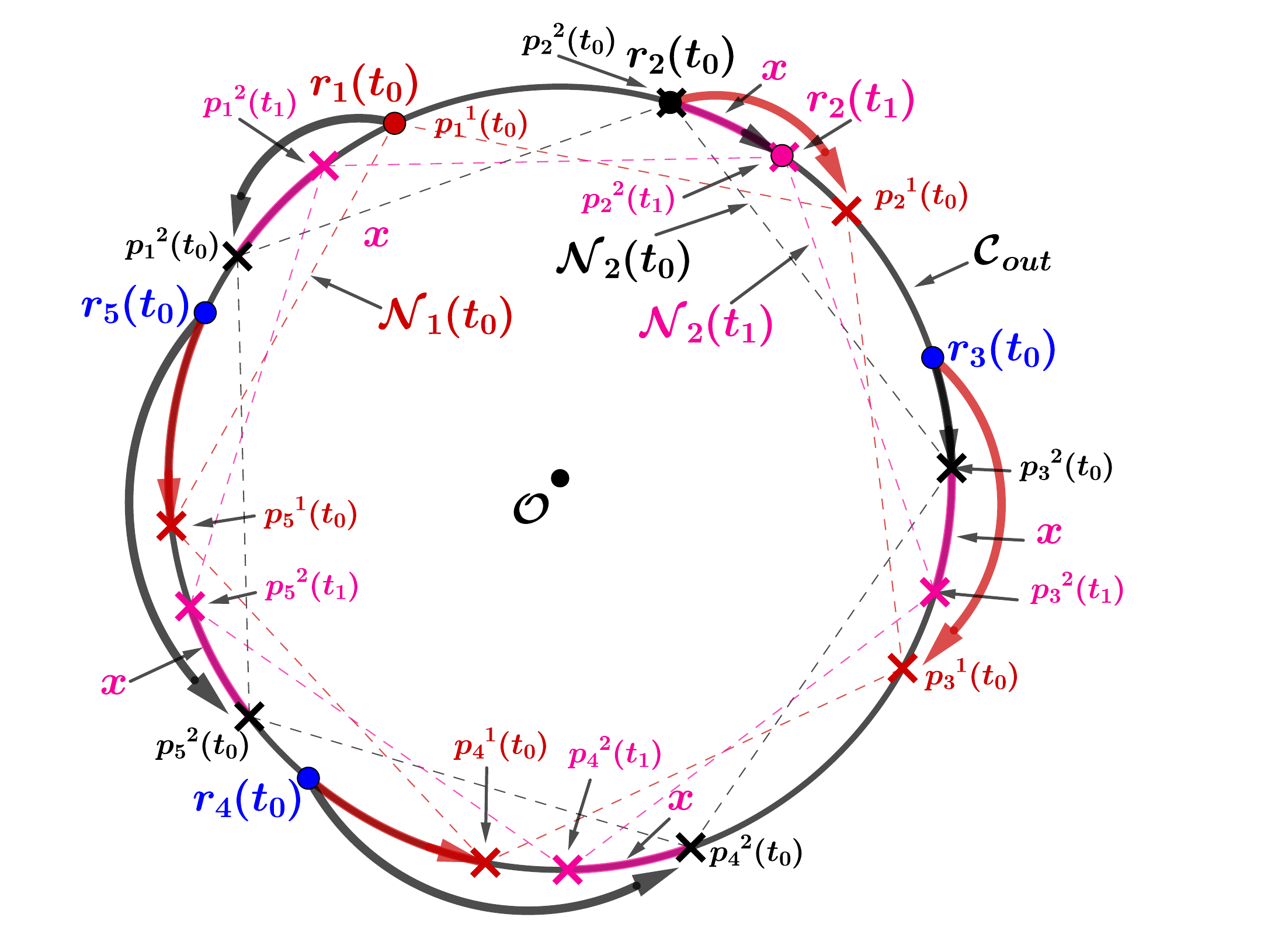}
        \caption*{(A)}
    \end{subfigure}
    \begin{subfigure}{0.48\textwidth}
        \includegraphics[width=\linewidth]{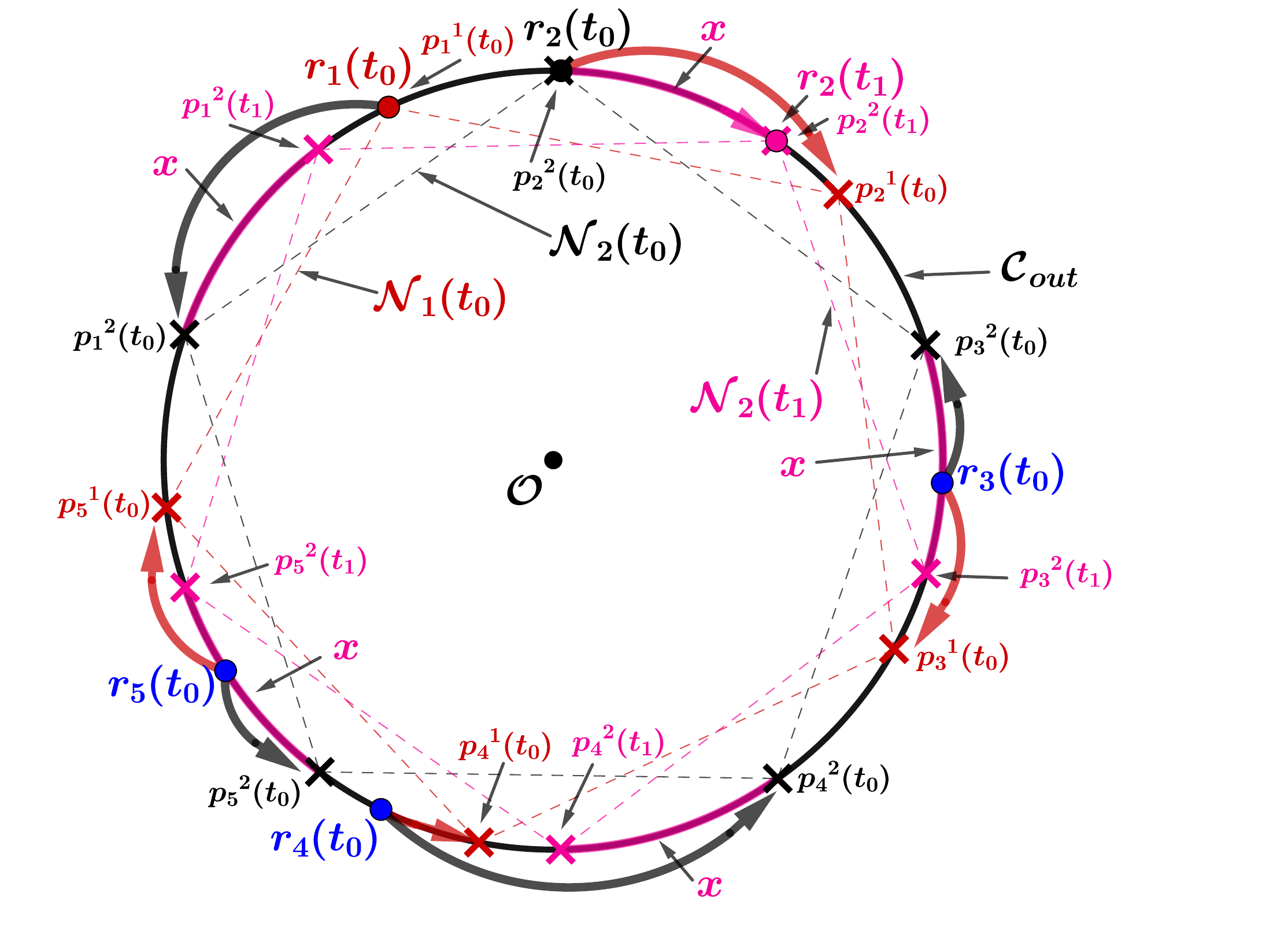}
        \caption*{(B)}
    \end{subfigure}

    \caption{ \textit{An illustration of the initial configuration of five robots $\mathcal{R}(t_0)=\{r_1(t_0), r_2(t_0), \ldots, r_5(t_0)\}$, where the \textit{extremal} robot $r_1(t_0)$ is shown as a red dot and the \textit{candidate} robot $r_2(t_0)$ as a black dot. The destination point set of the regular $n$-gon $\mathcal{N}_1(t_0)$, determined by fixing the \textit{extremal} robot $r_1(t_0)$, is $\mathcal{P}^1(t_0)=\{p_1^1(t_0), p_2^1(t_0), \ldots, p_5^1(t_0)\}$ (red crosses). The destination point sets of the regular $n$-gons $\mathcal{N}_2(t_0)$ and $\mathcal{N}_2(t_1)$, determined by fixing the candidate robot $r_2(t_0)$ at times $t_0$ and $t_1$, are given by $\mathcal{P}^2(t_0)=\{p_1^2(t_0), p_2^2(t_0), \ldots, p_5^2(t_0)\}$ (black crosses) and $\mathcal{P}^2(t_1)=\{p_1^2(t_1), p_2^2(t_1), \ldots, p_5^2(t_1)\}$ (pink crosses), respectively. Considering the clockwise assignment of $r_2(t_0)$ to $p_2^1(t_0)$: \textbf{(A)} none of the vertices of the regular polygon traced by $r_2(t_1)$ at time $t_1$ passes through or coincides with the position of any other robot; \textbf{(B)} some vertices of the polygon traced by $r_2(t_1)$ at time $t_1$ pass through the initial positions of robots $r_3(t_0)$ and $r_5(t_0)$, respectively. }}

    \label{Cross/No Cross}
\end{figure*}
\begin{itemize}
    \item  \textbf{Case 1(a):} \textit{No vertex of the regular $n$-gon $\mathcal{N}_j(t_1)$ crosses with any other stationary robot position during the movement $[t_0,t_1]$}

As $n_{\mathrm{cw}}(t_0)<n_{\mathrm{ccw}}(t_0)$, let $k=n_{\mathrm{ccw}}(t_0)-n_{\mathrm{cw}}(t_0)>0$. We establish equation (\ref{e-4}) through a contradiction argument. Let us assume that 

\begin{equation}
\mathcal D_{r_{j}}(t_1)<\mathcal D^*_{r_e}(t_1) 
\label{e-5}
\end{equation}

Since the \textit{candidate} robot $r_j$ moves toward its destination point $p_j^{e}(t_0)$ in the clockwise direction with $n_{\mathrm{cw}}(t_0) < n_{\mathrm{ccw}}(t_0)$, and no vertex of the regular $n$-gon $\mathcal{N}_j(t_1)$ crosses with any other stationary robot position (See \textbf{Figure}~\ref{Cross/No Cross}(A)), it follows that the number of arc lengths decreased by $x$ is greater than the number of arc lengths increased by $x$. The difference between these two counts is exactly $k$.
 Hence, the total sum reduces, and it is given by 
\begin{equation}
\mathcal D_{r_{j}}(t_1)=\mathcal D_{r_{j}}(t_0)-kx
\label{e-6}
\end{equation}


For the \textit{extremal} robot $r_e$, the total arc length decreases by exactly $x$ from the initial optimal sum distance at time $t_0$, that is, 
$\mathcal D^*_{r_e}(t_1) = \mathcal D^*_{r_e}(t_0) - x$. 
Since all other robots remain stationary, no additional changes occur in the overall configuration. Thus, we have,

\begin{equation}
     \mathcal D^*_{r_e}(t_0)=\mathcal D^*_{r_e}(t_1)+x
     \label{e-7}
\end{equation}

Using equation~(\ref{e-6}) in equation~(\ref{e-5}), we obtain
\begin{align*}
& \mathcal D_{r_{j}}(t_0) - kx < \mathcal D^*_{r_e}(t_1) \\
\implies\; 
& \mathcal D_{r_{j}}(t_0) - kx + x < \mathcal D^*_{r_e}(t_1) + x,
\quad \text{adding $x$ to both sides} \\
\implies\; 
& \mathcal D_{r_{j}}(t_0) - (k - 1)x < \mathcal D^*_{r_e}(t_0),
\quad \text{from equation~(\ref{e-7})}.
\end{align*}

We have an assignment of robot positions (say $\mathscr R$) whose cost is strictly smaller than $\mathcal D^*_{r_e}(t_0)$. In case, none of the $n$ robots is assigned to its own position at time $t_0$ with respect to $\mathscr R$, then it is a contradiction to \textbf{Result}~\ref{r-2}. Otherwise, it is a contradiction to our initial assumption that $\mathcal D^*_{r_e}(t_0)$ is the unique optimal assignment cost.

\begin{figure*}[!ht]
    \centering
    
    \includegraphics[width=0.5\textwidth]{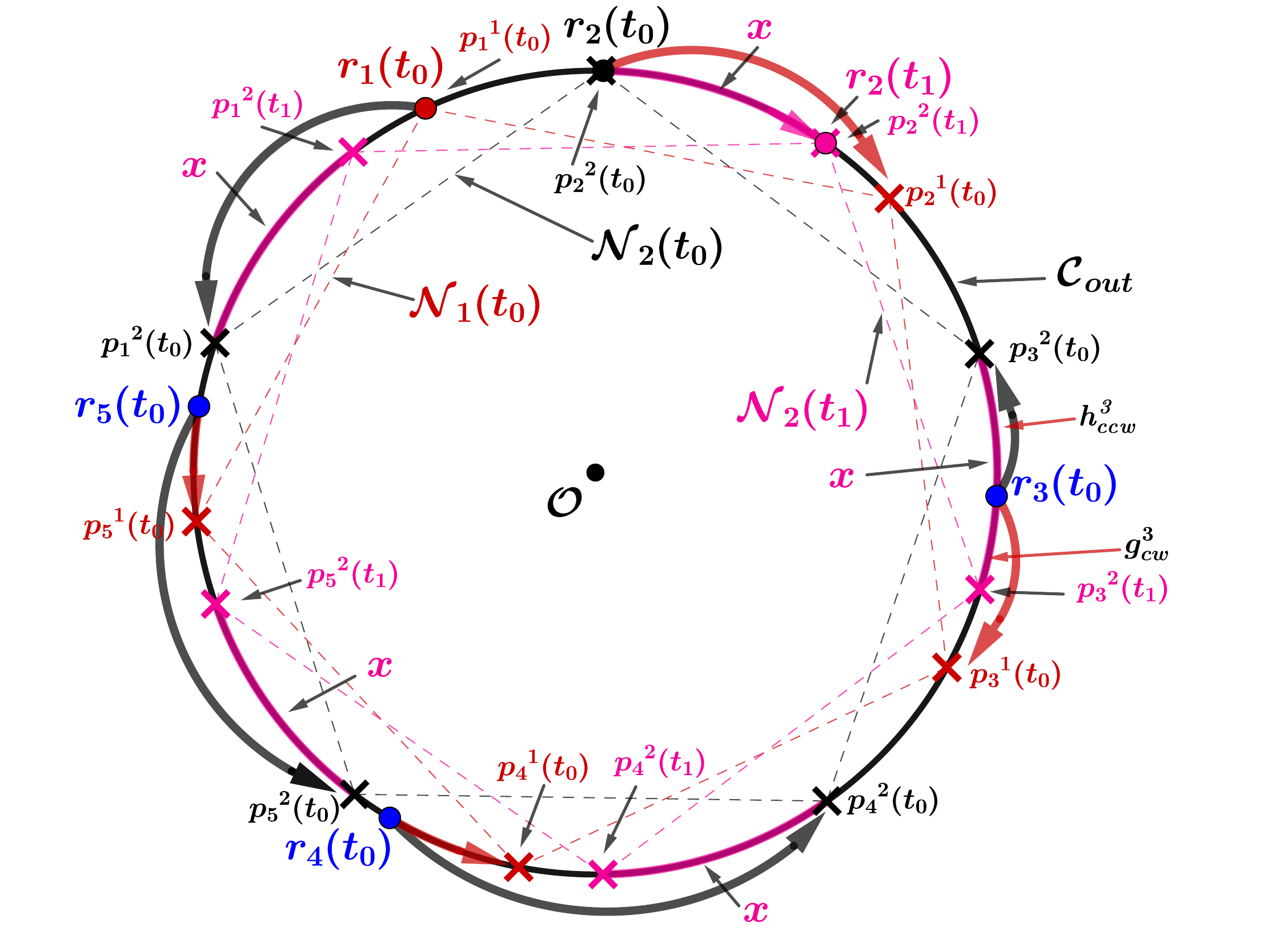}
    \caption{ \textit{An illustration of the initial configuration of five robots $\mathcal{R}(t_0)=\{r_1(t_0), r_2(t_0), \ldots, r_5(t_0)\}$, where the \textit{extremal} robot $r_1(t_0)$ is shown as a red dot and the \textit{candidate} robot $r_2(t_0)$ as a black dot. The destination point set of the regular $n$-gon $\mathcal{N}_1(t_0)$, determined by fixing $r_1(t_0)$, is $\mathcal{P}^1(t_0)=\{p_1^1(t_0), p_2^1(t_0), \ldots, p_5^1(t_0)\}$ (red crosses). The destination point sets of the regular $n$-gons $\mathcal{N}_2(t_0)$ and $\mathcal{N}_2(t_1)$, determined by fixing $r_2(t_0)$ at times $t_0$ and $t_1$, are $\mathcal{P}^2(t_0)$ (black crosses) and $\mathcal{P}^2(t_1)$ (pink crosses), respectively. The clockwise assignment of $r_2(t_0)$ to $p_2^1(t_0)$ illustrates the arc distances $h^3_{ccw}$ and $g^3_{cw}$. }}
    
     \label{Cross}

\end{figure*}

\item \textbf{ Case 1(b):} \textit{Some vertices of the regular $n$-gon $\mathcal{N}_j(t_1)$ cross the positions of some stationary robots during the movement $[t_0,t_1]$}

In this scenario, the \textit{candidate} robot $r_j$ moves toward its destination point $p_j^{e}(t_0)$ in the clockwise direction with $n_{\mathrm{cw}}(t_0) < n_{\mathrm{ccw}}(t_0)$, and some vertices of the regular $n$-gon $\mathcal{N}_j(t_1)$ cross the positions of certain stationary robots (See \textbf{Figure}~\ref{Cross/No Cross}(B)). Let $p$ denote the number of vertices of $\mathcal{N}_j(t_1)$ that cross stationary robot positions at time $t_1$. Hence, $(n-1)-p$ vertices of $\mathcal{N}_j(t_1)$ do not cross any stationary robot position at time $t_1$ (as the unique \textit{extremal} robot $r_e$ is kept fixed). Note that if a vertex of $\mathcal{N}_j(t_1)$ lies exactly on a stationary robot at time $t_1$, it contributes a decrease of $x$ to the total sum $D_{r_j}(t_0)$.

When the counterclockwise arc distance from a stationary robot to its destination is strictly smaller than $x$ before the movement of the candidate robot at $t_0$, then a vertex of the regular $n$-gon $\mathcal{N}_j(t_1)$ must cross the position of that robot at time $t_1$. Let $H_{ccw}(t_0) = \{h_{ccw}^i(t_0) \mid i = 1, 2, \dots, p\}$ denote the set of arc distances between the $p$ stationary robots and their corresponding destination points. Note that the vertices of the regular $n$-gon $\mathcal{N}_j(t_0)$) computed with respect to the \textit{candidate} robot $r_j$ at time $t_0$ (See \textbf{Figure} \ref{Cross}). After $r_j$ reaches the position $r_j(t_1)$ at time $t_1$, let $G_{cw}(t_1) = \{g_{cw}^i(t_1) \mid i = 1, 2, \dots, p\}$ denote the set of arc distances between the same $p$ stationary robots and their corresponding destination points (the vertices of the regular $n$-gon $\mathcal{N}_{j}(t_1)$) computed with respect to $r_j$ at time $t_1$ (See \textbf{Figure} \ref{Cross}). We have that $g_{cw}^i(t_1) \leq x$ and $h_{ccw}^i(t_0)<x$ for all $g_{cw}^i(t_1) \in G_{cw}(t_1)$ and $h_{ccw}^i(t_0) \in H_{ccw}(t_0)$. Since $p$ vertices of $\mathcal{N}_{j}(t_1)$ cross the positions of some stationary robots, then the initial counts of clockwise and counterclockwise assignments, $n_{\mathrm{cw}}(t_0)$ and $n_{\mathrm{ccw}}(t_0)$, for the stationary robots at time $t_0$ must change at time $t_1$. Let $n_{\mathrm{cw}}(t_1)$ and $n_{\mathrm{ccw}}(t_1)$ denote the updated numbers of clockwise and counterclockwise assignments, respectively, for all stationary robots except $r_j$ at time $t_1$. The possible relationships between $n_{\mathrm{cw}}(t_1)$ and 
$n_{\mathrm{ccw}}(t_1)$ at time $t_1$ are either
$n_{\mathrm{cw}}(t_1) < n_{\mathrm{ccw}}(t_1)$, or
$n_{\mathrm{cw}}(t_1) \ge n_{\mathrm{ccw}}(t_1)$. Our objective is to prove equation~(\ref{e-4}) by contradiction, assuming that equation~(\ref{e-5}) holds. Thus, we have the following cases.

 \begin{itemize}
 \item \boldmath\textbf{$n_{\mathrm{cw}}(t_1) < n_{\mathrm{ccw}}(t_1)\ \text{at}\ t_1:\ $}\unboldmath We assume $k'=n_{\mathrm{ccw}}(t_1)-n_{\mathrm{cw}}(t_1)$. Since $p$ vertices of $\mathcal{N}_{j}(t_1)$ cross the positions of some stationary robots, while the remaining $(n-1)-p$ vertices do not cross the positions of the other stationary robots, we apply the same argument discussed as in \textit{Subsubcase 1.1.1(a)} to obtain the relation between $\mathcal D_{r_{j}}(t_1)$ and $\mathcal D_{r_{j}}(t_0)$. The relation is given by

\begin{equation}
         \mathcal D_{r_{j}}(t_1)=\mathcal D_{r_{j}}(t_0)-k'x-\sum_{i=1}^ph_{ccw}^i+\sum_{i=1}^pg_{cw}^i
         \label{e-8}
\end{equation}
We can rewrite it 
    \begin{align*}
       & \mathcal D_{r_{j}}(t_0)-k'x-\sum_{i=1}^ph_{ccw}^i+\sum_{i=1}^pg_{cw}^i=\mathcal D_{r_{j}}(t_1)\\
     \implies & \mathcal D_{r_{j}}(t_0)-k'x-\sum_{i=1}^ph_{ccw}^i < \mathcal D_{r_{j}}(t_0)-k'x-\sum_{i=1}^ph_{ccw}^i+\sum_{i=1}^pg_{cw}^i=\mathcal D_{r_{j}}(t_1)\\
    \end{align*}
\vspace{-1.5cm}
  
    \begin{equation}
   \implies  \mathcal D_{r_{j}}(t_0)-k'x-\sum_{i=1}^ph_{ccw}^i <\mathcal D_{r_{j}}(t_1)
         \label{e-9}
\end{equation}

Since $h_{ccw}^i(t_0)<x,\ \forall\  h_{ccw}^i(t_0) \in H_{ccw}(t_0)$, we obtain 
\vspace{-0.5cm}

    \begin{align*}
       & \sum_{i=1}^ph_{ccw}^i< px\\
       \implies & -px<-\sum_{i=1}^ph_{ccw}\\
     \implies & \mathcal D_{r_{j}}(t_0)-k'x-px<\mathcal D_{r_{j}}(t_0)-k'x-\sum_{i=1}^ph_{ccw}^i\ \ [\text{as} \ k'x>0]\\
      \implies & \mathcal D_{r_{j}}(t_0)-k'x-px<\mathcal D_{r_{j}}(t_0)-k'x-\sum_{i=1}^ph_{ccw}^i<\mathcal D_{r_{j}}(t_1),\ \ \text{from equation (\ref{e-9})}
    \end{align*}
    \vspace{-0.5cm}
  
    \begin{equation}
   \implies  \mathcal D_{r_{j}}(t_0)-k'x-px<\mathcal D_{r_{j}}(t_1)
         \label{e-10}
\end{equation}

Using equation (\ref{e-10}) in equation (\ref{e-5}), we get
\vspace{-0.5cm}

    \begin{align*}
       & \mathcal D_{r_{j}}(t_0)-k'x-px<\mathcal D^*_{r_e}(t_1)\\
       \implies & \mathcal D_{r_{j}}(t_0) - k'x-px + x < \mathcal D^*_{r_e}(t_1) + x,\ \text{adding $x$ to both sides} 
    \end{align*}
From equation (\ref{e-7}), we get
\vspace{-0.5cm}

\begin{equation}
    \mathcal D_{r_{j}}(t_0) - (k'+p - 1)x < \mathcal D^*_{r_e}(t_0)
    \label{e-11}
\end{equation}
This contradicts equation~(\ref{e-4}), since it implies the existence of an assignment with total cost strictly smaller than the optimal cost at time $t_0$. Hence, the movement cost obtained by fixing $r_e$ is optimal.

     \item \boldmath\textbf{$n_{\mathrm{cw}}(t_1) = n_{\mathrm{ccw}}(t_1)\ \text{at}\ t_1:\ $}\unboldmath Substituting $k' = 0$ into equation~(\ref{e-11}) (since the number of positive and negative $x$ terms is equal, and their net contribution to the total sum is zero) leads to a contradiction with equation~(\ref{e-4}).

     \item \boldmath\textbf{$n_{\mathrm{cw}}(t_1) > n_{\mathrm{ccw}}(t_1)\ \text{at}\ t_1:\ $}\unboldmath We assume $n_{\mathrm{cw}}(t_1)-n_{\mathrm{ccw}}(t_1)=k''$. Equation (\ref{e-8}) changes to
     
     \begin{equation}
         \mathcal D_{r_{j}}(t_1)=\mathcal D_{r_{j}}(t_0)+k''x-\sum_{i=1}^ph_{ccw}^i+\sum_{i=1}^pg_{cw}^i
         \label{e-12}
     \end{equation}
     
     Using equation~(\ref{e-12}) with the same idea discussed in the part of \boldmath\textbf{$n_{\mathrm{cw}}(t_1)< n_{\mathrm{ccw}}(t_1)$}\unboldmath $\ \text{at}\ t_1$ , we obtain  
\begin{equation}
    \mathcal D_{r_{j}}(t_0) - (p-k'' - 1)x < \mathcal D^*_{r_e}(t_0),
    \label{e-13}
\end{equation}
which contradicts equation~(\ref{e-4}), as it implies the existence of an assignment with total cost strictly smaller than the optimal cost at time $t_0$. Hence, the movement cost obtained by fixing $r_e$ is optimal.

 \end{itemize}
\end{itemize}

\noindent The cases where $n_{\mathrm{cw}}(t_0) = n_{\mathrm{ccw}}(t_0)$ (Case 2) at time $t_0$ and $n_{\mathrm{cw}}(t_0) > n_{\mathrm{ccw}}(t_0)$ (Case 3) can be proved using arguments analogous to those used in the proof of \textbf{Case 1}, which considers the situation $n_{\mathrm{cw}} < n_{\mathrm{ccw}}$. Suppose the robot $r_j$ is assigned to its destination point $p_j^{e}(t_0)$ through a \textit{counterclockwise assignment} determined by the \textit{extremal} robot $r_e$, and subsequently reaches the position $r_j(t_1)$ at time $t_1$. Using the same idea discussed in \textit{Case 1.1}, we obtain the inequality $\mathcal D^*_{r_e}(t_1) < \mathcal D_{r_j}(t_1)$ at time $t_1$.\\

\noindent This shows that fixing the \textit{extremal} robot $r_e$ results in a smaller total distance than fixing the \textit{candidate} robot $r_j$ at time $t_1$. Therefore, the configuration obtained with respect to $r_e$ computes the optimal movement cost.

\subsubsection*{2. Invariance with respect to any other stationary robot $r_k$, where $k \ne e, j$}
\noindent Assume that another stationary robot $r_k$ does not yield the optimal sum at time $t_0$, where the corresponding sum obtained by fixing it is $\mathcal D_{r_k}(t_0)$. Let $p_j^k(t_0)$ denote the destination point assigned to robot $r_j$ when the assignment is determined by fixing robot $r_k$. Clearly,
\begin{equation}
\mathcal D^*_{r_e}(t_0) < \mathcal D_{r_k}(t_0)
\label{e-21}
\end{equation} Depending on the positions of $r_j(t_0)$, $p_j^{e}(t_0)$ and $p_j^k(t_0)$ by fixing the \textit{extremal} robots $r_e$ and another stationary robot $r_k$ in \textbf{Figure} \ref{5(1)}, \textbf{Figure} \ref{5(2)} and \textbf{Figure} \ref{5(3)}, respectively.


\begin{figure*}[!ht]
    \centering
    \begin{subfigure}{0.5\textwidth}
        \includegraphics[width=\linewidth]{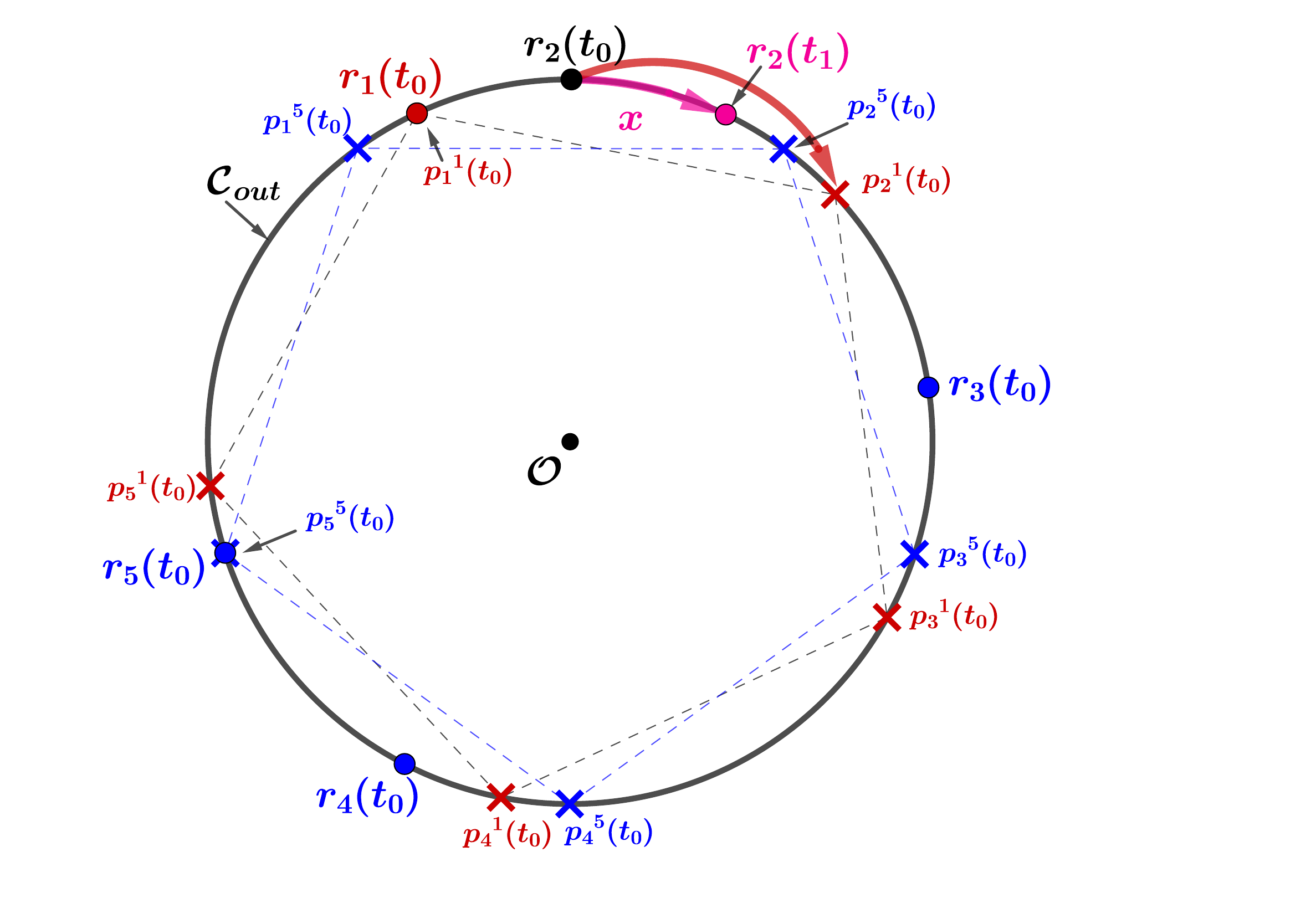}
        \caption*{(A)}
    \end{subfigure}
    \begin{subfigure}{0.48\textwidth}
        \includegraphics[width=\linewidth]{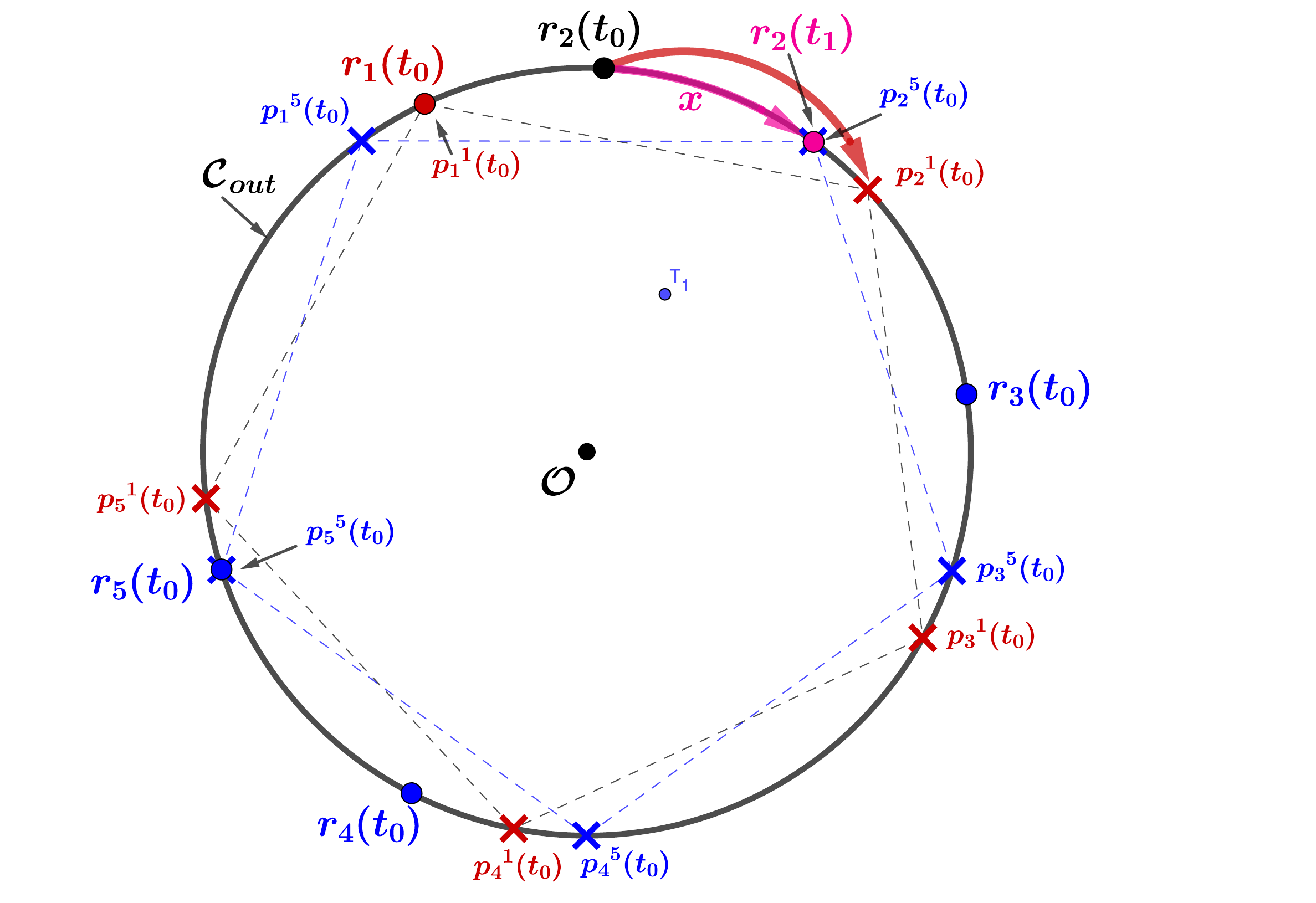}
        \caption*{(B)}
    \end{subfigure}
    \caption{ \textit{An illustration of the initial configuration of five robots $\mathcal{R}(t_0)=\{r_1(t_0),r_2(t_0),\ldots,r_5(t_0)\}$ with the \textit{extremal} robot $r_1(t_0)$ (red dot) and the \textit{candidate} robot $r_2(t_0)$ (black dot) and stationary robot is $r_5(t_0)$. The destination sets determined by fixing the \textit{extremal} robot $r_1(t_0)$ and the stationary robot $r_5(t_0)$ are $\mathcal{P}^1(t_0)=\{p_1^1(t_0),p_2^1(t_0),\ldots,p_5^1(t_0)\}$ (red crosses) and $\mathcal{P}^5(t_0)=\{p_1^5(t_0),p_2^5(t_0),\ldots,p_5^5(t_0)\}$ (blue crosses), respectively. \textbf{(A)} The destination point $p_2^5(t_0)$ lies ahead of robot $r_2(t_1)$ \textbf{(B)} The robot $r_2(t_1)$ coincides with $p_2^5(t_0)$.}}
    \label{5(1)}
\end{figure*}

\noindent Without loss of generality, assume that robot $r_j$ is assigned to its destination point $p_j^{e}(t_0)$ in the  \textit{clockwise} direction, as determined by the \textit{extremal} robot $r_e$,  and that it reaches the position $r_j(t_1)$ at time $t_1$. The possible cases are the following.

\begin{itemize}
    \item {\textbf{Case A:}} \textit{The destination point} $p_j^k(t_0)$ \textit{lies on the arc from} $r_j(t_1)$ \textit{toward its assigned destination (See \textbf{Figure} \ref{5(1)}(A))}.
    \vspace{0.2cm}
    
    Let, $d_{arc}\Big(r_j(t_0),r_j(t_1)\Big)=x$. Also suppose that $\mathcal D_{r_k}(t_1)$ and $\mathcal D^*_{r_e}(t_1)$ are the sums by fixing the robots $r_k$ and $r_e$ at time $t_1$ respectively.

Now, we have $\mathcal D_{r_k}(t_1)=\mathcal D_{r_k}(t_0)-x$ and $\mathcal D^*_{r_e}(t_1)=\mathcal D^*_{r_e}(t_0)-x$. From equation (\ref{e-21}), we get

\begin{align*}
&\mathcal D^*_{r_e}(t_0)-x < \mathcal D_{r_k}(t_0)-x\\
\implies\;& \mathcal D^*_{r_e}(t_1)<\mathcal D_{r_k}(t_1)
\end{align*}

\item {\textbf{Case B:}} \textit{The robot} $r_j(t_1)$ \textit{coincides with} $p_j^k(t_0)$\textit{(See \textbf{Figure} \ref{5(1)}(B))}.
 \vspace{0.2cm}

Let, $d_{arc}\Big(r_j(t_0),r_j(t_1)\Big)=x$. Also suppose that $\mathcal D_{r_k}(t_1)$ and $\mathcal D^*_{r_e}(t_1)$ are the sums by fixing the robots $r_k$ and $r_e$ at time $t_1$ respectively.

Now, we have $\mathcal D_{r_k}(t_1)=\mathcal D_{r_k}(t_0)-x$ and $\mathcal D^*_{r_e}(t_1)=\mathcal D^*_{r_e}(t_0)-x$. From equation (\ref{e-21}), we get

\begin{align*}
&\mathcal D^*_{r_e}(t_0)-x < \mathcal D_{r_k}(t_0)-x\\
\implies\;& \mathcal D^*_{r_e}(t_1)<\mathcal D_{r_k}(t_1)
\end{align*}

\begin{figure*}[!ht]
    \centering
    \begin{subfigure}{0.48\textwidth}
        \includegraphics[width=\linewidth]{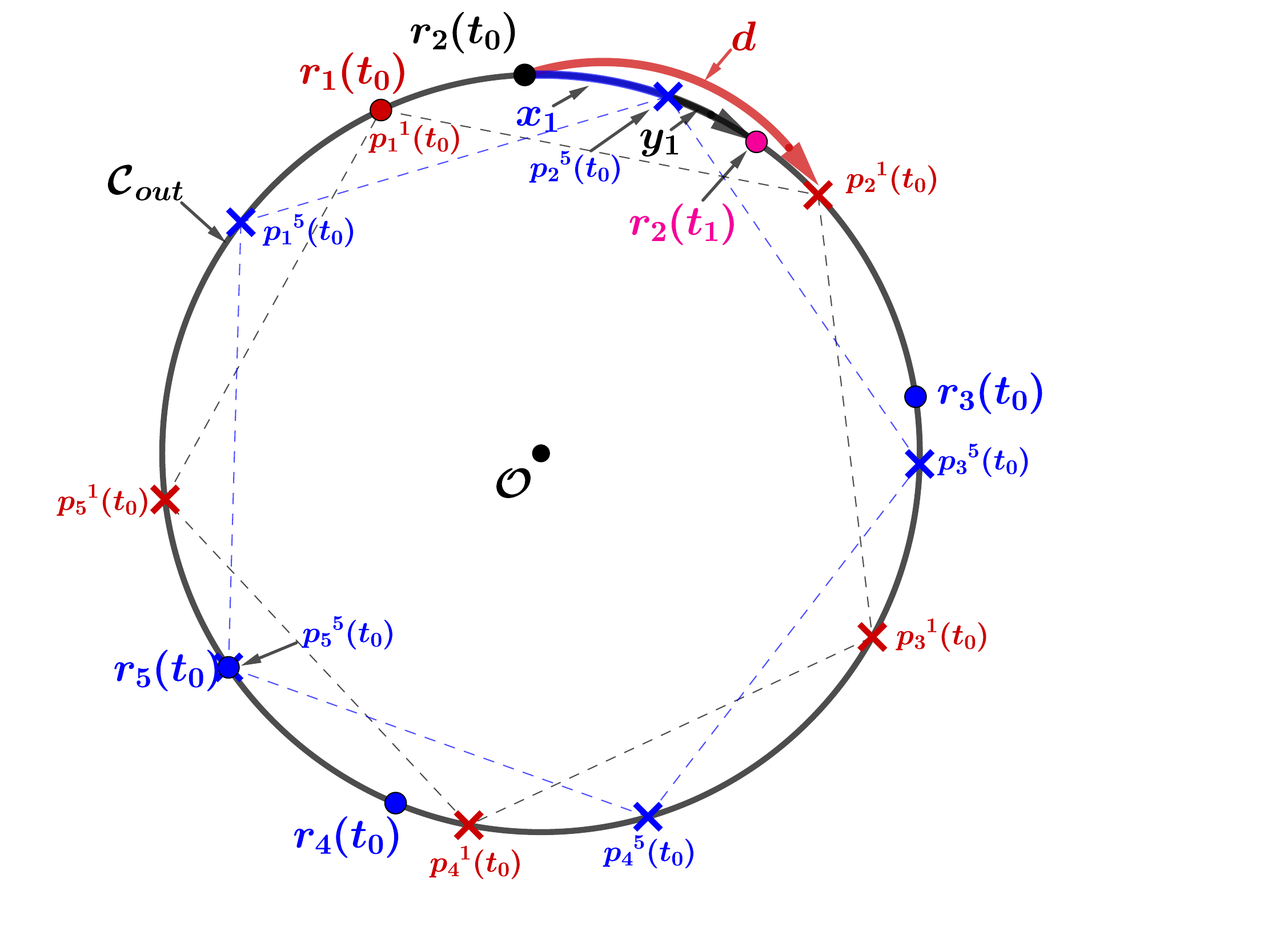}
        \caption*{(A)}
    \end{subfigure}
    \begin{subfigure}{0.48\textwidth}
        \includegraphics[width=\linewidth]{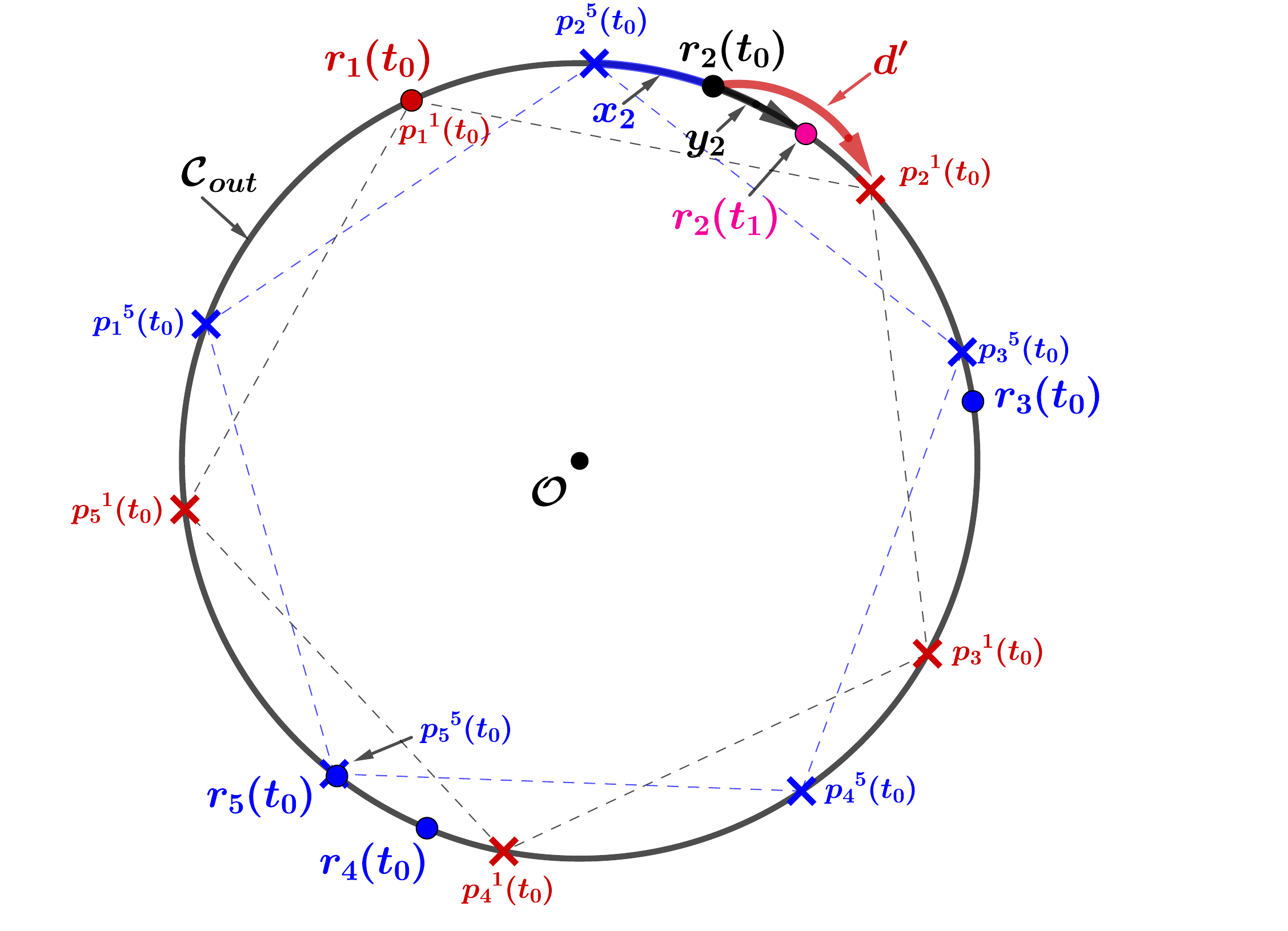}
        \caption*{(B)}
    \end{subfigure}
    \caption{ \textit{An illustration of the initial configuration of five robots $\mathcal{R}(t_0)=\{r_1(t_0),r_2(t_0),\ldots,r_5(t_0)\}$ with the \textit{extremal} robot $r_1(t_0)$ (red dot) and the \textit{candidate} robot $r_2(t_0)$ (black dot) and stationary robot is $r_5(t_0)$. The destination sets determined by fixing the \textit{extremal} robot $r_1(t_0)$ and the stationary robot $r_5(t_0)$ are $\mathcal{P}^1(t_0)=\{p_1^1(t_0),p_2^1(t_0),\ldots,p_5^1(t_0)\}$ (red crosses) and $\mathcal{P}^5(t_0)=\{p_1^5(t_0),p_2^5(t_0),\ldots,p_5^5(t_0)\}$ (blue crosses), respectively. \textbf{(A)} The robot $r_j(t_1)$ lies ahead of $p_j^k(t_0)$ \textbf{(B)} The robot $r_j(t_0)$ lies ahead of $p_j^{k}(t_0)$.}}
    \label{5(2)}
\end{figure*}

\item {\textbf{Case C:}} \textit{The robot} $r_j(t_1)$ \textit{lies further along the arc than the destination point} $p_j^k(t_0)$ \textit{(See \textbf{Figure}~\ref{5(2)}(A))}.

\vspace{0.2cm}
 
Let \(d_{arc}\Big(r_j(t_0), p_j^{k}(t_0)\Big) = x_1\), 
\(d_{arc}\Big(p_j^{k}(t_0), r_j(t_1)\Big) = y_1\), and 
\(d_{arc}\Big(r_j(t_0), p_j^{e}(t_0)\Big) = d\). 
The total movement cost is defined as 
\(\mathcal D^*_{r_e}(t_0) = D + d\) and \(\mathcal D_{r_k}(t_0) = D' + x_1\). Here, \(D\) and \(D'\) denote the total sum of distances contributed by all stationary 
robots, excluding the candidate robot \(r_j\), when evaluated with respect to the \textit{extremal} robot \(r_e\) and the robot \(r_k\), respectively, at time \(t_0\).

Also suppose that $\mathcal D_{r_k}(t_1)$ and $\mathcal D^*_{r_e}(t_1)$ are the sums by fixing the robots $r_k$ and $r_e$ at time $t_1$ respectively.

Clearly, $\mathcal D_{r_k}(t_1)=D'+x_1+y_1$ and $\mathcal D^*_{r_e}(t_1)=D+x_1+y_1$.

Using, \(\mathcal D^*_{r_e}(t_0) = D + d\) and \(\mathcal D_{r_k}(t_0) = D' + x_1\), we get, $\mathcal D_{r_k}(t_1)=\mathcal D_{r_k}(t_0)+y_1$ and $\mathcal D^*_{r_e}(t_1)=\mathcal D^*_{r_e}(t_0)-d+x_1+y_1=\mathcal D^*_{r_e}(t_0)-\big(d-(x_1+y_1)\big)$, respectively.

From equation (\ref{e-21}), we get
\vspace{-0.5cm}

\begin{align*}
&\mathcal D^*_{r_e}(t_0) < \mathcal D_{r_k}(t_0)\\
\implies\;& \mathcal D^*_{r_e}(t_0)+y_1 < \mathcal D_{r_k}(t_0)+y_1 = \mathcal D_{r_k}(t_1),
\quad\text{adding $y_1$ both sides and from the above equality}\\
\implies\;& \mathcal D^*_{r_e}(t_0)+y_1 < \mathcal D_{r_k}(t_1)\\
\implies\;& \mathcal D^*_{r_e}(t_0) < \mathcal D^*_{r_e}(t_0)+y_1 < \mathcal D_{r_k}(t_1)\\
\implies\;& \mathcal D^*_{r_e}(t_0)-\big(d-(x_1+y_1)\big) < \mathcal D^*_{r_e}(t_0) < \mathcal D^*_{r_e}(t_0)+y_1 < \mathcal D_{r_k}(t_1),
\quad\text{$d-(x_1+y_1)$ is a positive}\\
\implies\;& \mathcal D^*_{r_e}(t_1)=\mathcal D^*_{r_e}(t_0)-\big(d-(x_1+y_1)\big)
< \mathcal D^*_{r_e}(t_0) < \mathcal D^*_{r_e}(t_0)+y_1 < \mathcal D_{r_k}(t_1)\\
\implies\;& \mathcal D^*_{r_e}(t_1) < \mathcal D_{r_k}(t_1)
\end{align*}


\item {\textbf{Case D:}} \textit{The robot} $r_j(t_0)$ \textit{appears before the destination point} $p_j^{k}(t_0)$ \textit{along the arc} \textit{(See \textbf{Figure}~\ref{5(2)}(B))}.
 \vspace{0.2cm}

Let $d_{arc}\Big(r_j(t_0), p_j^{k}(t_0)\Big) = x_2$, where $x_2$ is measured in counterclockwise direction, while the \textit{candidate} robot $r_j$ moves in the clockwise direction. Further, let $d_{arc}\Big(r_j(t_0), r_j(t_1)\Big) = y_2$ and $d_{arc}\Big(r_j(t_0), p_j^{e}(t_0)\Big) = d'$.

\begin{figure}[!ht]
    \centering
    \vspace{-0.3cm}
    
    \includegraphics[width=0.5\textwidth]{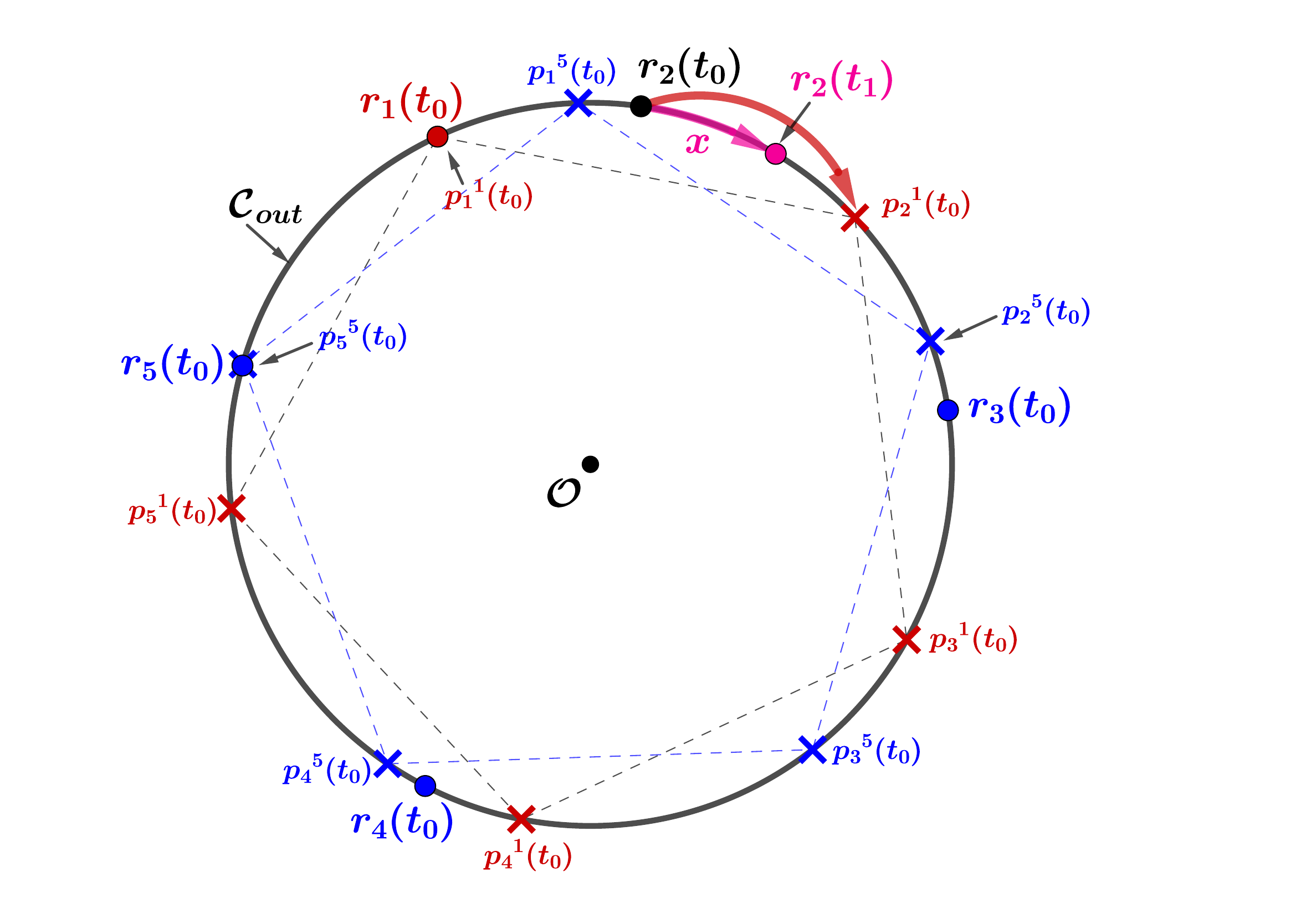}
    \caption{ \textit{An illustration of the initial configuration of five robots $\mathcal{R}(t_0)=\{r_1(t_0),r_2(t_0),\ldots,r_5(t_0)\}$ with the \textit{extremal} robot $r_1(t_0)$ (red dot) and the \textit{candidate} robot $r_2(t_0)$ (blue dot) and stationary robot is $r_5(t_0)$. The destination sets determined by fixing the \textit{extremal} robot $r_1(t_0)$ and the stationary robot $r_5(t_0)$ are $\mathcal{P}^1(t_0)=\{p_1^1(t_0),p_2^1(t_0),\ldots,p_5^1(t_0)\}$ (red crosses) and $\mathcal{P}^5(t_0)=\{p_1^5(t_0),p_2^5(t_0),\ldots,p_5^5(t_0)\}$ (blue crosses), respectively. The position $p_j^{k}(t_0)$ lies ahead of $p_j^{e}(t_0)$.}}
    
     \label{5(3)}

\end{figure}

Define the total sum at time \( \mathcal D^*_{r_e}(t_0) = D + d' \) and \( \mathcal D_{r_k}(t_0) = D' + x_2 \), where \( D \) and \( D' \) denote the total sum of all stationary robots excluding the candidate robot \( r_j \), evaluated by fixing the \textit{extremal} robot \( r_e \) and the robot \( r_k \), respectively, at time \( t_0 \). Also suppose that $\mathcal D_{r_k}(t_1)$ and $\mathcal D^*_{r_e}(t_1)$ are the sums by fixing the robots $r_k$ and $r_e$ at time $t_1$ respectively.

Clearly, $\mathcal D_{r_k}(t_1)=D'+y_2+x_2$ and $\mathcal D^*_{r_e}(t_1)=D+y_2$.

Using, \(\mathcal D^*_{r_e}(t_0) = D + d'\) and \(\mathcal D_{r_k}(t_0) = D' + x_2\), we get, $\mathcal D_{r_k}(t_1)=\mathcal D_{r_k}(t_0)+y_2$ and $\mathcal D^*_{r_e}(t_1)=\mathcal D^*_{r_e}(t_0)+y_2-d'$, respectively.

From equation (\ref{e-21}), we get

\begin{align*}
&\mathcal D^*_{r_e}(t_0) < \mathcal D_{r_k}(t_0)\\
\implies\;& \mathcal D^*_{r_e}(t_0)+y_2 < \mathcal D_{r_k}(t_0)+y_2 = \mathcal D_{r_k}(t_1),
\quad\text{adding $y_2$ and from the above equality}\\
\implies\;& \mathcal D^*_{r_e}(t_0)+y_2 < \mathcal D_{r_k}(t_1)\\
\implies\;& \mathcal D^*_{r_e}(t_0)+y_2-d' < \mathcal D^*_{r_e}(t_0)+y_2 < \mathcal D_{r_k}(t_1)\\
\implies\;& \mathcal D^*_{r_e}(t_1)=\mathcal D^*_{r_e}(t_0)+y_2-d'< \mathcal D^*_{r_e}(t_0)+y_2 < \mathcal D_{r_k}(t_1),
\quad\text{from the above equality}\\
\implies\;& \mathcal D^*_{r_e}(t_1) < \mathcal D_{r_k}(t_1)
\end{align*}


\item {\textbf{Case E:}} \textit{The position} $p_j^{k}(t_0)$ \textit{appears before} $p_j^{e}(t_0)$ \textit{along the arc} \textit{(See \textbf{Figure}~\ref{5(3)})}.

\vspace{0.2cm}

Let, $d_{arc}\Big(r_j(t_0),r_j(t_1)\Big)=x$. Also suppose that $\mathcal D_{r_k}(t_1)$ and $\mathcal D^*_{r_e}(t_1)$ are the two total movement costs by fixing the robots $r_k$ and $r_e$ at time $t_1$ respectively.

Now, we have $\mathcal D_{r_k}(t_1)=\mathcal D_{r_k}(t_0)-x$ and $\mathcal D^*_{r_e}(t_1)=\mathcal D^*_{r_e}(t_0)-x$. From equation (\ref{e-21}), we get
\vspace{-0.5cm}

\begin{align*}
&\mathcal D^*_{r_e}(t_0)-x < \mathcal D_{r_k}(t_0)-x\\
\implies\;& \mathcal D^*_{r_e}(t_1)<\mathcal D_{r_k}(t_1)
\end{align*}

\end{itemize}

\noindent This shows that fixing the \textit{extremal} robot $r_e$ leads to a smaller total distance than fixing the other stationary robot $r_k$ at time $t_1$.

\noindent Suppose robot $r_j$ moves toward its assigned destination point $p_j^{e}(t_0)$ in the \textit{counterclockwise} direction, as determined by the \textit{extremal} robot $r_e$, and reaches position $r_j(t_1)$ at time $t_1$. Using the same type of argument as before, we show that in all possible cases, fixing robot $r_e$ still results in a smaller total movement cost than fixing robot $r_k$ at time $t_1$. Therefore, the \textit{extremal} robot remains unchanged, which completes the proof of \textbf{Lemma~\ref{l-3}}.
\end{proof}
\noindent Note that at any instant of time, exactly one robot is allowed to move towards its destination. The robot moves only when it finds an unobstructed path towards its destination. Thus, we have the following observation. 
\begin{observation}
\label{obs-01}
The algorithm \textit{AsymU1dMinSumC()} achieves collision-free optimal placement of robots for the min-sum problem on the circle $\mathcal C $.
\end{observation}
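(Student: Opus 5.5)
The plan is to prove the observation by combining three ingredients: invariance of the target set, no crossing of robots (hence no collisions), and strict progress of the total cost.

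\textbf{Fixed target set.} Lemma~\ref{l-3} guarantees that the \textit{extremal} robot $r_e$ computed at $t_0$ stays the unique extremal robot whenever the current candidate $r_j$ sits at an intermediate position. Since $r_e$ never moves (it is already at $p_e^e$), every robot that looks again computes the same polygon $\mathcal N_e$ and the same assignment $r_i\mapsto p_i^e$. Iterating over successive candidate moves shows by induction that the destination set $\mathcal P^e$ is invariant for the whole execution. The asynchrony issue, a robot computing from an outdated snapshot, is harmless here: the algorithm authorises exactly one robot (minimum free $d_i$, ties broken by the total order of Observation~\ref{obs:asymmetric-ordering}), so no other robot holds a pending move.

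\textbf{Collision freedom.} I would first note that the optimal assignment of Result~\ref{r-2} preserves the cyclic order. Robot $A_m$ goes to $A_i+(i-m)2\pi/n$, so consecutive robots receive consecutive vertices and the paths of two robots never cross. A robot moves only if its arc to its destination contains no other robot, and all others are stationary during that move. Hence no two robots ever share a point. A non-rigid stop merely creates a new intermediate position, to which Lemma~\ref{l-3} applies again.

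\textbf{Progress and optimality.} I would show that at every time before termination some robot has a free path. Take the robots not yet at their destinations and group them into maximal runs moving in the same direction. The foremost robot of such a run has only its own destination ahead of it, because order preservation means no other robot's destination-side position lies between it and its target. That robot is therefore free.

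Each move covers at least $\min(\delta,d_i)$ along a geodesic arc toward its target. So each robot travels exactly $d_{arc}(r_i(t_0),p_i^e)$ in total, never retracing, and the overall cost equals $\mathcal D^*$.

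Finiteness follows as in Lemma~\ref{l-2}: the potential $\sum_i d_{arc}(r_i(t),p_i^e)$ drops by at least $\delta$ or reaches zero for that robot at each move, so the process ends in finitely many moves.

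\textbf{Main obstacle.} The delicate step is the existence of a free candidate at every time. One must exclude a cyclic blocking pattern in which every unfinished robot is blocked by the next. I would rule this out using order preservation: a blocked chain going all the way around the circle would force the sum of signed displacements to exceed a full turn, which contradicts the fact that each displacement is a shortest arc of length at most $\pi R$ and that the consecutive targets are spaced exactly $2\pi/n$ apart.
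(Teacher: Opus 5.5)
Your plan is the paper's own argument made explicit. The paper justifies the observation only by noting that exactly one robot is allowed to move at any instant, and only along an unobstructed arc, with Lemma~\ref{l-3} keeping the assignment fixed. The gap is in how you dispose of asynchrony: ``the algorithm authorises exactly one robot, so no other robot holds a pending move.'' That claim, which the paper also relies on, is false under \emph{ASYNC}, because authorisation is recomputed from every snapshot. Let $r_j$ be authorised at $t_0$ and move clockwise. Let $r_k$ be the robot just behind it, also moving clockwise, with $p_k^{e}$ strictly between $r_j(t_0)$ and $p_j^{e}$. Then $r_k$ is blocked at $t_0$, so nothing forces $d_k\ge d_j$. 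Once $r_j$ has passed $p_k^{e}$ but is still in its Move phase, a Look by $r_k$ shows a free arc. If $d_k<d_{arc}(r_j(t),p_j^{e})$, then $r_k$ is the unique authorised robot and two robots are in motion at once. From that point two of your arguments lose their footing. Your collision argument assumes ``all others are stationary during that move.'' Your invariance argument uses Lemma~\ref{l-3}, which treats only a single robot at an intermediate position with every other robot at its $t_0$ position.

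The induction through Lemma~\ref{l-3} also stalls at the first arrival. Once $r_j$ sits on $p_j^{e}$, fixing $r_j$ reproduces $\mathcal N_e$ with the same bijection and cost. So $r_e$ is no longer the unique extremal robot, and the strict inequalities $\mathcal D^*_{r_e}<\mathcal D_{r_k}$ that Lemma~\ref{l-3} uses fail. The paper patches this step separately with Lemma~\ref{l-4}.

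Both problems disappear if you carry the invariant on (polygon, bijection) pairs rather than on extremal robots. Suppose any set of robots, possibly concurrently and possibly stopped early, moves a total distance $X$ along geodesics toward their targets under the optimal bijection $f$ onto $\mathcal N_e$. The cost of $(\mathcal N_e,f)$ drops by exactly $X$. By the triangle inequality, the cost of every other pair drops by at most $X$. So $(\mathcal N_e,f)$ stays the strict optimum, and the extremal set simply grows to the robots already at their $f$-targets.

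Collision freedom under concurrency must then come from order preservation plus the free-arc rule, not from stationarity. Suppose a robot $r$ starts on an empty arc. Its successor in the direction of motion already lies beyond $r$'s target, and it stays there: it either moves away or stops at its own target, which lies further on. Its predecessor, if it moves toward $r$ at all, saw $r$ beyond its own target when it started, and $r$ has only moved away since.

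Finally, your ``main obstacle'' argument gives no contradiction as written, since shortest arcs of length at most $\pi R$ can sum to more than a full turn. The real reason is simpler. A robot can only be blocked by its immediate successor moving in the same direction, and the stationary $r_e$ breaks every such chain.
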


\begin{lemma}
\label{l-4}
Suppose the \textit{candidate} robot \( r_j \) arrives at its destination 
\( p_j^{i} \), determined with respect to the \textit{extremal} robot \( r_e \), at time \( T \). Then the optimal assignment of robots to destination points remains identical whether it is computed relative to the position of \( r_e \) at time \( T \) 
or to the position of \( r_j \). Consequently, the total optimal cost is the same in both cases.
\end{lemma}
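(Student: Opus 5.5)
The plan is to exploit the rigidity of the regular $n$-gon. By Result~\ref{r-2}, an assignment obtained by fixing a robot is determined by two things: the regular $n$-gon through that robot's position, and the cyclic (clockwise) order of the robots.

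\textbf{Step 1: the two polygons coincide.} At time $T$ the robot $r_j$ sits exactly at $p_j^{e}$, which is a vertex of $\mathcal N_e(T)$. Since $r_e$ never moves (Lemma~\ref{l-3}), $\mathcal P^{e}(T)=\mathcal P^{e}(t_0)$. A regular $n$-gon inscribed in $\mathcal C$ is uniquely determined by any one of its vertices. Hence $\mathcal N_j(T)$, the polygon obtained by fixing $r_j(T)=p_j^{e}$, is exactly $\mathcal N_e(T)$. Consequently the vertex sets agree: $\mathcal P^{j}(T)=\mathcal P^{e}(T)$.

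\textbf{Step 2: the bijections coincide.} It remains to show that the bijection $f_j(T)$ prescribed by Result~\ref{r-2} when $r_j$ is fixed equals $f_e(T)$.

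\begin{itemize}
\item Label the robots in clockwise order starting from $r_e$. The assignment fixing $r_e$ sends the $k$-th robot after $r_e$ to the $k$-th vertex after $r_e$.
\item Because the candidate moves only along an unobstructed arc (Observation~\ref{obs-01}), the cyclic order of the robots at time $T$ is the same as at $t_0$.
\item In $f_e(T)$, $r_j$ is matched to $p_j^{e}$, so the cyclic index offset between $r_j$ and $r_e$ among robots equals their offset among vertices.
\item Shifting the reference from $r_e$ to $r_j$ therefore shifts robot indices and vertex indices by the same amount. The formula of Result~\ref{r-2} written relative to $r_j$ thus reproduces the same pairing: every robot $r_k$ receives the same vertex $p_k^{e}$, and in particular $r_e$ is assigned its own position.
\end{itemize}

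\textbf{Step 3: equality of costs.} Since the bijections are identical, the costs satisfy $\mathcal D_{r_j}(T)=\sum_k d_{arc}(r_k(T),p_k^{e})=\mathcal D^{*}_{r_e}(T)$ term by term.

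\textbf{Step 4: optimality is preserved.} I would then note that $\mathcal D^{*}_{r_e}(T)$ is still the minimum over all fixed robots. For the stationary robots this follows from Lemma~\ref{l-3}, applied at each intermediate instant and passed to the limit $t_1\to T$ (the costs vary continuously). Hence $r_j$ becomes an extremal robot as well, and the optimal assignment and its cost are the same under either reference.

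\textbf{Main obstacle.} The delicate point is the direction convention in Step 2. Result~\ref{r-2} assigns destinations by a clockwise offset, but robots are disoriented. I must check that fixing $r_j$ and reading the assignment in the opposite orientation still yields the same matching. I would argue this directly: the matching is the unique order-preserving cyclic bijection that pins $r_j$ to its own position. Such a bijection is orientation-independent, because reversing orientation reverses both the robot sequence and the vertex sequence, in the same way as in Observation~\ref{obs-endpoint-invariant}.
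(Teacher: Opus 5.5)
Your proposal is correct and follows essentially the same route as the paper. The paper also argues that $\mathcal P^{j}(T)=\mathcal P^{e}(t_0)$ because a regular $n$-gon on $\mathcal C$ is fixed by any one of its vertices, and then concludes $f_e(T)=f_j(T)$ and $\mathcal D_{r_e}(T)=\mathcal D_{r_j}(T)$; it only asserts the equality of the bijections, whereas your Step~2 supplies the cyclic-order and orientation argument. Your Step~4 goes beyond the paper's proof and does no harm, but note that $r_e$ staying put follows from the algorithm ($r_e$ is already at its destination, and the paper simply stipulates that all other robots are stationary), not from Lemma~\ref{l-3}.
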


\begin{figure}[!ht]
    \centering
    \begin{subfigure}{0.48\linewidth}
        \includegraphics[width=\linewidth]{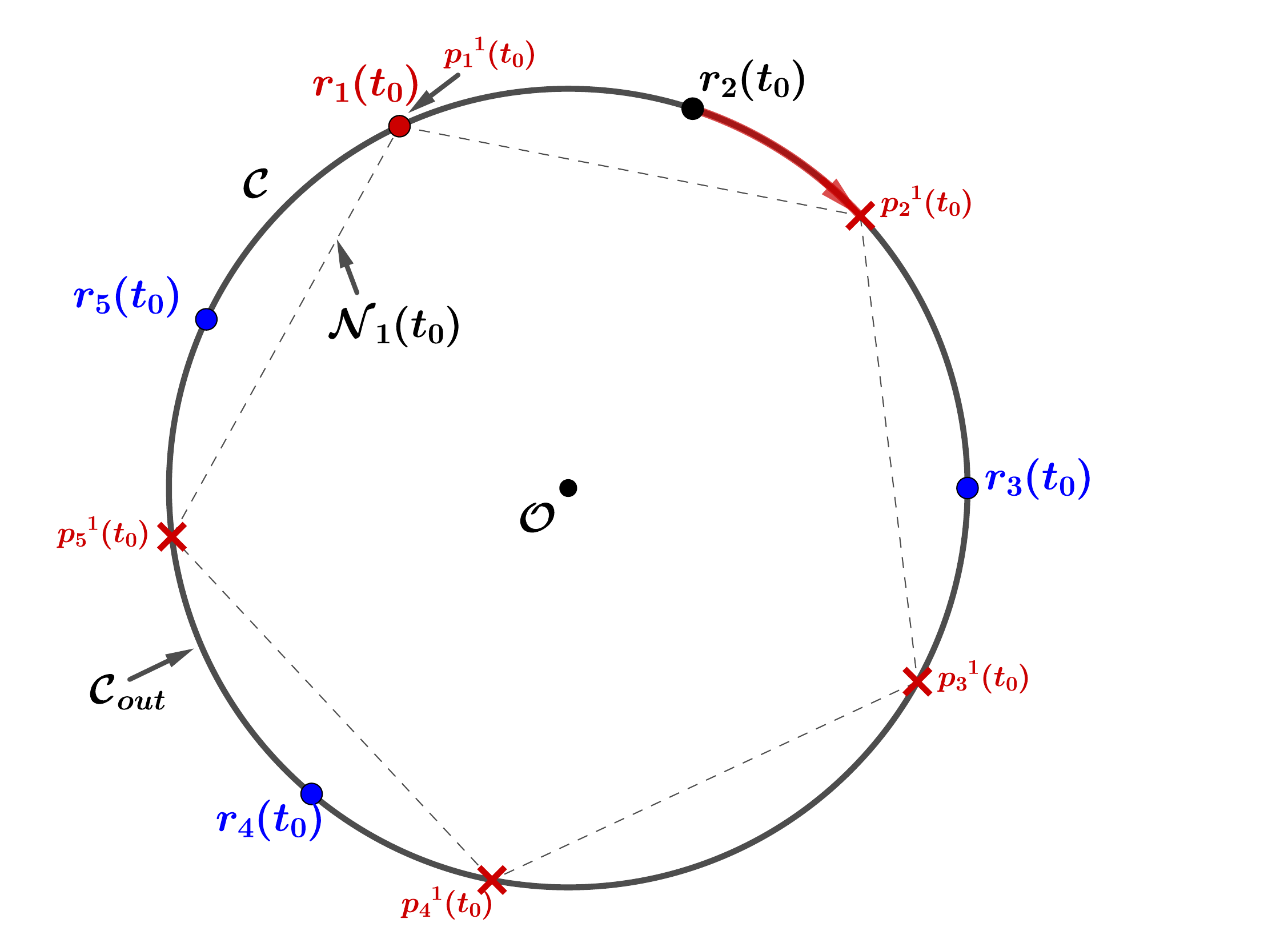}
        \caption*{(A)}
    \end{subfigure}
    \begin{subfigure}{0.48\linewidth}
        \includegraphics[width=\linewidth]{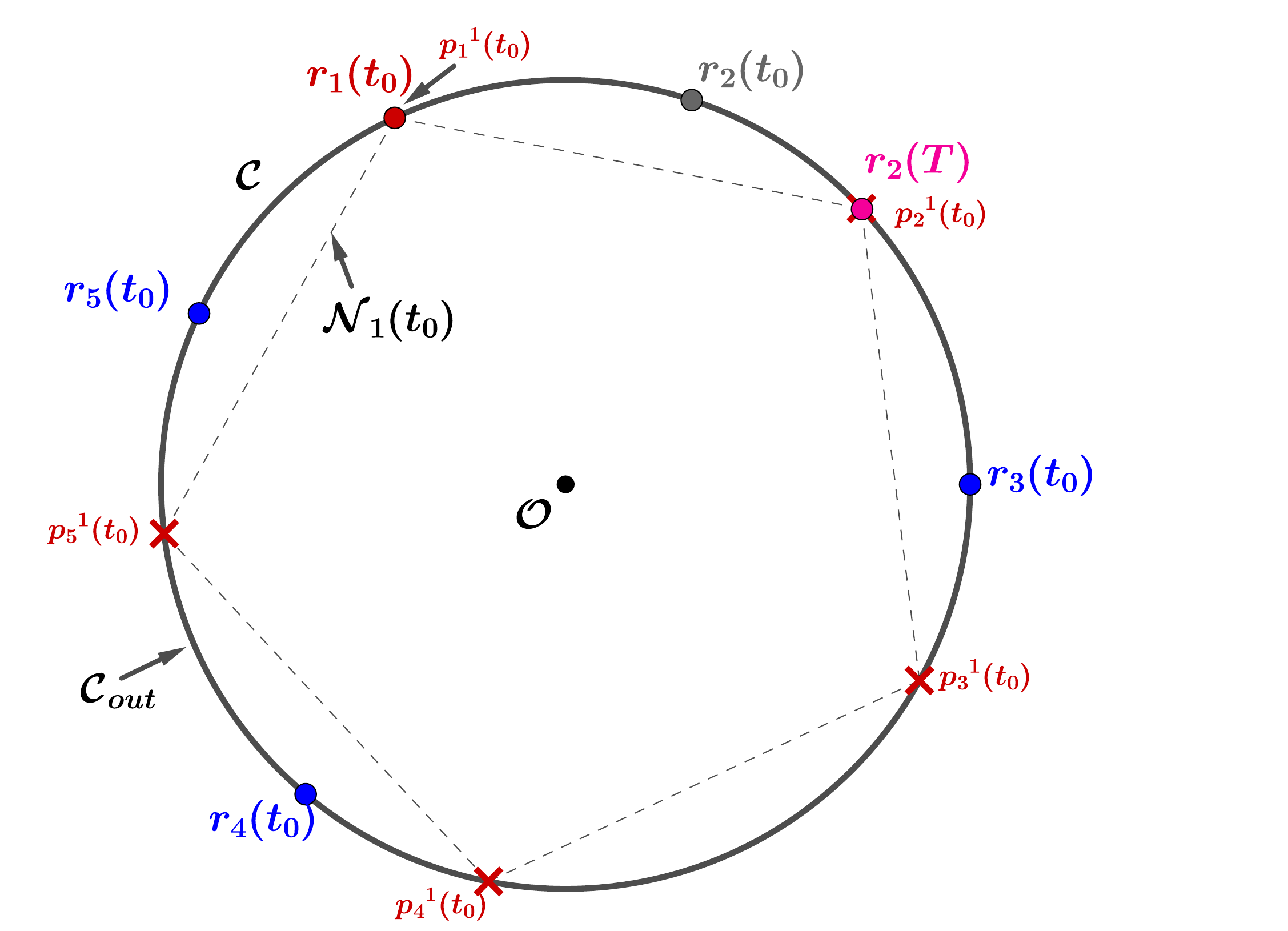}
        \caption*{(B)}
    \end{subfigure}
    \caption{An illustration of the initial configuration of five robots $\mathcal{R}(t_0)=\{r_1(t_0),r_2(t_0),\ldots,r_5(t_0)\}$ with the \textit{extremal} robot $r_1(t_0)$ (red dot) and the \textit{candidate} robot $r_2(t_0)$ (black dot). The destination set determined by fixing the \textit{extremal} robot $r_1(t_0)$ is $\mathcal{P}^1(t_0)=\{p_1^1(t_0),p_2^1(t_0),\ldots,p_5^1(t_0)\}$ (red crosses).The position $p_j^{k}(t_0)$ lies ahead of $p_j^{e}(t_0)$. \textbf{(A)} Position of the candidate robot $r_2$ before reaching $p_2^1(t_0)$ \textbf{(B)} Position of the candidate robot $r_2$ after reaching $p_2^1(t_0)$.}
    \label{Lem-4}
\end{figure}

\begin{proof}
  
\normalfont
Let $\mathcal{R}(t_0) = \{ r_1(t_0), r_2(t_0), \ldots, r_n(t_0) \}$ be a set of $n$ robots.
Also, let $\mathcal{P}^k(t_0) = \{ p_1^{k}(t_0), p_2^{k}(t_0),$ $ \ldots, p_n^{k}(t_0) \}$ denote the destinations forming a regular $n$-gon $\mathcal N_k(t_0)$ at time $t_0$, computed by robot $r_k$ using \textbf{Result} \ref{r-2}. Fix a robot $r_k$. At time $t_0$, let 
\[
f_k(t_0) : \mathcal{R}(t_0) \rightarrow \mathcal{P}^k(t_0)
\]
denote an assignment function that maps each robot in $\mathcal{R}(t_0)$ to a distinct target point on the circumference of the circle defined by $\mathcal{P}^k(t_0)$. Further let $\mathcal D_{r_e}(t)$ represent the optimum sum for \textit{extremal} robot $r_e$ under $f_k(t_0)$. At initial time \( t_0 \), the \textit{optimal} assignment calculated by fixing \textit{extremal} robot $r_e$ is given by

\[
f_e(t_0)
=
\left\{
\mathcal{R}(t_0)\!\rightarrow\!\mathcal{P}^{e}(t_0)\;\middle|\;
d_{arc}\Big(r_e(t_0),p_e^{e}(t_0)\Big) = 0,\;
d_{arc}\Big(r_i(t_0),p_i^{e}(t_0)\Big) \ne 0,\;
\right\}
\]
\[
\forall\, r_i(t_0)\in \mathcal R(t_0)
\ \land \
\forall\, p_i^{e}(t_0)\in \mathcal{P}^{e}(t_0).
\]


\noindent Suppose the \textit{candidate} robot \( r_j \) reaches its assigned destination \( p_j^e(t_0) \) calculated by fixing $r_e$ at time \( T \), with all other robots remain stationary (See \textbf{Figure} \ref{Lem-4}). Then, the positions at time \( T \) satisfy
\[
p_e^{e}(t_0)=r_e(t_0), \quad p_j^e(t_0)= r_j(T).
\]

\noindent At time \( T \), the optimal assignment calculated by fixing \textit{extremal} robot $r_e$ is given by

\[
f_e(T)
=
\left\{
\mathcal{R}(T)\!\rightarrow\!\mathcal{P}^{e}(t_0)\;\middle|\;
d_{\text{arc}}\Big(r_i(T), p_i^{e}(t_0)\Big)\neq 0,\;
d_{\text{arc}}\Big(r_e(t_0), p_e^{e}(t_0)\Big)=0,\;
d_{\text{arc}}\Big(r_j(T), p_j^{e}(t_0)\Big)=0
\right\}
\]
\[
\forall\, r_i(T)\in \mathcal R(T)
\ \land \
\forall\, p_i^{e}(t_0)\in \mathcal{P}^{e}(t_0).
\]




\noindent By the symmetry of a regular \( n \)-gon, the entire destination set calculated from \textbf{Result} \ref{r-2} remains invariant. Thus,

\[
\mathcal{P}^{e}(t_0) = \mathcal{P}^{j}(T)
\]
\[
 \text{i.e.,}\ \Big\{p_1^{e}(t_0), p_2^{e}(t_0), \ldots, p_n^{e}(t_0)\Big\}=\Big\{\ p_1^{j}(T), p_2^{j}(T), \ldots, p_n^{j}(T)\Big\}
\]

\noindent Consequently, the assignment computed by robot \( r_j \) at time \( T \) using \textbf{Result} \ref{r-2} yields the following:


\[
f_j(T)
=
\left\{
\mathcal{R}(T)\!\rightarrow\!\mathcal{P}^{j}(T)\;\middle|\;
d_{arc}\Big(r_i(T), p_i^{j}(t_0)\Big) \neq 0,\;
d_{arc}\Big(r_e(t_0), p_e^{j}(t_0)\Big) = 0,\;
d_{arc}\Big(r_j(T), p_j^{j}(t_0)\Big) = 0,\;
\right\}
\]



\noindent Therefore, the optimal assignment remains unchanged at time $T$. 
Hence, the assignment is determined with respect to $r_e$ and that determined 
with respect to $r_j$ are identical, i.e.,
\[
f_e(T)=f_j(T).
\]
Furthermore, since the assignment does not change, the corresponding optimal
total movement cost also remains the same, and thus
\[
\mathcal{D}_{r_e}(T)=\mathcal{D}_{r_j}(T).
\]
Hence, it is proved.

\end{proof}

\subsection*{4.1.2~~Symmetric Configurations $\mathcal I_2$: Unique Optimal Assignment}
\noindent This subsection studies symmetric robot configurations that admit exactly one line of symmetry and for which a unique optimal assignment exists. Although symmetry creates mirrored pairs of robots, the presence of a robot on the line of symmetry guarantees that the optimal assignment is uniquely defined. Consequently, the robot on the line of symmetry remains fixed, while the other robots move in symmetric pairs relative to it, thereby preserving the minimum total movement cost. This structure enables such configurations to be handled deterministically and collision-free. To formalize the concept of symmetry, we introduce a mapping that characterizes the reflective correspondence between robots with respect to the line of symmetry.

\begin{definition}[Mirror Mapping with Respect to a Line of Symmetry]
\label{def-mirror-map}
Let $\mathcal{R}(t)$ be a robot configuration on the circumference $\mathcal{C}_{out}$ that admits a line of symmetry $\mathcal{L}$.  
We define the \emph{mirror mapping} with respect to $\mathcal{L}$ as a function
\[
\mu_{\mathcal{L}} : \mathcal{R}(t) \rightarrow \mathcal{R}(t),
\]
such that for any robot $r_i(t) \in \mathcal{R}(t)$, $\mu_{\mathcal{L}}(r_i(t)) = r_j(t)$ if and only if the position of $r_j(t)$ is the reflection of the position of $r_i(t)$ across the line $\mathcal{L}$.
\end{definition}

\noindent We first establish a set of structural lemmas characterizing these configurations before presenting the algorithm \textit{SymU1dMinSumC()}.

\begin{figure}[!ht]
    \centering
     \vspace{-0.3cm}
    
    \includegraphics[width=0.5\textwidth]{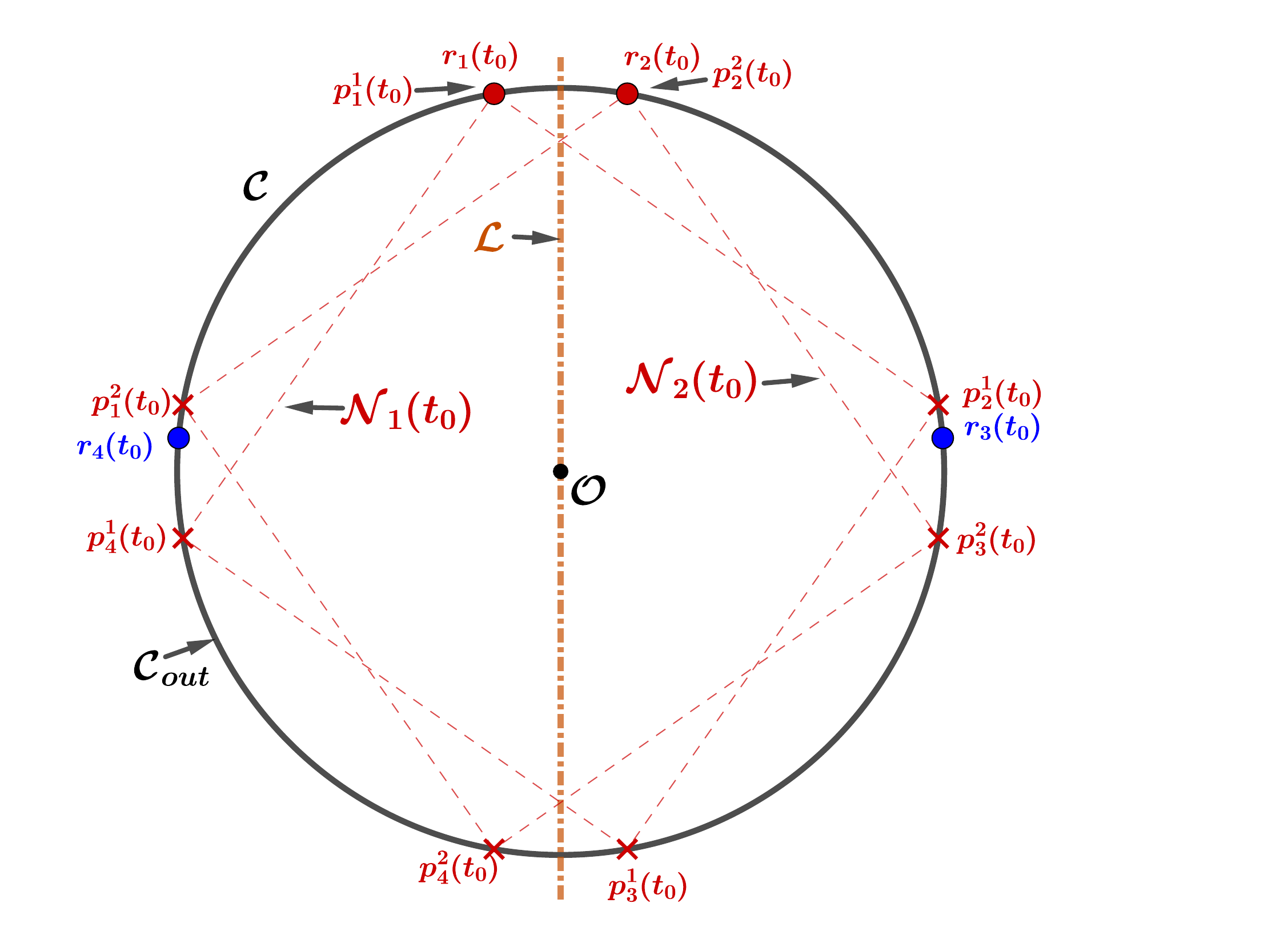}

    \caption{ \textit{An illustration of the initial configuration of four robots 
$\mathcal{R}(t_0)=\{r_1(t_0), r_2(t_0), r_3(t_0), r_4(t_0)\}$ admitting a line of symmetry $\mathcal L$. The \textit{extremal} robots are $r_1(t_0)$ and $r_2(t_0)$ (red dots), while the remaining robots are shown as blue dots.  The destination point sets determined by fixing the \textit{extremal} robots $r_1(t_0)$ and $r_2(t_0)$ are $\mathcal{P}^1(t_0)=\{p_1^1(t_0), p_2^1(t_0), \ldots, p_4^1(t_0)\}$ and 
$\mathcal{P}^2(t_0)=\{p_1^2(t_0), p_2^2(t_0), \ldots, p_4^2(t_0)\}$ (red crosses), respectively. 
With respect to the mirror mapping $\mu_{\mathcal L}$, we have $\mu_{\mathcal L}(r_1(t_0)) = r_2(t_0)$ and $\mu_{\mathcal L}(r_3(t_0)) = r_4(t_0)$.}}

\label{Lem-5}

\end{figure}

\begin{lemma}
\label{l-5}
If a robot configuration admits a single line of symmetry $\mathcal L$, and the
destination point sets are computed using \textbf{Result}~\ref{r-2} by fixing a robot
$r_i$ or its mirror robot $\mu_{\mathcal L}(r_i)$, then the optimal assignment
cost remains the same, i.e.,
\[
\mathcal{D}_{r_i}(t_0)
=
\mathcal{D}_{\mu_{\mathcal L}(r_i)}(t_0).
\]
\end{lemma}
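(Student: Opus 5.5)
The plan is to prove the equality by transporting the whole cost computation through the reflection $\sigma$ across $\mathcal L$. Since $\mathcal R(t_0)$ is symmetric with respect to $\mathcal L$, $\sigma$ maps the circle $\mathcal C$ to itself and fixes the center $\mathcal O$. It maps the set of robot positions onto itself, with $\sigma(r_k(t_0))=\mu_{\mathcal L}(r_k)(t_0)$ for every $k$ by Definition~\ref{def-mirror-map}. Moreover, $\sigma$ preserves arc distance: $d_{arc}(\sigma(x),\sigma(y))=d_{arc}(x,y)$ for all $x,y\in\mathcal C_{out}$, because a reflection through a diameter is an isometry of the circle.

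The first step is to show that $\sigma$ carries the regular $n$-gon $\mathcal N_i(t_0)$ onto $\mathcal N_{\mu_{\mathcal L}(r_i)}(t_0)$. The image $\sigma(\mathcal P^i(t_0))$ is again the vertex set of a regular $n$-gon inscribed in $\mathcal C_{out}$. It contains $\sigma(r_i(t_0))=\mu_{\mathcal L}(r_i)(t_0)$ as a vertex. A regular $n$-gon inscribed in $\mathcal C_{out}$ is uniquely determined by any one of its vertices, so $\sigma(\mathcal P^i(t_0))=\mathcal P^{\mu_{\mathcal L}(r_i)}(t_0)$.

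The second step is to check that the assignment of Result~\ref{r-2} is equivariant under $\sigma$. Result~\ref{r-2} lists the robots in clockwise order starting from the fixed robot $A_i$ and sends the $m$-th robot to the $m$-th vertex of the $n$-gon through $A_i$. The reflection reverses orientation, so it turns the clockwise listing from $r_i$ into the counterclockwise listing from $\mu_{\mathcal L}(r_i)$. This is the main obstacle: I must verify that the rule in Result~\ref{r-2} does not depend on the chosen orientation. The argument is that the rule is simply the order-preserving (cyclic) matching of robots to vertices that keeps $A_i$ fixed. Reading both sequences in the opposite direction still matches the $m$-th robot after $A_i$ with the $m$-th vertex after $A_i$, so the reversed description produces the same matching. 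Hence, if $f_i$ denotes the assignment obtained by fixing $r_i$, the assignment obtained by fixing $\mu_{\mathcal L}(r_i)$ is $f_{\mu}=\sigma\circ f_i\circ\mu_{\mathcal L}^{-1}$. Alternatively, if only the optimal cost matters, I can avoid the orientation issue. The map $f\mapsto\sigma\circ f\circ\mu_{\mathcal L}^{-1}$ is a cost-preserving bijection between the assignments onto $\mathcal P^i(t_0)$ and the assignments onto $\mathcal P^{\mu_{\mathcal L}(r_i)}(t_0)$. Because it preserves costs, it also matches up the minimizers of the two sets.

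Finally, the costs are compared term by term:
\[
\mathcal D_{\mu_{\mathcal L}(r_i)}(t_0)=\sum_{k} d_{arc}\bigl(\sigma(r_k(t_0)),\sigma(f_i(r_k))\bigr)=\sum_k d_{arc}\bigl(r_k(t_0),f_i(r_k)\bigr)=\mathcal D_{r_i}(t_0).
\]
The first equality is a reindexing of the sum over the bijection $\mu_{\mathcal L}$, and the second uses the isometry property of $\sigma$. If $r_i$ lies on $\mathcal L$, then $\mu_{\mathcal L}(r_i)=r_i$ and the statement is trivial, so the argument only matters when $r_i$ is off the line.
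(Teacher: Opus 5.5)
Your proposal is correct and takes essentially the same route as the paper: reflect across $\mathcal L$, so that fixing $\mu_{\mathcal L}(r_i)$ sends each robot $\mu_{\mathcal L}(r_k)$ to the mirror image of $r_k$'s destination, and then sum term by term using that reflection preserves arc distance. The paper only asserts that the mirrored destination points are reflections of the original ones. You justify this through the uniqueness of the inscribed regular $n$-gon through a given vertex and the orientation-independence of the cyclic matching in Result~\ref{r-2}, which fills in a step the paper leaves implicit.
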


\begin{proof}\normalfont
Let $\mathcal{R}(t_0)=\{r_1(t_0),r_2(t_0),\ldots,r_n(t_0)\}$ be a robot
configuration that admits a single line of symmetry $\mathcal L$. By
\textbf{Definition}~\ref{def-mirror-map}, for every robot $r_i(t_0)\in\mathcal{R}(t_0)$,
there exists a unique robot $\mu_{\mathcal L}(r_i(t_0))\in\mathcal{R}(t_0)$ whose
position is the reflection of $r_i(t_0)$ across $\mathcal L$ (See \textbf{Figure}~\ref{Lem-5}). If $n$ is odd, the
robot lying on $\mathcal L$ satisfies
$\mu_{\mathcal L}(r_i(t_0))=r_i(t_0)$. Fix a robot $r_k(t_0)$ and compute, using \textbf{Result}~\ref{r-2}, the
corresponding destination point set
\[
\mathcal{P}^{k}(t_0)
=
\{p_1^{k}(t_0),p_2^{k}(t_0),\ldots,p_n^{k}(t_0)\},
\]
which forms the vertex set of the regular $n$-gon $\mathcal{N}_{k}(t_0)$. Let
$f_k(t_0):\mathcal{R}(t_0)\rightarrow \mathcal{P}^{k}(t_0)$ denote the associated
assignment.

\noindent Now fix the mirror robot $\mu_{\mathcal L}(r_k(t_0))$ and recompute,
using \textbf{Result}~\ref{r-2}, the corresponding destination point set
\[
\mathcal{P}^{\mu_{\mathcal L}(k)}(t_0)
=
\{p_1^{\mu_{\mathcal L}(k)}(t_0),p_2^{\mu_{\mathcal L}(k)}(t_0),\ldots,
p_n^{\mu_{\mathcal L}(k)}(t_0)\},
\]
where each destination point
$p_i^{\mu_{\mathcal L}(k)}(t_0)$ is the reflection of
$p_i^{k}(t_0)$ across $\mathcal L$. Let
$f_{\mu_{\mathcal L}(k)}(t_0):\mathcal{R}(t_0)\rightarrow
\mathcal{P}^{\mu_{\mathcal L}(k)}(t_0)$ denote the associated assignment. The corresponding total arc-distance costs are
\[
\mathcal{D}_{r_k}(t_0)=
\sum_{i=1}^{n} d_{arc}\big(r_i(t_0),\,p_i^{k}(t_0)\big),
\]
and
\[
\mathcal{D}_{\mu_{\mathcal L}(r_k)}(t_0)=
\sum_{i=1}^{n}
d_{arc}\big(\mu_{\mathcal L}(r_i(t_0)),\,p_i^{\mu_{\mathcal L}(k)}(t_0)\big),
\]
with
\[
d_{arc}\big(r_k(t_0),p_k^{k}(t_0)\big)=0
\quad\text{and}\quad
d_{arc}\big(\mu_{\mathcal L}(r_k(t_0)),
p_k^{\mu_{\mathcal L}(k)}(t_0)\big)=0.
\]

\noindent Since reflection across $\mathcal L$ preserves arc distance along the
circle, for every $i$ we have
\[
d_{arc}\big(r_i(t_0),p_i^{k}(t_0)\big)
=
d_{arc}\big(\mu_{\mathcal L}(r_i(t_0)),
p_i^{\mu_{\mathcal L}(k)}(t_0)\big).
\]

\noindent Therefore,
\[
\mathcal{D}_{r_k}(t_0)
=
\mathcal{D}_{\mu_{\mathcal L}(r_k)}(t_0),
\]
which shows that fixing a robot or its mirror robot yields the same total assignment cost.
\end{proof}

\begin{figure}[!ht]
    \centering
    \vspace{-0.3cm}
    
    \includegraphics[width=0.5\textwidth]{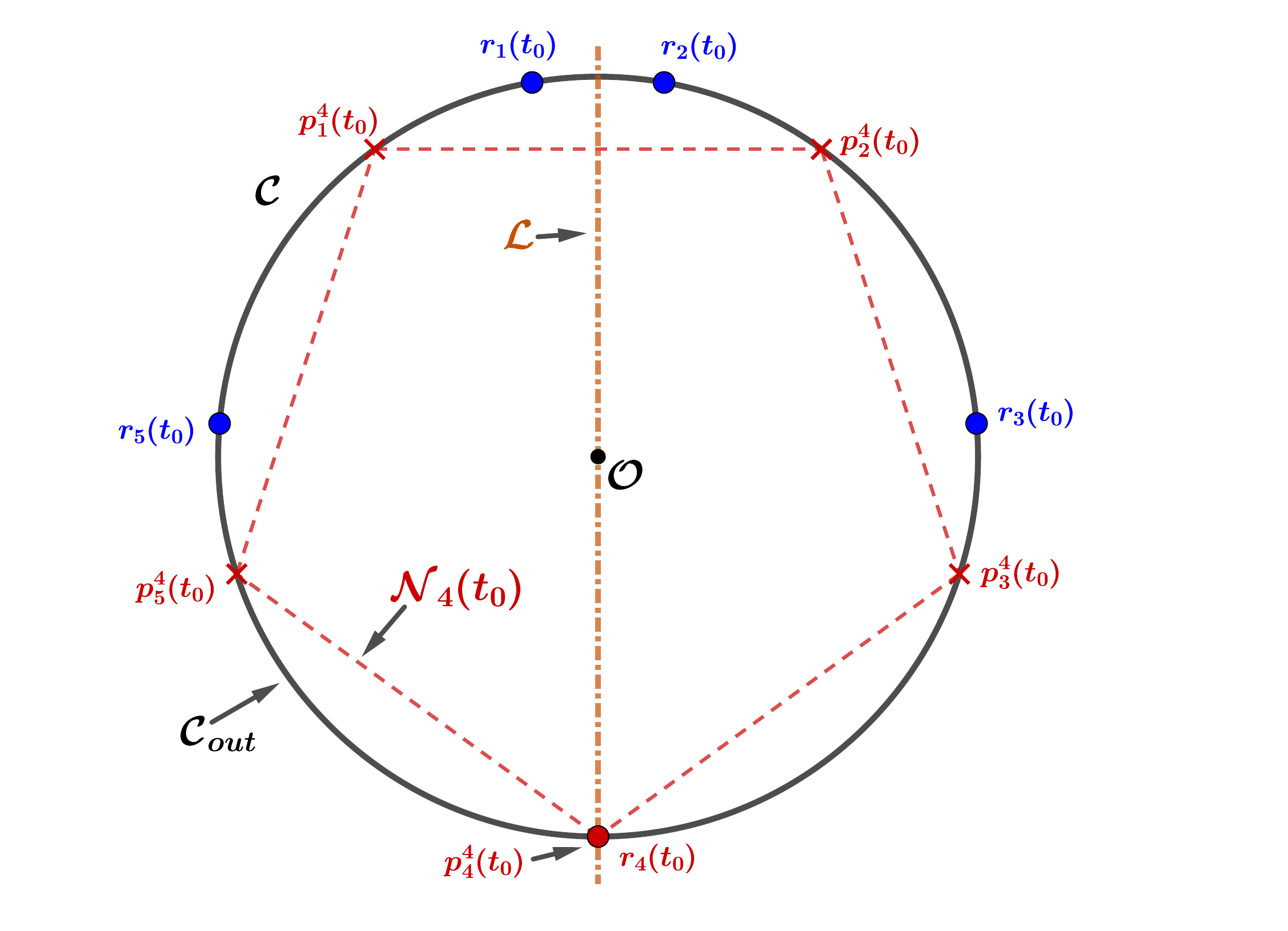}
    \caption{ \textit{An illustration of the initial configuration of five robots 
$\mathcal{R}(t_0)=\{r_1(t_0), r_2(t_0), \ldots, r_5(t_0)\}$ with a line of symmetry $\mathcal L$. 
The \textit{extremal} robot $r_4(t_0)$ (red dot) lies on $\mathcal L$. 
The destination set determined by fixing the \textit{extremal} robot $r_4(t_0)$ is 
$\mathcal{P}^4(t_0)=\{p_1^4(t_0), p_2^4(t_0), \ldots, p_5^4(t_0)\}$ (red crosses). 
With respect to the mirror mapping $\mu_{\mathcal L}$, we have 
$\mu_{\mathcal L}(r_1(t_0)) = r_2(t_0)$ and 
$\mu_{\mathcal L}(r_3(t_0)) = r_5(t_0)$.}}
  
     \label{Lem-6}

\end{figure}

\begin{lemma}
\label{l-6}
Consider a robot configuration that is symmetric with respect to a single line of symmetry $\mathcal{L}$. Let $r_e \in \mathcal{R}$ be the robot located at the intersection of $\mathcal{L}$ with the circumference of the circle. If the optimal assignment is unique and is obtained by fixing an \emph{\textit{extremal}} robot, then $r_e$ must be the \emph{\textit{extremal}} robot.
\end{lemma}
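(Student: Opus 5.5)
The proof is by contradiction, using \textbf{Lemma}~\ref{l-5}. Suppose the optimal assignment is unique, i.e. $|\mathcal{E}'(t_0)|=1$, and the unique \textit{extremal} robot is some $r_i \neq r_e$. Two facts then do the work. First, by \textbf{Result}~\ref{r-2}, every minimal-cost assignment fixes some robot, so $\mathcal{D}^* = \mathcal{D}_{r_i}(t_0)$. Second, since the configuration has the single line of symmetry $\mathcal L$ and $r_i$ is not the robot on $\mathcal L \cap \mathcal{C}_{out}$, its mirror image $\mu_{\mathcal L}(r_i)$ is a robot distinct from $r_i$.

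The first thing to check is exactly this distinctness. I will argue that a point of $\mathcal{C}_{out}$ is fixed by the reflection across $\mathcal L$ if and only if it lies on $\mathcal L$. There are only two such points on the circle. If the second intersection point is also occupied, by some robot $r_e'$, that robot must be treated separately. In that case I would either read the hypothesis ``the robot located at the intersection'' as covering both on-line robots, or note that the argument below shows the extremal robot lies on $\mathcal L$ in any case. For every robot strictly off $\mathcal L$, we get $\mu_{\mathcal L}(r_i)\ne r_i$.

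Next I apply \textbf{Lemma}~\ref{l-5}, which gives $\mathcal{D}_{\mu_{\mathcal L}(r_i)}(t_0)=\mathcal{D}_{r_i}(t_0)=\mathcal{D}^*$. Hence $\mu_{\mathcal L}(r_i)$ is also \textit{extremal}, so $|\mathcal{E}'(t_0)|\ge 2$. This contradicts uniqueness, and therefore the unique \textit{extremal} robot lies on $\mathcal L$.

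The main obstacle is the case where two robots lie on $\mathcal L$, at the two antipodal intersection points. Reflection alone cannot distinguish which of them is extremal, since the reflection fixes both. My first attempt would be to read the statement's ``the robot located at the intersection'' as referring to whichever on-line robot is extremal, which makes the conclusion exactly what the contradiction argument yields. If a genuine selection is required, I would fall back on the views $\mathscr V$ of the two on-line robots: they differ, because the configuration has only a single line of symmetry and hence no rotational symmetry exchanging them. The unique extremal one is then the one singled out by the cost minimization, and it is exactly the one the lemma calls $r_e$. In either reading, the core step is the mirror-cost equality of \textbf{Lemma}~\ref{l-5}.
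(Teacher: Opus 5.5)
Your argument is correct and is essentially the paper's proof. You assume the unique extremal robot is some $r_k \neq r_e$, use Lemma~\ref{l-5} to get $\mathcal{D}_{\mu_{\mathcal L}(r_k)}(t_0)=\mathcal{D}_{r_k}(t_0)$, and contradict $|\mathcal{E}'(t_0)|=1$; your explicit check that $\mu_{\mathcal L}(r_k)\neq r_k$ for off-line robots is a detail the paper leaves implicit.

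Your ``main obstacle'' is not a real obstacle, and your fallback there rests on a false premise. Two robots on $\mathcal L$ can occur only for even $n$, and then they are antipodal with $n/2-1$ robots in each open semicircle. So each is exactly $n/2$ positions from the other in circular order, and the assignment of Result~\ref{r-2} fixing one also fixes the other. The two on-line robots are therefore both extremal or neither, so cost minimization cannot single out one of them, and under $|\mathcal{E}'(t_0)|=1$ this case is simply vacuous.
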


\begin{proof}
Let $\mathcal{R}(t_0)=\{r_1(t_0),r_2(t_0),\ldots,r_n(t_0)\}$ be a robot
configuration that is symmetric with respect to a single line of symmetry
$\mathcal{L}$. For every robot $r_i(t_0)\in\mathcal{R}(t_0)$, there exists a
unique robot $\mu_{\mathcal L}(r_i(t_0))\in\mathcal{R}(t_0)$ such that
$\mu_{\mathcal L}(r_i(t_0))$ is the reflection of $r_i(t_0)$ across
$\mathcal{L}$ (See \textbf{Figure}~\ref{Lem-6}). If $n$ is odd, exactly one robot lies on $\mathcal{L}$ and is
mapped to itself.

\noindent Let $r_e$ be the robot located at the intersection of $\mathcal{L}$ with the circumference of the circle. Suppose, for the sake of contradiction, that the unique optimal assignment is obtained by fixing some robot
$r_k(t_0)\neq r_e(t_0)$.

\noindent Since the configuration is symmetric, there exists a unique mirror robot
$\mu_{\mathcal L}(r_k(t_0))$ which is the reflection of $r_k(t_0)$ across
$\mathcal{L}$. Applying \textbf{Result}~\ref{r-2} to $r_k(t_0)$ and
$\mu_{\mathcal L}(r_k(t_0))$ produces the vertex sets of two regular $n$-gons,
$\mathcal N_k(t_0)$ and $\mathcal N_{\mu_{\mathcal L}(k)}(t_0)$ respectively
\[
\mathcal{P}^{k}(t_0)
\quad\text{and}\quad
\mathcal{P}^{\mu_{\mathcal L}(k)}(t_0),
\]
where each $p_i^{\mu_{\mathcal L}(k)}(t_0) \in
\mathcal{P}^{\mu_{\mathcal L}(k)}(t_0)$ is the reflection of
$p_i^{k}(t_0)\in \mathcal{P}^{k}(t_0)$ across $\mathcal{L}$.

\noindent Let $f_k(t_0)$ and $f_{\mu_{\mathcal L}(k)}(t_0)$ denote the corresponding assignments, and let $\mathcal{D}_{r_k}(t_0)$ and $\mathcal{D}_{\mu_{\mathcal L}(r_k)}(t_0)$ be their respective total
arc-distance costs. From \textbf{Lemma}~\ref{l-5}, reflection preserves the total movement cost, and hence
\[
\mathcal{D}_{r_k}(t_0)=\mathcal{D}_{\mu_{\mathcal L}(r_k)}(t_0).
\]

\noindent Thus, two distinct robots ($r_k$ and $\mu_{\mathcal L}(r_k)$) gives the same total assignment cost. This contradicts the assumption that the optimal assignment is unique. Therefore, the only robot that can yield a unique optimal assignment is the robot that lies on the line of symmetry, namely $r_e$. Hence, $r_e$ must be the \textit{extremal} robot.
\end{proof}

\begin{algorithm}
\caption{\textit{: SymU1dMinSumC()}}
\label{alg:SymU1dMinSumC}
\begin{algorithmic}[1]
\Require Initial robot configuration $\mathcal{R}(t_0)$ on the circumference of circle $\mathcal C$
\Ensure Optimal assignment achieving uniform circle formation with minimum total arc distance

\State Detect whether $\mathcal{R}(t_0)$ admits a single line of symmetry $\mathcal L$
\State Identify the robot $r_e$ at the intersection of $\mathcal L$ with $\mathcal C$
\State Compute the destination set $\mathcal P^e=\{p_1^{e},p_2^{e},\ldots,p_n^{e}\}$ by fixing $r_e$ using \textbf{Result}~\ref{r-2}

\For{$i = 1$ to $n$}
    \State Compute $d_i = d_{arc}(r_i(t_0),p_i^{e})$
\EndFor

\For{$i = 1$ to $n$}
    \If{$r_i(t_0)=p_i^{e}$}
        \State Mark $r_i$ as terminated
    \Else
        \If{$r_i$ is a \textit{candidate} robot and has minimum $d_i$}
            \If{$r_i$ has the minimum view (\textbf{Observation}~\ref{obs:asymmetric-ordering})} 
                \State Move $r_i \rightarrow p_i^{e}$ along $\mathcal C$
            \Else
                \State $r_i$ waits
            \EndIf
        \Else
            \State $r_i$ waits
        \EndIf
    \EndIf
\EndFor

\end{algorithmic}
\end{algorithm}

\noindent Let $\mathcal R(t_0) \in \mathcal I_2$ be an initial robot configuration that has a single line of symmetry $\mathcal L$, where $\mathcal L$ intersects the circle $\mathcal{C}_{out}$ at two robots, say $r_i$ and $r_j$. Because of symmetry, both robots compute the same regular $n$-gon using \textbf{Result}~\ref{r-2}. Hence, both robots are associated with the vertices of the same regular $n$-gon. Therefore, by \textbf{Lemma}~\ref{l-6}, the following two observations follow.

\begin{observation}
\label{obs-two-on-los}
Consider a robot configuration in the set $\mathcal{I}_2$ that admits a single line of symmetry $\mathcal{L}$. If $\mathcal{L}$ intersects the circumference $\mathcal{C}_{out}$ at exactly two robot positions, then both robots are \textit{extremal} and lie on the same unique $n$-gon.
\end{observation}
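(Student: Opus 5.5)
The plan is to work directly from the geometry of the reflection $\mu_{\mathcal L}$ together with \textbf{Result}~\ref{r-2} and \textbf{Lemma}~\ref{l-5}. Since $\mathcal L$ passes through the center $\mathcal O$, it meets $\mathcal C_{out}$ at two antipodal points, and by hypothesis robots $r_i$ and $r_j$ occupy both. Each of them is fixed by the mirror mapping, i.e.\ $\mu_{\mathcal L}(r_i)=r_i$ and $\mu_{\mathcal L}(r_j)=r_j$.

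The first step is to show that both $r_i$ and $r_j$ are extremal. Because the configuration lies in $\mathcal I_2$, some extremal robot lies on $\mathcal L$, say $r_i$. Every other robot $r_k\notin\mathcal L$ comes paired with a distinct mirror image, and by \textbf{Lemma}~\ref{l-5} the pair has equal cost $\mathcal D_{r_k}(t_0)=\mathcal D_{\mu_{\mathcal L}(r_k)}(t_0)$.

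To get $r_j$ extremal as well, I would use the antipodal relation. For $n$ even, the antipode of $r_i$ is itself a vertex of $\mathcal N_i(t_0)$, because $n/2$ steps of $2\pi/n$ give angle $\pi$. So $r_j$ sits exactly on a vertex of $\mathcal P^{i}(t_0)$, and hence $\mathcal P^{i}(t_0)=\mathcal P^{j}(t_0)$ as point sets, by the same regular-$n$-gon invariance used in \textbf{Lemma}~\ref{l-4}.

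Next I need the two assignments from \textbf{Result}~\ref{r-2} to coincide. Fixing $r_i$ sends the $k$-th robot clockwise from $r_i$ to the $k$-th vertex clockwise from $r_i$. Since $r_j$ lies on a vertex, the induced cyclic shift should be the same one, so $f_i=f_j$ and $\mathcal D_{r_j}(t_0)=\mathcal D^*_{r_i}(t_0)=\mathcal D^*$. This makes $r_j$ extremal, and both robots lie on the same unique $n$-gon $\mathcal N_i(t_0)=\mathcal N_j(t_0)$.

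The main obstacle is that step: showing that the robot at the antipode is assigned to itself under $f_i$, which means it is the $(n/2)$-th robot clockwise from $r_i$. I would argue this from the symmetry. Reflection across $\mathcal L$ reverses the cyclic order while fixing $r_i$ and $r_j$, so the two arcs $r_i\to r_j$ (clockwise and counterclockwise) contain equal numbers $m$ of robots, with $n=2m+2$. Hence $r_j$ is exactly the $(m+1)=n/2$-th robot from $r_i$ in either direction, and it is mapped to the antipodal vertex, i.e.\ to its own position.

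The case of $n$ odd cannot arise, since then the off-axis robots pair up and leave exactly one robot on $\mathcal L$, contradicting the presence of two robots on the line. The alternative, that the robot on $\mathcal L$ other than the extremal one is not extremal, is ruled out by the counting argument above. Finally, I would remark that the uniqueness wording should be read as uniqueness of the target $n$-gon (and of the assignment), not of the extremal robot. This is consistent with \textbf{Lemma}~\ref{l-6}, since both extremal robots yield the same assignment.
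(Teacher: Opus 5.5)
Your proof is correct, and it is considerably more explicit than the paper's. The paper handles this observation in one sentence (``because of symmetry,'' both on-axis robots compute the same regular $n$-gon) followed by an appeal to Lemma~\ref{l-6}. You instead identify the actual mechanism:
\begin{itemize}
\item $\mathcal L$ passes through $\mathcal O$, so the two on-axis robots are antipodal.
\item $\mu_{\mathcal L}$ fixes both of them and swaps the two open semicircles between them, so each semicircle carries the same number $m$ of robots and $n=2m+2$.
\item Hence $r_j$ is the $(n/2)$-th robot from $r_i$ in either direction, and $f_i$ sends it to the antipodal vertex, i.e.\ to its own position.
\item Result~\ref{r-2} then gives $f_j=f_i$ as bijections, not merely $\mathcal P^{i}(t_0)=\mathcal P^{j}(t_0)$ as point sets, so $\mathcal D_{r_j}(t_0)=\mathcal D^*$.
\end{itemize}

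This fills a real gap in the paper's sketch. The robots $r_i$ and $r_j$ are each self-mirror rather than mirror images of one another, so Lemma~\ref{l-5} says nothing about $\mathcal D_{r_i}$ versus $\mathcal D_{r_j}$. Likewise, the proof of Lemma~\ref{l-6} pairs a competitor $r_k$ with a \emph{distinct} mirror image, so it neither covers nor certifies the second on-axis robot.

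Your route also shows that $n$ must be even. It explains why two extremal robots are compatible with a ``unique'' optimal assignment, even though the paper equates uniqueness with $|\mathcal E'(t)|=1$.

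Neither argument proves uniqueness in the global sense, i.e.\ that no off-axis extremal robot induces a different $n$-gon. That comes from the standing hypothesis of Subsection~4.1.2 that the optimal assignment is unique. You should state explicitly that you are assuming it rather than deriving it.
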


\begin{observation}
\label{obs-multi-los}
If a robot configuration belonging to the set $\mathcal I_5$ admits multiple lines of symmetry and each line of symmetry intersects the circumference $\mathcal{C}_{out}$ at exactly one \textit{extremal} robot, then all robots are already positioned on the vertices of a regular $n$-gon on $\mathcal{C}_{out}$. Hence, the uniform circle formation is already achieved at time $t_0$, and no robot movement is required.
\end{observation}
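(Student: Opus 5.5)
We must prove Observation~\ref{obs-multi-los}: a configuration in $\mathcal I_5$ with multiple lines of symmetry, each meeting $\mathcal{C}_{out}$ at exactly one \textit{extremal} robot, is already a regular $n$-gon. The plan is to use the symmetry group together with \textbf{Result}~\ref{r-2} to show that the fixed-vertex polygon $\mathcal{N}_e(t_0)$ of an \textit{extremal} robot already coincides with $\mathcal R(t_0)$.

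\textbf{Step 1 (symmetry group).} Let the configuration admit $m\ge 2$ lines of symmetry through $\mathcal O$. The composition of two reflections in lines meeting at angle $\theta$ is a rotation by $2\theta$. Hence the symmetry group of $\mathcal R(t_0)$ is a dihedral group $D_m$. Its rotation subgroup has order $m$, and its $m$ axes are equally spaced at angles $\pi/m$.

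\textbf{Step 2 (counting orbits).} By hypothesis each axis contains a robot, which we take to be \textit{extremal}. Any two robots lying on axes of the same conjugacy class are images of one another under the group. By Lemma~\ref{l-5} and its rotational analogue, the cost $\mathcal D$ is invariant under every symmetry, since both reflections and rotations preserve arc distance. So the \textit{extremal} set is a union of orbits. The key step is to show that the orbit of a robot on an axis is all of $\mathcal R(t_0)$. Off-axis robots come in orbits of size $2m$, and robots on axes in orbits of size $m$. We would argue that the hypothesis ``each line meets $\mathcal C_{out}$ at exactly one \textit{extremal} robot,'' combined with Lemma~\ref{l-6} applied along each axis, forces the absence of off-axis robots. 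Then the robots sit at the $m$ or $2m$ axis points, which are equally spaced, so $\mathcal R(t_0)$ is a regular polygon.

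\textbf{Step 3 (alternative via cost).} An alternative route avoids the orbit count. Fix an \textit{extremal} robot $r_e$ on an axis and consider $\mathcal N_e(t_0)$. A rotation $\rho$ of order $m$ maps $r_e$ to another \textit{extremal} robot $\rho(r_e)$, and $\mathcal N_{\rho(e)} = \rho(\mathcal N_e)$. If $\mathcal D^*>0$, \textbf{Result}~\ref{r-2} together with the uniqueness of the \textit{extremal} robot on each line should give a contradiction. If instead $\mathcal D^*=0$, every robot already sits at a vertex of $\mathcal N_e$, and the conclusion follows directly.

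\textbf{Main obstacle.} The hard step is ruling out off-axis robots, equivalently showing $\mathcal D^*=0$. A counterexample candidate is a $D_m$-symmetric configuration with extra clusters off the axes. We suspect the intended reading of the hypothesis is that every robot lies on some axis, and we would make that explicit. Failing that, we would first try the orbit count: compare $n=m\cdot(\text{axis orbits})+2m\cdot(\text{off-axis orbits})$ against the requirement that exactly one \textit{extremal} robot lies on each line, which leaves at most $2m$ axis positions.
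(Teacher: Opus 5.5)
There is a genuine gap, and it is exactly the step you flag. It cannot be closed, because the statement as written is false. Under the stated hypothesis, off-axis robots are \emph{not} excluded and $\mathcal D^*$ need not be $0$. So both Step~2 (``forces the absence of off-axis robots'') and Step~3 (``$\mathcal D^*>0$ should give a contradiction'') fail.

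A concrete counterexample: place $n=9$ robots at angles $0^\circ,50^\circ,70^\circ,120^\circ,170^\circ,190^\circ,240^\circ,290^\circ,310^\circ$.
\begin{itemize}
\item Its symmetry group is $D_3$: rotation by $120^\circ$ and reflections in the three lines through $0^\circ$, $120^\circ$, $240^\circ$. The rotational symmetry has order exactly $3$.
\item Each of these lines meets $\mathcal C_{out}$ at exactly one robot; its other endpoint ($180^\circ$, $300^\circ$, $60^\circ$ respectively) is empty.
\item Fixing the robot at $0^\circ$ gives the $9$-gon $\{0^\circ,40^\circ,\ldots,320^\circ\}$. The order-preserving displacements are $0,10,10,0,10,10,0,10,10$, a total of $60^\circ$ of arc.
\item Fixing the robot at $50^\circ$ or at $70^\circ$ costs $90^\circ$.
\end{itemize}
By \textbf{Result}~\ref{r-2}, $\mathcal D^*$ is $60^\circ$ of arc and $\mathcal E'(t_0)=\{0^\circ,120^\circ,240^\circ\}$, one extremal robot on each line. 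This configuration lies in $\mathcal I_5$ and satisfies the hypothesis, yet it is not a regular $9$-gon. The ``Stop'' branch of \textbf{Algorithm}~\ref{alg:global-minsum} would also be wrong on it.

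The tools you invoke could not have closed the gap anyway.
\begin{itemize}
\item \textbf{Lemma}~\ref{l-6} presupposes a \emph{unique} optimal assignment. With $m\ge 2$ lines, rotation by $2\pi/m$ moves every point, so $\mathcal E'(t_0)$ is a union of orbits of size at least $m\ge 2$. Uniqueness therefore never holds in $\mathcal I_5$.
\item In your count $n=m a+2m b$, the number $b$ of off-axis orbits is unconstrained, because the hypothesis only restricts extremal robots on the axes.
\end{itemize}
The paper supports this observation only by saying it follows from \textbf{Lemma}~\ref{l-6}, which has exactly this defect. So there is no argument there that you are missing, and your suspicion is justified.

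Your suggested repair is the right one. Add the hypothesis that every robot lies on a line of symmetry. Then $\mathcal R(t_0)$ is a union of the two $D_m$-orbits of the $2m$ equally spaced axis points. Taking both orbits would give a regular $2m$-gon with $2m$ lines of symmetry, contradicting that there are exactly $m$. Hence $\mathcal R(t_0)$ is a single orbit, i.e.\ a regular $m$-gon with $n=m$, and your Step~2 goes through. That is, however, a strictly stronger hypothesis than the one in the statement.
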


\subsubsection*{4.1.2.1~~Overview of the Algorithm \it{SymU1dMinSumC()}}

\noindent The algorithm \textit{SymU1dMinSumC()} addresses the \textit{min-sum uniform coverage} on a circle problem for the robot configurations that admit a single line of symmetry~$\mathcal L$ and for which the optimal assignment is unique. It begins by detecting the line of symmetry $\mathcal L$ and identifying the robot $r_e$ located at the intersection of $\mathcal L$ with the circumference of the circle~$\mathcal C$. By \textbf{Lemma}~\ref{l-6}, this robot must be the \textit{extremal} robot. The algorithm then computes the unique optimal assignment using \textbf{Result}~\ref{r-2} by fixing $r_e$, which determines the corresponding destination set $\mathcal P^e = \lbrace{p_1^{e}, p_2^{e}, \ldots, p_n^{e}} \rbrace$. In this paper, we assume that the robots move only along the circumference of $\mathcal C$. For each robot $r_i$, the arc distance to its assigned destination is calculated as $d_{arc}(r_i, p_i^{e})$. A robot is considered \textit{candidate} if its minimum arc path toward its destination is unobstructed, and only such robots are allowed to move at a time, while the others remain stationary. If multiple candidate robots exist, those with the smallest distance to their respective destinations are chosen to move. Due to the symmetry of the configuration, the selected candidates form a mirror pair, consisting of a robot and its symmetric counterpart. If several such mirror pairs exist, the pair containing the robot with the minimum view is selected according to Observation~\ref{obs:asymmetric-ordering}. If symmetry is preserved during the movement, the mirror pair of \textit{candidate} robots converges to their respective destination points. Due to asynchrony and \textit{pending} movements, the configuration may be transformed into an asymmetric configuration, and the execution switches to \textbf{Algorithm}~\ref{alg:AsymU1dMinSumC}. This process is repeated iteratively, and an iteration terminates once a \textit{candidate} robot reaches its destination (i.e., $d_{\mathrm{arc}} = 0$). Since in each iteration at most one mirror pair advances toward its destination points, the algorithm guarantees collision-free convergence in finite time, with all robots eventually occupying their designated points in $\mathcal{P}^e$. The pseudocode of the Algorithm {\it SymU1dMinSumC()} is presented in \textbf{Algorithm}~\ref{alg:SymU1dMinSumC}.

\subsubsection*{4.1.2.2~~Correctness of the Algorithm \it{SymU1dMinSumC()}}
\noindent We prove the correctness of the algorithm \textit{SymU1dMinSumC()} for robot configurations that admit exactly one line of symmetry and have a unique optimal min-sum assignment. Let $\mathcal{R}(t_0)$ be such a configuration with symmetry line $\mathcal{L}$. Under the \textit{ASYNC} model, the algorithm fixes the target points once computed; hence, the total distance to the targets remains optimal throughout the execution. Two robots are selected as candidate robots. As long as their movements preserve symmetry, both identify the same \textit{extremal} robot located at the intersection of $\mathcal{L}$ and $\mathcal{C}_{\text{out}}$, and the target points do not change. If the configuration becomes asymmetric due to pending asynchronous moves, \textbf{Lemma}~\ref{lem:n-robots} ensures that the \textit{extremal} robot remains invariant, and thus the assigned destinations remain unchanged. Therefore, the total distance stays optimal even after symmetry is lost. Finally, \textbf{Lemma}~\ref{lem-correct-symu} shows that under the \textit{ASYNC} model, the algorithm terminates in finite time, avoids collisions, and places all robots at positions achieving the optimal total distance.

\begin{lemma}[Invariance of Unique Optimal Assignment in $\mathcal I_2$]
\label{lem:n-robots}
Let $r_e$ be identified as an \textit{extremal} robot in a configuration admitting a single line of symmetry at time $t_0 \ge 0$. Fixing $r_e$, let $p_i^{e}, p_j^{e} \in \mathcal{C}_{out}$ denote the destinations assigned at time $t_0$ to the candidate robots $r_i$ and $r_j$, respectively. If, at time $t_1$, both $r_i$ and $r_j$ reach intermediate positions (see \textbf{Definition}~\ref{Intermediate} on page~\pageref{Intermediate}) on $\mathcal{C}_{out}$ before reaching their respective destinations, then the \textit{extremal} robot $r_e$ remains invariant throughout the time interval $[t_0,t_1]$.
\end{lemma}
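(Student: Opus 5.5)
The plan is to reduce the statement to \textbf{Lemma}~\ref{l-3} by treating the two candidate robots as two successive single-robot moves. By \textbf{Lemma}~\ref{l-6}, $r_e$ lies on $\mathcal L$. Because it is the unique \textit{extremal} robot at $t_0$, we have $\mathcal D^*_{r_e}(t_0)<\mathcal D_{r_k}(t_0)$ for every $k\neq e$. The two candidates form a mirror pair, $r_j=\mu_{\mathcal L}(r_i)$, and their destinations are $p_j^e=\mu_{\mathcal L}(p_i^e)$. Let $x_i$ and $x_j$ be the arc distances travelled by $r_i$ and $r_j$ during $[t_0,t_1]$. Under \textit{ASYNC} with pending moves these need not be equal, so the configuration at $t_1$ may be asymmetric. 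The claim to prove is $\mathcal D^*_{r_e}(t_1)<\mathcal D_{r_k}(t_1)$ for every $k\ne e$, including $k=i$ and $k=j$.

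\textbf{Step 1.} Decompose the motion as $\mathcal R(t_0)\to\mathcal R'\to\mathcal R(t_1)$. In $\mathcal R'$ only $r_i$ has moved, by $x_i$ toward $p_i^e$. Fixing $r_e$, each candidate moves monotonically toward its own target, so $\mathcal D^*_{r_e}(t_1)=\mathcal D^*_{r_e}(t_0)-x_i-x_j$. The cost of the intermediate configuration is $\mathcal D^*_{r_e}(t_0)-x_i$. By \textbf{Lemma}~\ref{l-3}, applied with candidate $r_i$, $r_e$ remains the strict minimiser in $\mathcal R'$. The argument of \textbf{Lemma}~\ref{l-3} uses only two facts: strict optimality of $r_e$ before the move, and a move of the candidate toward its target with all others stationary. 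So it applies verbatim to the step $\mathcal R'\to\mathcal R(t_1)$ with candidate $r_j$. The comparisons against fixing $r_i$ or $r_j$ themselves are covered by part~1 of the proof of \textbf{Lemma}~\ref{l-3}; those against any other robot $r_k$ are covered by part~2 (Cases A--E).

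\textbf{Step 2.} Check that $r_j$ is still a legitimate candidate in $\mathcal R'$, i.e.\ that its arc path to $p_j^e$ is unobstructed. The two candidates lie on opposite sides of $\mathcal L$. Each one's path stays within the arc between itself and its destination, and that arc contains no robot at selection time. The only way the paths could interact is if both targets are the antipodal point of $r_e$ on $\mathcal L$. That would mean $p_i^e=p_j^e$, which is impossible because the assignment is a bijection. Hence the partial motion of $r_i$ does not block $r_j$, and the two moves commute.

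\textbf{Step 3.} The expected obstacle is that \textbf{Lemma}~\ref{l-3} is stated for asymmetric configurations ($\mathcal I_1'$), while here $\mathcal R(t_0)$ and possibly $\mathcal R'$ are symmetric. Its proof, however, never uses asymmetry; it uses only uniqueness of the optimum. Uniqueness holds in $\mathcal I_2$ by hypothesis, and in $\mathcal R'$ it holds by the strict inequalities just established. I will state this explicitly. I will also note that, by \textbf{Lemma}~\ref{l-5}, the costs obtained by fixing $r_k$ and $\mu_{\mathcal L}(r_k)$ are equal at $t_0$, so the strict gap to $r_e$ is the same for both members of every mirror pair. This means no tie can be created at the first step. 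If the scheduler interleaves the moves finely, the same argument applies repeatedly over a finite alternating decomposition of $[t_0,t_1]$, giving invariance of $r_e$ throughout the whole interval.
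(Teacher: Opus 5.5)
Your argument is correct in substance, but it is organized differently from the paper's.

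\textbf{How the two routes differ.} The paper treats the two movers simultaneously. With displacements $a<b$ it takes $\mathcal D_{r_e}(t_1)=\mathcal D^*_{r_e}(t_0)-a-b$. It then writes down the updates $\mathcal D_{r_i}(t_0)-2a-b$ and $\mathcal D_{\mu_{\mathcal L}(r_i)}(t_0)-2b-a$ for the two moving anchors and argues by contradiction. For each stationary anchor $r_k$ it bounds the new cost from below by the worst of the four sign combinations, $\mathcal D_{r_k}(t_0)-a-b$. You instead factor the motion through the intermediate configuration $\mathcal R'$ and apply the single-mover \textbf{Lemma}~\ref{l-3} twice, after checking that its proof uses strict optimality of $r_e$ rather than asymmetry.

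\textbf{What your route buys.} You never need a formula for how $\mathcal D_{r_i}$ changes when its own anchor moves. The paper asserts $-2a-b$ without derivation, although moving the anchor rotates the whole $n$-gon and alters all the other terms. Your argument also handles every pair $(x_i,x_j)$ uniformly, including the symmetry-preserving case $x_i=x_j$. It extends to any finite number of movers with disjoint free paths, e.g.\ a pending move of $r_j$ overlapping a move chosen by \textit{AsymU1dMinSumC()} after symmetry breaks. Finally, it addresses non-interference of the two paths, which the paper's proof of this lemma does not.

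\textbf{What it costs.} Everything is inherited from the long case analysis of \textbf{Lemma}~\ref{l-3}. That lemma is applied to a configuration $\mathcal R'$ that need not occur in the execution, which is legitimate only because it is in effect a static statement about costs. Both proofs rest on the same fact: the $r_e$-anchored cost falls by exactly the total displacement, while the cost of any other fixed assignment falls by at most that amount. The paper uses this directly for the stationary anchors; you import it through \textbf{Lemma}~\ref{l-3}.

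\textbf{Three things to tighten.}

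First, in Step~2, ``the only way the paths could interact is if both targets are the antipodal point'' overlooks a path running \emph{through} the antipode of $r_e$ into the other half. Exclude it as follows. The two open half-circles cut by $\mathcal L$ each contain as many robots as vertices of $\mathcal N_e$. The order-preserving assignment of \textbf{Result}~\ref{r-2} matches them within each half. Hence every shorter-arc path stays in its own half.

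Second, $\mathcal D^*_{r_e}(t_0)<\mathcal D_{r_k}(t_0)$ for \emph{every} $k\neq e$ fails in two situations:
\begin{itemize}
\item for even $n$, where the antipodal robot on $\mathcal L$ ties with $r_e$ (\textbf{Observation}~\ref{obs-two-on-los});
\item for any robot already sitting at its vertex of $\mathcal N_e$.
\end{itemize}
The invariant to carry is that $\mathcal N_e$ is the unique optimal $n$-gon, with strict inequality only against robots not yet at their vertices. The paper, too, proves only odd $n$ and defers the rest to \textbf{Observation}~\ref{obs-two-on-los-2}.

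Third, the remark about fine interleavings is unnecessary. The configuration at any $t\in[t_0,t_1]$ depends only on the net displacements $x_i(t),x_j(t)$. So your two-step decomposition, applied at each $t$, already gives invariance on the whole interval.
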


\begin{proof}

Let
\[
\mathcal{R}(t_0) = \{ r_1(t_0), r_2(t_0), \ldots, r_n(t_0) \}
\]
be a configuration of an odd number of $n$ robots (if $n$ is even, see \textbf{Observation}~\ref{obs-two-on-los-2}) that admits a unique line of symmetry $\mathcal{L}$, and let $r_e\in\mathcal R(t_0)$ denote the \textit{extremal} robot such that $
r_e \in \mathcal{L} \cap \mathcal{R}(t_0).
$

\noindent To justify that the optimal assignment remains invariant during the movement of the \textit{candidate} robots, it is necessary to examine how the total cost behaves for all robots in the system. In particular, we must verify that the movement of the \textit{candidate} robot does not affect the optimality of the configuration either with respect to its own assignment or with respect to the assignments of the other stationary robots. Since $\mathcal{R}(t_0)$ is symmetric, the two selected \textit{candidate} robots in $\mathcal{R}(t_0)$ form a mirror pair, consisting of a robot and its mirror image with respect to $\mathcal{L}$. Specifically, they are $r_i$ and $\mu_{\mathcal{L}}(r_i)$, as defined in \textbf{Definition}~\ref{def-mirror-map} (See Page~\pageref{def-mirror-map}). These robots are selected for the initial movement toward their respective destination points determined by fixing the \textit{extremal} robot $r_e$. If the candidate robots $r_i$ and $\mu_{\mathcal{L}}(r_i)$ preserve the symmetry of the configuration during execution the \textbf{Algorithm}~\textit{SymU1dMinSumC()}, then they uniquely determine both the line of symmetry and the \textit{extremal} robot $r_e$ at each instant of time. In this situation, the optimal assignment remains unchanged throughout the execution, and therefore, no separate proof of correctness to verify the invariance of the candidate robot is required. However, due to asynchrony and \textit{pending} movements, the configuration may become asymmetric. In that case the execution switches to \textbf{Algorithm}~\ref{alg:AsymU1dMinSumC}. For this scenario, the proof is divided into two parts. We will establish the following: \textbf{(i) Invariance of the \textit{extremal} robot $r_e$ with respect to the candidate robots $r_i$ and $\mu_{\mathcal{L}}(r_i)$,} and \textbf{(ii) invariance of the \textit{extremal} robot with respect to all other $(n-3)$ stationary robots except $r_e$, } which follows similar to \textbf{Lemma}~\ref{l-3}.

\begin{figure}[!ht]
    \centering
    \vspace{-0.1cm}
    
    \includegraphics[width=0.6\textwidth]{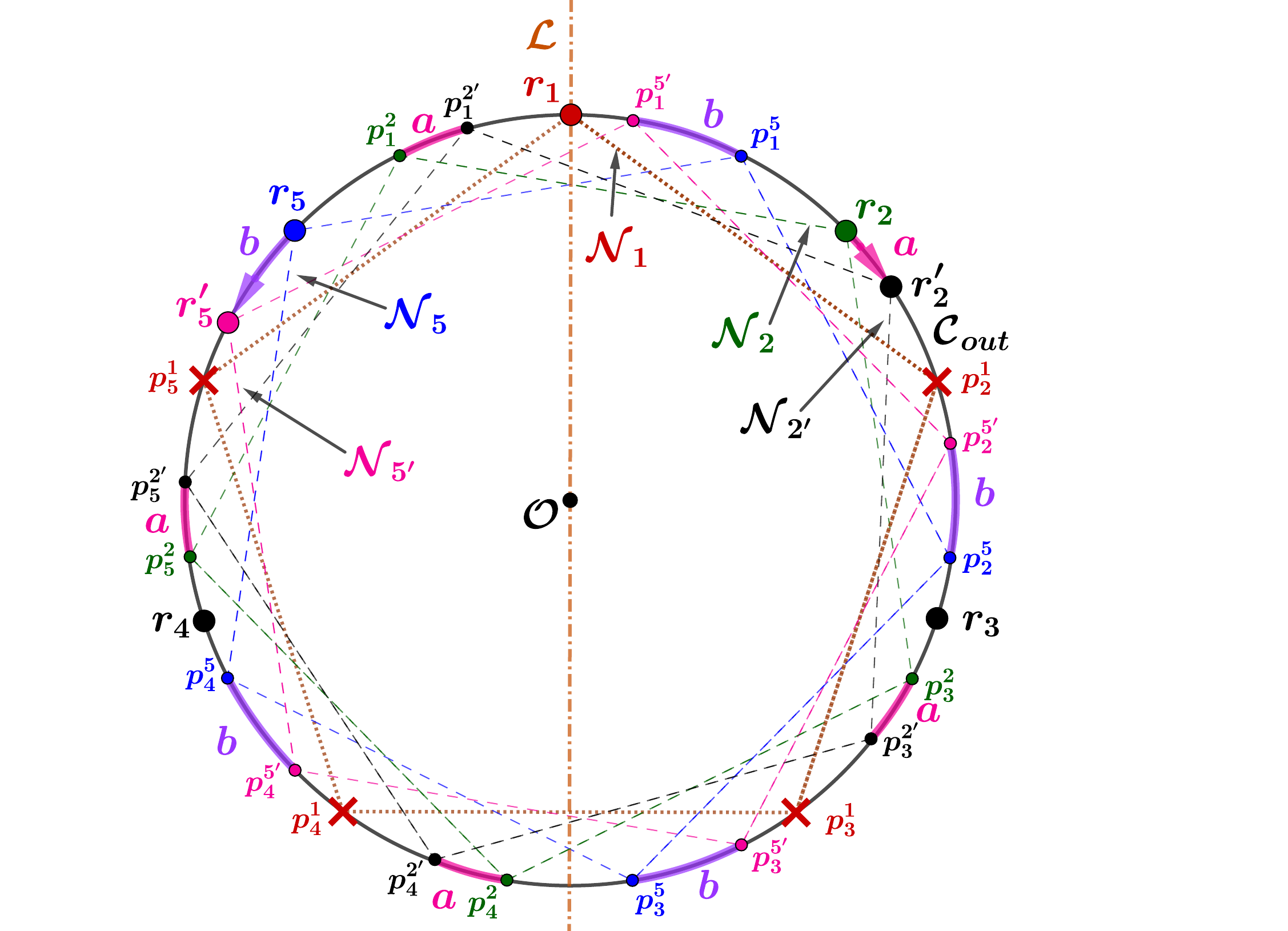}
\caption{\textit{Illustration of \textbf{Lemma}~\ref{lem:n-robots} for a symmetric initial configuration
$\mathcal{R}(t_0)=\{r_1(t_0), r_2(t_0), \ldots, r_5(t_0)\}$ admitting a single line of symmetry $\mathcal{L}$. The mirror images of robots $r_2$ and $r_3$ with respect to $\mathcal{L}$ are $r_5$ and $r_4$, respectively. The symmetric pair $(r_i,\mu_{\mathcal{L}}(r_i))$ corresponds to the candidate robots $r_2$ (green) and $r_5$ (blue). The extremal robot $r_e=r_1$ lies on $\mathcal{L}$ and is shown in red. The destination points of robots $r_2, r_3, r_4,$ and $r_5$ are denoted by $p_2^{1}, p_3^{1}, p_4^{1},$ and $p_5^{1}$, respectively. The candidate robots move along the circumference $\mathcal{C}_{out}$, reaching intermediate positions $r_2'$ and $r_5'$ with arc distances $d_{\mathrm{arc}}(r_2,r_2')=a$ and $d_{\mathrm{arc}}(r_5,r_5')=b$.}}

 \label{Last-proof}

\end{figure}

\begin{itemize}
    \item \textit{\textbf{Invariance of the \textit{extremal} robot $r_e$ with respect to a candidate robots  $r_i$ and $\mu_{\mathcal{L}}(r_i)$:}}
   Let the robots $r_i$ and $\mu_{\mathcal{L}}(r_i)$ move along the circumference of $\mathcal{C}$ toward their respective destinations and reach the positions $r_i'$ and $\mu_{\mathcal{L}}(r_i')$ at time $t_1$, respectively, with arc distances
\[
d_{\mathrm{arc}}\big(r_i,r_i'\big)=a
\quad \text{and} \quad
d_{\mathrm{arc}}\big(\mu_{\mathcal{L}}(r_i),\mu_{\mathcal{L}}(r_i')\big)=b, \quad \text{where $a<b$.}
\]
Without loss of generality, assume that the robot $r_i$ moves in the clockwise direction, and by symmetry, the robot $\mu_{\mathcal{L}}(r_i)$ moves in the counterclockwise direction during execution the \textbf{Algorithm}~\textit{SymU1dMinSumC()} (See \textbf{Figure}~\ref{Last-proof}).

By symmetry, the total arc-distance cost obtained by fixing the two candidate robots $r_i$ and $\mu_{\mathcal{L}}(r_i)$ at time $t_0$ is equal, that is,
\[
\mathcal{D}_{r_i}(t_0) = \mathcal{D}_{\mu_{\mathcal{L}}(r_i)}(t_0).
\]
Moreover, since $r_e$ is \textit{extremal} and lies on $\mathcal{L}$, its initial optimal total arc-distance cost for $r_i$ and $\mu_{\mathcal{L}}(r_i)$ satisfies
\begin{equation}
\mathcal{D}^*_{r_e}(t_0) < \mathcal{D}_{r_i}(t_0)
= \mathcal{D}_{\mu_{\mathcal{L}}(r_i)}(t_0).
\label{eee-16}
\end{equation}

At time $t_1$, only the two candidate robots $r_i$ and $\mu_{\mathcal{L}}(r_i)$ reach the positions $r_i'$ and $\mu_{\mathcal{L}}(r_i')$, respectively, while the remaining $n-2$ robots remain stationary. Consequently, the configuration becomes asymmetric at time $t_1$, and the remaining $n-3$ robots except $r_e,\ r_i$ and $\mu_{\mathcal{L}}(r_i)$ are partitioned into symmetric pairs with respect to $\mathcal{L}$ where $r_e$ lies on $\mathcal L$. After the candidate robots move, at time $t_1$, the cost of $r_e$ decreases by exactly $a+b$, since only its distances to $r_i$ and $r_j$ are affected, while all other distances remain unchanged. Thus,

\begin{equation}
    D_{r_e}(t_1) = D_{r_e}^*(t_0) - a - b.
    \label{eee-17}
\end{equation}

On the other hand, by symmetry, except for the robots $r_e$, $r_i$, and $\mu_{\mathcal{L}}(r_i)$, at time $t_1$, fixing $r_i$, the total arc-distance for all robots can be expressed as a sum over the corresponding partitions, given by

\[
\mathcal{D}_{r_i}(t_1)
=
\boxed{\text{arc-distance of } \mu_{\mathcal{L}}(r_i)}
+
\boxed{\text{arc-distance of } r_e}
+
\boxed{\text{arc-distance of symmetric $\frac{n-3}{2}$ pairs}}
\]
and, upon fixing $\mu_{\mathcal{L}}(r_i)$, the total arc-distance at time $t_1$ can be written as
\[
\mathcal{D}_{\mu_{\mathcal{L}}(r_i)}(t_1)
=
\boxed{\text{arc-distance of } r_i}
+
\boxed{\text{arc-distance of } r_e}
+
\boxed{\text{arc-distance of symmetric $\frac{n-3}{2}$ pairs}}
\]

Observing the configuration after the movement of the candidate robots $r_i$ and $\mu_{\mathcal{L}}(r_i)$, at time $t_1$ we obtain
\begin{equation}
\mathcal{D}_{r_i}(t_1)= \mathcal{D}_{r_i}(t_0)-2a-b,
\label{eee-18}
\end{equation}
and
\begin{equation}
\mathcal{D}_{\mu_{\mathcal{L}}(r_i)}(t_1)= \mathcal{D}_{\mu_{\mathcal{L}}(r_i)}(t_0)-2b-a.
\label{eee-19}
\end{equation}

Our goal is to establish that
\begin{equation}
\mathcal{D}_{r_e}(t_1) < \mathcal{D}_{r_i}(t_1),
\end{equation}
and
\begin{equation}
\mathcal{D}_{r_e}(t_1) < \mathcal{D}_{\mu_{\mathcal{L}}(r_i)}(t_1).
\end{equation}

Assume that
\begin{equation}
\mathcal{D}_{r_i}(t_1) < \mathcal{D}_{r_e}(t_1),
\label{eee-22}
\end{equation}
and
\begin{equation}
\mathcal{D}_{\mu_{\mathcal{L}}(r_i)}(t_1) < \mathcal{D}_{r_e}(t_1).
\label{eee-23}
\end{equation}

Substituting the expressions from~\eqref{eee-17} and~\eqref{eee-18} into the inequality~\eqref{eee-22}, we obtain for $r_i$
\begin{align*}
\mathcal{D}_{r_i}(t_1)
&< \mathcal{D}_{r_e}(t_1) \\
\implies \mathcal{D}_{r_i}(t_0)-2a-b
&< \mathcal{D}^*_{r_e}(t_0)-a-b \\
\implies \mathcal{D}_{r_i}(t_0)-a
&< \mathcal{D}^*_{r_e}(t_0).
\end{align*}
 and substituting the expressions from~\eqref{eee-17} and~\eqref{eee-19} into the inequality~\eqref{eee-23}, we obtain  for $\mu_{\mathcal{L}}(r_i)$

 \begin{align*}
\mathcal{D}_{\mu_{\mathcal{L}}(r_i)}(t_1) 
&< \mathcal{D}_{r_e}(t_1) \\
\implies \mathcal{D}_{\mu_{\mathcal{L}}(r_i)}(t_0)-2b-a
&< \mathcal{D}^*_{r_e}(t_0)-a-b \\
\implies \mathcal{D}_{\mu_{\mathcal{L}}(r_i)}-b
&< \mathcal{D}^*_{r_e}(t_0).
\end{align*}

This implies the existence of an assignment of robot positions, say $\mathscr{R}'$, whose total cost is strictly smaller than $\mathcal{D}^*_{r_e}(t_0)$. If none of the $n$ robots is assigned to its own position at time $t_0$ with respect to $\mathscr{R}'$, this contradicts \textbf{Result}~5. Otherwise, it contradicts equation~\eqref{eee-16}, that is our initial assumption that $\mathcal{D}^*_{r_e}(t_0)$ is the optimal assignment cost.

\noindent Hence, the \textit{extremal} robot remains optimal throughout the interval $[t_0,t_1]$ for the candidate robots $r_i$ and $\mu_{\mathcal{L}}(r_i)$.

\item \textbf{Invariance of the \textit{extremal} robot with respect to all other $(n-3)$ stationary robots except $r_e$:} Let $r_k(t_0) \in \mathcal{R}(t_0)\setminus\{r_e, r_i, \mu_{\mathcal{L}}(r_i)\}$ denote another robot at time $t_0$. Then we have 
\begin{equation}
\mathcal{D}^*_{r_e}(t_0) < \mathcal{D}_{r_k}(t_0).
\label{eee-24}
\end{equation}

Let at time $t_1$, only the two candidate robots $r_i$ and $\mu_{\mathcal{L}}(r_i)$ reach the positions $r_i'$ and $\mu_{\mathcal{L}}(r_i')$, respectively, while the remaining $n-2$ robots remain stationary. All possible updated costs for a robot $r_k$ at time $t_1$ are given by one of the following:

\[
\mathcal{D}_{r_k}(t_0)-a-b,\quad
\mathcal{D}_{r_k}(t_0)-a+b,\quad
\mathcal{D}_{r_k}(t_0)+a-b\quad
\text{and}\quad
\mathcal{D}_{r_k}(t_0)+a+b.
\] 

Let \[
\mathcal{D}'_{r_k}(t_1)
=
\min\Big\{
\mathcal{D}_{r_k}(t_0)-a-b,\;
\mathcal{D}_{r_k}(t_0)-a+b,\;
\mathcal{D}_{r_k}(t_0)+a-b,\;
\mathcal{D}_{r_k}(t_0)+a+b
\Big\}.
\]

This implies
\begin{equation}
\mathcal{D}'_{r_k}(t_1)=\mathcal{D}_{r_k}(t_0)-a-b.
\label{eee-25}
\end{equation}

Subtracting $a+b$ from both sides of equation~\eqref{eee-24}, we obtain

\begin{equation}
\mathcal{D}^*_{r_e}(t_0)-a-b < \mathcal{D}_{r_k}(t_0)-a-b.
\label{eee-26}
\end{equation}

Combining~\eqref{eee-26} with~\eqref{eee-17}, we conclude that
\begin{equation}
\mathcal{D}_{r_e}(t_1) < \mathcal{D}'_{r_k}(t_1).
\label{eee-27}
\end{equation}

Hence, the \textit{extremal} robot remains optimal throughout the interval $[t_0,t_1]$ with respect to the stationary robots $r_k$.
\end{itemize}
 This completes the lemma.

\end{proof}

\noindent The following observation shows that the argument of \textbf{Lemma}~\ref{lem:n-robots} extends naturally to symmetric configurations with an even number of robots.

\begin{observation}
\label{obs-two-on-los-2}
Consider a configuration of an even number of $n$ robots in the set $\mathcal{I}_2$ that admits a single line of symmetry $\mathcal{L}$. If the intersection of $\mathcal{L}$ with the circumference $\mathcal{C}_{out}$ contains two extremal robot positions, then the proof follows analogously to that of \textbf{Lemma}~\ref{lem:n-robots}.
\end{observation}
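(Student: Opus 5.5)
The plan is to reduce the even case to the structure already used in \textbf{Lemma}~\ref{lem:n-robots}, replacing the single robot $r_e$ on $\mathcal L$ by the two antipodal robots $r_e$ and $r_{e'}$ where $\mathcal L$ meets $\mathcal{C}_{out}$.

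First I will use \textbf{Observation}~\ref{obs-two-on-los} to argue that $r_e$ and $r_{e'}$ are both \textit{extremal} and determine the \emph{same} destination set. This requires that $d_{arc}(r_e,r_{e'})=\pi R$ be a multiple of the polygon's arc step $2\pi R/n$, which holds because $n$ is even. Hence $\mathcal P^{e}(t_0)=\mathcal P^{e'}(t_0)$, $\mathcal D^*_{r_e}(t_0)=\mathcal D^*_{r_{e'}}(t_0)$, and the assignment is unique as a map even though two robots realize it. The remaining $n-2$ robots split into $\frac{n-2}{2}$ mirror pairs under $\mu_{\mathcal L}$. By \textbf{Lemma}~\ref{l-5}, any non-\textit{extremal} robot $r_k$ satisfies $\mathcal D_{r_k}(t_0)=\mathcal D_{\mu_{\mathcal L}(r_k)}(t_0)>\mathcal D^*_{r_e}(t_0)$.

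Next, let the candidate mirror pair $r_i,\mu_{\mathcal L}(r_i)$ move distances $a$ and $b$ toward their destinations, as in \textbf{Lemma}~\ref{lem:n-robots}. Neither $r_e$ nor $r_{e'}$ moves, so at time $t_1$ both still fix the same polygon, and
\[
\mathcal D_{r_e}(t_1)=\mathcal D_{r_{e'}}(t_1)=\mathcal D^*_{r_e}(t_0)-a-b .
\]
Thus no ambiguity arises between the two \textit{extremal} robots, even after symmetry breaks and \textbf{Algorithm}~\ref{alg:AsymU1dMinSumC} takes over. For the candidates themselves I will repeat the computation that gives $\mathcal D_{r_i}(t_1)=\mathcal D_{r_i}(t_0)-2a-b$ and its mirror analogue. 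The only change in the decomposition is that the boxed terms now contain two \textit{extremal} contributions and $\frac{n-4}{2}$ symmetric pairs, in place of one \textit{extremal} contribution and $\frac{n-3}{2}$ pairs. Assuming $\mathcal D_{r_i}(t_1)<\mathcal D_{r_e}(t_1)$ then yields $\mathcal D_{r_i}(t_0)-a<\mathcal D^*_{r_e}(t_0)$. This contradicts \textbf{Result}~\ref{r-2} together with the optimality of $r_e$, exactly as in \textbf{Lemma}~\ref{lem:n-robots}.

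For any other stationary robot $r_k\notin\{r_e,r_{e'},r_i,\mu_{\mathcal L}(r_i)\}$, I will reuse the four-case bound. This gives $\mathcal D_{r_k}(t_1)\ge \mathcal D_{r_k}(t_0)-a-b>\mathcal D^*_{r_e}(t_0)-a-b=\mathcal D_{r_e}(t_1)$.

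The main obstacle is the candidate-robot cost update. The identity $\mathcal D_{r_i}(t_1)=\mathcal D_{r_i}(t_0)-2a-b$ presumes that no vertex of $\mathcal N_{r_i}(t_1)$ crosses a stationary robot. To cover the crossing case, I will first try to import the counting argument of Case~1(b) of \textbf{Lemma}~\ref{l-3}. That argument bounds the change in $\mathcal D_{r_i}$ from below by $-(k'+p)x$, and so still forces a cost below the optimum at $t_0$.
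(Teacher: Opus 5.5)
Your plan matches what the paper intends: the statement only asserts that \textbf{Lemma}~\ref{lem:n-robots} transfers, with the antipodal pair $r_e,r_{e'}$ in place of $r_e$. Your treatment of the two \textit{extremal} robots and of the stationary robots is fine.

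The gap is in the candidate step, which you copy from the proof of \textbf{Lemma}~\ref{lem:n-robots}. Following that proof ``analogously'' does not close it.

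First, the identity $\mathcal D_{r_i}(t_1)=\mathcal D_{r_i}(t_0)-2a-b$ is not a general fact. When $r_i$ moves by $a$, all vertices of $\mathcal N_{r_i}$ rotate by $a$. Each other robot's term therefore changes by $+a$ or $-a$ according to which side of it its target lies, and the term of $\mu_{\mathcal L}(r_i)$ also picks up $\pm b$. You yourself note that there are now two \textit{extremal} contributions. Concretely, on the unit circle put robots at angles $0,\pi/6,\pi,11\pi/6$, so $n=4$ and $\mathcal L$ is the axis through $0$ and $\pi$. Then:
\begin{itemize}
\item $\mathcal D^*=2\pi/3$, realised by both $r_e=0$ and $r_{e'}=\pi$;
\item for $r_i=\pi/6$ we have $\mathcal D_{r_i}(t_0)=4\pi/3$;
\item after $r_i$ moves $a$ toward $\pi/2$ and $\mu_{\mathcal L}(r_i)$ moves $b$ toward $3\pi/2$, one gets $\mathcal D_{r_i}(t_1)=4\pi/3-3a-b$, not $4\pi/3-2a-b$.
\end{itemize}

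Second, and more seriously, the inequality you reach, $\mathcal D_{r_i}(t_0)-a<\mathcal D^*_{r_e}(t_0)$, contradicts nothing. The same holds for $\mathcal D_{r_i}(t_0)-2a<\mathcal D^*_{r_e}(t_0)$ with the correct coefficient. Optimality of $r_e$ only gives $\mathcal D_{r_i}(t_0)>\mathcal D^*_{r_e}(t_0)$. The quantity $\mathcal D_{r_i}(t_0)-a$ is not the cost of any assignment at time $t_0$. So neither \textbf{Result}~\ref{r-2} nor uniqueness is violated when the gap $\mathcal D_{r_i}(t_0)-\mathcal D^*_{r_e}(t_0)$ is smaller than $a$. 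Your proposed patch via Case~1(b) of \textbf{Lemma}~\ref{l-3} has the same defect, since it again only produces a number below $\mathcal D^*$ that is not an assignment cost.

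The missing idea is to compare with the \emph{same} assignment evaluated at time $t_0$. Let $g$ be the bijection onto $\mathcal N_{r_i}(t_1)$ that realises $\mathcal D_{r_i}(t_1)$, and evaluate it on the positions $\mathcal R(t_0)$.
\begin{itemize}
\item Its cost there is at least $\mathcal D^*$.
\item The inequality is in fact strict. By \textbf{Result}~\ref{r-02} and \textbf{Result}~\ref{r-2}, together with uniqueness, the only optimal assignment at $t_0$ is the one fixed by $r_e$ (equivalently by $r_{e'}$). But $g$ sends $r_i$ to $r_i(t_1)\neq p_i^{e}$, so $g$ is not that assignment.
\item For each fixed target $q$, $d_{arc}(\cdot,q)$ is $1$-Lipschitz, and the total displacement between $t_0$ and $t_1$ is $a+b$.
\end{itemize}
Hence
\[
\mathcal D_{r_i}(t_1)\;\ge\;\mathrm{cost}_{t_0}(g)-a-b\;>\;\mathcal D^*-a-b\;=\;\mathcal D_{r_e}(t_1)=\mathcal D_{r_{e'}}(t_1).
\]
This single argument covers $\mu_{\mathcal L}(r_i)$, crossing and non-crossing cases, and the stationary robots alike. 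Your bound $\mathcal D_{r_k}(t_1)\ge\mathcal D_{r_k}(t_0)-a-b$ for stationary $r_k$ is really this triangle inequality with fixed targets, not a four-case count.

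A smaller point concerns the two \textit{extremal} robots. Parity of $n$ only gives $\mathcal P^{e}(t_0)=\mathcal P^{e'}(t_0)$ as vertex sets. To conclude that the two robots induce the same \emph{map}, you also need that each open half-circle cut by $\mathcal L$ contains $(n-2)/2$ robots. Then $r_{e'}$ is exactly $n/2$ places after $r_e$ in cyclic order, and the assignment fixing $r_e$ sends it to itself.
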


\begin{lemma}
\label{lem-correct-symu}
Let $\mathcal{R}(t_0)$ be a robot configuration admitting exactly one line of symmetry $\mathcal L$ and a unique optimal assignment. Under the \textit{ASYNC} model, the algorithm \textit{SymU1dMinSumC()} terminates in finite time and positions all the robots at the distinct vertices of a regular $n$-gon, while ensuring collision-free motion and minimizing the total sum of arc distances.
\end{lemma}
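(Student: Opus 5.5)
The plan is to prove \textbf{Lemma}~\ref{lem-correct-symu} by combining the invariance results already established with a potential-function argument, in the same way \textbf{Lemma}~\ref{l-2} handles the line-segment case. I will treat four properties in turn: \emph{(i)} the destination set is fixed, \emph{(ii)} motion is collision-free, \emph{(iii)} progress and finite termination, and \emph{(iv)} optimality of the total cost.

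For \emph{(i)}, note that at $t_0$ \textbf{Lemma}~\ref{l-6} identifies the robot $r_e$ on $\mathcal L$ as the unique \textit{extremal} robot. (When $n$ is even, the two robots on $\mathcal L$ play this role, by \textbf{Observation}~\ref{obs-two-on-los}.) Fixing $r_e$ gives the destination set $\mathcal P^e$ through \textbf{Result}~\ref{r-2}. I will then argue by induction over the sequence of moves.
\begin{itemize}
\item While the configuration stays symmetric, every robot still sees $\mathcal L$ and the same $r_e$, so $\mathcal P^e$ does not change.
\item If pending moves make the configuration asymmetric, \textbf{Lemma}~\ref{lem:n-robots} (with \textbf{Observation}~\ref{obs-two-on-los-2} for even $n$) shows that $r_e$ remains the unique \textit{extremal} robot. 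The execution then switches to \textit{AsymU1dMinSumC()}, where \textbf{Lemma}~\ref{l-3} keeps $r_e$ invariant during each single move.
\item Once a candidate robot reaches its destination, \textbf{Lemma}~\ref{l-4} shows that recomputing from it or from $r_e$ gives the same $n$-gon.
\end{itemize}
Hence every robot, whenever it is activated, computes the same target $p_i^e$ it was assigned at $t_0$.

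For \emph{(ii)}, a robot moves only when it is a candidate, that is, when its arc to $p_i^e$ contains no other robot. The assignment from \textbf{Result}~\ref{r-2} preserves the cyclic order of the robots, so no robot ever needs to pass another. The main asynchronous subtlety is a mirror pair moving at the same time. By symmetry, $r_i$ and $\mu_{\mathcal L}(r_i)$ travel on mirror-image arcs on opposite sides of $\mathcal L$. Their target arcs are disjoint and do not contain $r_e$, so they cannot meet.

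For \emph{(iii)}, I will use the potential $S(t)=\sum_k d_{arc}(r_k(t),p_k^e)$. Each move is non-rigid, so it either reaches the destination or covers a distance of at least $\delta$ toward it. Each move is also monotone toward a fixed target, which gives $S(t)$ a decrease of at least $\min(\delta,\text{remaining})$ per completed move. To ensure at least one candidate always exists, take the robot adjacent to an already-placed robot in the direction of its target; such a robot exists by the order-preservation argument used in \textbf{Lemma}~\ref{l-1}. Fairness of the scheduler then drives $S$ to $0$ after finitely many moves.

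For \emph{(iv)}, every robot moves only along its assigned geodesic arc and never backtracks. The total distance travelled therefore equals $S(t_0)=\mathcal D^*_{r_e}(t_0)=\mathcal D^*$.

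The main obstacle is step \emph{(i)} under arbitrary interleavings. With pending moves, a robot may compute from a configuration that is neither symmetric nor produced by a single move. The invariance lemmas cover one pair or one robot moving from a snapshot, not arbitrary sequences of such moves. I will handle this by an induction on the ordered sequence of Look events. The inductive invariant is that all robots are at intermediate positions on their $t_0$-assigned arcs. Then each new snapshot falls into a case already covered by \textbf{Lemma}~\ref{l-3} or \textbf{Lemma}~\ref{lem:n-robots}, applied with that snapshot as the new $t_0$.
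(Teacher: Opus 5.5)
\textbf{There is a genuine gap, and the paper's proof has it too.}

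Your outline follows the same route as the paper's proof. You use Lemma~\ref{l-6} to fix $r_e$ and $\mathcal P^e$, free-arc moves for collision avoidance, mirror pairs with a fall-back to \textit{AsymU1dMinSumC()} via the invariance lemmas, and monotone decrease of the remaining distances. In two places you are tighter than the paper: the $\delta$-bound gives genuine finite termination, and the no-backtracking argument shows the distance actually travelled equals $\mathcal D^*$.

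The gap is the first bullet of your step (i): ``while the configuration stays symmetric, every robot still sees $\mathcal L$ and the same $r_e$''. The paper assumes the same thing. A synchronized mirror-pair move can enlarge the symmetry group. Take $n=9$ and, in degrees, $\mathcal R(t_0)=\{0,55,65,130,175,185,230,295,305\}$.
\begin{itemize}
\item Its only axis passes through $0$.
\item Fixing the robot at $0$ costs $110$, and fixing any other robot costs at least $120$.
\item So $\mathcal R(t_0)\in\mathcal I_2$, with a unique optimal assignment and targets at the multiples of $40^\circ$.
\item The algorithm selects the pair at $130,230$, whose remaining distance $10$ is the minimum.
\item If both complete their moves, the snapshot $\{0,55,65,120,175,185,240,295,305\}$ is invariant under rotation by $120^\circ$ and has three axes, each through an extremal robot ($0$, $120$, $240$).
\end{itemize}
No robot can tell which axis is $\mathcal L$, and \textit{SymU1dMinSumC()}, which presupposes a single axis, has no rule to apply. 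Oblivious robots must reclassify every snapshot; the paper relies on this to switch to \textit{AsymU1dMinSumC()}. Algorithm~\ref{alg:global-minsum} then puts this snapshot in $\mathcal I_5$ and stops, with six robots off their targets. So, for the algorithm as written, the statement fails on this input. No argument along your lines can establish it without amending the algorithm.

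\textbf{What a repair needs.} You must show (a) that the optimal $n$-gon is $\mathcal N_e$ in every snapshot, and (b) that whatever robot the algorithm fixes in a symmetric snapshot sits at a vertex of $\mathcal N_e$. You also need a dispatcher that does not stop in such snapshots.
\begin{itemize}
\item For (a) there is a one-line argument. Suppose every robot has moved only toward its $\mathcal N_e$-target, by total amount $X$. Then the cost of the $\mathcal N_e$-assignment has dropped by exactly $X$. Every other assignment's cost has dropped by at most $X$, by the triangle inequality for $d_{arc}$ applied robot by robot. So strict optimality persists under arbitrary asynchronous interleavings.
\item This also disposes of your interleaving worry. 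Your claim that each new snapshot is an instance of Lemma~\ref{l-3} or Lemma~\ref{lem:n-robots} is inaccurate anyway, since a pending robot and a newly selected non-mirror robot can both move between two Looks.
\item For (b), uniqueness makes the optimal assignment equivariant under every symmetry of the snapshot. Hence a robot on an axis has its target on that axis. You still have to exclude the case where it sits at the antipode of its target. For even $n$ you must also handle axes through edge midpoints of $\mathcal N_e$, which carry no robot.
\end{itemize}

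\textbf{A smaller slip in (iii).} Your rule for exhibiting a candidate fails as stated. In $\{0,60,130,230,300\}$ ($n=5$, in $\mathcal I_2$), both neighbours of $r_e$ move away from it. A blocker chain works instead. A robot blocking $A$ must move in $A$'s direction toward a farther target, and a placed robot never blocks. So the chain ends at a robot with a free path.
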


\begin{proof}\normalfont
Let $\mathcal{R}(t_0)$ be an initial robot configuration that admits a single line of symmetry $\mathcal L$ and a unique optimal assignment. By \textbf{Lemma}~\ref{l-6}, the robot $r_e$ located at the intersection of $\mathcal L$ with the circumference $\mathcal C_{out}$ must be the unique \textit{extremal} robot. By fixing $r_e$, the destination point set $\mathcal P^e={p_1^{e},p_2^{e},\ldots,p_n^{e}}$ becomes uniquely determined, according to \textbf{Result}~\ref{r-2}. Since the optimal assignment is unique, this destination set remains fixed throughout the execution of the algorithm.

\noindent The algorithm allows movement only for \textit{candidate} robots, namely the robots that have a free arc path to their assigned destination. Any candidate robot that moves strictly towards its destination, decreases its arc distance to the destination, while robots whose paths are obstructed remain stationary. Since robots move exclusively along free arc paths, the collisions are avoided. As the configuration admits a single line of symmetry, the candidate robots are selected in mirror pairs with respect to $\mathcal L$. If both robots of a mirror pair move concurrently, the symmetry of the configuration is preserved, and both robots reduce
the same arc distance toward their respective destinations. If, due to asynchrony and pending movements, only one robot of a mirror pair moves, the symmetry may be broken, and the configuration becomes asymmetric. In this case, the execution switches to \textbf{Algorithm}~\ref{alg:AsymU1dMinSumC}, which is known to guarantee progress toward the same unique optimal assignment by \textbf{Lemma} \ref{l-3}.

\noindent In every execution step, at least one robot strictly decreases its arc distance to its destination, and no robot ever increases its distance. Since all arc distances are finite and monotonically decreasing, every robot reaches its assigned destination in finite time under the \textit{ASYNC} model. In the final configuration, all robots occupy distinct vertices of the regular $n$-gon defined by $\mathcal P^e$. As this destination set corresponds to the unique optimal assignment, the total sum of arc distances traveled by all robots is minimized. Therefore, the \textbf{Algorithm} \textit{SymU1dMinSumC()} terminates in finite time, avoids collisions, and correctly solves the min-sum uniform circle
formation problem.
\end{proof}

\section*{4.2~~Configurations with Multiple Optimal Assignments}
\label{Configurations with Multiple Optimal Assignments}
\noindent This section addresses all those robot configurations that admit multiple optimal
assignments, that is, configurations for which $|\mathcal{E}'(t)| > 1$. This multiplicity arises from inherent ambiguities in the configuration, such as the presence of multiple \textit{extremal} robots or symmetry, which allow different assignments to yield the same minimum total movement cost. Unlike configurations with a unique optimal assignment, no single assignment can be selected deterministically without additional structural considerations.
Using the classes of partitions and the assignment tree introduced in \textbf{Section}~\ref{Oncircle}, (Page No.~\pageref{Oncircle}), we systematically analyze these configurations and propose algorithmic strategies for the solvable cases.

\subsection*{4.2.1~~Impossibility Result}
\label{Impossibility Result}

\noindent This section establishes impossibility results for robot configurations that admit multiple optimal assignments, for which no deterministic distributed algorithm can guarantee convergence to a unique target configuration. In particular, we focus on symmetric configurations with multiple optimal assignments, where the inherent symmetry prevents the robots from consistently
breaking ties among equivalent optimal solutions.

\subsubsection*{4.2.1.1~~Symmetric Configurations $\mathcal I_3$: Multiple Optimal Assignments}

\noindent This subsection addresses symmetric robot configurations that admit multiple optimal assignments. Unlike the configurations considered in Subsections~4.1.1 and~4.1.2, the presence of symmetry combined with multiple assignments prevents the robots from deterministically selecting a unique target configuration.
Due to symmetry, the robots see multiple choices as identical, so the uncertainty remains and cannot be eliminated with the assumed robot model. We show that, for such configurations, no deterministic distributed algorithm can guarantee convergence to a unique target configuration.


\begin{figure}[!ht]
    \centering
    \vspace{-0.3cm}
    
    \includegraphics[width=0.5\textwidth]{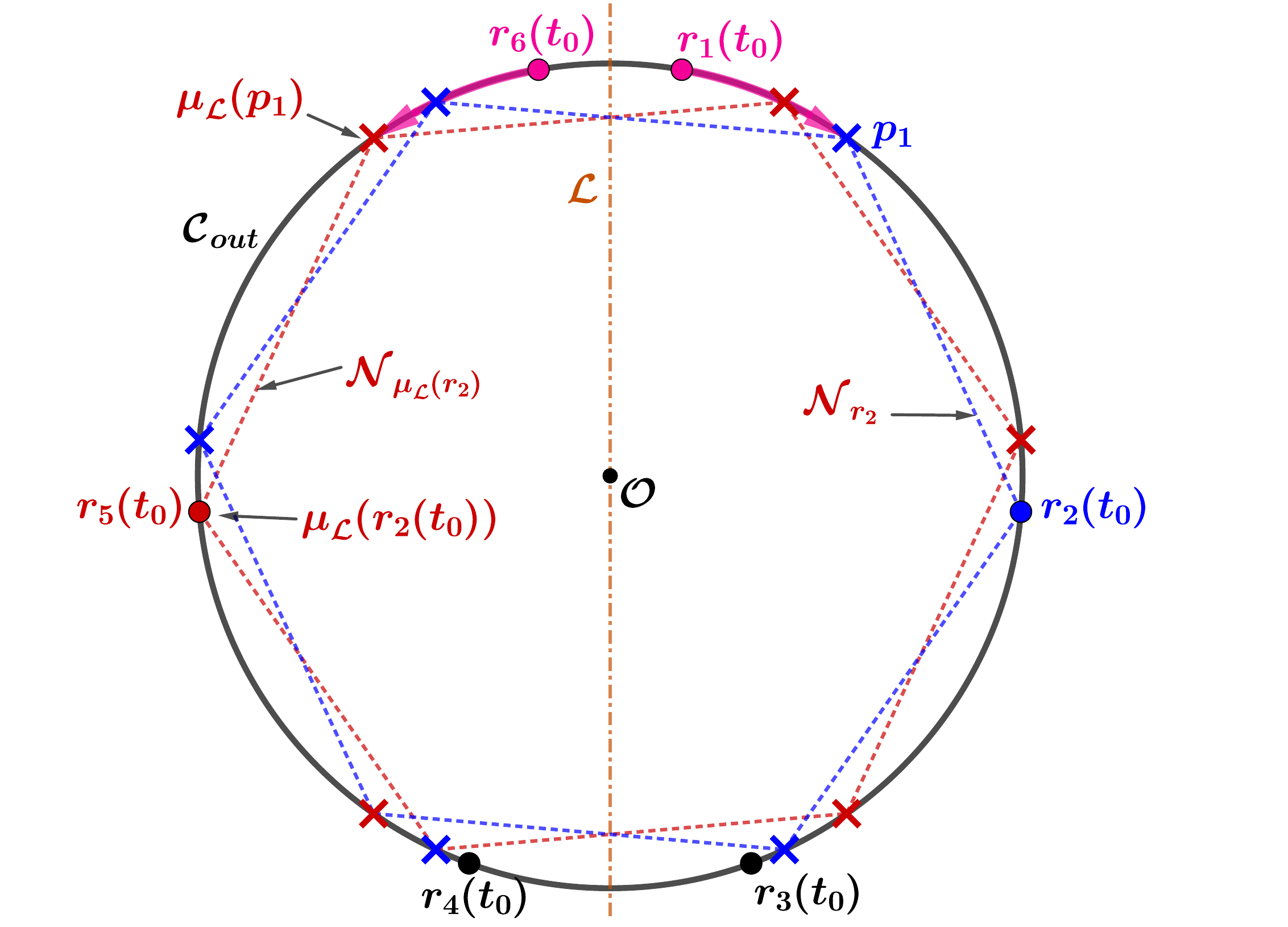}
\caption{\textit{Illustration of the impossibility result for a symmetric initial configuration
$\mathcal{R}(t_0)=\{r_1(t_0), r_2(t_0), \ldots, r_6(t_0)\}$ admitting a single line of symmetry $\mathcal L$. The symmetric pair $(r_j,\mu_{\mathcal L}(r_j))$ corresponds to $r_2(t_0)$ and $\mu_{\mathcal L}(r_2(t_0))$. The \textit{extremal} robots are $r_2(t_0)$ and $\mu_{\mathcal L}(r_2(t_0))$, shown as blue and red dots. The mirror images of $r_1(t_0)$, $r_2(t_0)$, and $r_3(t_0)$ with respect to $\mathcal L$ are $r_6(t_0)$, $r_5(t_0)$, and $r_4(t_0)$, respectively. Here, $r_i=r_1(t_0)$ and $\mu_{\mathcal L}(r_i)=\mu_{\mathcal L}(r_6(t_0))$. The target vertex $p_i$ of $\mathcal N_{r_i}$ for $r_i$ is represented by $p_1$ of $\mathcal N_{r_2}$ for $r_1(t_0)$.}}

 \label{Impossible}

\end{figure}


 \begin{lemma}
\label{lemma-8}
 If the initial configuration admits a single line of symmetry  $\mathcal{L}$ and $\mathcal{L}\cap\mathcal R(t_0)=\emptyset$, then the  \textit{min-sum uniform coverage} on a circle problem is deterministically unsolvable by {\it oblivious} and {\it silent} robots under asynchronous scheduler.
\end{lemma}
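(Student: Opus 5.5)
The plan is a symmetry-preservation (indistinguishability) argument. The adversary exploits the fact that the robots share no chirality and have no multiplicity detection. Suppose, for contradiction, that some deterministic algorithm $\mathcal A$ solves the \textit{min-sum uniform coverage} on a circle problem from an initial configuration $\mathcal R(t_0)$ with a single line of symmetry $\mathcal L$ and $\mathcal L\cap\mathcal R(t_0)=\emptyset$. Then $n$ is even and $\mu_{\mathcal L}$ (Definition~\ref{def-mirror-map}) pairs the robots into $n/2$ mirror pairs with $\mu_{\mathcal L}(r_i)\neq r_i$ for every $i$.

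The adversary gives each robot $r_i$ and its mirror $\mu_{\mathcal L}(r_i)$ local coordinate systems that are mirror images of each other across $\mathcal L$. By Result~\ref{symm}, $\mathscr V^+(r_i)=\mathscr V^-(\mu_{\mathcal L}(r_i))$, so the two robots obtain identical snapshots in their own frames.

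\textbf{Step 1: symmetry is preserved forever.} The adversary activates every mirror pair simultaneously, with the phases of the two robots exactly aligned (a legal \textit{ASYNC}, in fact \textit{FSYNC}, schedule). When a robot is stopped early, its mirror is stopped at the same distance, with both distances at least $\delta$. Since $\mathcal A$ is deterministic, $r_i$ and $\mu_{\mathcal L}(r_i)$ compute mirror-image destinations, one moving clockwise and the other counterclockwise. By induction on the rounds, every configuration $\mathcal R(t)$ with $t\ge t_0$ is symmetric about $\mathcal L$. Two mirror robots can coincide only at a point of $\mathcal L\cap\mathcal C_{out}$; I will note that such a meeting produces a multiplicity, which by Lemma~\ref{lemma1} already makes the problem unsolvable. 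So in any successful run every pair stays strictly off $\mathcal L$.

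\textbf{Step 2: any final configuration is optimal only via a fixed mirror pair.} By Result~\ref{r-02} and Result~\ref{r-2}, a cost-$\mathcal D^*$ run must end at $\mathcal P^k(t_0)$ for some extremal $r_k$, with $r_k$ never moving. By Step~1, $\mu_{\mathcal L}(r_k)$ never moves either. By Lemma~\ref{l-5}, $r_k$ and $\mu_{\mathcal L}(r_k)$ are both extremal, so $|\mathcal E'(t_0)|\ge 2$ and the optimal assignment is not unique. The final configuration is symmetric, so
\[
\mathcal P^k(t_0)=\mu_{\mathcal L}\bigl(\mathcal P^k(t_0)\bigr)=\mathcal P^{\mu_{\mathcal L}(k)}(t_0),
\]
and $r_k$, $\mu_{\mathcal L}(r_k)$ are vertices of the same regular $n$-gon with $\mathcal L$ bisecting two of its edges.

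\textbf{Step 3 (the main obstacle).} The remaining task is to turn Step~2 into a contradiction. Symmetry alone does not exclude a regular $n$-gon with no vertex on $\mathcal L$, so the argument must use the cost. My first attempt is a comparison of assignments.
\begin{itemize}
\item Under the common target set, $r_k$ is fixed and the remaining robots occupy the other vertices in cyclic order.
\item Under the mirror-symmetric matching, the clockwise shift of each pair $(r_i,\mu_{\mathcal L}(r_i))$ is mirrored by a counterclockwise shift.
\item I would show that the cyclic-order assignment anchored at $r_k$ and the one anchored at $\mu_{\mathcal L}(r_k)$ cannot both be consistent with mirror-symmetric moves and still attain $\mathcal D^*$. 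If the cyclic offsets differ, mirror robots would be sent to vertices whose arcs cross $\mathcal L$. Such moves either force a meeting on $\mathcal L$, excluded in Step~1 via Lemma~\ref{lemma1}, or add cost strictly above $\mathcal D^*$, contradicting extremality.
\end{itemize}
If this comparison does not go through, I will fall back to a restricted statement: I will prove impossibility exactly for the configurations in $\mathcal I_3$ where the two extremal mirror robots determine distinct $n$-gons, $\mathcal P^k(t_0)\neq\mathcal P^{\mu_{\mathcal L}(k)}(t_0)$, as in Figure~\ref{Impossible}. In that case Step~2 already gives the contradiction, because no symmetric final configuration achieves $\mathcal D^*$.
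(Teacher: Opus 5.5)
\textbf{Step~3 cannot be completed as planned, because the lemma as stated is false.} Take $n=4$ and let $\mathcal L$ be the vertical diameter. Place the robots at angles $45^\circ,135^\circ,260^\circ,280^\circ$, and normalize the radius so that arc length is measured in degrees.
\begin{itemize}
\item The only line of symmetry is $\mathcal L$, there is no rotational symmetry, and $\mathcal L\cap\mathcal R(t_0)=\emptyset$.
\item The cyclic assignment of Result~\ref{r-2} anchored at $45^\circ$ targets the square $\{45^\circ,135^\circ,225^\circ,315^\circ\}$ at cost $0+0+35^\circ+35^\circ=70^\circ$. The one anchored at $135^\circ$ is the \emph{same} map. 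Anchoring at $260^\circ$ or $280^\circ$ costs $140^\circ$.
\item So the two extremal mirror robots have agreeing cyclic offsets. The optimal assignment is unique, which also shows that $|\mathcal E'(t_0)|\ge 2$ does not imply non-uniqueness, contrary to your Step~2 remark.
\item This assignment commutes with $\mu_{\mathcal L}$, and mirror-symmetric moves attain $\mathcal D^*$. That is exactly what your first attempt tries to exclude.
\end{itemize}
The problem is in fact solvable here: send $260^\circ\to225^\circ$ and $280^\circ\to315^\circ$ along the short arcs. Put the movers at $260^\circ-a$ and $280^\circ+b$ with $0\le a,b\le35^\circ$. Anchoring at $45^\circ$ or $135^\circ$ costs $70^\circ-a-b$. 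Anchoring at a mover costs $140^\circ-3a-b$ or $140^\circ-a-3b$, which is strictly larger unless that mover has already arrived, and then its anchored map is the same one. So the target never changes under any asynchronous, non-rigid schedule.

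Your fallback is therefore the right statement, and it is precisely what the paper's proof actually establishes. The paper uses an \textit{SSYNC} rigid adversary that activates $r_i$ and $\mu_{\mathcal L}(r_i)$ together, the same indistinguishability device as your Step~1. It then simply asserts that $p_i$ and $\mu_{\mathcal L}(p_i)$ are vertices of distinct regular $n$-gons. That holds for its six-robot example in Figure~\ref{Impossible}, but not in general.

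Your Step~2 closes the argument more cleanly than the paper's claim that ``the adversary would force the other to travel a greater distance''. Total traversal $\mathcal D^*$ forces the start-to-end map to be the cyclic assignment $f_k$ anchored at some extremal $r_k$. A symmetric run makes that map commute with $\mu_{\mathcal L}$, so $f_k$ must also fix $\mu_{\mathcal L}(r_k)$.

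Three corrections are needed:
\begin{itemize}
\item The hypothesis must hold for \emph{every} extremal robot. With several extremal mirror pairs, a single pair whose anchored assignment also fixes its mirror lets the symmetric run legitimately end there.
\item The exact condition your argument needs is that $f_k$ does not fix $\mu_{\mathcal L}(r_k)$. The condition $\mathcal P^k(t_0)\neq\mathcal P^{\mu_{\mathcal L}(k)}(t_0)$ is only a sufficient special case.
\item The appeal to Lemma~\ref{lemma1} in Step~1 is unjustified. Two mirror robots meeting on $\mathcal L$ have mirrored frames and may separate, unlike the identically-viewing robots in that lemma. It is also unnecessary, since Step~2 never uses that the pairs stay off $\mathcal L$.
\end{itemize}
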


\begin{proof}\normalfont
Consider the initial configuration
$\mathcal R(t_0)=\{r_1(t_0), r_2(t_0), \ldots, r_6(t_0)\}$ shown in \textbf{Figure}~\ref{Impossible}, which admits a single line of symmetry
$\mathcal L$. Let $\mathscr{A}\in \widetilde{\mathscr A}^*$ be a deterministic algorithm that solves the  \textit{min-sum uniform coverage} on a circle problem. Suppose the robots operate under \textit{SSYNC} scheduler with rigid motion with the additional assumption that $r_i$ and $\mu_{\mathcal L}(r_i)$ are activated simultaneously. Since robots are homogeneous, they execute the same deterministic algorithm. By \textbf{Result}~\ref{symm}, it follows that the views of a robot $r_i$ and $\mu_{\mathcal L}(r_i)$ are exactly same. As the initial configuration was symmetric, the robots would not be able to deterministically break the symmetry in this setting. Now, if a robot $r_i$ identifies the optimal assignment with respect to a robot $r_j$, then $\mu_{\mathcal L}(r_i)$ will identify the assignment with respect to $\mu_{\mathcal L}(r_j)$ to be optimal due to symmetric view. Thus, if $r_i$ moves towards a vertex $p_i$ of $\mathcal{N}_{r_j}$, then $\mu_{\mathcal L}(r_i)$ will also move towards the vertex $\mu_{\mathcal L}(p_i)$ of $\mathcal{N}_{\mu_{\mathcal L}(r_j)}$. As robots exhibit rigid motion, $r_i$ and $\mu_{\mathcal L}(r_i)$ will reach $p_i$ and $\mu_{\mathcal L}(p_i)$, respectively both of which are vertices of distinct regular $n$-gons. To solve \textit{min-sum uniform coverage} on a circle, one of the robots between $r_i$ and $\mu_{\mathcal L}(r_i)$ must move to form a unique regular $n$-gon. Now, if one them decides to move, the adversary would force the other to move, traveling a greater distance than the distance it should travel for optimal assignment, which is a contradiction. Hence, the \textit{min-sum uniform coverage} on a circle problem by {\it oblivious} and {\it silent} robots under an asynchronous scheduler is deterministically unsolvable.   
\end{proof}

\noindent Note that the above argument shows that if an initial configuration admits a single line of symmetry and the \textit{extremal} robots do not lie on the line of symmetry, then uniform circle formation cannot be achieved. The extension of this argument shows that when a configuration belonging to the set $\mathcal I_6$ admits multiple lines of symmetry together with rotational symmetry, and no \textit{extremal} robot lies on any line of symmetry, the ambiguity among optimal assignments cannot be removed by \textit{oblivious} and \textit{silent} robots. Consequently, the \textit{min-sum uniform coverage} on a circle problem is deterministically
unsolvable under an asynchronous scheduler in such configurations.

\subsection*{4.2.2~~Asymmetric Configurations $\mathcal I''_1$: Multiple Optimal Assignments}

\noindent This subsection considers all asymmetric robot configurations $\mathcal I''_1 \in \mathcal I_1$ that admit multiple optimal assignments,  i.e., $|\mathcal{E}'(t)| > 1$. In the absence of symmetry, multiple optimal assignments arise from different choices of \textit{extremal} robots, which may induce the same regular $n$-gon or distinct regular $n$-gons while achieving the same minimum total movement cost. Although the configuration is asymmetric, the presence of multiple optimal assignments introduces ambiguity in the assignment process.
We show how this ambiguity can be systematically resolved using the classes of
partitions and the assignment tree introduced in Section~2.

\begin{algorithm}
\caption{\textit{: AsymM1dMinSumC()}}
\label{alg:AsymM1dMinSumC}
\begin{algorithmic}[1]

\Require Circle $\mathcal{C}$; initial robot positions
$r_1(t_0), r_2(t_0), \ldots, r_n(t_0)$ on the circumference of $\mathcal{C}$
\Ensure Final placement of robots on target configuration
$\mathcal{P} = \{p_1^e, p_2^e, \ldots, p_n^e\}$

\State Compute all \textit{optimal assignments} using \textbf{Result}~\ref{r-2}
\State Identify the set of \textit{\textit{extremal} robots} $\mathcal{E}'$
\State Select one \textit{extremal} robot $r_e \in \mathcal{E}'$ using ordering $\mathscr{O}$
\State Assign destinations $\mathcal P^e=\{p_1^{e}, p_2^{e}, \ldots, p_n^{e}\}$ according to the chosen assignment
\State Compute $d_i = d_{arc}(r_i, p_i^e)$ for each robot $r_i$

\While{not all robots have reached their destinations}
    \For{$i = 1$ to $n$}
        \If{$d_i = 0$}
            \State Mark $r_i$ as \textit{terminated}
        \Else
            \If{$r_i$ has a free path towards $p_i^e$ with minimum distance $d_i$ and minimum order w.r.t $\mathscr{O}$}
                \State Move $r_i \rightarrow p_i^e$ along the arc
                \State Update $d_i$
            \Else
                \State$r_i$ waits
            \EndIf
        \EndIf
    \EndFor
\EndWhile

\end{algorithmic}
\end{algorithm}

\subsubsection*{4.2.2.1~~Overview of the Algorithm \it{AsymM1dMinSumC()}}
\noindent The algorithm \textit{AsymM1dMinSumC()} addresses the \textit{min-sum uniform coverage} on a circle problem for robots initially placed on the circumference of a circle~$\mathcal C$ in cases where multiple optimal assignments exist. The procedure first determines all optimal assignments using \textbf{Result}~\ref{r-2}, which specifies the set of \textit{extremal} robots $\mathcal E$. Then, using \textbf{Observation}~\ref{obs:asymmetric-ordering}, one \textit{extremal} robot $r_e \in \mathcal E$ is chosen as the robot with the minimum view (From \textbf{Observation}~\ref{obs:asymmetric-ordering}) among all such \textit{extremal} robots. The corresponding destination set is fixed as $\mathcal P^e = \{p_1^{e}, p_2^{e}, \ldots, p_n^{e}\}$. In this work, the robots are restricted to move only along the circumference of the circle~$\mathcal C$. Each robot $r_i$ computes its arc distance to the assigned destination point $p_i^{e}$ as $d_{arc}(r_i, p_i^{e})$. A robot is considered \textit{candidate} if its arc path to the destination is unobstructed and corresponds to the minimum arc length; only such \textit{candidate} robots are allowed to move in each iteration, while the others remain stationary. If more than one such \textit{candidate} robot exists, the candidate robot with the minimum arc distance towards its destination and with the minimum order with respect to the \textbf{Observation}~\ref{obs:asymmetric-ordering} in case of ties is selected to move towards its destination. Once a \textit{candidate} robot reaches its destination (i.e., $d_{arc}= 0$), the current iteration terminates, and the procedure repeats accordingly. Since exactly one \textit{candidate} robot progresses toward its destination in every iteration, the algorithm ensures collision-less movement of the robots and finite-time convergence. Eventually, all the robots occupy their designated positions in the destination configuration~$\mathcal P^e$. The pseudo-code corresponding to the procedure is given in the \textbf{Algorithm}~\ref{alg:AsymM1dMinSumC}.

\subsubsection*{4.2.2.2~~Correctness of the Algorithm \it{AsymM1dMinSumC()} }
\noindent We now establish the correctness of the algorithm
\textit{AsymM1dMinSumC()} for asymmetric configurations and  $|\mathcal{E}'(t)| > 1$. In particular, we show that even when multiple optimal assignments exist, fixing a deterministic ordering among the robots is sufficient to ensure a unique and consistent assignment.

\begin{lemma}[Invariance of Multiple Optimal Assignments in $\mathcal I_1''$]
    If an asymmetric robot configuration $\mathcal{R}(t_0)$ admits multiple optimal assignments, then using \textbf{Observation}~\ref{obs:asymmetric-ordering} for $\mathcal R(t_0)$, the unique optimal assignment of robots to the destination set remains invariant with respect to the position of the robot $r_i\in\mathcal E(t_0)$,  selected according to the \textbf{Observation}~\ref{obs:asymmetric-ordering}.
    \label{l-8}
\end{lemma}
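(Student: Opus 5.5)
The plan is to reduce Lemma~\ref{l-8} to the unique-assignment case, Lemma~\ref{l-3}, applied to the selected extremal robot. Let $r_e\in\mathcal E'(t_0)$ be the extremal robot with minimum view under the total order of Observation~\ref{obs:asymmetric-ordering}. It exists and is well defined because $\mathcal R(t_0)$ is asymmetric, so all views are distinct. Fixing $r_e$ determines $\mathcal P^e$ through Result~\ref{r-2}, and this gives one optimal assignment $f_e$ of cost $\mathcal D^*$. We need two facts about every later configuration $\mathcal R(t)$, $t\in[t_0,t_1]$, reached while a candidate $r_j$ moves toward $p_j^e$ (at most one robot moves, as in \textit{AsymM1dMinSumC()}):
\begin{itemize}
\item[(a)] $r_e$ is still extremal, that is $\mathcal D_{r_e}(t)\le \mathcal D_{r_k}(t)$ for all $k$;
\item[(b)] among the extremal robots at time $t$ whose destination set differs from $\mathcal P^e$, none has a smaller view than $r_e$.
\end{itemize}
Together these show that every robot recomputes the same $\mathcal P^e$.

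For (a), I would rerun the case analysis of Lemma~\ref{l-3}, replacing every strict inequality $\mathcal D^*_{r_e}(t_0)<\mathcal D_{r_k}(t_0)$ by the weak inequality $\le$.

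For robots $r_k$ that were not extremal at $t_0$, the inequality stays strict. The computations in Cases~1 and A--E show
\[
\mathcal D_{r_e}(t_1)=\mathcal D^*_{r_e}(t_0)-x,
\]
while $\mathcal D_{r_k}(t_1)$ decreases by at most $x$. Hence $\mathcal D_{r_e}(t_1)<\mathcal D_{r_k}(t_1)$ for every such $r_k$, and no new extremal robot can appear.

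The robots of $\mathcal E'(t_0)\setminus\{r_e\}$ split into two kinds:
\begin{itemize}
\item Robots whose regular $n$-gon equals $\mathcal N_e(t_0)$. By the argument of Lemma~\ref{l-4}, they produce the same destination set and therefore cause no ambiguity.
\item Robots $r_k$ with a distinct $n$-gon. For these I would show the tie is broken in favour of $r_e$. The point $p_j^k$ differs from $p_j^e$, so once $r_j$ has moved a distance $x>0$ toward $p_j^e$, the cost $\mathcal D_{r_k}$ drops by strictly less than $x$, or even increases. This follows from Cases C and D, where moving past $p_j^k$ or away from it increases the $r_k$-cost by $y_1$ or $y_2$. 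The other robots' contributions are unchanged. Consequently $\mathcal D_{r_e}(t_1)<\mathcal D_{r_k}(t_1)$, and $r_e$ becomes the unique extremal robot, so the configuration falls into class $\mathcal I_1'$.
\end{itemize}
From that point on, Lemma~\ref{l-3} and Lemma~\ref{l-4} take over directly.

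The main obstacle is the sub-case in which $p_j^k$ lies beyond $r_j(t_1)$ in the direction of motion (Cases A, B, E). There both costs drop by exactly $x$, so the tie between $r_e$ and $r_k$ survives. In that situation I would fall back on (b): the tie persists, and the argument has to show that the view ordering still selects $r_e$. Here I expect to need a genuinely new argument, since views change as $r_j$ moves. My first attempt would be to exploit the algorithm's choice of the candidate with minimum $d_j$, hoping to prove that such a candidate always lies in the part of the circle where $\mathcal N_e$ and $\mathcal N_k$ force opposite motion directions, which would reduce to Cases C and D. Failing that, I would argue that while $\mathcal D_{r_e}=\mathcal D_{r_k}$ holds exactly, the relative order of the views of $r_e$ and $r_k$ is preserved, because $r_j$ does not pass either of them (the path is free). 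This is the step I am least certain of.
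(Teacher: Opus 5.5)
There is a genuine gap, and it is the step you flagged. Your property (b) cannot be established because it is false in general, so neither fallback can work.

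Take $n=6$ on a circle of circumference $6$, with $r_1,\dots,r_6$ at $0,\,1.9,\,2.0,\,4.1,\,4.3,\,5.0$. Let $u_l$ be the offset of $r_l$ from the point $l-1$, so $u=(0,\,0.9,\,0,\,1.1,\,0.3,\,0)$. All displacements are below half the circumference, so $\mathcal D_{r_i}=\sum_l|u_l-u_i|$. Hence the extremal robots are $r_1,r_3,r_6$ (one $n$-gon, the points $0,1,\dots,5$) and $r_5$ (the $n$-gon shifted by $0.3$), all of cost $2.3$.

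The gaps $1.9,0.1,2.1,0.2,0.7,1.0$ are pairwise distinct and their reversal is not a cyclic shift of them. So the configuration lies in $\mathcal I_1''$, and $r_3$ (adjacent gap $0.1$) has the minimum view, giving $r_e=r_3$. The robots off target are $r_2,r_4,r_5$. Since $r_5$ is blocked by $r_4$, the candidate is $r_2$ (distance $0.9$ against $1.1$ for $r_4$), moving from $1.9$ towards $p_2^e=1.0$.

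When $r_2$ is at $1.7$, the configuration is still asymmetric and both $n$-gons still cost $2.1$. But the minimum view among the extremal robots is now that of $r_5$: adjacent gap $0.2$, against $0.3$ for $r_3$, $0.7$ for $r_6$ and $1.0$ for $r_1$. So every robot recomputes $\mathcal P^5\neq\mathcal P^e$.

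This example defeats both fallbacks:
\begin{itemize}
\item Your second fallback fails because $r_2$ passes neither $r_3$ nor $r_5$, yet their view order flips.
\item Your first fallback fails because $p_2^e=1.0$ and $p_2^5=1.3$ lie on the same side of $r_2$. More generally, when the two optimal $n$-gons come from the two median offsets for even $n$, no offset lies strictly between them. So Case D never occurs, and the tie survives every move until some robot enters the median interval.
\end{itemize}

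Separately, your claim that no new extremal robot can appear is wrong. The bound that $\mathcal D_{r_k}$ drops by at most $x$ holds only for stationary $r_k$. The moving robot's own cost $\mathcal D_{r_j}$ can drop by several times $x$ (compare the $kx$ in Case 1 of Lemma~\ref{l-3}). In the median-tie situation above, $r_j$ becomes extremal as soon as it passes $p_j^k$. In the example, stopping $r_2$ at $1.2$ gives $\mathcal D_{r_2}=\mathcal D_{r_3}=1.6<\mathcal D_{r_5}=1.8$. So $r_e$ is not left as the unique extremal robot, and your reduction to $\mathcal I_1'$ fails.

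Property (a) itself is true and needs no case analysis. Moving one robot by $x$ changes the cost of any fixed target set and bijection by at most $x$, so the optimum falls by at most $x$, while $\mathcal D_{r_e}$ falls by exactly $x$. The same observation gives what can actually be proved: for non-concurrent moves, distance travelled plus the current optimum stays equal to $\mathcal D^*$, whichever optimal $n$-gon is selected at each Look.

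Invariance of the selected assignment, in the sense you need (every robot keeps recomputing $\mathcal P^e$), does not hold. The paper's proof does not supply it either: in Subcase II B it breaks ties among extremal robots using ``the ordering at time $t_0$'', which oblivious robots cannot retain. Under ASYNC, the switch above even lets $r_3$ start towards $2.3$ while $r_2$ completes its old move to $1.0$. After that the total cost is at least $0.9+0.3+1.7>2.3=\mathcal D^*$.
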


\begin{figure}[!ht]
    \centering
    \vspace{-0.3cm}
    
    \includegraphics[width=0.5\textwidth]{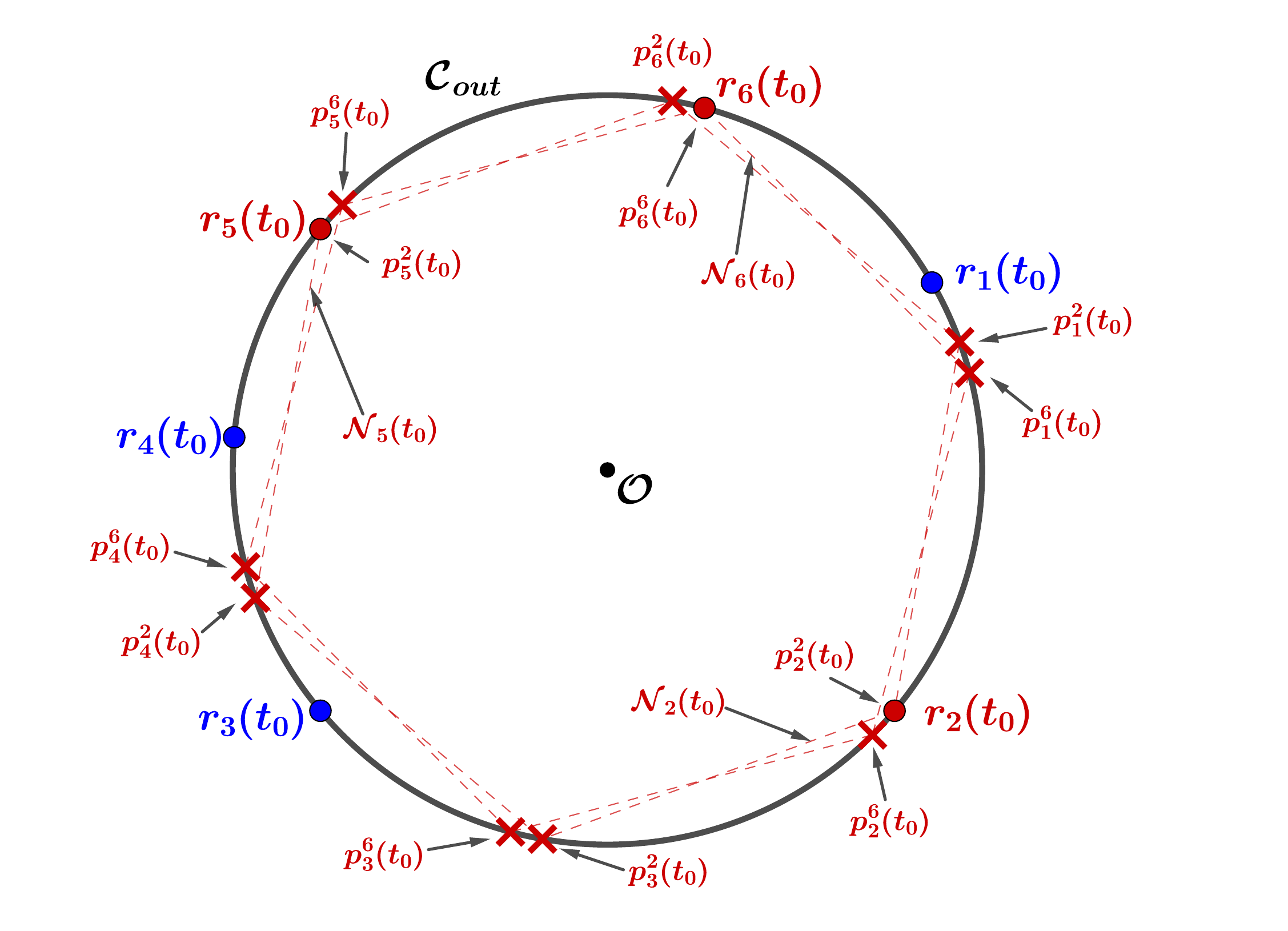}
    \caption{ \textit{An illustration of the initial configuration of six robots $\mathcal{R}(t_0)=\{r_1(t_0),r_2(t_0),\ldots,r_6(t_0)\}$ with the \textit{extremal} robot set $\mathcal E'(t_0)=\{r_2(t_0),r_5(t_0),r_6(t_0)\}$ (red dots). Robot $r_5(t_0)$ lies on a vertex of the regular $n$-gon $\mathcal N_2(t_0)$ determined by robot $r_2(t_0)$. The destination sets determined by fixing the \textit{extremal} robots $r_2(t_0)$ and $r_6(t_0)$ are $\mathcal P^2(t_0)=\{p_1^2(t_0),p_2^2(t_0),\ldots,p_6^2(t_0)\}$ and $\mathcal P^6(t_0)=\{p_1^6(t_0),p_2^6(t_0),\ldots,p_6^6(t_0)\}$ (red crosses), respectively.}}

     \label{Lem-8(1)}

\end{figure}

\begin{proof}
\normalfont Let  
\[
\mathcal{R}(t_0) = \{ r_1(t_0), r_2(t_0), \ldots, r_n(t_0) \}
\]  
denote the asymmetric configuration of robots at time $t_0$, and let the corresponding set of \textit{extremal} robots be  
\[
\mathcal{E}'(t_0) = \{ r_1(t_0), r_2(t_0), \ldots, r_k(t_0) \}, \quad k \leq n.
\]  
According to \textbf{Result}~\ref{r-2}, each \textit{extremal} robot generates a distinct regular co-circular polygon on the circle $\mathcal{C}$ if no others \textit{extremal} robots lie on its vertex of regular polygon. Specifically, for the \textit{extremal} robots  
$r_1(t_0), r_2(t_0), \ldots, r_k(t_0)$, we obtain $k$ such polygons, denoted by  
\[
\mathcal N_1(t_0),\ \mathcal N_2(t_0),\ \ldots,\ \mathcal N_k(t_0).
\]

\noindent Each polygon $\mathcal N_i(t_0)$, associated with \textit{extremal} robot $r_i(t_0) \in \mathcal{E}'(t_0)$, consists of $n$ vertices positioned on the circumference of $\mathcal{C}$. These vertices serve as destination points for the robots in $\mathcal{R}(t_0)$. Let the vertex set of $\mathcal N_i(t_0)$ be expressed as
\[
\{ p_1^{i}(t_0),\ p_2^{i}(t_0),\ \ldots,\ p_n^{i}(t_0) \}
\quad \text{(see \textbf{Figure}~\ref{Lem-8(1)})}.
\]
 
\noindent
Corresponding to the $k$ \textit{extremal} robots, we obtain $k$ distinct collections of arc-distance values. Each collection is induced by a regular co-circular $n$-gon constructed by fixing one \textit{extremal} robot as a vertex. For each $i = 1,2,\ldots,k$, we represent the arc distances between the robots and their respective destination points associated with the $i$-th regular co-circular polygon in matrix form.

\medskip

\noindent
Formally, for each $i = 1,2,\ldots,k$, define
\[
P_{r_i}(t_0)
=
\begin{bmatrix}
d_{i1}(t_0) & d_{i2}(t_0) & \cdots & d_{in}(t_0)
\end{bmatrix},
\]
where each entry is given by
\[
d_{ij}(t_0)
=
d_{\text{arc}}\!\big(r_j(t_0),\, p_j^{\,i}(t_0)\big),
\qquad
j = 1,2,\ldots,n.
\]

\noindent In particular, we have $d_{ii}(t_0)=0$, since the $i$-th robot already occupies
its corresponding target position, i.e., $r_i(t_0)=p_i^{\,i}(t_0)$, for $i=1, 2, \ldots, k$.

\medskip

\noindent
By collecting all such row vectors, we define the optimal distance matrix
\[
\mathcal{D}^*(t_0)
=
\begin{bmatrix}
P_{r_1}(t_0)\\
P_{r_2}(t_0)\\
\vdots\\
P_{r_k}(t_0)
\end{bmatrix}
=
\big[d_{ij}(t_0)\big]_{k \times n}
=
\left[
\begin{array}{cccc|ccc}
d_{11}(t_0) & d_{12}(t_0) & \cdots & d_{1k}(t_0)
& d_{1,k+1}(t_0) & \cdots & d_{1n}(t_0)\\
d_{21}(t_0) & d_{22}(t_0) & \cdots & d_{2k}(t_0)
& d_{2,k+1}(t_0) & \cdots & d_{2n}(t_0)\\
\vdots      & \vdots      & \ddots & \vdots
& \vdots      & \ddots & \vdots\\
d_{k1}(t_0) & d_{k2}(t_0) & \cdots & d_{kk}(t_0)
& d_{k,k+1}(t_0) & \cdots & d_{kn}(t_0)\\[8pt]
\multicolumn{4}{c}{\underbrace{\hspace{4.2cm}}_{k\times k}}
&
\multicolumn{3}{c}{\underbrace{\hspace{3.0cm}}_{k\times (n-k)}}
\end{array}
\right].
\]

\medskip

\noindent
Equivalently, the distance matrix can be written explicitly as $\mathcal{D}^*(t_0)=$
\vspace{0.5cm}

\resizebox{\dimexpr\paperwidth-3.4cm\relax}{!}{$
\left[
\begin{array}{cccc|ccc}
0 &
d_{arc}(r_2(t_0),p_2^{1}(t_0)) &
\cdots &
d_{arc}(r_k(t_0),p_k^{1}(t_0)) &
d_{arc}(r_{k+1}(t_0),p_{k+1}^{1}(t_0)) &
\cdots &
d_{arc}(r_n(t_0),p_n^{1}(t_0))\\
d_{arc}(r_1(t_0),p_1^{2}(t_0)) &
0 &
\cdots &
d_{arc}(r_k(t_0),p_k^{2}(t_0)) &
d_{arc}(r_{k+1}(t_0),p_{k+1}^{2}(t_0)) &
\cdots &
d_{arc}(r_n(t_0),p_n^{2}(t_0))\\
\vdots & \vdots & \ddots & \vdots & \vdots & \ddots & \vdots\\
d_{arc}(r_1(t_0),p_1^{k}(t_0)) &
d_{arc}(r_2(t_0),p_2^{k}(t_0)) &
\cdots &
0 &
d_{arc}(r_{k+1}(t_0),p_{k+1}^{k}(t_0)) &
\cdots &
d_{arc}(r_n(t_0),p_n^{k}(t_0))
\end{array}
\right]
$}

\vspace{-0.7cm}

\[
\hspace{1.3cm}\underbrace{\hspace{0.46\textwidth}}_{k\times k}
\hspace{0.9cm}
\underbrace{\hspace{0.35\textwidth}}_{k\times(n-k)}
\]

\noindent  Define 

\[
\mathfrak{d}_{ij}(t_0)
=
\begin{cases}
d_{arc}\!\big(r_j(t_0),p_j^{\,i}(t_0)\big),
& \text{if a free path exists for } d_{ij}(t_0),\\
\infty, & \text{otherwise}.
\end{cases}
\]

\[
\mathfrak{d}_i(t_0)
=
\min_{\substack{j=1,\ldots,n\\ j\neq i}}
\mathfrak{d}_{ij}(t_0)
\quad
\text{(See \textbf{Figure}~\ref{Lem-8(2)})}.
\]

\noindent The value of $\mathfrak{d}_{i}(t_0)$ may be either unique or may occur for multiple such $j's$. If
it is unique, then it corresponds to a single \textit{extremal} robot $r_i(t_0)$. But, if it attains multiple times, then it corresponds either to one \textit{extremal} robot or more than one \textit{extremal} robot.

\begin{figure}[!ht]
    \centering
    \vspace{-0.3cm}
    
    \includegraphics[width=0.5\textwidth]{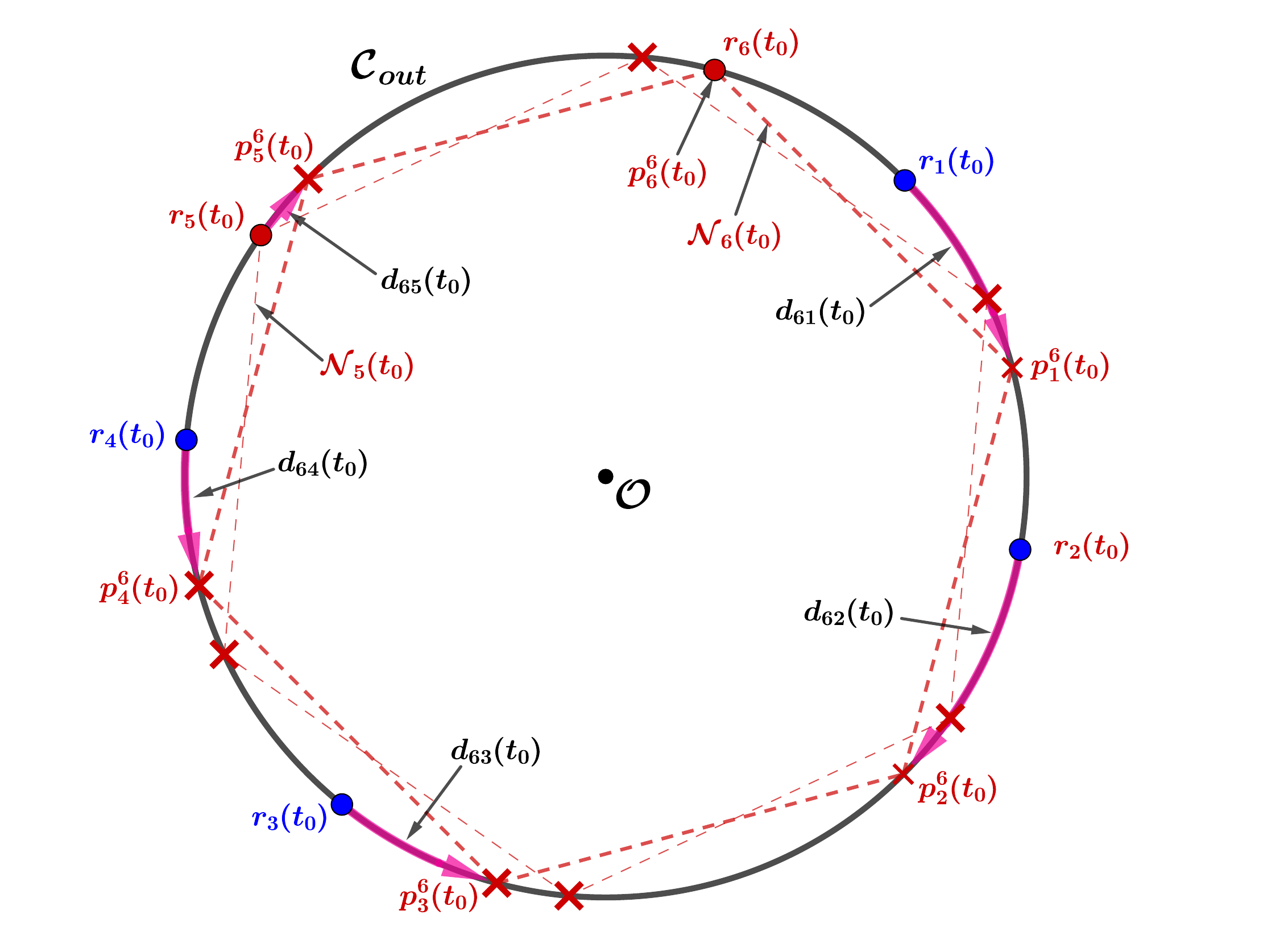}
   \caption{ \textit{An illustration of the initial configuration of six robots $\mathcal{R}(t_0)=\{r_1(t_0),r_2(t_0),\ldots,r_6(t_0)\}$ with the \textit{extremal} robot set $\mathcal E'(t_0)=\{r_5(t_0),r_6(t_0)\}$. Using \textbf{Observation}~\ref{obs:asymmetric-ordering} on $\mathcal{R}(t_0)$, the \textit{extremal} robot $r_6(t_0)$ (red dot) is selected. The destination set determined by fixing $r_6(t_0)$ is $\mathcal P^6(t_0)=\{p_1^6(t_0),p_2^6(t_0),\ldots,p_6^6(t_0)\}$ (red crosses). The arc paths $d_{61}(t_0), d_{62}(t_0), d_{63}(t_0),d_{64}(t_0)$ and $d_{65}(t_0)$ (pink arcs) are free. Hence, $\mathfrak{d}(t_0)=\min\{d_{61}(t_0),\, d_{62}(t_0),\, d_{63}(t_0),\, d_{64}(t_0),\, d_{65}(t_0)\}=d_{65}(t_0)$, and the \textit{candidate} robot is $r_5(t_0)$.}}

     \label{Lem-8(2)}

\end{figure}

\begin{itemize}
    \item[\textbf{Case-I:}] \noindent Suppose that the value of $\mathfrak{d}_i(t_0)$ is unique for a single \textit{extremal} robot, say $r_i(t_0)\in \mathcal E(t_0)$. Let $r_c(t_0)\in\mathcal R(t_0)$ be the \textit{candidate} robot such that 
    \[
    \mathfrak{d}_i(t_0)=d_{ic}(t_0)=d_{arc}\Big(r_c(t_0),p_c^{i}(t_0)\Big).
    \]
    In this situation, the \textit{candidate} robot $r_c(t_0)$ may be either an \textit{extremal} or a \textit{non-\textit{extremal}} robot.

    \begin{itemize}
        \item \textbf{Subcase-IA:} Consider the \textit{extremal} robot $r_m(t_0)$, and suppose that the \textit{candidate} robot $r_c(t_0)$ is also an \textit{extremal} robot for which $\mathfrak{d}_i(t_0)$ is unique. Then
        \begin{equation}
            \mathfrak{d}_i(t_0)=d_{mc}(t_0)=d_{arc}\Big(r_c(t_0),p_c^{m}(t_0)\Big).
            \label{eq2}
        \end{equation}
        Suppose that at time $t_1>t_0$, the robot $r_c(t_0)$ reaches a point $p_c(t_1)$ on the circle $\mathcal C$, before arriving at $p_c^{m}(t_0)$. Define $x=d_{arc}\Big(r_c(t_0),p_c(t_1)\Big)>0$. Then
        \begin{equation}
            d_{mc}(t_1)=d_{mc}(t_0)-x=\mathfrak{d}_i(t_1)\ \text{(say)}.
            \label{eq3}
        \end{equation}
        
        Since $r_c(t_0)$ is an \textit{extremal} robot, we have
\[
P_{r_c}(t_0)
=
\begin{bmatrix}
d_{c1}(t_0) & d_{c2}(t_0) & \cdots & d_{cn}(t_0)
\end{bmatrix}.
\]

  Let 

        \[
\mathfrak{d}_{cj}(t_0)
=
\begin{cases}
d_{arc}\!\big(r_j(t_0),p_j^{\,c}(t_0)\big),
& \text{if a free path exists for } d_{cj}(t_0),\\
\infty, & \text{otherwise}.
\end{cases}
\]

\[
\mathfrak{d}_c(t_0)
=
\min_{\substack{j=1,\ldots,n\\ j\neq c}}
\mathfrak{d}_{cj}(t_0).
\]

        By assumption, $\mathfrak{d}_i(t_0)$ is unique, hence
        \begin{equation}
            \mathfrak{d}_i(t_0)<\mathfrak{d}_c(t_0).
            \label{eq4}
        \end{equation}
        
        At time $t_1$, the robot $r_c$ is positioned at $p_c(t_1)$, and the distance set becomes
\[
P_{r_c}(t_1)
=
\begin{bmatrix}
d_{c1}(t_1) & d_{c2}(t_1) & \cdots & d_{cn}(t_1)
\end{bmatrix}.
\]

Then, for $j=1,2,\ldots,n$, the following relations hold, depending on whether the motion is clockwise or counterclockwise:
\begin{equation}
            d_{cj}(t_1)=\mathfrak{d}_c(t_0)-x, \qquad j\in \{1,2,\ldots,n\},
            \label{eq5}
\end{equation}
\begin{equation}
            d_{ch}(t_1)=\mathfrak{d}_c(t_0)+x, \qquad h\in \{1,2,\ldots,n\}\setminus\{j\}.
            \label{eq6}
        \end{equation}

\noindent
At time $t_1$, we define the free-path arc distance associated with robot $r_c$ as
\[
\mathfrak{d}_{cm}(t_1)
=
\begin{cases}
d_{cm}(t_1),
& \text{if } d_{cm}(t_1)\text{ admits a free path},\\[4pt]
\infty, & \text{otherwise},
\end{cases}
\qquad m=1,2,\ldots,n.
\]
Accordingly, the minimum free-path arc distance of robot $r_c$ at time $t_1$ is
\[
\mathfrak{d}_c(t_1)
=
\min_{\substack{m=1,\ldots,n\\ m\neq c}}
\mathfrak{d}_{cj}(t_1)
=
\min_{m=j\cup h}
\{d_{cj}(t_1),\, d_{ch}(t_1)\}
\]


        

        which implies
        \begin{equation}
            \mathfrak{d}_c(t_1)=\mathfrak{d}_c(t_0)-x.
            \label{eq7}
        \end{equation}
        
        Our claim is that
        \begin{equation}
            d_{ic}(t_1)=\mathfrak{d}_i(t_1)<\mathfrak{d}_c(t_1).
            \label{eq8}
        \end{equation}
        We will prove this result using the method of contradiction. Thus, we assume that $\mathfrak{d}_i(t_1)>\mathfrak{d}_c(t_1)$. Since $x>0$, it follows that
        \begin{equation}
           \mathfrak{d}_i(t_1)+x>\mathfrak{d}_c(t_1)+x.
           \label{eq9}
        \end{equation}
        Substituting from \eqref{eq3} and \eqref{eq7}, equation \eqref{eq9} yields
        \begin{equation}
           \mathfrak{d}_i(t_0)>\mathfrak{d}_c(t_0).
           \label{eq10}
        \end{equation}
        which contradicts \eqref{eq4}. Hence, $\mathfrak{d}_i(t_1)<\mathfrak{d}_c(t_1)$, as claimed. 
        
        Since only the \textit{candidate} robot $r_c(t_0)\in \mathcal{E}'(t_0)$ moves towards the point $p_c(t_1)$ relative to the \textit{extremal} robot $r_m$, while all other $(n-1)$ robots remain stationary, it follows that at time $t_1$ the pair $(r_c$,  $r_m)$ is still uniquely selected. This selection continues to hold for each of the remaining $(k-1)$ \textit{extremal} robots and $(n-(k-2))$ \textit{non-\textit{extremal}} robots, until the candidate robot $r_c$ eventually reaches its destination position $p_c^{m}(t_0)$.

    \end{itemize}
    \begin{itemize}
    \item \textbf{Subcase-IB:} Let $r_m(t_0)$ be an \textit{extremal} robot and $r_c(t_0)$ be a candidate robot which is \textit{non-\textit{extremal}}, i.e., 
    \[
        r_c(t_0)\in \mathcal R(t_0)\setminus \mathcal E(t_0),
    \] and 
    for which $\mathfrak{d}_i(t_0)$ is uniquely defined. Then,  
    \begin{equation}
        \mathfrak{d}_i(t_0)=d_{mc}(t_0)=d_{arc}\Big(r_c(t_0),p_c^{m}(t_0)\Big).
        \label{eq12}
    \end{equation}

    Initially, $r_c(t_0)$ is not an \textit{extremal} robot. From \textbf{Lemma}~\ref{l-3}, it follows that at any time instant $t_1$, robot $r_c$ at position $p_c(t_1)$ will also not become an \textit{extremal} robot. Hence, the distance 
    \[
        d_{arc}\Big(p_c(t_1),p_c^{m}(t_0)\Big)
    \]
    remains minimum, and the pair $(r_m, r_c)$ continues to be selected uniquely.  

    Since we assumed that $d_{arc}\Big(r_c(t_0),p_c^{m}(t_0)\Big)$ is minimum for the \textit{extremal} robot $r_m(t_0)$ and that $r_c$ moves towards $p_c^{m}(t_0)$ while all other robots remain stationary, at time $t_1$ the pair $(r_m, r_c)$ is again selected uniquely. Hence, for the remaining $(k-1)$ \textit{extremal} robots, the pair $(r_m, r_c)$ is again uniquely determined.

\noindent Therefore, if $\mathfrak{d}_i(t_0)$ is unique for a particular \textit{extremal} robot, the optimal assignment of that \textit{extremal} robot can always be identified with respect to all the other robots.
\end{itemize}

    \begin{figure}[!ht]
    \centering
    \vspace{-0.3cm}
    
    \includegraphics[width=0.5\textwidth]{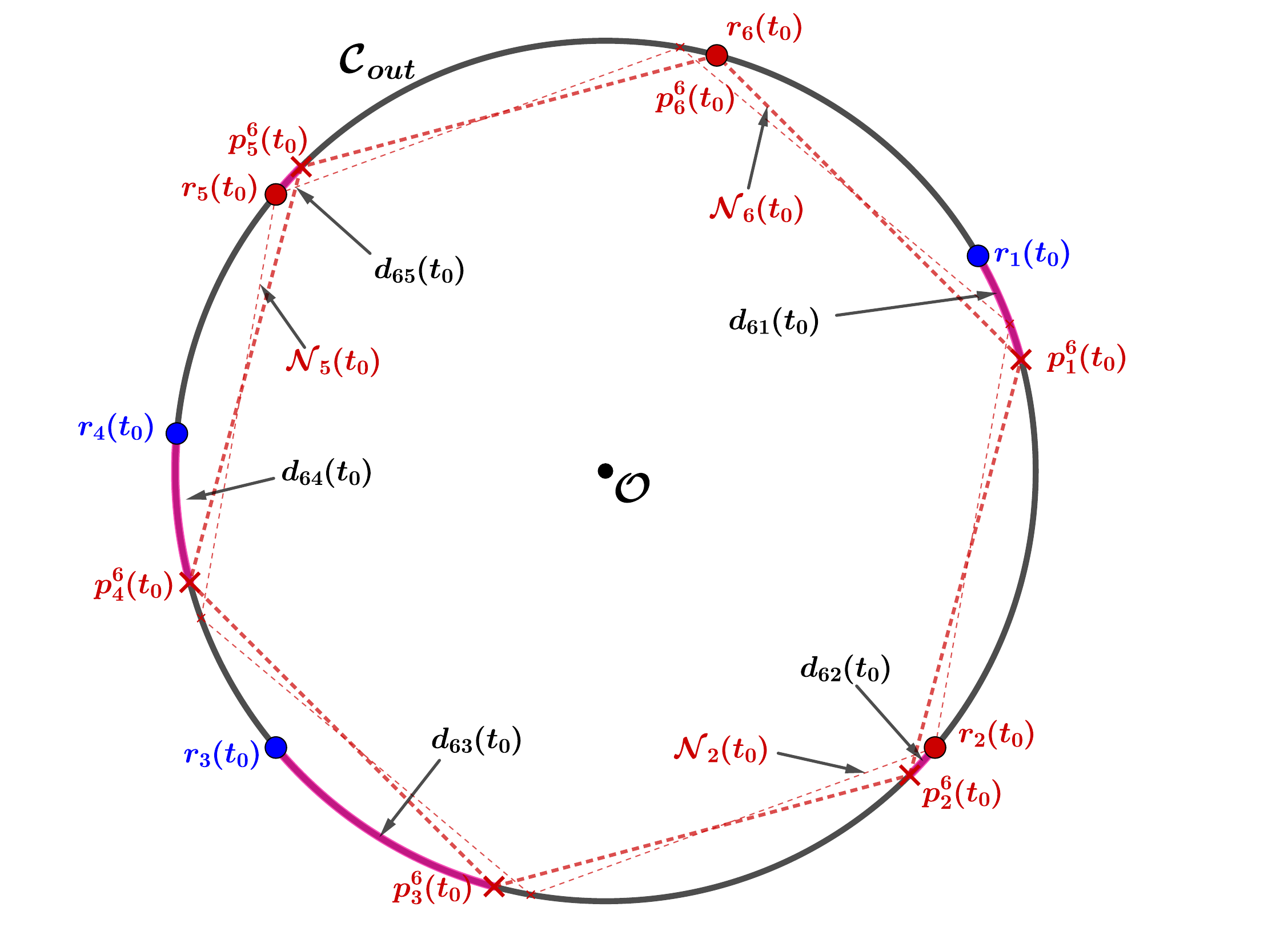}
   \caption{ \textit{An illustration of the initial configuration of six robots $\mathcal{R}(t_0)=\{r_1(t_0),r_2(t_0),\ldots,r_6(t_0)\}$ with the \textit{extremal} robot set $\mathcal E'(t_0)=\{r_2(t_0),r_5(t_0),r_6(t_0)\}$. Using \textbf{Observation}~\ref{obs:asymmetric-ordering} on $\mathcal{R}(t_0)$, the \textit{extremal} robot $r_6(t_0)$ (red dot) is selected. The destination set determined by fixing $r_6(t_0)$ is $\mathcal P^6(t_0)=\{p_1^2(t_0),p_2^2(t_0),\ldots,p_6^2(t_0)\}$ (red crosses). Here, $\mathfrak{d}(t_0)=\min\{d_{61}(t_0),\, d_{62}(t_0),\, d_{63}(t_0),\,d_{64}(t_0),\,d_{65}(t_0)\}=\{d_{62}(t_0),\,d_{65}(t_0)\}$, and according to the \textbf{Observation}~\ref{obs:asymmetric-ordering}, the \textit{candidate} robot is $r_2$.}}
    
     \label{Lem-8(3)}

\end{figure}

\item[\textbf{Case-II:}] Now consider the case when the value of $\mathfrak{d}_i(t_0)$ is not unique.  In such a scenario, $\mathfrak{d}_i(t_0)$ may correspond to multiple values,  either for a single \textit{extremal} robot or for multiple \textit{extremal} robots.

\begin{itemize}
    \item \textbf{Subcase-II A:} Suppose the value of $\mathfrak{d}_i(t_0)$ is not unique for a single \textit{extremal} robot, say $r_i$.  Consider the corresponding distance set with respect to the \textit{extremal} robot $r_i$ at time $t_0$: 

\[
\resizebox{\dimexpr\paperwidth-5.8cm\relax}{!}{$
P_{r_i}(t_0)
=
\begin{bmatrix}
d_{i1}(t_0) & d_{i2}(t_0) & \cdots & d_{in}(t_0)
\end{bmatrix}
=
\begin{bmatrix}
d_{arc}\!\big(r_1(t_0), p_1^{i}(t_0)\big) &
d_{arc}\!\big(r_2(t_0), p_2^{i}(t_0)\big) &
\cdots &
d_{arc}\!\big(r_n(t_0), p_n^{i}(t_0)\big)
\end{bmatrix}
$}
\]

  In this case, we have
\[
\mathfrak{d}_i(t_0)
=
d_{ij}(t_0)
=
d_{ik}(t_0),
\quad
\text{for some \textit{candidate} robots } r_j \text{ and } r_k
\ \text{(See \textbf{Figure}~\ref{Lem-8(3)})}.
\]


       The robot configuration $\mathcal R(t_0)$ is asymmetric and using \textbf{Observation}~\ref{obs:asymmetric-ordering} for the robots in $\mathcal R(t_0)$.  Suppose the \textit{candidate} robot $r_j$, as determined by \textbf{Observation}~\ref{obs:asymmetric-ordering}, moves to a point $p_i(t_1)$ on the circle $\mathcal C$.  In this case, the value of $\mathfrak{d}_i(t_0)$ decreases by the arc length $x$ and becomes unique at time $t_1$. Following the same reasoning as in \textbf{Subcase-IA}, it can be shown that the optimal assignment of the \textit{extremal} robot $r_i$ can always be determined with respect to all other robots.

     \item \textbf{Subcase-II B:} Now suppose that the value of $\mathfrak{d}_i(t_0)$ is the same for several \textit{extremal} robots. In this case, the configuration $\mathcal R(t_0)$ is asymmetric and therefore admits a fixed ordering of the robots using \textbf{Observation}~\ref{obs:asymmetric-ordering}. Using this ordering at time $t_0$, one of the \textit{extremal} robots is selected, say $r_j$. As in \textbf{Subcase-IA}, once $r_j$ is fixed, its optimal assignment with respect to all other robots can be uniquely determined.
\end{itemize}

\end{itemize}

\noindent Hence, in an asymmetric robot configuration $\mathcal R(t_0)$ with multiple optimal assignments, the unique optimal assignment remains fixed with respect to the \textit{extremal} robot selected according to the \textbf{Observation}~\ref{obs:asymmetric-ordering}.

\end{proof}

  \begin{observation}
\label{obs-02}
The algorithm \textit{AsymM1dMinSumC()} achieves collision-free optimal placement of robots for the min-sum problem on the circle $\mathcal C$.
\end{observation}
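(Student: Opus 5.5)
The plan follows the template of Observation~\ref{obs-01} for \textit{AsymU1dMinSumC()}, now using Lemma~\ref{l-8} in place of Lemma~\ref{l-3}. There are three things to establish: \emph{(i)} the target set $\mathcal P^e$ stays the same throughout the execution; \emph{(ii)} only one robot moves at a time, and only along an unobstructed arc; \emph{(iii)} the total distance travelled equals $\mathcal D^*$.

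For \emph{(i)}, the initial configuration is asymmetric, so Observation~\ref{obs:asymmetric-ordering} gives a total order on the robots. This order selects a unique \textit{extremal} robot $r_e\in\mathcal E'(t_0)$, and hence a unique assignment $f_e$.

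I then invoke Lemma~\ref{l-8}: while the single selected candidate $r_c$ is at an intermediate position (Definition~\ref{Intermediate}), every robot observing the configuration recomputes the same $r_e$ and the same $\mathcal P^e$. When $r_c$ arrives at $p_c^e$, I use Lemma~\ref{l-4}: it shows the regular $n$-gon computed relative to $r_e$ coincides with the one computed relative to $r_c$. So even if the set of \textit{extremal} robots grows at that moment, all its members generate the same destination set $\mathcal P^e$. Iterating over the successive candidates keeps the destination set fixed.

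For \emph{(ii)}, I argue by the selection rule. The candidate is the robot with a free arc, minimum $d_i$, and minimum order under $\mathscr O$. This choice is unique because the order is total.

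Since the scheduler is \textit{ASYNC}, I must also rule out a second robot moving based on a stale snapshot. A robot that looks while $r_c$ is moving sees $r_c$ at an intermediate position. Because the assignment is unchanged and $r_c$'s remaining distance has only decreased, $r_c$ is still the selected candidate, so the observer either is $r_c$ or waits. The assigned arc is free and all other robots are stationary, so $r_c$ cannot meet another robot. Moreover, the assignment of Result~\ref{r-2} preserves the cyclic order of robots, so no robot ever has to cross another.

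For \emph{(iii)}, each robot moves only toward its fixed target along the arc fixed at $t_0$, and never reverses direction. Hence each robot travels exactly $d_{arc}(r_i(t_0),p_i^e)$, and the total is $\mathcal D^*_{r_e}(t_0)=\mathcal D^*$.

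Termination follows as in Lemma~\ref{l-2}. Every non-null move of the candidate reduces $\sum_i d_i$ by at least $\min(\delta, d_c)$. Since $n$ is finite and the scheduler is fair, all $d_i$ reach $0$ in finite time.

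The main obstacle is step \emph{(i)} at the moments when a candidate lands on a vertex, or passes a point, that creates new ties in $\mathcal E'$ or in the candidate selection. At such a moment the $\mathscr O$-order of the \emph{current} configuration could pick a different extremal robot than the one picked at $t_0$. I would handle this first by showing, via Lemma~\ref{l-4}, that any newly \textit{extremal} robot lies on $\mathcal N_e$. All extremal choices then yield the same $\mathcal P^e$, so the tie-breaking choice among them is irrelevant.
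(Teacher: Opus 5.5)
Your route matches the paper's. Observation~\ref{obs-02} is stated without a separate proof and rests on Lemma~\ref{l-8} plus the rule that a single candidate moves along a free arc, just as Observation~\ref{obs-01} follows Lemma~\ref{l-3}. Your parts (ii), (iii) and the termination argument are fine once $\mathcal P^e$ is known to stay fixed.

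The gap is in (i), at exactly the point you flag. Your resolution is that every extremal robot then generates $\mathcal P^e$, so the $\mathscr O$ tie-break is irrelevant. This is false. Lemma~\ref{l-4} only says that the $n$-gon through the robot that has just landed is $\mathcal N_e$. It says nothing about robots that were already extremal at $t_0$ and generate a \emph{different} $n$-gon. The paper explicitly allows such robots in $\mathcal I''_1$, and these ties survive the candidate's motion.

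Here is an example. Take a circle of circumference $4$, so vertices of a regular $4$-gon are at arc distance $1$, and put robots at arc positions $0, 1.1, 2.3, 3.35$; this configuration is asymmetric.
\begin{itemize}
\item The squares through $1.1$ and through $2.3$ both cost $0.55$, while those through $0$ and $3.35$ cost $0.75$ and $0.65$. So $\mathcal E'(t_0)$ consists of two robots generating different squares.
\item Whichever is chosen, the first candidate is the robot at $0$ going to $0.1$, resp.\ the one at $3.35$ going to $3.3$. Its target in the other square lies on the same side and farther away, so both costs drop by exactly the distance it moves.
\item After it lands, both squares are still optimal, and $\mathcal E'$ still contains a robot generating the square that was not chosen.
\end{itemize}

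Since the robots are oblivious, each Look recomputes the minimum-view extremal robot from the current configuration, whose views have changed. So ``the same $r_e$'' is precisely what must be proved, at intermediate positions as well as at landings. Citing Lemma~\ref{l-8} does not supply it, because its proof fixes $r_e$ ``using this ordering at time $t_0$'', which the robots cannot recall.

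You need either a proof that the minimum-view extremal robot always lies on $\mathcal N_e$, or a proof that switching between tied optimal $n$-gons is harmless. The second holds for strictly sequential moves. Moving one robot a distance $x$ toward its target in an optimal $n$-gon lowers that $n$-gon's cost by exactly $x$ and no $n$-gon's cost by more than $x$. Hence $\mathcal D^*$ drops by exactly the distance moved, and no new $n$-gon becomes optimal.

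Under \textit{ASYNC}, however, a switch can leave a robot executing a stale move computed for the old $n$-gon while a new candidate moves toward the new one. Then (ii) and (iii) both fail. This is not cosmetic: in the example above, moving the robots at $1.1$ and $2.3$ to $1.3$ and $2.1$ (each one's target in the other square) spends $0.4$ and still leaves an optimal remaining cost of $0.55$.

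Relatedly, your claim that an observer during $r_c$'s move either is $r_c$ or waits is not right even with fixed targets. The motion of $r_c$ can unblock a robot with a smaller remaining distance, which then moves concurrently. That is harmless only because the targets are fixed and the assignment preserves cyclic order.
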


\subsection*{4.2.3~~Rotationally Symmetric Configurations $\mathcal I_4$: Multiple Optimal Assignments}

This subsection studies robot configurations that admit rotational symmetry but no line of symmetry. In such configurations, the rotational symmetry of the initial placement leads to multiple assignments achieving the same minimum total movement cost. Since no line of symmetry is present, these configurations exhibit a specific structural property that must be explicitly identified in order to design a correct
algorithm. To formalize this structure, we begin by defining the order of rotational symmetry of a configuration and rotational equivalence Classes.

\begin{definition}[Order of Rotational Symmetry]
\label{def-rot-order-w}
A robot configuration $\mathcal{R}(t_0)$ on the circumference $\mathcal C_{out}$
is said to admit \emph{rotational symmetry of order $w \ge 2$} if $w$ is the largest integer such that a rotation of angle $2\pi/w$ about the center of $\mathcal C$ maps the configuration onto itself. That is, for every robot position $r_i(t_0)\in\mathcal{R}(t_0)$, there exists a robot $r_j(t_0)\in\mathcal{R}(t_0)$ such that
\[
r_j(t_0) = \rho(r_i(t_0)),
\]
where $\rho$ denotes the rotation by angle $2\pi/w$.
\end{definition}

\begin{definition}[Rotational Equivalence Classes]
\label{def-rot-equiv}
Let $\mathcal{R}(t_0)$ be a robot configuration that admits rotational symmetry of order $w \ge 2$, as defined above. Let
$\rho : \mathcal C_{out} \rightarrow \mathcal C_{out}$ denote the rotation by angle $2\pi/w$ about the center of $\mathcal C$.

\noindent For $k \ge 0$, let $\rho^{k}$ denotes the $k$-fold composition of $\rho$, i.e.,
$\rho^{k} = \underbrace{\rho \circ \rho \circ \cdots \circ \rho}_{k\text{ times}}$, with $\rho^{0}$ being the identity mapping.  
The \emph{rotational equivalence class} of a robot $r_i(t_0)$ is defined as
\[
\mathcal R(r_i)
=
\big\{
\rho^{k}(r_i(t_0)) \;\big|\; k = 0,1,\ldots,w-1
\big\}.
\]

\noindent The set $\mathcal{R}(t_0)$ is thus partitioned into $w$ disjoint rotational equivalence classes $\mathcal{R}_0, \mathcal{R}_1, \ldots, \mathcal{R}_{w-1}$, where each class is mapped to another under the action of $\rho$.
\end{definition}

\begin{lemma}
\label{lemma-extremal-rot}
If an initial configuration admits rotational symmetry of order $w \ge 2$ with no line of symmetry, then all \textit{extremal} robots belong to a single rotational equivalence class. All other rotational classes contain only non-\textit{extremal} robots.
\end{lemma}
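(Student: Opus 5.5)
The plan is to study how the cost $\mathcal{D}_{r_i}(t_0)$ behaves under the rotation $\rho$ of Definition~\ref{def-rot-equiv}, in the same way Lemma~\ref{l-5} handled reflections. The first step is to show that $\rho$ permutes $\mathcal{E}'(t_0)$. Fix $r_i(t_0)$ and build $\mathcal{N}_i(t_0)$ with its assignment $f_i(t_0)$ as in Result~\ref{r-2}. Applying $\rho$ to every robot and every vertex gives a new assignment. It fixes $\rho(r_i(t_0))$, and the configuration is unchanged as a set because $\rho(\mathcal{R}(t_0))=\mathcal{R}(t_0)$. Rotation preserves arc distances, so $\mathcal{D}_{\rho(r_i)}(t_0)=\mathcal{D}_{r_i}(t_0)$. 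Iterating over $k=0,\dots,w-1$ shows that the cost is constant on each class $\mathcal{R}(r_i)$. Hence $\mathcal{E}'(t_0)$ is a union of whole rotational equivalence classes, and every class is either entirely extremal or entirely non-extremal.

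The second step records a structural fact. Since $w$ divides $n$, write $n=wm$. The angle $2\pi/w$ equals $m\cdot 2\pi/n$, which is a multiple of the polygon step. Therefore $\mathcal{P}^{i}(t_0)=\mathcal{P}^{\rho(i)}(t_0)$ as point sets, and each rotational class induces exactly one candidate regular $n$-gon. This reduces the lemma to showing that at most one class attains the minimum cost $\mathcal{D}^*$.

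The third step, which I expect to be the main obstacle, is to exclude two distinct classes $\mathcal{R}_a\neq\mathcal{R}_b$ that are both extremal. I would parametrize the candidate polygons by a rotation angle $\theta\in[0,2\pi/n)$. Along the lines of Result~\ref{r-2}, let $F(\theta)$ be the minimum-cost assignment value for the polygon at angle $\theta$. Then $F$ is continuous and piecewise linear, with breakpoints only where a vertex meets a robot or an antipodal point. Its slopes are (difference between the number of clockwise- and counterclockwise-assigned robots)$\times$(unit rate), and by Result~\ref{r-02} its minima occur at robot positions. Two extremal classes would give two global minimizers $\theta_a\neq\theta_b$ of $F$.

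I would then try to derive a contradiction with the absence of a line of symmetry. The idea is to compare the clockwise/counterclockwise counts of the sets $R_{\mathrm{cw}}$ and $R_{\mathrm{ccw}}$ used in Lemma~\ref{l-3}, both at $\theta_a$ and at $\theta_b$. The aim is to show that equality $F(\theta_a)=F(\theta_b)$ forces the gap sequence seen from $\mathcal{R}_a$ in the clockwise direction to match the sequence seen from $\mathcal{R}_b$ in the counterclockwise direction, i.e.\ $\mathscr{V}^+(r_a)=\mathscr{V}^-(r_b)$. By Result~\ref{symm}, that equality would produce a line of symmetry, contradicting the hypothesis.

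I am uncertain that equal costs really force this mirror relation rather than just a numerical coincidence. If the argument fails, the fallback is to prove the weaker but algorithmically sufficient statement. Distinct extremal classes induce distinct polygons, and without a line of symmetry the classes are totally ordered by the lexicographic minimum view of their members, as in Observation~\ref{obs:asymmetric-ordering} applied to class representatives. One class can therefore be singled out deterministically, and it plays the role of the unique extremal class in the rest of the argument.
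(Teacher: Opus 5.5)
Your Steps 1 and 2 are correct and match the first half of the paper's proof: robots in one class have identical views, so extremality is a property of whole classes. The gap is Step 3, and it cannot be closed, because the lemma is false. Equal costs do not force $\mathscr V^+(r_a)=\mathscr V^-(r_b)$. The paper gets past this step only by asserting, without argument, that two optimal $n$-gons ``imply a reflection''.

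\textbf{Counterexample.} Take $n=8$ robots at $0^\circ,10^\circ,40^\circ,90^\circ,180^\circ,190^\circ,220^\circ,270^\circ$.
\begin{itemize}
\item The gap sequence is $(10,30,50,90)$ repeated twice, so $w=2$.
\item Its reversal $(90,50,30,10)$ is not a cyclic shift of it, so there is no line of symmetry.
\item Measuring arcs in degrees, fixing $10^\circ$ (octagon $\{10^\circ+45^\circ k\}$) costs $2(35+15+10)=120$.
\item Fixing $90^\circ$ (octagon $\{45^\circ k\}$) costs $2(45+10+5)=120$.
\item Fixing $0^\circ$ or $40^\circ$ costs $260$ or $140$, respectively.
\end{itemize}
By Result~\ref{r-2}, $\mathcal D^*=120$ and $\mathcal E'(t_0)=\{10^\circ,190^\circ\}\cup\{90^\circ,270^\circ\}$. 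These are two rotational classes, and they even determine different octagons.

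This is generic, not a numerical coincidence. Let the robot angles be $a_1<\dots<a_n$ and set $b_j=a_j-2\pi j/n$. Result~\ref{r-2} then gives $\mathcal D_{r_i}=\sum_j d_{arc}(b_j,b_i)$. Every rotational class collapses to a single point of multiplicity $w$, and all these points lie in an arc shorter than $2\pi/w\le\pi$. So choosing the extremal robot is a one-dimensional median problem, and whenever $n/w$ is even, both middle classes are extremal.

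Hence no argument along the lines of your Step 3, or the paper's, can work. Your fallback is the right repair, but it proves a different statement from the lemma. If two robots in different classes had equal views, there would be either a rotation between them (so the same class) or a reflection (Result~\ref{symm}). So distinct classes have distinct views, and one extremal class, e.g.\ the one of minimum view, can be selected deterministically to fix the destination set.

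Drop your side claim that distinct extremal classes induce distinct polygons. Take $n=6$ robots at $0^\circ,60^\circ,100^\circ,180^\circ,240^\circ,280^\circ$: the gaps are $(60,40,80)$ twice, and there is no line of symmetry. The classes $\{0^\circ,180^\circ\}$ and $\{60^\circ,240^\circ\}$ are both extremal with cost $40$ (the third class gives $80$), and they induce the same hexagon. This is harmless for your selection rule, but it is a second counterexample to the lemma as stated.
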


\begin{proof}\normalfont
Consider an initial configuration $\mathcal{R}(t_0) \in \mathcal I_4$ placed on a circle $\mathcal{C}$ (See \textbf{Figure}~\ref{Rot}). The rotational symmetry partitions the robots into $w$ disjoint rotational equivalence classes, where robots in the same class are mapped to each other by rotations of angle $2\pi/w$.

\noindent Assume that the \textit{extremal} robots belong to more than one equivalence class. Due to rotational symmetry, the presence of an \textit{extremal} robot in an equivalence class implies that every robot in that class is \textit{extremal}, as all robots within the class have identical views. Consequently, there would exist at least two distinct regular $n$-gons determined by different sets of \textit{extremal} robots. Since all \textit{extremal} robots lie on the same circle $\mathcal{C}$, the existence of more than one such regular $n$-gon implies that the configuration admits a reflection mapping from one set of \textit{extremal} robots to another. This reflection induces a line of symmetry in the configuration, which contradicts the assumption that the configuration admits no line of symmetry. Therefore, all \textit{extremal} robots must belong to a single rotational equivalence class. It follows that the remaining $w-1$ rotational classes contain only non-\textit{extremal} robots.
\end{proof}

\noindent It is important to note that, although the problem is solvable in this case, the presence of rotational symmetry implies that more than one optimal assignment can exist. This property is captured in the following observation.

\begin{observation}
\label{obs-rot-no-unique}
If an initial robot configuration $\mathcal R(t_0)\in \mathcal I_4$, then the optimal assignment for uniform circle formation is not unique.
\end{observation}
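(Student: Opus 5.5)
The plan is to show that rotational symmetry transports optimal assignments, so that an optimal assignment for a configuration in $\mathcal I_4$ is carried to a different optimal assignment. Let $\mathcal R(t_0)\in\mathcal I_4$ have rotational symmetry of order $w\ge 2$, and let $\rho$ be the rotation by $2\pi/w$ from \textbf{Definition}~\ref{def-rot-order-w}. By \textbf{Result}~\ref{r-2}, some robot $r_e$ is \textit{extremal}, so $\mathcal D^*=\mathcal D_{r_e}(t_0)$. Consider the robot $r_{e'}=\rho(r_e)$. It lies in the same rotational equivalence class as $r_e$ (\textbf{Definition}~\ref{def-rot-equiv}), and it is distinct from $r_e$ because $w\ge 2$ and the robots occupy distinct positions.

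The key step copies the argument of \textbf{Lemma}~\ref{l-5}, with the reflection $\mu_{\mathcal L}$ replaced by $\rho$. Since $\rho$ maps $\mathcal R(t_0)$ onto itself and preserves the clockwise order, it carries the regular $n$-gon $\mathcal N_e(t_0)$ to $\mathcal N_{e'}(t_0)$. It also carries the assignment $f_e$ of \textbf{Result}~\ref{r-2} to the assignment $f_{e'}$: the robot $\rho(r_k)$ is sent to $\rho(p_k^{e})$. Rotation preserves arc length, so $d_{arc}(r_k(t_0),p_k^e(t_0))=d_{arc}(\rho(r_k(t_0)),\rho(p_k^e(t_0)))$. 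Summing over $k$ and using that $\rho$ permutes the robots gives $\mathcal D_{r_{e'}}(t_0)=\mathcal D_{r_e}(t_0)=\mathcal D^*$. Hence $r_{e'}$ is also \textit{extremal}, and $|\mathcal E'(t_0)|\ge w\ge 2$, which is consistent with \textbf{Lemma}~\ref{lemma-extremal-rot}.

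It remains to check that these two extremal robots give genuinely different assignments, not the same bijection described twice. I expect this to be the main obstacle, because two extremal robots can induce the same $n$-gon (compare \textbf{Lemma}~\ref{l-4} and \textbf{Figure}~\ref{Lem-8(1)}). I would handle it by comparing the images of $r_e$ directly. Under $f_e$, the robot $r_e$ stays at $r_e(t_0)$. Under $f_{e'}$, the robot $r_e$ is sent to $\rho^{-1}(r_e(t_0))$, since $r_e=\rho(\rho^{-1}(r_e))$ and $\rho^{-1}(r_e)$ is fixed under $f_e$. Because $\rho\neq\mathrm{id}$, the point $\rho^{-1}(r_e(t_0))$ differs from $r_e(t_0)$, so $f_e\neq f_{e'}$. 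Two distinct bijections then attain $\mathcal D^*$, and the optimal assignment is not unique.

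A degenerate case needs a separate remark. If $f_e$ is rotation-equivariant, then $f_{e'}=f_e$ as maps, and the argument above must still produce two different optimal final configurations or bijections. In that case I would argue that every robot is already on $\mathcal N_e(t_0)$ with cost zero. Such a configuration is then a regular $n$-gon, which has lines of symmetry and so lies outside $\mathcal I_4$.
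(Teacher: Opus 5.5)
Your first two paragraphs are correct and are essentially what the paper relies on. The paper gives no separate proof of this observation. It rests on the fact, used in the proof of \textbf{Lemma}~\ref{lemma-extremal-rot}, that every robot in the rotational class of an extremal robot is itself extremal. Section~4.2 defines ``multiple optimal assignments'' to mean $|\mathcal E'(t)|>1$. Your transport of $f_e$ by $\rho$ justifies that fact more carefully than the paper's appeal to identical views, and $|\mathcal E'(t_0)|\ge w\ge 2$ is exactly the statement in the paper's sense.

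Your third and fourth paragraphs are wrong, and the conclusion that two distinct bijections attain $\mathcal D^*$ is false. Nontrivial rotations have no fixed points, so $w\mid n$ and each of the $w$ arcs cut out by the orbit of $r_e$ contains $m=n/w$ robots. Hence $\rho(r_e)$ is the $m$-th robot after $r_e$ and lies at angular distance $2\pi/w=m\cdot 2\pi/n$, i.e.\ exactly on the $m$-th vertex of $\mathcal N_e(t_0)$. It follows that $\rho(\mathcal N_e(t_0))=\mathcal N_e(t_0)$, that $f_e$ fixes the whole orbit of $r_e$, and that $f_e\circ\rho=\rho\circ f_e$. So $f_{e'}=\rho\circ f_e\circ\rho^{-1}=f_e$ always. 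The obstacle you anticipated, two extremal robots inducing the same $n$-gon and the same bijection, is not an exception here; it is what always happens.

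Your computation slips at exactly this point. $\rho^{-1}(r_e)$ is indeed fixed by $f_e$, but then your own rule $\rho(r_k)\mapsto\rho(p_k^e)$ sends $r_e$ to $\rho(\rho^{-1}(r_e))=r_e$, not to $\rho^{-1}(r_e)$. So the ``degenerate'' equivariant case is the only case, and your claim that it forces zero cost would make $\mathcal I_4$ empty.

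A concrete example: put robots at angles $0,0.3,1,\pi,\pi+0.3,\pi+1$.
\begin{itemize}
\item This configuration lies in $\mathcal I_4$: $w=2$, and the gap pattern $(0.3,0.7,\pi-1)$ is not a cyclic shift of its reversal, so there is no line of symmetry.
\item The extremal set is the orbit $\{0.3,\pi+0.3\}$, with angular cost $\approx 2.19$ versus $\approx 3.68$ and $\approx 2.88$ when fixing the other two orbits, so $\mathcal D^*>0$.
\item By \textbf{Result}~\ref{r-2}, the optimal bijection there is unique.
\end{itemize}
So distinct optimal bijections cannot be proved in general. The paper itself notes that the extremal robots ``jointly determine the same regular $n$-gon'' ($\mathcal P^3=\mathcal P^6=\mathcal P^9$ in \textbf{Figure}~\ref{Rot}). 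You should stop after your second paragraph and read ``not unique'' as $|\mathcal E'(t_0)|\ge 2$, which is the paper's meaning.
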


\noindent \textbf{Lemma}~\ref{lemma-extremal-rot} shows that in a rotationally symmetric configuration with no line of symmetry, all \textit{extremal} robots belong to one rotational class, and all other robots are non-\textit{extremal}. Hence, the \textit{extremal} robots together determine the same destination point set. Based on this property, we now describe the algorithm designed for such configurations.

\subsubsection*{4.2.3.1~~Overview of the Algorithm \it{RotSymM1dMinSumC()}}

\noindent This algorithm is designed for initial robot configurations that admit rotational symmetry. For configurations with rotational symmetry of order $w \ge 2$ and no line of symmetry, \textbf{Lemma}~\ref{lemma-extremal-rot} guarantees that all \textit{extremal} robots belong to a single rotational equivalence class. This uniquely fixes the destination points as an regular $n$-gon, since all the \textit{extremal} robots have identical views. The algorithm allows movement only for \textit{non-extremal} robots. Each robot computes the same destination point set and locally checks whether it is a \textit{candidate} robot, i.e., whether it has a free arc path to its assigned destination. The ties can be broken using the concept of views.  \textit{Candidate} robots are selected only from non-\textit{extremal} rotational classes, ensuring that \textit{extremal} robots remain fixed. If the \textit{candidate} robots from the same rotational class move concurrently, the symmetry is preserved and the total arc distance strictly decreases. If, due to asynchrony, there may exist pending moves. As a result, the symmetry may be broken, and the configuration may transform into an asymmetric one. Thus, the execution switches to \textit{AsymM1dMinSumC()}. In all scenarios, the robots travel exclusively along free arc paths, assignments are preserved, and finite-time convergence to the destinations is guaranteed under the \textit{ASYNC} model.

\begin{algorithm}[!ht]
\caption{\it{: RotSymM1dMinSumC()}}
\label{alg:rotsym-minsum}
\begin{algorithmic}[1]
\Require Initial configuration $\mathcal{R}(t) \in \mathcal{I}_4$ on $\mathcal{C}_{out}$
\Ensure \textit{min-sum uniform coverage} on a circle

\State Compute the \textit{extremal} robot set $\mathcal{E}'(t)$
\State Compute the destination point set $\mathcal{P}(t)$ obtained by $\mathcal{E}'(t)$

\ForAll{robots $r_i \in \mathcal{R}(t)$ \textbf{in parallel}}
    \If{$r_i \in \mathcal{E}'(t)$}
        \State $r_i$ stays idle
    \Else
        \If{$r_i$ is a \textit{candidate} robot and has a free arc path to its destination}
            \State $r_i$ moves along $\mathcal{C}_{out}$ toward its assigned destination
        \Else
            \State $r_i$ stays idle
        \EndIf
    \EndIf
\EndFor

\If{rotational symmetry is broken during execution due to pending moves}
    \State Execute \textit{AsymM1dMinSumC()}
\EndIf

\end{algorithmic}
\end{algorithm}

\subsubsection*{4.2.3.2~~Correctness of the Algorithm \it{RotSymM1dMinSumC()}}


\noindent We prove the correctness of the algorithm \textit{RotSymM1dMinSumC()} for robot configurations in the class $\mathcal{I}_4$. Let $\mathcal{R}(t_0)$ be a robot configuration that admits rotational symmetry with no line of symmetry and has multiple optimal assignments. Under the \textit{ASYNC} model, the algorithm \textit{RotSymM1dMinSumC()} fixes the target points once they are computed; therefore, the total distance to the targets remains optimal throughout the execution. By \textbf{Lemma}~\ref{lemma-extremal-rot}, the robots in a rotational equivalence class, denoted by $\mathscr{R}_{\mathrm{candidate}}$, are selected as candidate robots. As long as their movements preserve rotational symmetry, all robots in $\mathscr{R}_{\mathrm{candidate}}$ identify the same set of \textit{extremal} robots in the class $\mathscr{E}_{\mathrm{extremal}}$, and the target points do not change. If the configuration becomes asymmetric due to pending asynchronous moves, \textbf{Lemma}~\ref{lem:rot-invariance} ensures that the \textit{extremal} robots in $\mathscr{E}_{\mathrm{extremal}}$ remain invariant. Hence, the assigned destinations remain unchanged and the total distance stays optimal. Finally, \textbf{Lemma}~\ref{l-9} and \textbf{Lemma}~\ref{lem-correct-rotsym} show that under the \textit{ASYNC} model, the algorithm terminates in finite time, avoids collisions, and places all robots at positions that achieve the optimal total distance.

\begin{lemma}[Invariance of Multiple Optimal Assignments in $\mathcal I_4$]
\label{lem:rot-invariance}
Let $\mathcal{R}(t_0)$ be a robot configuration in $\mathcal{I}_4$ admitting rotational symmetry at time $t_0 \ge 0$. Let $\mathscr{E}_{\mathrm{extremal}}$ denote the rotational equivalence class of \textit{extremal} robots, and let $\mathscr{R}_{\mathrm{candidate}}$ denote a rotational equivalence class of candidate robots. Fix an \textit{extremal} robot $r_e$ in the class $\mathscr{E}_{\mathrm{extremal}}$, and consider the destination points assigned at time $t_0$ to the robots in $\mathscr{R}_{\mathrm{candidate}}$ by fixing $r_e$. If, at time $t_1$, the candidate robots in $\mathscr{R}_{\mathrm{candidate}}$ reach intermediate positions (see \textbf{Definition}~\ref{Intermediate} on page~\pageref{Intermediate}) on $\mathcal{C}_{out}$ before reaching their respective destinations, then the class $\mathscr{E}_{\mathrm{extremal}}$ remains invariant throughout the time interval $[t_0,t_1]$.
\end{lemma}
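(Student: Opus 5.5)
We will follow the template of \textbf{Lemma}~\ref{l-3} and \textbf{Lemma}~\ref{lem:n-robots}, now with $w$ candidate robots moving simultaneously instead of one or two. Let $\mathscr{R}_{\mathrm{candidate}}=\{c_0,\ldots,c_{w-1}\}$ with $c_k=\rho^k(c_0)$, and let $x_k=d_{arc}\big(c_k(t_0),c_k(t_1)\big)$ be the distances travelled by time $t_1$. These may differ because of asynchrony and pending moves. Set $X=\sum_{k}x_k$.

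First we note what is known at $t_0$. By \textbf{Lemma}~\ref{lemma-extremal-rot}, all extremal robots lie in $\mathscr{E}_{\mathrm{extremal}}$. Since they have identical views, they induce a common destination set $\mathcal P^e$ and a common cost $\mathcal D^*$. For every robot $r$ outside $\mathscr{E}_{\mathrm{extremal}}$ we have $\mathcal D^*_{r_e}(t_0)<\mathcal D_{r}(t_0)$.

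\textbf{Step 1: the extremal class.} Each $c_k$ moves along a free arc toward its own vertex of $\mathcal P^e$, and all other robots are stationary. Hence the cost of fixing $r_e$ drops by exactly $X$:
\[
\mathcal D_{r_e}(t_1)=\mathcal D^*(t_0)-X .
\]
Moreover, every $r_{e'}\in\mathscr{E}_{\mathrm{extremal}}$ fixes the same polygon, since $\mathcal P^{e'}=\mathcal P^e$. So all members of the class still yield the same value at $t_1$, and none of them can be separated from $r_e$.

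\textbf{Step 2: stationary non-extremal robots.} Let $r_k$ be a stationary robot outside $\mathscr{E}_{\mathrm{extremal}}\cup\mathscr{R}_{\mathrm{candidate}}$. Its polygon $\mathcal P^k$ does not move. Each candidate's contribution to $\mathcal D_{r_k}$ changes by at most $x_k$, because arc distance to a fixed point is $1$-Lipschitz. Therefore
\[
\mathcal D_{r_k}(t_1)\ge \mathcal D_{r_k}(t_0)-X>\mathcal D^*(t_0)-X=\mathcal D_{r_e}(t_1).
\]
This is the same worst-case argument used for \eqref{eee-25}–\eqref{eee-27}, and it gives strict inequality.

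\textbf{Step 3: the moving candidates (main obstacle).} Fix a candidate $c_m$. Its polygon $\mathcal N_{c_m}(t_1)$ rotates by $x_m$, so the contributions of all $n-1$ other robots change. We will use the counting of \textbf{Lemma}~\ref{l-3}, Cases 1(a)/1(b), with net clockwise/counterclockwise imbalance $k'$ and crossing terms $h^i_{ccw},g^i_{cw}$. This gives a bound of the form
\[
\mathcal D_{c_m}(t_1)\ge \mathcal D_{c_m}(t_0)-(n-1)x_m-\sum_{k\ne m}x_k .
\]
With only this bound, the comparison with $\mathcal D^*(t_0)-X$ does not close. So the plan is to argue by contradiction, as in \textbf{Lemma}~\ref{l-3}. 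Suppose $\mathcal D_{c_m}(t_1)\le \mathcal D_{r_e}(t_1)$. Then consider the assignment obtained at time $t_0$ from the regular $n$-gon $\mathcal N_{c_m}(t_1)$, which is a rigid rotation of $\mathcal N_{c_m}(t_0)$. Its $t_0$-cost is at most $\mathcal D_{c_m}(t_1)+X$, and hence at most $\mathcal D^*(t_0)$. By \textbf{Result}~\ref{r-02} and \textbf{Result}~\ref{r-2}, that optimal-cost assignment must fix some robot. That robot must be extremal, so its polygon is $\mathcal P^e$.

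From this we want to conclude that $\mathcal N_{c_m}(t_1)=\mathcal P^e$, meaning $c_m(t_1)$ has already reached its destination. That contradicts $c_m$ being at an intermediate position. The delicate point is the tie case, where the inequality is not strict. We expect it can be resolved because $c_m$ is non-extremal at $t_0$, so $\mathcal N_{c_m}(t_0)\neq\mathcal P^e$, and because the rotation $x_m$ is strictly less than $d_{arc}(c_m(t_0),p^e_{c_m})$. We will therefore spend most of the effort on this step, checking separately the clockwise and counterclockwise motion of $c_m$ and whether vertices cross stationary robots.

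Steps 1–3 together show that the set of minimizers at $t_1$ is still exactly $\mathscr{E}_{\mathrm{extremal}}$. This holds whether or not rotational symmetry is preserved, so the class stays invariant on $[t_0,t_1]$.
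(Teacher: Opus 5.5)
Your Steps 1 and 2 match the paper. Its proof of this lemma just defers to \textbf{Lemma}~\ref{lem:n-robots}. There the cost of fixing the extremal robot drops by the total displacement, and each stationary robot is handled by the worst case $\mathcal D_{r_k}(t_0)-a-b$; your $1$-Lipschitz bound is that same estimate, stated cleanly.

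Step 3 takes a genuinely different and better route. For the movers, the paper asserts exact formulas such as $\mathcal D_{r_i}(t_1)=\mathcal D_{r_i}(t_0)-2a-b$ without deriving them. It then obtains $\mathcal D_{r_i}(t_0)-a<\mathcal D^*_{r_e}(t_0)$ and claims that some assignment costs less than $\mathcal D^*$, without exhibiting one. Your pull-back of the time-$t_1$ bijection to time $t_0$, with the triangle inequality bounding its cost by $\mathcal D_{c_m}(t_1)+X$, is exactly that missing assignment. It removes the need for the direction and crossing case analysis of \textbf{Lemma}~\ref{l-3}. Because your contradiction hypothesis $\mathcal D_{c_m}(t_1)\le\mathcal D_{r_e}(t_1)$ is non-strict, ties are already covered, so there is no separate tie case to worry about.

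The one inference that fails as written is the end of Step 3. Knowing that the fixed robot is extremal, and hence that $\mathcal N_{c_m}(t_1)=\mathcal P^e$ as a vertex set, only gives $c_m(t_1)\in\mathcal P^e$. It does not give $c_m(t_1)=p^e_{c_m}$. Optimal displacements can exceed $2\pi/n$, so a candidate can pass over another vertex of $\mathcal P^e$ on its free arc. For example, take two antipodal generic clusters of five robots: the end robot of a cluster travels about $2\cdot 2\pi/n$ and crosses its neighbour's target. Your bound $x_m<d_{arc}(c_m(t_0),p^e_{c_m})$ does not rule this out.

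The fix is to stay at the level of bijections, which is what \textbf{Result}~\ref{r-2} actually gives you.
\begin{itemize}
\item The pulled-back bijection $f$ is optimal, so $f=f_A$ for some extremal robot $A$.
\item Since $w\mid n$, the robot $n/w$ positions from $r_e$ is $\rho^{\pm1}(r_e)$, and $f_e$ sends it to its own position. So $f_e$ fixes every member of $\mathscr E_{\mathrm{extremal}}$, and therefore $f_A=f_e$.
\item Hence $c_m(t_1)=f(c_m)=f_e(c_m)=p^e_{c_m}$, which contradicts the intermediate-position hypothesis.
\end{itemize}

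This step relies on \textbf{Lemma}~\ref{lemma-extremal-rot} in the strong form that every optimal assignment at $t_0$ equals $f_e$. The statement presupposes this, but it carries all the weight, and it is fragile. For small deviations $\theta_j=2\pi j/n+\delta_j$, the cost of fixing $r_i$ is $\sum_j|\delta_j-\delta_i|$, which is minimized at medians of the $\delta_j$. So when $n/w$ is even, two rotational classes can tie at the median with different polygons. A concrete case is $n=8$, $w=2$, with $(\delta_j)=(0,0.01,0.03,0.07)$ repeated, which has no line of symmetry. In such a configuration neither your argument nor the statement applies.
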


\begin{proof}
    The proof of this lemma follows the same arguments as those of \textbf{Lemma}~\ref{lem:n-robots}.
\end{proof}

\begin{lemma} 
\label{l-9}
If an initial robot configuration $\mathcal{R}(t_0)\in \mathcal I_4$, then the \textit{min-sum uniform coverage} on a circle problem is solvable for a set of asynchronous robots.
\end{lemma}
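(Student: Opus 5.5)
The plan is to show that \textit{RotSymM1dMinSumC()} is a correct algorithm for every $\mathcal R(t_0)\in\mathcal I_4$. The argument has three parts: the target polygon is well defined, it stays fixed during execution (including when symmetry breaks), and the robots actually reach it.

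\textbf{Step 1: the target polygon is well defined.} By \textbf{Lemma}~\ref{lemma-extremal-rot}, all \textit{extremal} robots lie in one rotational class $\mathscr E_{\mathrm{extremal}}$. Every robot can compute this class from its own view, since Result~\ref{r-2} is evaluated on relative angles only. Any two members of this class differ by a multiple of $2\pi/w$. Because $w$ divides $n$, this is a multiple of $2\pi/n$, so all members are vertices of one common regular $n$-gon $\mathcal P$. I would check this divisibility claim explicitly: rotation by $2\pi/w$ permutes the $n$ distinct robots with orbits of size exactly $w$.

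Consequently the optimal assignment $f_e$ does not depend on which member $r_e$ of the class is fixed. By rotational symmetry, $f_e$ maps each rotational class of robots onto a rotated copy of a fixed set of vertices. Thus all robots agree on the same destination set without any symmetry breaking, and the final cost equals $\mathcal D^*$ by the definition of \textit{extremal} robots.

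\textbf{Step 2: the destinations are invariant under execution.} Only non-\textit{extremal} candidate robots move, always along free arcs toward their assigned vertex.

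If all robots of a candidate class move synchronously, the rotational symmetry is preserved. The class $\mathscr E_{\mathrm{extremal}}$ is then unchanged by \textbf{Lemma}~\ref{lem:rot-invariance}.

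If pending moves break the symmetry, the configuration becomes asymmetric and execution switches to \textit{AsymM1dMinSumC()}. Here I must argue that the robot selected by Observation~\ref{obs:asymmetric-ordering} still lies in $\mathscr E_{\mathrm{extremal}}$, so that it induces the same $\mathcal P$. For this I will combine \textbf{Lemma}~\ref{lem:rot-invariance} with Lemma~\ref{l-8}. Every member of the original class yields the same polygon, so any tie-breaking among them gives the same targets. Moreover, non-\textit{extremal} robots cannot become strictly better, by the cost-comparison argument of Lemma~\ref{l-3} applied to each partially moved robot.

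\textbf{Step 3: termination, collision freedom and optimality.} Use the potential $S(t)=\sum_i d_{arc}(r_i(t),p_i)$ with respect to the fixed $\mathcal P$.

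For termination: each move is along a free arc toward the robot's destination, so $S$ never increases. Each move that does not reach its destination reduces $S$ by at least $\delta$. Hence there are only finitely many such moves, and every robot eventually reaches its target.

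For collision freedom: robots follow free arcs, and rotated copies move within disjoint rotated arcs. Hence no collision occurs, consistent with Lemma~\ref{lemma1}.

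For optimality: since no robot ever moves away from its target and targets never change, the total distance traveled equals $S(t_0)=\mathcal D^*$.

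\textbf{Main obstacle.} The hardest part is Step 2's progress claim that a candidate always exists. Free arcs could be blocked cyclically, with every robot's path obstructed by the next. I would try to rule this out by the order-preservation property of Result~\ref{r-2}: assignments preserve cyclic order, so the robot nearest to its target in the direction of travel is unobstructed. A secondary risk is the symmetry-breaking transition, where I rely on Lemma~\ref{lem:rot-invariance} as stated.
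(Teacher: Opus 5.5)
\textbf{Verdict: your route matches the paper's, but the case split in your Step~2 leaves a gap that the paper's proof has too.} The route is: Lemma~\ref{lemma-extremal-rot} fixes one target polygon, only non-extremal classes move along free arcs, a synchronous class move keeps the rotational symmetry while broken symmetry hands over to \textit{AsymM1dMinSumC()}, and monotone decrease of arc distances gives termination and optimality. Your $w\mid n$ argument, the independence of $f_e$ from the chosen representative, and the $\delta$-potential are sharper than what the paper writes.

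The gap is that the split ``symmetry preserved, so \textit{RotSymM1dMinSumC()} continues'' versus ``symmetry broken, so the configuration is asymmetric'' is not exhaustive. The robots are oblivious, so the routine can only be chosen from the current snapshot; your own hand-off to \textit{AsymM1dMinSumC()} relies on this. A reachable configuration can acquire a line of symmetry and then leave $\mathcal I_4$. Take $n=6$ with robots at $0^\circ,50^\circ,140^\circ,180^\circ,230^\circ,320^\circ$. This has rotational symmetry of order $2$ and no line of symmetry. The unique extremal class is $\{0^\circ,180^\circ\}$, with cost $60^\circ$ against $80^\circ$ and $100^\circ$ for the other classes, and the targets are the multiples of $60^\circ$. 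Both non-extremal classes have free paths, and nothing in \textit{RotSymM1dMinSumC()} or your Step~2 stops $\{140^\circ,320^\circ\}$ from moving first. If the adversary halts it at $\{130^\circ,310^\circ\}$ (possible whenever $\delta\le 10^\circ$), the configuration is symmetric about the line through $0^\circ$ and $180^\circ$ and about the perpendicular line, with the extremal robots on the first line. It therefore lies in $\mathcal I_5$, and Algorithm~\ref{alg:global-minsum} stops with ``already in uniform circle formation'' although $50^\circ$ and $130^\circ$ are not vertices. Similarly, moving $\{70^\circ,250^\circ\}$ to $\{80^\circ,260^\circ\}$ in the $\mathcal I_4$ configuration $0^\circ,45^\circ,70^\circ,145^\circ$ plus antipodes lands in $\mathcal I_6$, which the dispatcher declares unsolvable. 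So invariance of $\mathscr E_{\mathrm{extremal}}$ is not enough. You must also show that every reachable configuration is handed to a routine that keeps moving toward $\mathcal P$, and with the paper's dispatcher this fails.

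The invariant you need must ignore the symmetry type, and the triangle inequality gives it. Write $c_t(g)$ for the cost of an assignment $g$ at time $t$. Suppose $f_e$ is the unique optimal assignment at $t_0$ and every robot has only moved along its shortest arc toward $f_e(r)$, with total displacement $X$. Then every $g\neq f_e$ satisfies $c_t(g)\ge c_{t_0}(g)-X>\mathcal D^*-X=c_t(f_e)$, so $f_e$ remains the unique optimum in every reachable configuration. This replaces your appeal to Lemmas~\ref{lem:rot-invariance}, \ref{l-8} and~\ref{l-3}. It also says what the algorithm should do: follow the recomputed unique optimal polygon rather than branch on symmetry classes. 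Your claim that the selected robot stays in $\mathscr E_{\mathrm{extremal}}$ is too strong, because robots reaching their vertices become extremal by Lemma~\ref{l-4}; this is harmless, since they induce the same $\mathcal P$.

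Your progress sketch is false as stated. With $n=6$, robots at $0^\circ,3^\circ,125^\circ,185^\circ,245^\circ,305^\circ$ and targets $5^\circ+60^\circ k$, the robot at $0^\circ$ is nearest to its target yet blocked. Argue instead along the chain of blockers in the direction of travel:
\begin{itemize}
\item Each blocker travels the same way; otherwise two paths cross, and swapping the two targets is strictly cheaper.
\item A blocker never sits on its own vertex.
\item Hence the chain stops before any robot already fixed on its vertex, and its last member has a free path.
\end{itemize}

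Finally, Step~1 inherits Lemma~\ref{lemma-extremal-rot}, which fails when the number $n/w$ of classes is even. For robots at $0^\circ,52^\circ,110^\circ,165^\circ$ and their antipodes, which is an $\mathcal I_4$ configuration, the classes of $52^\circ$ and $110^\circ$ both attain the optimum $86^\circ$ with polygons $13^\circ$ apart. So agreement on a single destination set is not automatic there.
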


\begin{proof}\normalfont
Let $\mathcal{R}(t_0)=\{r_1(t_0), r_2(t_0), \ldots, r_n(t_0)\}\in \mathcal I_4$ be an initial configuration of $n$ robots placed on the circumference $\mathcal C_{out}$ of the circle $\mathcal C$. The rotational symmetry partitions $\mathcal{R}(t_0)$ into $w$ disjoint rotational equivalence classes $\mathcal{R}_0, \mathcal{R}_1, \ldots, \mathcal{R}_{w-1}$, where each class is mapped to another by a rotation of angle $2\pi/w$. By \textbf{Lemma}~\ref{lemma-extremal-rot}, exactly one of these classes, say $\mathcal R_e$, contains all the \textit{extremal} robots. These \textit{extremal} robots jointly determine the same regular $n$-gon and hence uniquely fix the destination point set. All the remaining $w-1$ rotational classes contain only \textit{non-extremal} robots.

\noindent The \textbf{Algorithm} \textit{RotSymM1dMinSumC()} allows movement only for \textit{candidate} robots selected from the non-\textit{extremal} classes. Whenever a robot from a \textit{non-extremal} class is identified as a candidate, all robots in the same rotational equivalence class are candidates at the same time due to the fact that they have identical views. Thus, movement always involves an entire rotational class of robots.

\noindent If all robots belonging to a candidate rotational class move concurrently, the rotational symmetry of the configuration is preserved. In this case, each moving robot reduces the same arc distance toward its assigned destination, and the total sum of arc distances strictly decreases. Due to asynchrony, it is possible that some robots from a candidate class move while others experience pending moves. In such a situation, rotational symmetry is broken, and the configuration transforms into an asymmetric one. From that instant of time, the execution follows the behavior of the \textbf{Algorithm} \textit{AsymM1dMinSumC()}, which guarantees progress toward a unique optimal assignment. During both the rotationally symmetric and asymmetric phases, robots follow only free-arc trajectories. Thus, the destination assignments are preserved, and every moving robot makes strict progress by decreasing its arc distance to the assigned destination. Since all the arc distances are finite and decrease monotonically, all robots reach their destinations in finite time under the \textit{ASYNC} model. In the final configuration, all robots occupy distinct vertices of a regular $n$-gon on $\mathcal C_{out}$, thereby achieving the \textit{min-sum uniform coverage} on a circle. Hence, the problem is solvable for asynchronous robots when the initial configuration admits rotational symmetry but no line of symmetry.
\end{proof}

\begin{lemma}
\label{lem-correct-rotsym}
Let $\mathcal{R}(t_0) \in \mathcal I_4$. Then the algorithm
\textit{RotSymM1dMinSumC()} terminates in finite time under the \textit{ASYNC} model and moves all robots to distinct vertices of a regular $n$-gon on
$\mathcal{C}_{out}$, achieving collision-free \textit{min-sum uniform coverage} on a circle.
\end{lemma}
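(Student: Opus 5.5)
Our plan is to prove \textbf{Lemma}~\ref{lem-correct-rotsym} by assembling the earlier invariance and solvability results into four claims about an execution of \textit{RotSymM1dMinSumC()} from $\mathcal R(t_0)\in\mathcal I_4$: a fixed target set, collision freedom, a strictly decreasing potential, and a correct final configuration.

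\textbf{Fixed targets.} By \textbf{Lemma}~\ref{lemma-extremal-rot}, all \textit{extremal} robots lie in one rotational class $\mathscr{E}_{\mathrm{extremal}}$. These robots have identical views and all determine the same regular $n$-gon, so every robot computes the same destination set $\mathcal P(t_0)$ and the same optimal assignment, with cost $\mathcal D^*$. The algorithm never moves \textit{extremal} robots. \textbf{Lemma}~\ref{lem:rot-invariance} then gives that, while a candidate class moves toward intermediate positions, $\mathscr{E}_{\mathrm{extremal}}$ stays the extremal class. Hence the recomputed target set at any later Look phase equals $\mathcal P(t_0)$. If pending moves break the rotational symmetry, control passes to \textit{AsymM1dMinSumC()}. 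There \textbf{Lemma}~\ref{l-8} ensures that the extremal robot selected by \textbf{Observation}~\ref{obs:asymmetric-ordering} induces the same assignment. The selected robot belongs to $\mathscr{E}_{\mathrm{extremal}}$, and all such robots generate the same $n$-gon.

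\textbf{Collision freedom.} A robot moves only if its arc toward its destination contains no other robot. Within a moving class, rotation maps each arc to the arc of another class member, and rotation by $2\pi/w$ maps arcs to disjoint arcs. So two robots of the same class cannot meet. Robots of other classes are stationary and not on the free arc. In the asymmetric phase, collision freedom follows from \textbf{Observation}~\ref{obs-02}.

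\textbf{Termination.} We use the potential $S(t)=\sum_i d_{arc}(r_i(t),p_i)$. Every Move phase either reaches the destination or covers at least $\delta$. No robot ever increases its distance, since destinations are fixed and moves go straight toward them along the arc. Hence $S$ drops by at least $\delta$ in each cycle that involves a non-null move. To obtain progress, we must show that while $S>0$ a candidate exists. We argue as in the line segment case: moving along the arc toward one's destination is order preserving, so some unterminated robot adjacent to a gap has a free path. Since $S(t_0)$ is finite, only finitely many non-null moves can occur. Fairness of the scheduler then implies that all robots terminate.

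\textbf{Optimality.} At termination $\mathcal R(t)=\mathcal P(t_0)$, and each robot has travelled exactly its assigned arc distance, so the total movement equals $\mathcal D^*$, as in \textbf{Lemma}~\ref{l-9}.

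The main obstacle is the asynchronous transition. A robot may Look in the symmetric phase and Move after the symmetry is broken, so it acts on a stale snapshot. We will argue that its stale destination coincides with the current one, because the target set is invariant, and that its stale path is still free. The second point holds because other robots only move within their own free arcs, which are disjoint from the stale arc.
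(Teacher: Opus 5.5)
Your proposal is correct and takes essentially the same route as the paper's proof. The destination set is fixed via Lemma~\ref{lemma-extremal-rot} and kept invariant by Lemma~\ref{lem:rot-invariance}, with a fallback to \textit{AsymM1dMinSumC()} once the rotational symmetry breaks; collisions are excluded because only robots with free arcs move; and termination comes from the monotone decrease of the arc distances. Your version is more careful than the paper's, whose appeal to monotone decrease alone does not rule out infinitely many shrinking moves. You instead use the rigidity constant $\delta$ to bound the number of non-null moves, and you explicitly treat candidate existence and stale snapshots. For candidate existence, though, the line-segment analogy needs one circle-specific supplement: the never-moving extremal robots and the optimality of the fixed assignment rule out a cyclic or head-on chain of mutually blocked robots.
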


 \begin{figure}[!ht]
    \centering
    \vspace{-0.3cm}
    
    \includegraphics[width=0.5\textwidth]{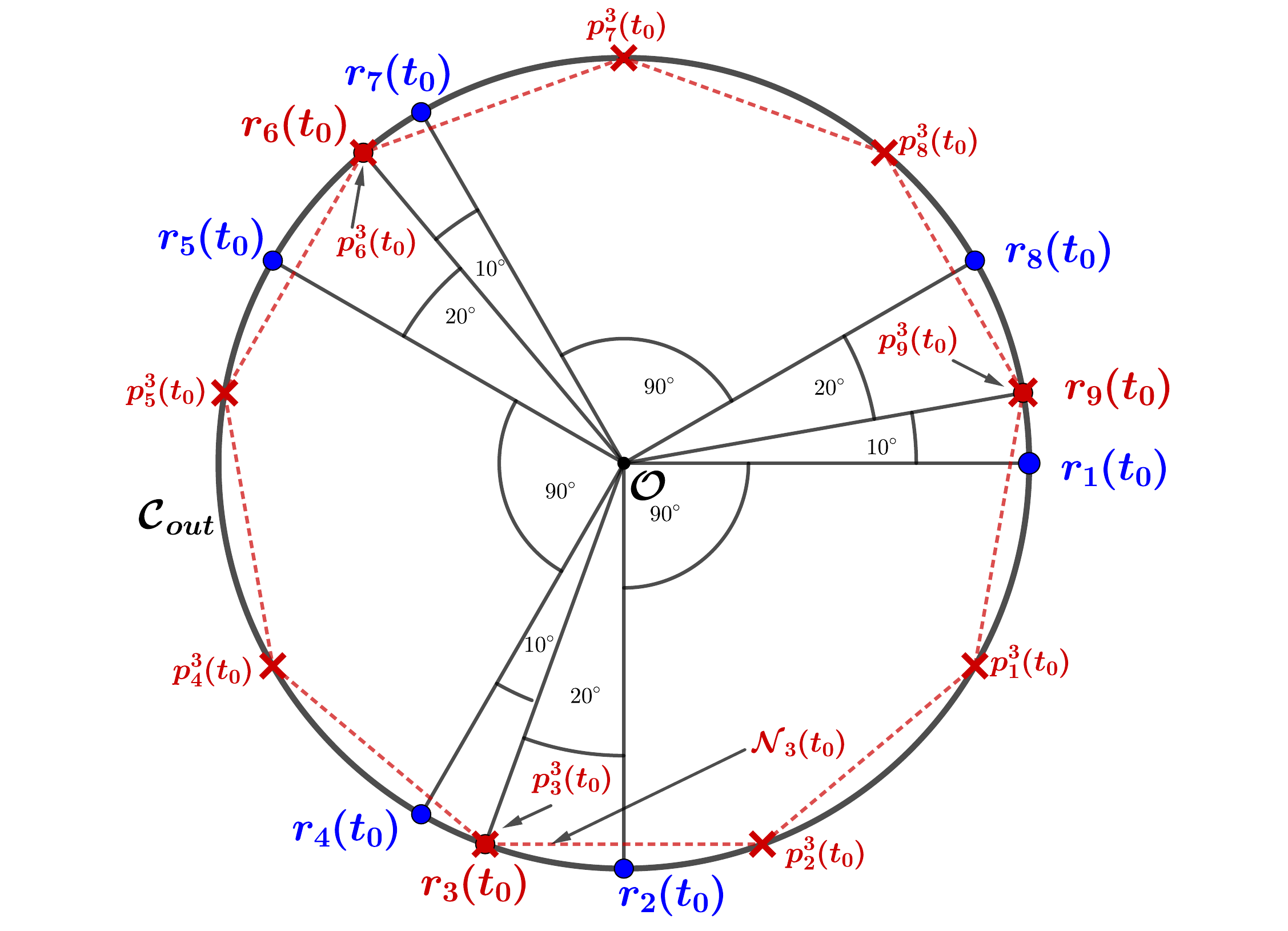}
   \caption{ \textit{An illustration of an initial configuration of nine robots $\mathcal{R}(t_0)=\{r_1(t_0), r_2(t_0), \ldots, r_9(t_0)\}$ admitting rotational symmetry of order $w=3$ (triangular symmetry). The robots are partitioned into three rotational equivalence classes, namely $\mathcal R_0=\{r_3(t_0), r_6(t_0), r_9(t_0)\}$, $\mathcal R_1=\{r_1(t_0), r_4(t_0), r_7(t_0)\}$, and $\mathcal R_2=\{r_2(t_0), r_5(t_0), r_8(t_0)\}$. The \textit{extremal} robot set is $\mathcal R_e=\mathcal R_0$, which uniquely determines the regular $9$-gon $\mathcal{N}_3(t_0)$. The corresponding destination point set satisfies $\mathcal{P}^3(t_0)=\mathcal{P}^6(t_0)=\mathcal{P}^9(t_0)$. Only the destination point set $\mathcal{P}^3(t_0)=\{p_1^3(t_0), p_2^3(t_0), \ldots, p_9^3(t_0)\}$ is shown in the figure.}}

     \label{Rot}

\end{figure}
\begin{proof}\normalfont
Let $\mathcal{R}(t_0)\in \mathcal I_4$ be an initial robot configuration on $\mathcal{C}_{out}$. By \textbf{Lemma}~\ref{lemma-extremal-rot}, all the \textit{extremal} robots belong to a single rotational equivalence class $\mathcal R_e$ and remain fixed throughout the execution of the \textbf{Algorithm} \textit{RotSymM1dMinSumC()}. These robots collectively identify a single destination point set, which is computed identically by all robots and remains fixed throughout the execution of the algorithm. The algorithm permits movement only for \textit{candidate} robots chosen from the \textit{non-extremal} rotational equivalence classes. A robot is selected as a candidate only if its arc path to the assigned destination is free. Hence, whenever a robot moves, it strictly decreases its arc distance to the destination, while the robots with blocked paths remain stationary. Since robots move exclusively along the free arc paths on $\mathcal{C}_{out}$, their arc paths never intersect. Therefore, collisions cannot occur during the execution of the algorithm. Due to the presence of rotational symmetry and the absence of any line of symmetry, the candidate robots always belong to a complete rotational equivalence class. If all robots of such a class move concurrently, the rotational symmetry of the configuration is preserved. If, due to asynchrony, only a subset of the robots in the class move, the symmetry may be broken, and the configuration becomes asymmetric. From that point onward, the execution proceeds with \textit{AsymM1dMinSumC()}, which also allows collision-free robot movement along the free arc paths. In each case, every robot that moves strictly decreases its arc distance. Since all arc distances are finite and monotonically decreasing, only a finite number of movements are possible. Consequently, all robots reach their assigned destinations in finite time under the \textit{ASYNC} model. Thus, the algorithm \textit{RotSymM1dMinSumC()} terminates in finite time and achieves collision-free \textit{min-sum uniform coverage} on a circle.
\end{proof}

\begin{theorem}
\label{thm-main}
The \textit{min-sum uniform coverage} on a circle problem by \textit{oblivious} and
\textit{silent} robots under the \textit{ASYNC} model is deterministically solvable for
all initial configurations except those belonging to classes
$\mathcal I_3$ and $\mathcal I_6$.
\end{theorem}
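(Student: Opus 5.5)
The plan is to prove Theorem~\ref{thm-main} by a case analysis over the partition $\mathcal I_1,\ldots,\mathcal I_6$ of $\widetilde{\mathcal R}(t_0)$, using the assignment tree of Figure~\ref{Tree} and the dispatcher \textbf{Algorithm}~\ref{alg:global-minsum}. The first step is to check that the six classes really form a partition. Every configuration is either asymmetric, or has exactly one line of symmetry, or has rotational symmetry without a line, or has at least two lines of symmetry (two distinct lines force a rotation). The single-line and multiple-line cases are each split by whether some \textit{extremal} robot lies on a line of symmetry. A configuration in each class is also recognisable from a single snapshot, using views and Results~\ref{symm} and the rotational-symmetry result. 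This justifies that \textbf{Algorithm}~\ref{alg:global-minsum} can branch correctly, which matters because the robots are oblivious and must re-classify at every activation.

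For the positive part I would invoke the earlier correctness results class by class:
\begin{itemize}
\item For $\mathcal I_1$ with $|\mathcal E'|=1$, use Lemma~\ref{l-3}, Observation~\ref{obs-01} and Lemma~\ref{l-4}.
\item For $\mathcal I_1$ with $|\mathcal E'|>1$, use Lemma~\ref{l-8} and Observation~\ref{obs-02}.
\item For $\mathcal I_2$, use Lemmas~\ref{l-6}, \ref{lem:n-robots} and \ref{lem-correct-symu}, together with Observation~\ref{obs-two-on-los-2} for even $n$.
\item For $\mathcal I_4$, use Lemmas~\ref{lemma-extremal-rot}, \ref{lem:rot-invariance}, \ref{l-9} and \ref{lem-correct-rotsym}.
\item For $\mathcal I_5$, use Observation~\ref{obs-multi-los}: the configuration is already a regular $n$-gon, so the cost $0$ is trivially optimal.
\end{itemize}

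The step needing care is the interface between sub-algorithms under \textit{ASYNC}. When symmetry is broken by a pending move in $\mathcal I_2$ or $\mathcal I_4$, every later snapshot must be classified into $\mathcal I_1$. The chosen \textit{extremal} robot, and hence the destination set, must coincide with the one fixed at $t_0$. I would argue this from the invariance lemmas (\ref{lem:n-robots}, \ref{lem:rot-invariance}). They show that the original \textit{extremal} robot stays a strict minimiser at intermediate positions, so the asymmetric routine recomputes the same $\mathcal P^e$. Since every move is along a free arc toward a fixed target, each robot's path has length exactly its assigned arc distance. The total cost therefore equals $\mathcal D^*$ by Result~\ref{r-2}. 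Termination follows because each activation moves at least $\min(\delta,\text{remaining})$, the number of robots is finite, and the potential $\sum_i d_{arc}(r_i,p_i^e)$ never increases.

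For the negative part, $\mathcal I_3$ is exactly Lemma~\ref{lemma-8}, using that $\mathcal L\cap\mathcal E'=\emptyset$ forces extremal robots to come in mirror pairs. I expect $\mathcal I_6$ to be the main obstacle, since the paper only sketches it. My plan is to generalise the proof of Lemma~\ref{lemma-8}. Consider an \textit{SSYNC}, rigid adversary that activates whole symmetry orbits, meaning orbits of the dihedral group generated by all lines of symmetry, simultaneously. The configuration then stays dihedrally symmetric forever. Every optimal $n$-gon $\mathcal N_e$ is fixed by an extremal robot off every axis, so its images under the reflections are distinct optimal $n$-gons. Robots in one orbit therefore head to vertices of different optimal polygons, and any final regular $n$-gon is reached only after some robot deviates from an optimal assignment. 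That deviation strictly exceeds $\mathcal D^*$, because by Result~\ref{r-2} all optimal assignments fix an extremal robot.

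The delicate point is excluding the possibility that two of these reflected polygons coincide as point sets, which would yield a common optimal target. I would handle this by showing that coincidence would place a vertex fixed by an extremal robot on an axis. Because the set of extremal robots is invariant under the symmetry, this would put an extremal robot on a line of symmetry, contradicting membership in $\mathcal I_6$.
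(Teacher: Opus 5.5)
Your positive half follows the paper's class-by-class case analysis. Your negative half, however, breaks at exactly the step you call delicate, and it cannot be repaired.

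\textbf{The negative direction.} A reflection $\sigma$ of the configuration maps $\mathcal N_e$ onto itself whenever its axis is one of the $n$ axes of $\mathcal N_e$. In that case $\sigma$ merely swaps $r_e$ with the vertex $\sigma(r_e)$, which is extremal and off the axis. For even $n$ the axis need not meet any vertex at all.

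For a concrete case, take $n=8$ robots at $22.5^\circ,60^\circ,120^\circ,157.5^\circ,202.5^\circ,240^\circ,300^\circ,337.5^\circ$.
\begin{itemize}
\item The only lines of symmetry are the diameters through $0^\circ$ and through $90^\circ$. Neither contains a robot, so this configuration is in $\mathcal I_6$.
\item The robots at $22.5^\circ,157.5^\circ,202.5^\circ,337.5^\circ$ all determine the octagon $\{22.5^\circ+45^\circ k\}$, of cost $30^\circ$ of arc. The other robots give cost $60^\circ$. The minimum is strict, since the deviations are four zeros and two each of $\pm 7.5^\circ$.
\item This octagon is invariant under both reflections. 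The robots at $60^\circ,120^\circ,240^\circ,300^\circ$ just move $7.5^\circ$ along free arcs.
\item Asynchrony is harmless. A robot moving $x$ toward its target lowers the target polygon's cost by exactly $x$, and every other polygon's optimal matching cost by at most $x$. So a strict optimum stays strict, and stale snapshots yield the same targets.
\end{itemize}
Hence this $\mathcal I_6$ configuration is solvable.

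Similarly, $n=4$ robots at $45^\circ,150^\circ,210^\circ,315^\circ$ have a single line of symmetry containing no robot. Yet their unique optimal square $\{45^\circ+90^\circ k\}$ is symmetric, so Lemma~\ref{lemma-8} fails as stated.

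Your other ingredient, that all optimal assignments fix an extremal robot, is also false. Result~\ref{r-2} is only valid in the form ``\emph{some} optimal assignment fixes a robot''. For $n=6$ robots at $25^\circ,95^\circ,145^\circ,215^\circ,265^\circ,335^\circ$ (class $\mathcal I_6$), the hexagon $\{30^\circ+60^\circ k\}$ fixes no robot, is optimal, and is invariant under every symmetry. So your orbit-activating rigid adversary loses to the algorithm that sends everyone there.

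What your impossibility argument actually needs is that the symmetries of the configuration fix no optimal polygon. That is a different condition from having no extremal robot on an axis. Read as a characterization, as both the paper's proof and your plan read it, the negative direction of the statement is false.

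\textbf{The positive direction.} This half has a gap too, shared with the paper's Case~5. Observation~\ref{obs-multi-los} does not cover $\mathcal I_5$ as defined, which asks only for an extremal robot on some line of symmetry. For instance, $n=6$ robots at $0^\circ,40^\circ,140^\circ,180^\circ,220^\circ,320^\circ$ have two lines of symmetry. Their unique optimal hexagon $\{60^\circ k\}$, of cost $80^\circ$, is the one obtained by fixing the on-axis robots at $0^\circ$ and $180^\circ$. But the configuration is not regular, so ``cost $0$'' is wrong and a genuine algorithm is needed. One option is to move toward the polygon through an on-axis extremal robot; it is invariant under the whole symmetry group because the rotation order divides $n$.

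For the hand-off between sub-algorithms, the Lipschitz argument above is a cleaner replacement for Lemmas~\ref{lem:n-robots} and~\ref{lem:rot-invariance}. It gives strictness only when the optimal polygon is unique, though. You should state uniqueness as a hypothesis rather than read it off the class definitions, or else handle ties between distinct optimal polygons explicitly.
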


\begin{proof}\normalfont
Let $\mathcal{R}(t_0)$ be an arbitrary initial robot configuration on the circumference $\mathcal C_{out}$ of the given circle $\mathcal C$. By the configuration classification introduced in \textbf{Section}~\ref{class} (See Page~\pageref{class}), every initial configuration belongs to exactly one of the disjoint classes $\mathcal I_1, \mathcal I_2, \mathcal I_3, \mathcal I_4, \mathcal I_5, \mathcal I_6$. We analyze each class separately.

\medskip
\noindent\textbf{Case–1 ($\mathcal I_1$ — Asymmetric configurations):}
The asymmetric configurations are divided into two subclasses: $\mathcal I'_1$, which admit a unique optimal assignment, and $\mathcal I''_1$, which admit multiple optimal assignments. For configurations in $\mathcal I'_1$, the absence of symmetry ensures that there exists an ordering of the robots and the optimal assignment is uniquely determined. As shown in \textbf{Section}~4.1.1, Algorithm \textit{AsymU1dMinSumC()} fixes the \textit{extremal} robot, assigns destinations accordingly, and allows only robots with a free arc path to move. Each movement strictly decreases the total arc distance, collisions are avoided, and all robots reach distinct vertices of a regular $n$-gon in finite time. For configurations in $\mathcal I''_1$, although multiple optimal assignments exist, the configuration remains asymmetric. As established in \textbf{Section}~4.2.2, by imposing a deterministic ordering by using \textbf{Observation}~\ref{obs:asymmetric-ordering} on the robots, the Algorithm \textit{AsymM1dMinSumC()} selects a unique \textit{extremal} robot and fixes an optimal assignment. The selected assignment remains invariant during the execution, and the robots move only along unobstructed arc paths.
Moreover, exactly one candidate robot advances in each iteration, guaranteeing collision-free convergence in finite time. Hence, the \textit{min-sum uniform coverage} on a circle problem is solvable for all configurations in $\mathcal I_1$.

\medskip
\noindent\textbf{Case–2 ($\mathcal I_2$ — Single line of symmetry with a unique optimal assignment):}
Let $\mathcal R(t_0)\in\mathcal I_2$ be an initial configuration admitting exactly one line of symmetry $\mathcal L$. By \textbf{Lemma}~\ref{l-6}, the robot located at the intersection of $\mathcal L$ with the circumference $\mathcal C_{out}$ is the unique \textit{extremal} robot and uniquely determines the optimal destination point set. As shown in Section~4.1.2, the Algorithm \textit{SymU1dMinSumC()} fixes this \textit{extremal} robot and allows only \textit{non-extremal} robots to move in mirror pairs with respect to $\mathcal L$ along free arc paths. Each movement strictly decreases the total sum of arc distances, and collisions are avoided. Hence, the min-sum uniform
circle formation problem is solvable for all configurations in $\mathcal I_2$.

\medskip
\noindent\textbf{Case–3 ($\mathcal I_3$ — Single line of symmetry with no \textit{extremal} robot on the line):} In this case, multiple optimal assignments with equal total cost exist. Due to symmetry, robots cannot deterministically select one of these
assignments. As shown in Section~4.1.2, this ambiguity prevents deterministic convergence. Hence, the problem is not solvable for configurations in $\mathcal I_3$.

\medskip
\noindent\textbf{Case–4 ($\mathcal I_4$ — Rotational symmetry with no line of symmetry):} Let $\mathcal R(t_0)\in\mathcal I_4$ be an initial robot configuration that admits rotational symmetry of order $w\ge 2$ and no line of symmetry. By
\textbf{Lemma}~\ref{lemma-extremal-rot}, all the \textit{extremal} robots belong to a single rotational
equivalence class, and these robots jointly determine the regular $n$-gon that fixes the destination point set. As shown in Section~4.2.3, Algorithm \textit{RotSymM1dMinSumC()} allows movement only for non-\textit{extremal} robots that have free arc paths to their assigned destinations. If rotational symmetry is preserved, the symmetric robots move concurrently, and the total sum of arc distances
strictly decreases. If the symmetry is broken due to asynchrony, the execution reduces to the asymmetric case, which is solved by \textit{AsymM1dMinSumC()}. In all
executions, the robots move only along free arc paths, no collisions occur, and all robots reach their destinations in finite time. Hence, the \textit{min-sum uniform coverage} on a circle problem is solvable for all configurations in $\mathcal I_4$.

\medskip
\noindent\textbf{Case–5 ($\mathcal I_5$ — Multiple lines of symmetry with \textit{extremal} robots on the lines):} By \textbf{Observation}~\ref{obs-multi-los}, all robots are already positioned on the vertices of a regular $n$-gon at time $t_0$. Hence, a uniform circular formation has already been achieved, and no further movement of robots is required. The problem is trivially solvable for configurations in $\mathcal I_5$.

\medskip
\noindent\textbf{Case–6 ($\mathcal I_6$ — Multiple lines of symmetry with no \textit{extremal} robot on any line):} As shown in Section~4.2.2, the presence of multiple symmetries leads to several optimal assignments with equal cost that cannot be distinguished by oblivious and silent robots. Consequently, the problem is unsolvable for configurations in $\mathcal I_6$.

\medskip
\noindent
Since the problem is solvable for configurations in
$\mathcal I_1 \cup \mathcal I_2 \cup \mathcal I_4 \cup \mathcal I_5$ and unsolvable for configurations in $\mathcal I_3 \cup \mathcal I_6$, the theorem follows.
\end{proof}

\section{Conclusion and Future Work}

\label{Conclusion and Future Work}
\noindent \textit{Conclusion:}
In this paper, we studied the \textit{min-sum uniform coverage} problem for autonomous mobile robots constrained to one-dimensional geometric environments. We considered two fundamental settings: robots initially deployed on a finite line segment and robots positioned on the circumference of a circle. The robots operate under severe limitations, being anonymous, autonomous, homogeneous, and oblivious, and without relying on additional capabilities such as memory, communication, chirality, lights, or multiplicity detection.
We provided a formal characterization of the problem and identified all classes of initial configurations for which \textit{min-sum uniform coverage} is unsolvable due to inherent symmetry or indistinguishability. For all remaining configurations, we proposed deterministic distributed algorithms that guarantee convergence to a uniformly spaced configuration while minimizing the total distance traveled by all robots. The correctness and optimality of the proposed solutions were established through rigorous proofs. These results demonstrate that global min-sum optimality can be achieved even under strong robotic constraints by exploiting geometric properties of the underlying environment.

\noindent \textit{Future Work:}
Beyond the theoretical interest, the techniques developed in this work can serve as foundational building blocks for more complex formation and coverage tasks in higher-dimensional settings. Several directions remain open for future investigation. A natural extension is to consider robots initially located in the interior of a disk, rather than restricted to a line segment or the boundary of a circle. Determining whether exact min-sum solutions exist in such settings remains an open problem. Another direction is the development of approximation algorithms for cases where exact optimality may be computationally infeasible or impossible to achieve under additional constraints.
Further extensions include considering environmental constraints such as obstacles or restricted motion paths, as well as analyzing the time and computational complexity of the proposed algorithms under fully asynchronous distributed execution models. Addressing these challenges would further enhance the applicability of \textit{min-sum uniform coverage} strategies in practical robotic systems.

\bibliographystyle{plain} 
\bibliography{Bibliography}

@String{Computing = "Computing" }

@String{Computer = "{IEEE} Computer" }

@String{Springer = "Springer-Verlag" }

@article{das2022k,
  title={k-Circle formation by disoriented asynchronous robots},
  author={Das, Bibhuti and Chakraborty, Abhinav and Bhagat, Subhash and Mukhopadhyaya, Krishnendu},
  journal={Theoretical Computer Science},
  volume={916},
  pages={40--61},
  year={2022},
  publisher={Elsevier}
}

@ARTICLE{Suzuki1999,
    author = {Ichiro Suzuki  and Masafumi Yamashita.},
    title = {Distributed anonymous mobile robots: Formation of geometric patterns},
    journal = {SIAM Journal on Computing},
    year = {1999},
    volume = {28},
    pages = {pages 1347-1363}
}

@INPROCEEDINGS{vigl2016,
    author = {Marcello Mamino and Giovanni Viglietta},
    title = {Square Formation by Asynchronous Oblivious Robots},
    booktitle = {In Proceedings of the 28th Canadian Conference on Computational Geometry (CCCG)},
    year = {2016},
    pages = {1-6},
    publisher = {}
}

@inproceedings{feletti2018uniform,
  title={Uniform circle formation for swarms of opaque robots with lights},
  author={Feletti, Caterina and Mereghetti, Carlo and Palano, Beatrice},
  booktitle={International Symposium on Stabilizing, Safety, and Security of Distributed Systems},
  pages={317--332},
  year={2018},
  organization={Springer}
}

@incollection{bhagat2018optimum,
  title={Optimum circle formation by autonomous robots},
  author={Bhagat, Subhash and Mukhopadhyaya, Krishnendu},
  booktitle={Advanced Computing and Systems for Security: Volume Five},
  pages={153--165},
  year={2018},
  publisher={Springer}
}

@article{chen2015optimal,
  title={Optimal point movement for covering circular regions},
  author={Chen, Danny Z and Tan, Xuehou and Wang, Haitao and Wu, Gangshan},
  journal={Algorithmica},
  volume={72},
  number={2},
  pages={379--399},
  year={2015},
  publisher={Springer}
}

@article{bhattacharya2009optimal,
  title={Optimal movement of mobile sensors for barrier coverage of a planar region},
  author={Bhattacharya, Binay and Burmester, Mike and Hu, Yuzhuang and Kranakis, Evangelos and Shi, Qiaosheng and Wiese, Andreas},
  journal={Theoretical Computer Science},
  volume={410},
  number={52},
  pages={5515--5528},
  year={2009},
  publisher={Elsevier}
}

@book{santorobook2,
  editor    = {Paola Flocchini and
               Giuseppe Prencipe and
               Nicola Santoro},
  title     = {Distributed Computing by Mobile Entities, Current Research in Moving and Computing},
   publisher = {Springer},
  year      = {2019},
  }

@inproceedings{DBLP:conf/isaac/FlocchiniPSW99,
  author       = {Paola Flocchini and
                  Giuseppe Prencipe and
                  Nicola Santoro and
                  Peter Widmayer},
  editor       = {},
  title        = {Hard Tasks for Weak Robots: The Role of Common Knowledge in Pattern
                  Formation by Autonomous Mobile Robots},
  booktitle    = {Algorithms and Computation, 10th International Symposium, {ISAAC}
                  },
  series       = {Lecture Notes in Computer Science},
  volume       = {},
  pages        = {93--102},
  publisher    = {Springer},
  year         = {1999},
  url          = {https://doi.org/10.1007/3-540-46632-0\_10},
  doi          = {10.1007/3-540-46632-0\_10},
  bibsource    = {dblp computer science bibliography, https://dblp.org}
}

@article{DBLP:journals/siamcomp/SuzukiY99,
  author       = {Ichiro Suzuki and
                  Masafumi Yamashita},
  title        = {Distributed Anonymous Mobile Robots: Formation of Geometric Patterns},
  journal      = {{SIAM} J. Comput.},
  volume       = {28},
  number       = {4},
  pages        = {1347--1363},
  year         = {1999},
  url          = {https://doi.org/10.1137/S009753979628292X},
  doi          = {10.1137/S009753979628292X},
  bibsource    = {dblp computer science bibliography, https://dblp.org}
}

@book{DBLP:series/synthesis/2012Flocchini,
  author       = {Paola Flocchini and
                  Giuseppe Prencipe and
                  Nicola Santoro},
  title        = {Distributed Computing by Oblivious Mobile Robots},
  series       = {Synthesis Lectures on Distributed Computing Theory},
  publisher    = {Morgan {\&} Claypool Publishers},
  year         = {2012},
  url          = {https://doi.org/10.2200/S00440ED1V01Y201208DCT010},
  doi          = {10.2200/S00440ED1V01Y201208DCT010},
  isbn         = {978-3-031-00880-1},
  bibsource    = {dblp computer science bibliography, https://dblp.org}
}

@inproceedings{tan2010new,
  title={New algorithms for barrier coverage with mobile sensors},
  author={Tan, Xuehou and Wu, Gangshan},
  booktitle={International Workshop on Frontiers in Algorithmics},
  pages={327--338},
  year={2010},
  organization={Springer}
}

@inproceedings{DBLP:conf/icdcit/DuttaCDM12,
  author       = {Ayan Dutta and
                  Sruti Gan Chaudhuri and
                  Suparno Datta and
                  Krishnendu Mukhopadhyaya},
  editor       = {},
  title        = {Circle Formation by Asynchronous Fat Robots with Limited Visibility},
  booktitle    = {Distributed Computing and Internet Technology - 8th International
                  Conference, {ICDCIT} 2012, 
                  Proceedings},
  series       = {Lecture Notes in Computer Science},
  volume       = {},
  pages        = {},
  publisher    = {},
  year         = {2012},
  url          = {https://doi.org/10.1007/978-3-642-28073-3\_8},
  doi          = {10.1007/978-3-642-28073-3\_8},
  bibsource    = {dblp computer science bibliography, https://dblp.org}
}

@article{DBLP:journals/dc/FlocchiniPSV17,
  author       = {Paola Flocchini and
                  Giuseppe Prencipe and
                  Nicola Santoro and
                  Giovanni Viglietta},
  title        = {Distributed computing by mobile robots: uniform circle formation},
  journal      = {Distributed Comput.},
  volume       = {30},
  number       = {6},
  pages        = {413--457},
  year         = {2017},
  url          = {https://doi.org/10.1007/s00446-016-0291-x},
  doi          = {10.1007/S00446-016-0291-X},
  bibsource    = {dblp computer science bibliography, https://dblp.org}
}

@Article{app13137991,
AUTHOR = {Feletti, Caterina and Mereghetti, Carlo and Palano, Beatrice},
TITLE = {Uniform Circle Formation for Fully, Semi-, and Asynchronous Opaque Robots with Lights},
JOURNAL = {Applied Sciences},
VOLUME = {13},
YEAR = {2023},
NUMBER = {13},
ARTICLE-NUMBER = {7991},
URL = {https://www.mdpi.com/2076-3417/13/13/7991},
ISSN = {2076-3417},
DOI = {10.3390/app13137991}
}

@INPROCEEDINGS{10579120,
  author={Pattanayak, Debasish and Sharma, Gokarna},
  booktitle={2024 IEEE International Parallel and Distributed Processing Symposium (IPDPS)}, 
  title={Time-Color Tradeoff on Uniform Circle Formation by Asynchronous Robots}, 
  year={2024},
  volume={},
  number={},
  pages={987-997},
  doi={10.1109/IPDPS57955.2024.00092}}

@INPROCEEDINGS{128452,
  author={Sugihara, K. and Suzuki, I.},
  booktitle={Proceedings. 5th IEEE International Symposium on Intelligent Control 1990}, 
  title={Distributed motion coordination of multiple mobile robots}, 
  year={1990},
  volume={},
  number={},
  pages={138-143 vol.1},
  doi={10.1109/ISIC.1990.128452}}

@article{DBLP:journals/cacm/Debest95,
  author       = {Xavier A. Debest},
  title        = {Remark About Self-Stabilizing Systems},
  journal      = {Commun. {ACM}},
  volume       = {38},
  number       = {2},
  pages        = {115--117},
  year         = {1995},
  bibsource    = {dblp computer science bibliography, https://dblp.org}
}

@inproceedings{DBLP:conf/pomc/DefagoK02,
  author       = {Xavier D{\'{e}}fago and
                  Akihiko Konagaya},
  title        = {Circle formation for oblivious anonymous mobile robots with no common
                  sense of orientation},
  booktitle    = {Proceedings of the 2002 Workshop on Principles of Mobile Computing,
                  {POMC} 2002, October 30-31, 2002, Toulouse, France},
  pages        = {97--104},
  publisher    = {{ACM}},
  year         = {2002},
  url          = {https://doi.org/10.1145/584490.584509},
  doi          = {10.1145/584490.584509},
  bibsource    = {dblp computer science bibliography, https://dblp.org}
}

@article{DBLP:journals/tcs/DefagoS08,
  author       = {Xavier D{\'{e}}fago and
                  Samia Souissi},
  title        = {Non-uniform circle formation algorithm for oblivious mobile robots
                  with convergence toward uniformity},
  journal      = {Theor. Comput. Sci.},
  volume       = {396},
  number       = {1-3},
  pages        = {97--112},
  year         = {2008},
  url          = {https://doi.org/10.1016/j.tcs.2008.01.050},
  doi          = {10.1016/J.TCS.2008.01.050},
  bibsource    = {dblp computer science bibliography, https://dblp.org}
}

@inproceedings{DBLP:conf/opodis/FlocchiniPSV14,
  author       = {Paola Flocchini and
                  Giuseppe Prencipe and
                  Nicola Santoro and
                  Giovanni Viglietta},
  editor       = {Marcos K. Aguilera and
                  Leonardo Querzoni and
                  Marc Shapiro},
  title        = {Distributed Computing by Mobile Robots: Solving the Uniform Circle
                  Formation Problem},
  booktitle    = {Principles of Distributed Systems - 18th International Conference,
                  {OPODIS} 2014, Cortina d'Ampezzo, Italy, December 16-19, 2014. Proceedings},
  series       = {Lecture Notes in Computer Science},
  volume       = {8878},
  pages        = {217--232},
  publisher    = {Springer},
  year         = {2014},
  url          = {https://doi.org/10.1007/978-3-319-14472-6\_15},
  doi          = {10.1007/978-3-319-14472-6\_15},
  bibsource    = {dblp computer science bibliography, https://dblp.org}
}

@inproceedings{DBLP:conf/icdcn/MondalC18,
  author       = {Moumita Mondal and
                  Sruti Gan Chaudhuri},
  editor       = {Doina Bein},
  title        = {Uniform circle formation by mobile robots},
  booktitle    = {Proceedings of the Workshop Program of the 19th International Conference
                  on Distributed Computing and Networking, Varanasi, India, January
                  04-07, 2018},
  pages        = {20:1--20:2},
  publisher    = {{ACM}},
  year         = {2018},
  url          = {https://doi.org/10.1145/3170521.3170541},
  doi          = {10.1145/3170521.3170541},
  bibsource    = {dblp computer science bibliography, https://dblp.org}
}

@inproceedings{DBLP:conf/icdcit/MondalC20,
  author       = {Moumita Mondal and
                  Sruti Gan Chaudhuri},
  editor       = {Dang Van Hung and
                  Meenakshi D'Souza},
  title        = {Uniform Circle Formation by Swarm Robots Under Limited Visibility},
  booktitle    = {Distributed Computing and Internet Technology - 16th International
                  Conference, {ICDCIT} 2020, Bhubaneswar, India, January 9-12, 2020,
                  Proceedings},
  series       = {Lecture Notes in Computer Science},
  volume       = {11969},
  pages        = {420--428},
  publisher    = {Springer},
  year         = {2020},
  url          = {https://doi.org/10.1007/978-3-030-36987-3\_28},
  doi          = {10.1007/978-3-030-36987-3\_28},
  bibsource    = {dblp computer science bibliography, https://dblp.org}
}

@article{DBLP:journals/aghcs/AdhikaryKS21,
  author       = {Ranendu Adhikary and
                  Manash Kumar Kundu and
                  Buddhadeb Sau},
  title        = {Circle formation by asynchronous opaque robots on infinite grid},
  journal      = {Comput. Sci.},
  volume       = {22},
  number       = {1},
  year         = {2021},
  url          = {https://doi.org/10.7494/csci.2021.22.1.3840},
  doi          = {10.7494/CSCI.2021.22.1.3840},
  bibsource    = {dblp computer science bibliography, https://dblp.org}
}

@InProceedings{feletti_et_al:LIPIcs.DISC.2024.46,
  author =	{Feletti, Caterina and Pattanayak, Debasish and Sharma, Gokarna},
  title =	{{Brief Announcement: Optimal Uniform Circle Formation by Asynchronous Luminous Robots}},
  booktitle =	{38th International Symposium on Distributed Computing (DISC 2024)},
  pages =	{46:1--46:7},
  series =	{Leibniz International Proceedings in Informatics (LIPIcs)},
  ISBN =	{978-3-95977-352-2},
  ISSN =	{1868-8969},
  year =	{2024},
  volume =	{319},
  editor =	{Alistarh, Dan},
  publisher =	{Schloss Dagstuhl -- Leibniz-Zentrum f{\"u}r Informatik},
  address =	{Dagstuhl, Germany},
  URL =		{https://drops.dagstuhl.de/entities/document/10.4230/LIPIcs.DISC.2024.46},
  URN =		{urn:nbn:de:0030-drops-212748},
  doi =		{10.4230/LIPIcs.DISC.2024.46}
}

@INPROCEEDINGS{1500631,
  author={Stormont, D.P.},
  booktitle={CIHSPS 2005. Proceedings of the 2005 IEEE International Conference on Computational Intelligence for Homeland Security and Personal Safety, 2005.}, 
  title={Autonomous rescue robot swarms for first responders}, 
  year={2005},
  volume={},
  number={},
  pages={151-157},
  doi={10.1109/CIHSPS.2005.1500631}}

@article{TAN201318,
title = {Research Advance in Swarm Robotics},
journal = {Defence Technology},
volume = {9},
number = {1},
pages = {18-39},
year = {2013},
issn = {2214-9147},
doi = {https://doi.org/10.1016/j.dt.2013.03.001},
url = {https://www.sciencedirect.com/science/article/pii/S221491471300024X},
author = {Ying Tan and Zhong-yang Zheng}
}

@article{PATTANAYAK2019145,
title = {Gathering of mobile robots with weak multiplicity detection in presence of crash-faults},
journal = {Journal of Parallel and Distributed Computing},
volume = {123},
pages = {145-155},
year = {2019},
issn = {0743-7315},
doi = {https://doi.org/10.1016/j.jpdc.2018.09.015},
url = {https://www.sciencedirect.com/science/article/pii/S074373151830697X},
author = {Debasish Pattanayak and Kaushik Mondal and Ramesh H. and Partha Sarathi Mandal}}

@article{DBLP:journals/dc/CiceroneSN19,
  author       = {Serafino Cicerone and
                  Gabriele Di Stefano and
                  Alfredo Navarra},
  title        = {Asynchronous Arbitrary Pattern Formation: the effects of a rigorous
                  approach},
  journal      = {Distributed Comput.},
  volume       = {32},
  number       = {2},
  pages        = {91--132},
  year         = {2019},
  url          = {https://doi.org/10.1007/s00446-018-0325-7},
  doi          = {10.1007/S00446-018-0325-7},
  bibsource    = {dblp computer science bibliography, https://dblp.org}
}

@article{DBLP:journals/siamcomp/CohenP05,
  author    = {Reuven Cohen and
               David Peleg},
  title     = {Convergence Properties of the Gravitational Algorithm in Asynchronous
               Robot Systems},
  journal   = {{SIAM} J. Comput.},
  volume    = {34},
  number    = {6},
  pages     = {1516--1528},
  year      = {2005},
  url       = {https://doi.org/10.1137/S0097539704446475},
  doi       = {10.1137/S0097539704446475},
  bibsource = {dblp computer science bibliography, https://dblp.org}
}

@inproceedings{DBLP:conf/icalp/CieliebakFPS03,
  author    = {Mark Cieliebak and
               Paola Flocchini and
               Giuseppe Prencipe and
               Nicola Santoro},
  editor    = {Jos C. M. Baeten and
               Jan Karel Lenstra and
               Joachim Parrow and
               Gerhard J. Woeginger},
  title     = {Solving the Robots Gathering Problem},
  booktitle = {Automata, Languages and Programming, 30th International Colloquium,
               {ICALP} 2003, Eindhoven, The Netherlands, June 30 - July 4, 2003.
               Proceedings},
  series    = {Lecture Notes in Computer Science},
  volume    = {2719},
  pages     = {1181--1196},
  publisher = {Springer},
  year      = {2003},
  url       = {https://doi.org/10.1007/3-540-45061-0\_90},
  doi       = {10.1007/3-540-45061-0\_90},
  bibsource = {dblp computer science bibliography, https://dblp.org}
}

@article{DBLP:journals/tcs/FlocchiniPSW05,
  author    = {Paola Flocchini and
               Giuseppe Prencipe and
               Nicola Santoro and
               Peter Widmayer},
  title     = {Gathering of asynchronous robots with limited visibility},
  journal   = {Theor. Comput. Sci.},
  volume    = {337},
  number    = {1-3},
  pages     = {147--168},
  year      = {2005},
  url       = {https://doi.org/10.1016/j.tcs.2005.01.001},
  doi       = {10.1016/j.tcs.2005.01.001},
  bibsource = {dblp computer science bibliography, https://dblp.org}
}

@inproceedings{DBLP:conf/sirocco/SuzukiY96,
  author    = {Ichiro Suzuki and
               Masafumi Yamashita},
  editor    = {Nicola Santoro and
               Paul G. Spirakis},
  title     = {Distributed Anonymous Mobile Robots},
  booktitle = {SIROCCO'96, The 3rd International Colloquium on Structural Information
               {\&} Communication Complexity, Siena, Italy, June 6-8, 1996},
  pages     = {313--330},
  publisher = {Carleton Scientific},
  year      = {1996},
  bibsource = {dblp computer science bibliography, https://dblp.org}
}

@article{DBLP:journals/tcs/FlocchiniPSW08,
  author    = {Paola Flocchini and
               Giuseppe Prencipe and
               Nicola Santoro and
               Peter Widmayer},
  title     = {Arbitrary pattern formation by asynchronous, anonymous, oblivious
               robots},
  journal   = {Theor. Comput. Sci.},
  volume    = {407},
  number    = {1-3},
  pages     = {412--447},
  year      = {2008},
  url       = {https://doi.org/10.1016/j.tcs.2008.07.026},
  doi       = {10.1016/j.tcs.2008.07.026},
  bibsource = {dblp computer science bibliography, https://dblp.org}
}

\end{document}